\definecolor{TSUYUKUSA}{RGB}{46, 169, 223}
\definecolor{KURENAI}{RGB}{203, 27, 69}
\renewcommand{\section}{\@startsection {section}{1}{\z@}%
             {-3.5ex \@plus -1ex \@minus -.2ex}%
             {2.3ex \@plus .2ex}%
             {\normalfont\Large\scshape\bfseries}}
\renewcommand{\subsection}{\@startsection{subsection}{2}{\z@}%
             {-3.25ex\@plus -1ex \@minus -.2ex}%
             {1.5ex \@plus .2ex}%
             {\normalfont\large\scshape\bfseries}}
\renewcommand{\subsubsection}{\@startsection{subsubsection}{2}{\z@}%
             {-3.25ex\@plus -1ex \@minus -.2ex}%
             {1.5ex \@plus .2ex}%
             {\normalfont\normalsize\scshape\bfseries}}
\def\A{\CMcal{A}}
\def\B{\CMcal{B}}
\def\H{\CMcal{H}}
\def\M{\CMcal{M}}
\def\Q{\CMcal{Q}}
\theoremstyle{plain}
\newtheorem{theorem}{Theorem}[section]
\newtheorem{lemma}[theorem]{Lemma}
\newtheorem{prop}[theorem]{Proposition}
\theoremstyle{definition}
\newtheorem{definition}[theorem]{Definition}
\newtheorem{question}[theorem]{Question}
\newtheorem{remark}[theorem]{Remark}
\newtheorem{fact}[theorem]{Fact}
\newtheorem{protocol}[theorem]{Protocol}
\newtheorem{assumption}[theorem]{Assumption}%[section]
\newenvironment{replemma}[1]{%
  \edef\@savedref{#1}%
  \let\origthelemma\thelemma % 保存原编号格式
  \renewcommand{\thelemma}{\ref{\@savedref}} % 强行改为引用编号
  \begin{lemma}%
}{%
  \end{lemma}%
  \let\thelemma\origthelemma % 恢复编号格式
}
\newcommand {\set} [1] {\ensuremath{ \left\lbrace #1 \right\rbrace }}
\newcommand {\innerproduct} [2] {\ensuremath{\left \langle #1 , #2 \right \rangle}}
\newcommand{\normthree}[1]{{\left\vert\kern-0.25ex\left\vert\kern-0.25ex\left\vert #1 \right\vert\kern-0.25ex\right\vert\kern-0.25ex\right\vert}}
\newcommand {\br} [1] {\ensuremath{ \left( #1 \right) }}
\DeclarePairedDelimiter{\abs}{\lvert}{\rvert}
\newcommand {\bra} [1] {\ensuremath{ \left\langle #1 \right| }}
\newcommand {\ket} [1] {\ensuremath{ \left| #1 \right\rangle }}
\newcommand {\ketbratwo} [2] {\ensuremath{ \left| #1 \middle\rangle \middle\langle #2 \right| }}
\newcommand {\ketbra} [1] {\ketbratwo{#1}{#1}}
\DeclareMathOperator*{\bigE}{\mathbb{E}}
\newcommand{\nnorm}[1]{{\left\vert\kern-0.25ex\left\vert\kern-0.25ex\left\vert #1
		\right\vert\kern-0.25ex\right\vert\kern-0.25ex\right\vert}}
\newcommand {\Tr} {\ensuremath{ \mathrm{Tr} }}
\tikzset{meter/.append style={draw, inner sep=10, rectangle, font=\vphantom{A}, minimum width=30, line width=.8,
 path picture={\draw[black] ([shift={(.1,.3)}]path picture bounding box.south west) to[bend left=50] ([shift={(-.1,.3)}]path picture bounding box.south east);\draw[black,-latex] ([shift={(0,.1)}]path picture bounding box.south) -- ([shift={(.3,-.1)}]path picture bounding box.north);}}}
\crefname{prop}{Proposition}{Propositions}
\crefname{question}{Question}{Questions}
\newcommand{\posint}{{\mathbb Z}_{>0}}
\renewcommand{\i}{\mathrm{i}}
\newcommand{\reals}{{\mathbb R}}
\newcommand{\randS}{\ensuremath{\mathbf{S}}}
\newcommand{\randX}{\ensuremath{\mathbf{X}}}
\newtheorem*{theorem*}{Theorem}
\newcommand{\x}{\otimes}
\newcommand{\calX}{\mathcal{X}}
\newcommand{\calY}{\mathcal{Y}}
\newcommand{\calA}{\mathcal{A}}
\newcommand{\calB}{\mathcal{B}}
\newcommand{\1}{\mathbb{1}} % identity matrix
\newcommand{\MIP}{\mathrm{MIP}}
\newcommand{\RE}{\mathrm{RE}}
\newcommand{\val}{\mathrm{val}}
\newcommand{\Corr}{\ensuremath{\mathrm{Corr}}}
\newcommand{\Fnorm}[1]{\ensuremath{\left\|#1\right\|_F}}
\newcommand{\opnorm}[1]{\ensuremath{\left\|#1\right\|}}
\newcommand{\Trerr}{\ensuremath{\epsilon_{\mathrm{tr}}}}
\newcommand{\winerr}{\ensuremath{\epsilon_{\mathrm{win}}}}
  \def\cref#1{#1}%
  \def\Cref#1{#1}%
\begin{document}
% \linespread{1}

%\begin{titlepage}
\title{Nonlocal Games in the High-Noise Regime: Optimal Quantum Values and Rigidity}

%
%\author{
%	\authorblock{\Penghui}{\CQT, \NUS}{phyao1985@gmail.com}
%}
%Comment out this line to switch off anonymous
%\newcommand{\AnonymousSwitch}

%\ifx\AnonymousSwitch\undefined
% \author{
%     \and Honghao Fu \thanks{\scriptsize Concordia Institute for Information Systems Engineering, Concordia University. Email: honghao.fu@concordia.ca.}
%     \and Minglong Qin \thanks{\scriptsize Centre for Quantum Technologies (CQT), National University of Singapore. Email: mlqin@nus.edu.sg}
%     \and Haochen Xu \thanks{\scriptsize Department of Computer Science and Engineering, Penn State University. Email: hpx5065@psu.edu}~\thanks{\scriptsize Key Laboratory of System Software (Chinese Academy of Sciences) and State Key Laboratory of Computer Science, Institute of Software, Chinese Academy of Sciences.}
%     \and Penghui Yao \thanks{\scriptsize State Key Laboratory for Novel Software Technology, New Cornerstone Science Laboratory, Nanjing University, Nanjing, 210023, China. Email: phyao1985@gmail.com.}~\thanks{\scriptsize Hefei National Laboratory, Hefei 230088, China.}
% }
%\fi

\author{Honghao Fu}
\email{honghao.fu@concordia.ca}
\affiliation{Department of Cybersecurity and Intelligent Systems Engineering, Concordia University}

\author{Minglong Qin}
\email{mlqin@nus.edu.sg}
\affiliation{Centre for Quantum Technologies (CQT), National University of Singapore}

\author{Haochen Xu}
\email{hpx5065@psu.edu}
\affiliation{Department of Computer Science and Engineering, Penn State University}
\affiliation{Key Laboratory of System Software (Chinese Academy of Sciences) and State Key Laboratory of Computer Science, Institute of Software,
Chinese Academy of Sciences}

\author{Penghui Yao}
\email{phyao1985@gmail.com}
\affiliation{State Key Laboratory of Novel Software Technology, Nanjing University, Nanjing 210023, China}
\affiliation{Hefei National Laboratory, Hefei 230088, China}

\date{\today}

%\clearpage
%\thispagestyle{empty}
\begin{abstract}

Motivated by the limitations of near-term quantum devices, we study nonlocal games in the \emph{high-noise regime}, where the two players may share arbitrarily many copies of a noisy entangled state. 
In this regime, existing rigidity theorems are unable to certify any nontrivial quantum structure.
We first characterize the maximal quantum winning probabilities of the CHSH game\cite{clauser1969proposed}, the Magic Square game\cite{mermin1990quantum}, and their 2-out-of-$n$ variants \cite{chao2018test} as explicit functions of the noise rate. 
These characterizations enable the construction of device-independent protocols for estimating the underlying noise level. 
Building on these results, we prove noise-robust rigidity theorems showing that these games certify one, two, and 
$n$ pairs of anticommuting Pauli observables, respectively. 
To our knowledge, these are the first rigidity results of Pauli measurements that remain sound in the high-noise regime,
which has applications in Measurement-Device-Independent (MDI) cryptography and studying the computational power 
of Multi-prover Interactive Proof System with entanglement and a vanishing completeness-soundness gap ($\MIP^*_0$).
Our proofs rely on Sum-of-Squares decompositions and Pauli analysis techniques originating from quantum proof systems and quantum learning theory, respectively.

% We first characterize the maximal winning probabilities of the CHSH game , the Magic Square game \cite{mermin1990quantum}, and the 2-out-of-$n$ variants of these two games \cite{chao2018test} as functions of the noise rate. 
% These results enable the construction of device-independent protocols for estimating the noise rate.
% Building on these results, we show that these four games can self-test one, two, and $n$ pairs of anticommuting Pauli operators, respectively. 
% These are the first known self-tests that are robust in the high-noise regime and remain sound even when the players' measurements are noisy. Our proofs rely on Sum-of-Squares (SoS) decompositions and Pauli analysis techniques developed in the contexts of quantum proof systems and quantum learning theory.
% In this setting, we first prove the maximal winning probabilities of the CHSH game \cite{clauser1969proposed}, the Magic Square game \cite{mermin1990quantum}, and the 2-out-of-$n$ CHSH game \cite{chao2018test} as functions of the noise rate, assuming the binary observables are traceless.
% These tests can be used to design device-independent noise rate testers.
% We further prove that these three games, along with a test to ensure the players are using traceless binary observables, can self-test one pair, two pairs and $n$ pairs of anticommuting Pauli operators, respectively.
% These self-tests are robust to noise in the binary observables and are the first self-tests in the high-noise regime.
% We use Sum-of-Square decompositions and Pauli analysis techniques to prove these results.
\end{abstract}
%\tableofcontents

\maketitle

\section{Introduction}
Nonlocality is a central operational feature of quantum theory and admits direct experimental tests~\cite{bell64,fine1982hidden,tsirel1987quantum,brunner2014bell}. It refers to the input-output correlations generated by spatially separated parties that are consistent with the no-signaling principle, yet cannot be reproduced by any local hidden-variable (LHV) model. The detection of nonlocality
can be equivalently formulated through nonlocal games: in the bipartite setting, the referee chooses inputs $x,y$, the two players who cannot communicate respond with outputs $a,b$, and the referee decides whether they win according to a predefined predicate.
For some nonlocal games, quantum mechanics allows for a maximal winning probability that is higher than the probabilities achieved by any classical strategies, for example, the CHSH game~\cite{clauser1969proposed} and the Magic Square game~\cite{aravind2004quantum}.

In realistic implementations, imperfections and environmental noise in state preparation are unavoidable, and the shared resource is therefore more accurately modeled as a mixed state rather than an ideal pure state~\cite{brunner2014bell,horodecki1995violating}. Such noise inevitably restricts the attainable winning probability of a nonlocal game. This raises a basic question: how should one quantify the best possible performance of a nonlocal game when the source is noisy?

%\pnote{add something here}

In this work, we study this question in a setting where the source preparing and distributing the bipartite state is assumed to be \emph{trusted but noisy}. By trusted, we mean that the source is designed to repeatedly prepare a fixed
bipartite quantum state. However, unavoidable imperfections during preparation result in deviations from the ideal target state, which makes the source noisy. 
Even in this setting, characterizing the optimal game performance is highly nontrivial and remains largely unexplored.
 In the idealized noiseless setting, the optimal quantum value of certain games, such as CHSH, is well understood \cite{kaniewski2016analytic}. However, in the high-noise regime, finding the optimal value is less clear because the players can share arbitrarily many copies of the mixed resource state and apply global measurements on them.  
Existing results in the noisy setting~\cite{horodecki1995violating,lifshitz2025noise} crucially rely on the assumption that the players share a bounded number of noisy states, whereas the unbounded-copy regime can behave very differently and is also the natural one from the viewpoint of quantum proof systems. This leads to the following question:
%For instance, quantum values of nonlocal games with bounded copies of shared states are decidable, while they would be undecidable with unbounded copies of shared states~\cite{JNVWY'20,ji2020quantumsoundnessclassicallow}. 

\begin{question}
    \label{q1}
    How can one characterize the optimal winning probability, i.e., the quantum value of a nonlocal game, when the players share arbitrarily many copies of a noisy state?
\end{question}

A uniquely quantum feature of nonlocal games is \emph{rigidity}: in certain games, observing a near-optimal winning probability implies that the underlying strategy, i.e. the shared state and implemented measurements, is close (up to local isometries) to a specific target state and measurements~\cite{mayers2003self,kaniewski2016analytic}. 
Rigidity results underpin device-independent certification (self-testing), and play a central role in device-independent randomness generation and quantum key distribution~\cite{miller2016robust,vazirani2019fully}. 
Moreover, in theoretical computer science, it lays the foundation for the studies of Multi-prover Interactive Proof Systems with Entanglement ($\MIP^*$), as self-testing of EPR pairs and Pauli measurements enables the exponential compression of $\MIP^*$ protocols in the proof of $\MIP^* = \RE$ \cite{JNVWY'20}.
These developments rely on rigidity theorems proved in the noiseless setting, which are typically formulated relative to the ideal optimal quantum value. 
However, in the noisy regime, the optimal achievable winning probability may already be bounded away from this ideal value, so such theorems do not directly apply.
Then a natural follow-up question to \cref{q1} is the following.
\begin{question}
    \label{q2}
   % How is the rigidity of nonlocal games preserved in the noisy setting?
   If the observed winning probability is sufficiently close to the optimal value in the noisy setting, what can we certify about the source and the measurement devices?
\end{question}

%Operationally, \cref{q2} asks:
%If the observed winning probability is sufficiently close to the optimum, what joint constraints does this impose on the noise affecting the shared state and on the structure of the measurement devices?
%If the observed winning probability is sufficiently close to the optimal value in the noisy setting, what can we certify about the structure of the measurement devices?
%\hfc{I decide not to mention that the optimum depends on the noise level.}
Our trusted-but-noisy setting is closely related to the measurement-device-independent (MDI) framework. In MDI protocols, the shared state and the number of copies are fixed, and the goal is to certify the security of tasks such as quantum key distribution, based solely on observed measurement statistics~\cite{lo2012measurement,buscemi2012all,branciard2013measurement}. In comparison, we assume a trusted model for the noisy source, but allow arbitrarily many copies of the shared state. We ask what near-optimal nonlocal-game performance certifies about the measurements and, in some cases, about the noise level of the source. This regime is both physically natural and theoretically challenging, as unbounded copies and collective measurements can potentially obscure the structure required for rigidity. 
%Moreover, in the MDI setting, the sharing state and the number of copies are fixed, while in our setting, the players share {\em arbitrarily many} copies of a given noisy resource state.
% Given answers to \cref{q2}, near-optimal
% performance in nonlocal games can be used to certify nontrivial structural
% properties of the measurement devices, when the noise level of the source is characterized. 

We answer \cref{q1,q2} for four foundational nonlocal games: the CHSH game, the Magic Square game, and their 2-out-of-$n$ versions. We answer \cref{q1} by determining the exact optimal values for these games in the noisy setting. We answer \cref{q2} by proving rigidity theorems against these optimal values. In particular, our rigidity result is stronger as we prove the near-optimal observables are effectively supported on a single register, whereas this property does not hold in the noiseless setting.
\subsection{Main Results}
%\hfc{Divide this section into two parts: one for CHSH and one for Magic Square}
%Our work focuses on four foundational nonlocal games: the CHSH game, the Magic Square game, and their 2-out-of-$n$ versions, and studies \cref{q1,q2} on these games.

We study the nonlocal games above in a trusted-but-noisy setting, where the source is intended to prepare EPR pairs. 

\paragraph{CHSH games}
% For the CHSH game and its 2-out-of-n variant, we work with a general noisy source, under the
% assumption that the resulting bipartite state has maximally mixed marginals and quantum maximal correlation (the quantum analogue of Hirschfeld--Gebelein--Rényi maximal correlation \cite{Hirschfeld1935,Gebelein1941,Renyi1959}) equal to \(\rho\). Quantum maximal correlation is the largest correlation that two parties can obtain from a bipartite quantum state by optimally choosing local observables \cite{Beigi:2013}. 

For the CHSH game, we consider any noise such that the resulting state has maximally mixed marginals. We use the quantum maximal correlation to label the noise level. This quantity is fundamental in quantum information theory and measures the largest correlation that can be extracted from the state using local observables~\cite{Beigi:2013}. It is directly connected to the CHSH task, since the game value is determined by correlations of this form. In particular, quantum maximal correlation characterizes the optimal CHSH value in our setting.
\begin{definition}[Noise model for CHSH]
\label{def:chsh}
Fix a constant \(\rho\in(0,1)\), where \(\rho\) denotes the correlation parameter and \(1-\rho\) quantifies
the noise level.
We consider bipartite states $\Psi$ with maximally mixed marginals whose quantum maximal correlation is upper bounded by $\rho$.
The players are allowed to share arbitrarily many copies of \(\Psi\).
\end{definition}

This model includes many types of noise. Here are some examples: 
A special case is the depolarized EPR pair
\[
\Phi_\rho := \Delta_\rho^{(4)}(\Phi)
= \rho \Phi + (1-\rho)\frac{I}{4},
\]
which has maximally mixed marginals and maximal correlation $\rho$.
More generally, we consider an EPR pair $\Phi$ subject to one-sided Pauli noise. For
\begin{align*}
\mathcal E_{a,b,c}(\rho)
=
(1-a-b-c)\rho+aX\rho X+bY\rho Y+cZ\rho Z,\\
a,b,c\ge 0,\quad a+b+c\le 1,\\\max\bigl\{|1-2(b+c)|,\ |1-2(a+c)|,\ |1-2(a+b)|\bigr\}\leq \rho,
\end{align*}
the state $(\mathcal E_{a,b,c}\otimes \mathrm{Id})(\Phi)$ also has maximally mixed marginals and quantum maximal correlation
\[
\max\bigl\{|1-2(b+c)|,\ |1-2(a+c)|,\ |1-2(a+b)|\bigr\}\leq \rho.
\]
We assume that the noise acts only on the shared quantum states, and that communication between the players and the referee is noiseless, which can be ensured by standard error-correcting techniques.

\begin{remark}
In the trusted-but-noisy setting, we impose two structural assumptions on the source. 

First, the source prepares i.i.d.\ copies of a fixed bipartite state. This is also the standard assumption in applications such as quantum state tomography, where repeated preparation of the same state underlies statistical estimation and concentration arguments.

Second, in \cref{def:chsh} we assume that the single-copy source state has maximally mixed marginals. This condition is essential in the many-copy regime: without it, the optimal CHSH violation may depend on the number of shared copies. For example, for the tilted EPR state
$\Psi_\theta=\cos\theta\,\ket{00}+\sin\theta\,\ket{11}, \theta\in (0,\pi/4)$,
the achievable CHSH violation can increase when multiple copies are available \cite[Eq. (7)]{liang2006better}.
\end{remark}

The following theorem summarizes our answers to \cref{q1,q2} for the CHSH game:
\begin{theorem}[CHSH, informal]\label{thm:intro-chsh}
Let $\Psi$ be a two-qubit state with maximally mixed marginals and quantum maximal correlation upper bounded by $ \rho \in (0,1)$. Suppose the players are allowed to share arbitrarily many i.i.d.\ copies of $\Psi$.

\begin{enumerate}
    \item The optimal quantum winning probability of the CHSH game, under traceless observables, is upper bounded by
    \[
    \frac{1}{2}+\frac{\sqrt{2}}{4}\rho.
    \]

    \item Moreover, if a strategy with shared state $\Psi^{\otimes n}$ and traceless observables wins the CHSH game with probability at least
    \[
    \frac{1}{2}+\frac{\sqrt{2}}{4}\rho-\epsilon,
    \]
    then the players' observables are $O(\sqrt{\epsilon})$-close to Pauli $Z$ and $X$ acting on a single shared qubit, up to local unitaries. 
\end{enumerate}
\end{theorem}
% \begin{theorem}\label{intro-max-vio-traceless}
%     If the players use traceless observables, then for any $n'\geq n \geq 1$,
%     \begin{enumerate}
%         \item The optimal quantum value of the CHSH game under the noise in \cref{def:chsh} is $\frac{1}{2}+\frac{\sqrt{2}}{4}\rho$
%         when the shared state is $\Psi^{\otimes n}$;
%         \item The optimal quantum value of the Magic Square game under the noise in \cref{def:magic_square} is $\frac{1+\rho}{2}$ when the shared state is $\Phi_\rho^{\otimes n}$;
%         \item The optimal quantum value of the 2-out-of-$n$ CHSH game under the noise in \cref{def:chsh} is $\frac{1}{2}+\frac{\sqrt{2}}{4}\rho$ when the shared state is $\Psi^{\otimes n'}$; and
%         \item The optimal quantum value of the 2-out-of-$n$ Magic Square game under the noise in \cref{def:magic_square} is $\frac{1+\rho}{2}$ when the shared state is $(\Phi_\rho^{(2)})^{\otimes n'}$.
%     \end{enumerate}
% \end{theorem}
The traceless condition is easy to check in experiments. One can enforce it via a simple trace test, where the referee repeatedly plays the base game and checks that the players' answers are statistically unbiased. We also show that allowing small bias, quantified by a normalized trace bound $\Trerr$, increases the winning probability by at most $O(\Trerr^2)$.

The optimal quantum value can be attained using the same optimal observables as in the noiseless cases. In particular, allowing arbitrarily many shared copies does not improve the winning probability. Our result also yields the noise threshold for nonlocality: the CHSH game remains nonlocal as long as $\rho > 1/\sqrt{2}$.

The rigidity result in the noisy regime differs from the noiseless case. For constant $\rho \in (0,1)$, the robustness remains $O(\sqrt{\epsilon})$, matching the noiseless setting~\cite{mckague2012robust}.
At the same time, near-optimal strategies are forced to concentrate on a single register: the relevant observables act nontrivially on one qubit and trivially on the rest. As a result, to extract one pair of Pauli $Z$ and $X$, we only need to operate on one qubit, making them much simpler to implement in practice. However, this single-register structure does not hold in the noiseless case $\rho = 1$, where optimal strategies may be spread across multiple registers.

The method we use for the CHSH game is also applicable to other Bell inequalities, and we can 
use these methods to set the boundary for quantum behaviour, i.e. the correlations generated by quantum measurements in nonlocal games under noise. 
\Cref{fig correlation set} shows the set of quantum correlations, denoted by $\Q_\rho$, when the shared state is arbitrarily many copies of $\Phi_\rho$, and there are two questions and two answers per party.
We can show that $\Q_\rho$ is obtained by uniformly shrinking $\Q$, the set of all quantum correlations in the same setting, toward the origin, that is $\Q_\rho$ is a circle centered at $(0,0)$ with radius $2\sqrt{2}\rho$.

\begin{figure}[htbp]
    \centering
    
    \begin{tikzpicture}[
    scale=0.58,
    >=Stealth,
    every node/.style={font=\scriptsize}
]

    % --- 1. 定义常量 ---
    \def\tsirelson{2.828427} % 2*sqrt(2)
    \def\rhoVal{0.8}
    \def\rhoRadius{\tsirelson * \rhoVal}

    % --- 2. 坐标轴 ---
    \draw[->, thick] (-4.2, 0) -- (4.2, 0) node[right] {$\tilde{S}$};
    \draw[->, thick] (0, -4.2) -- (0, 4.2) node[above] {$S$};

    % --- 3. 辅助线 ---
    \foreach \y in {2, \tsirelson, 4} {
        \draw[dashed, very thin, darkgray] (0, \y) -- (-4.0, \y);
    }
    \draw[dashed, very thin, red] (0, \rhoRadius) -- (-4.0, \rhoRadius);

    % --- 4. 集合 ---
    % No-signaling
    \draw[fill=teal!50, draw=black, thick, fill opacity=0.7]
        (4,0) -- (0,4) -- (-4,0) -- (0,-4) -- cycle;

    % Quantum
    \draw[fill=violet!50, draw=black, thick, fill opacity=0.7]
        (0,0) circle (\tsirelson);

    % Local
    \draw[fill=lightgray!90, draw=black, thick, fill opacity=1.0]
        (-2,-2) rectangle (2,2);

    % Q_rho
    \draw[fill=orange!70, draw=orange!40!black, thick, fill opacity=0.7]
        (0,0) circle (\rhoRadius);

    % --- 5. 标注 ---
    \draw[<-, thick]
        (-3.2, 0.5) -- (-4.15, 1.2)
        node[left, align=right] {No-Signaling\\Polytope $\mathcal{NS}$};

    \draw[<-, thick]
        ({cos(205)*2.45}, {sin(205)*2.45}) -- (-4.05, -1.2)
        node[left, align=right] {Quantum Set $\mathcal{Q}$};

    \draw[<-, thick]
        (-1.55, -1.55) -- (-3.55, -2.75)
        node[left, align=right] {LHV Polytope $\mathcal{L}$};

    \draw[<-, thick, orange!40!black]
        ({cos(45)*0.68*\rhoRadius}, {sin(45)*0.68*\rhoRadius}) -- (2.65, 2.75)
        node[right, align=left, text=orange!40!black] {$\mathcal{Q}_{\rho}$};

    % --- 6. 刻度标签 ---
    \node[left] at (-4.0, 4) {4};
    \node[left] at (-4.0, \tsirelson) {$2\sqrt{2}$};
    \node[left] at (-4.0, 2) {2};

    \node[left, text=red] at (-4.0, \rhoRadius) {$2\sqrt{2}\rho$};

\end{tikzpicture}
    \caption{Correlation sets in the $(S,\tilde{S})$ plane. $S$ is the CHSH direction, and $\tilde{S}$ is the CHSH direction with relabeled inputs (i.e., swapping $Q_0$ and $Q_1$). $\Q\rho$ denotes the correlation set obtained when Alice and Bob share the state $\Phi_\rho^{\otimes n}$ and measure traceless observables.}\label{fig correlation set}
\end{figure}

We also consider the 2-out-of-$n$ variant of the CHSH game, where the players are associated with $n$ subsystems, and in each round the verifier samples only two indices and tests the corresponding instance of the game. This framework, introduced by \cite{chao2018test}, provides a way to test large multipartite systems using only local checks. 

In the many-copy setting, such a structure becomes essential. Since the players may perform collective measurements across many shared copies, the underlying tensor-product structure can be obscured. The 2-out-of-$n$ construction restricts such behavior by enforcing consistency across different indices, preventing strategies that spread correlations over many registers. Consequently, it enables the simultaneous certification of multiple independent qubit observables even in the presence of noise, which is difficult to achieve with a single-instance game. 

We now state our main result for the 2-out-of-$n$ CHSH game.
\begin{theorem}[2-out-of-$n$ CHSH, informal]
Let $\Psi$ be a two-qubit state with maximally mixed marginals and quantum maximal correlation $\rho \in (0,1)$. Suppose the players share arbitrarily many i.i.d.\ copies of $\Psi$.

\begin{enumerate}
    \item The optimal quantum winning probability of the 2-out-of-$n$ CHSH game, under traceless observables, is upper bounded by
    \[
    \frac{1}{2}+\frac{\sqrt{2}}{4}\rho.
    \]

    \item Moreover, if a strategy with shared state $\Psi^{\otimes n}$ and traceless observables wins the 2-out-of-$n$ CHSH game with probability at least
    \[
    \frac{1}{2}+\frac{\sqrt{2}}{4}\rho - \epsilon,
    \]
    then there exist $n$ distinct qubits such that for each index $i \in [n]$, the corresponding observables are $O(\sqrt{\epsilon})$-close to Pauli operators $Z$ and $X$ acting on the $i$-th qubit, up to local unitaries. %In particular, near-optimal performance certifies $n$ independent pairs of anti-commuting Pauli observables, each supported on a different register.
\end{enumerate}
\end{theorem}
% Finally, we consider the 2-out-of-$n$ variants of these games under noise.
% In the 2-out-of-$n$ CHSH game, we demonstrate that it is possible to extract $n$ anti-commuting observable pairs acting on different registers, when the players share $n'$ noisy pairs for any $n'\geq n$. Alice's (resp. Bob's) observable receiving question $\set{\br{i,x}}$ is denoted by $P_{i,x}$ (resp. $Q_{i,x}$), where $i$ is the index of the pair and $x$ is the CHSH game question.

In~\cite{chao2018test}, the protocol additionally includes a consistency test to certify that the shared state is close to $n$ copies of EPR pairs.
% \textcolor{red}{In contrast, under the noisy setting, the shared state is fixed, thus, no consistency test is required.} 
In contrast, in our setting with the trusted but noisy source, the form of the noise acting on the source is assumed to be fixed. Consequently, we do not require an additional consistency test to certify the structure of the shared state. For a constant correlation parameter $\rho \in (0,1)$, the robustness is of the same order as in the noiseless setting~\cite[Theorem 3.2]{chao2018test}. As in the previous case, the register-concentration and simple-unitary properties hold only in the noisy setting.

% \begin{theorem}[Informal]
% \label{informal-2n-rigidity}
% For the 2-out-of-$n$ CHSH game under the noise in \cref{def:chsh}, if the players use traceless observables and attain a winning probability $\epsilon$-close to $\frac{1}{2}+\frac{\sqrt{2}\rho}{4}> 0.75$, then
% \begin{enumerate}
%     \item $\forall i\in [n]$, there exists a distinct register index $s_i\in[n']$, and for $D\in\set{P,Q}$, there exists a 2-dimensional traceless observable $\tilde{D}_{i,x}^{\br{s_i}}$ acting on the $s_i$-th register, such that
%         $D_{i,x}\approx_{n\sqrt{\epsilon}} \tilde{D}_{i,x}^{\br{s_i}} \otimes I^{[n']\backslash\set{s_i}}$. 
%     \item  Moreover, there exists a two-dimensional unitary $\tilde{U}_i$, such that
%         $\tilde{U}_i\tilde{D}_{i,0}\tilde{U}_i^*\approx_{n\sqrt{\epsilon}} Z,\tilde{U}_i\tilde{D}_{i,1}\tilde{U}_i^*\approx_{n\sqrt{\epsilon}} X$.
%     % \item $\forall i\in [n],D\in\set{P,Q},\set{D_{i,0},D_{i,1}}\approx 0$.
    
% \end{enumerate}

% \end{theorem}

\paragraph{Magic Square games}
For the Magic Square game, we work with the specific source obtained by applying
the depolarizing channel to two EPR pairs.

\begin{definition}[Depolarizing noise Model for Magic Square]
\label{def:magic_square}
We focus on the depolarized noise model acting on
two-EPR-pair state, defined by
\[
\Phi_\rho^{(2)} := \Delta_\rho^{(16)}(\Phi^{\otimes 2})
= \rho \Phi^{\otimes 2} + (1-\rho)\frac{I}{16},
\]
and the players are allowed to share arbitrarily many copies of \(\Phi_\rho^{(2)}\).
\end{definition}

%Finally, in \cref{def:magic_square}, we restrict attention to the depolarizing channel acting on two EPR pairs, and 
We leave the study of more general noise models for future work.

Our answers to \cref{q1,q2} for the Magic Square game are summarized as follows:

\begin{theorem}[Magic Square, informal]
Let $\Phi_\rho^{(2)}$ be the noisy two-EPR-pair state obtained by applying depolarizing noise with parameter $\rho \in (0,1)$. Suppose the players share arbitrarily many i.i.d.\ copies of $\Phi_\rho^{(2)}$.

\begin{enumerate}
    \item The optimal quantum winning probability of the Magic Square game, under traceless observables, is
    \[
    \frac{1+\rho}{2}.
    \]

    \item Moreover, if a strategy with shared state $\br{\Phi_\rho^{(2)}}^{\otimes n}$ and traceless observables wins the Magic Square game with probability at least
    \[
    \frac{1+\rho}{2} - \epsilon,
    \]
    then the players’ observables are $O(\sqrt{\epsilon})$-close to the  canonical optimal observables defined in \cref{tab:mgs-OptimalMeasurementTable} on a single shared register, up to local unitaries.
\end{enumerate}
\end{theorem}

For a constant correlation parameter $\rho \in (0,1)$, the robustness parameter $O(\sqrt{\epsilon})$ is the same as that in the pure-state setting~\cite[Theorem 7.10]{JNVWY'20}. As in the CHSH case, the presence of noise enforces a more localized structure for observables, enabling the use of simpler unitaries to extract Pauli observables. The Magic Square game can tolerate depolarizing noise of rate $\sqrt{1- 17/18} \approx 0.23$ on each 2-qubit register. 

Similar results hold for the 2-out-of-$n$ Magic Square game:

\begin{theorem}[2-out-of-$n$ Magic Square, informal]\label{thm:2nMS-informal}
Let $\Phi_\rho^{(2)}$ be the noisy two-EPR-pair state obtained by applying depolarizing noise with parameter $\rho \in (0,1)$. Suppose the players share arbitrarily many i.i.d.\ copies of $\Phi_\rho^{(2)}$.

\begin{enumerate}
    \item The optimal quantum winning probability of the 2-out-of-$n$ Magic Square game, under traceless observables, is
    \[
    \frac{1+\rho}{2}.
    \]

    \item Moreover, if a strategy with shared state $\br{\Phi_\rho^{(2)}}^{\otimes n}$ and traceless observables wins the 2-out-of-$n$ Magic Square game with probability at least
    \[
    \frac{1+\rho}{2}-\epsilon,
    \]
    then there exist $n$ distinct four-dimensional registers such that for each index $i \in [n]$, the corresponding observables are $O(\sqrt{\epsilon})$-close to the canonical optimal observables acting on the $i$-th register, up to local unitaries.
\end{enumerate}
\end{theorem}

\paragraph{Applications.} Our results can be used to develop protocols with a trusted but noisy source.
 One application is in the context of quantum networks~\cite{kimble2008quantum,wehner2018quantum}. In such networks, multipartite entangled resource states are generated, distributed, and stored across spatially separated nodes to enable tasks such as distributed quantum computation, networking-assisted sensing, and quantum communication~\cite{wehner2018quantum,zhang2021distributed}. In many realistic scenarios, the quality of the shared entanglement is not known a priori and may vary across network links or over time~\cite{Mor2024Influence}. When measurement devices are trusted, our characterization of the optimal winning probabilities allows us to assess the quality of the shared entangled resources by relating the observed winning probability to bounds on the underlying noise~\cite{brunner2014bell}.
 The other application is measurement device certification. If the noise level of the source is fully characterized, our rigidity results can be used to certify the untrusted Pauli measurements for MDI randomness generation and MDI QKD.
 Furthermore, as the rigidity of the 2-out-of-$n$ Magic Square game is the essential component of the gap-shrinking question reduction technique of \cite{mousavi2022nonlocal}, \cref{thm:2nMS-informal} can be used to establish a similar technique in the noisy setting and paving the way towards bounding the computational power of noisy $\MIP^*_0$. 

%\paragraph{Geometry of the noisy correlation set}

% \paragraph{Noisy 2-out-of-$n$ Variants}
% Using CHSH as an example, we follow~\cite{chao2018test} to derive rigidity results for the 2-out-of-$n$ variants. The referee samples $i\neq j$, sends $(i,j)$ and CHSH questions $(y,z)$ to Bob, and sends $(i,x)$ to Alice. The answers on index $i$ are checked by the CHSH predicate.

% For each $i$, the noisy CHSH rigidity result (\cref{informal-chsh-self-testing}) implies that ${P_{i,0},P_{i,1}}$ approximately anti-commute and are essentially supported on a single register. To relate different indices, consider the two cases where Bob receives ${(i,y),(j,z)}$ while Alice receives either $(i,x)$ or $(j,x')$. Since Bob cannot tell which index is tested with Alice, his marginals for indices $i$ and $j$ must both form near-optimal CHSH measurements with Alice’s corresponding observables. This forces the two marginals to (approximately) commute, hence $\left[P_{i,x},P_{j,x'}\right]\approx 0$.
% Moreover, $P_{i,x}$ and $P_{j,x'}$ must act on distinct registers; otherwise $P_{j,x'}$ would commute with both $P_{i,0}$ and $P_{i,1}$ while $P_{i,0}$ and $P_{i,1}$ anti-commute, which is impossible within a single register.

\section{Preliminaries}
\subsection{Basic notations}
This paper takes the following notational conventions. For an integer $n\geq 1$, let $[n]$ and $[n]_{\geq 0}$ represent the sets $\set{1,2,...,n}$ and $\set{0,1,...,n-1}$, respectively. Given a set $S$ and a natural number $k$, let $S^k$ be the set $S\times\cdots\times S$, i.e., the $k$-times Cartesian product of $S$. For any $x\in\mathbb{Z}_{\geq 0}^k$, we define $\abs{x}=\abs{\set{i:x_i\neq 0}}$. %For vectors $v_1,v_2$ of the same dimension, we denote the inner product by $\langle v_1,v_2\rangle$ and by $v_1\times v_2$ the outer product.
For a complex matrix $M$, we denote its conjugate, transpose, and conjugate transpose by $\bar{M}$, $M^T$, $M^*$, respectively. We use $\| \cdot\|_F$ and $\| \cdot\|$ to represent the Frobenius norm and the spectral norm (or maximum singular value), respectively. We denote $\M_d$ to be the space of all $d\times d$ matrices, and $\H_d$ to be the space of all $d\times d$ Hermitian matrices. Given $M\in\M_d$, we call $M$ {\em traceless} if $\Tr\br{M}=0$. Given $A\in\M_d$, $\overline{\Tr}~A:= \frac{1}{d} \Tr~A$ denotes the normalized trace.
Given an observable $R\in\H_d$, we say $R$ is a \emph{binary observable} if $R^2 = I$.

For convenience, we use $\ket{\Phi}$ and $\Phi$ to denote the EPR state throughout this paper:
\begin{align*}
\ket{\Phi}=\frac{1}{\sqrt{2}}\br{\ket{00}+\ket{11}},\qquad
\Phi=\ketbra{\Phi}.
\end{align*}
% And its Frobenius norm (or Schatten 2-norm) is defined to be
% \[
% \|M\|_F=\br{\sum_{i=1}^d \mu_i\br{M}^2 }^{\frac{1}{2}} 
% \]
% where $\br{\mu_1\br{M},\mu_2\br{M},...,\mu_d\br{M}}$ are the singular values of $M$ sorted in non-increasing order. Its operator 2-norm is defined to be $\|M\|=\mu_1\br{M}$. Its normalized trace is defined to be
% \[
% \widebar{\Tr}\br{M}=\frac{\Tr\br{M}}{d}.
% \]
For matrices $A, B \in \M_d$, we define the \emph{commutator} and \emph{anti-commutator} as
\[
[A, B] := AB - BA, \quad \{A, B\} := AB + BA.
\]
We say that $A$ and $B$ \emph{commute} if $[A, B] = 0$, and \emph{anti-commute} if $\{A, B\} = 0$. For matrices $A,B\in\H_d$, we write $A\succeq B$ or $B\preceq A$ if $A-B$ is positive semidefinite.

For parameter $\epsilon \to 0^{+}$, we use the following notations for approximate equivalence:
\begin{itemize}
    \item For scalars dependent on $\epsilon$: $a_\epsilon,b_\epsilon$, we write $a_{\epsilon}\approx_\epsilon b_{\epsilon}$ if $|a_{\epsilon}-b_{\epsilon}|=O\br{\epsilon}$.
    \item For vectors $\ket{u_\epsilon}$ and $\ket{v_\epsilon}$, we write $\ket{u_\epsilon}\approx_{\epsilon}\ket{v_{\epsilon}}$ if the Euclidean norm $\|\ket{u_\epsilon}-\ket{v_\epsilon}\|= O\br{\epsilon}$.
    \item For matrices $A_\epsilon,B_\epsilon\in\M_d$, we write $A_\epsilon\approx_{\epsilon} B_\epsilon$ if
    \begin{align}\label{def-approx-matrix}
            \frac{1}{\sqrt{d}}\|A_\epsilon-B_\epsilon\|_F=O\br{\epsilon},
    \end{align}
    %=\sqrt{\widebar{\Tr}\br{A_\epsilon-B_\epsilon}\br{A_\epsilon-B_\epsilon}^*}
    %where $\overline{\Tr}~A:= \frac{1}{d} \Tr~A$ denotes the normalized trace. 
    %The big-$O$ notation $O\br{\epsilon}$ means for sufficiently small $\epsilon$, there exists a constant $C>0$, such that $\left|O\br{\epsilon}\right|\leq C\epsilon$.
    \item We say $f(\epsilon)=O(\epsilon)$ for some real function $f$ if there exists an absolute constant $C>0$, such that for sufficiently small $\epsilon$, $\left|f\br{\epsilon}\right|\leq C\epsilon$.
\end{itemize}

The following basic properties follow directly from definitions and standard identities.
\begin{fact}\label{fact:partialtrace}
    For Hermitian matrices $P,Q$ acting on quantum systems $A,B$, respectively, and a bipartite state $\varphi_{AB}$,
    we have
    \[
    \Tr\br{\br{P\x I}\varphi_{AB}}=\Tr\br{P\varphi_A}.
    \]
\end{fact}

\begin{fact}\label{vec-epr}
 Let $P$ be a Hermitian operator on $(\mathbb{C}^2)^{\otimes n}$. Then
   $ P\otimes I\ket{\Phi^{\x n}}=I\otimes P^T\ket{\Phi^{\x n}}$.
\end{fact}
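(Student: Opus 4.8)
The plan is to prove the identity $P\otimes I\ket{\Phi^{\x n}}=I\otimes P^T\ket{\Phi^{\x n}}$ for any Hermitian operator $P$ on $(\mathbb{C}^2)^{\otimes n}$ by reducing to the elementary rank-one case and extending by linearity. First I would recall the standard ``ricochet'' (or transpose-trick) identity: for the single-copy EPR state $\ket{\Phi}=\tfrac{1}{\sqrt2}(\ket{00}+\ket{11})=\tfrac{1}{\sqrt2}\sum_{i\in\{0,1\}}\ket{i}\ket{i}$, one checks for matrix units $E_{jk}=\ketbra{j}{k}$ that
\begin{align*}
    (E_{jk}\otimes I)\ket{\Phi}=\tfrac{1}{\sqrt2}\sum_i \ket{j}\braket{k}{i}\ket{i}=\tfrac{1}{\sqrt2}\ket{j}\ket{k},
\end{align*}
while $(I\otimes E_{jk}^T)\ket{\Phi}=(I\otimes E_{kj})\ket{\Phi}=\tfrac{1}{\sqrt2}\sum_i\ket{i}\ket{k}\braket{j}{i}=\tfrac{1}{\sqrt2}\ket{j}\ket{k}$, so the two agree on the basis $\{E_{jk}\}$ of $\M_2$. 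Since $\ket{\Phi^{\x n}}$ is, up to reordering the tensor factors so that the $n$ ``$A$'' qubits come first and the $n$ ``$B$'' qubits second, exactly the maximally entangled state $\tfrac{1}{\sqrt{2^n}}\sum_{x\in\{0,1\}^n}\ket{x}\ket{x}$ on $(\mathbb{C}^2)^{\otimes n}\otimes(\mathbb{C}^2)^{\otimes n}$, the same computation with matrix units $E_{xy}=\ketbra{x}{y}$ on $(\mathbb{C}^2)^{\otimes n}$ gives $(E_{xy}\otimes I)\ket{\Phi^{\x n}}=\tfrac{1}{\sqrt{2^n}}\ket{x}\ket{y}=(I\otimes E_{xy}^T)\ket{\Phi^{\x n}}$.

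Next I would invoke linearity. The matrix units $\{E_{xy}\}_{x,y\in\{0,1\}^n}$ form a basis of $\M_{2^n}$, so any $P$ (Hermitian or not — Hermiticity is not actually needed, though it is harmless to assume it as stated) can be written $P=\sum_{x,y}c_{xy}E_{xy}$. Both maps $M\mapsto (M\otimes I)\ket{\Phi^{\x n}}$ and $M\mapsto (I\otimes M^T)\ket{\Phi^{\x n}}$ are linear in $M$ (note $M\mapsto M^T$ is linear), and we have just shown they agree on each basis element $E_{xy}$; hence they agree on all of $\M_{2^n}$, which proves $P\otimes I\ket{\Phi^{\x n}}=I\otimes P^T\ket{\Phi^{\x n}}$.

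There is essentially no hard step here; the only thing to be careful about is bookkeeping of tensor-factor orderings, i.e., making precise the identification of $\ket{\Phi}^{\otimes n}$ (qubits interleaved as $A_1B_1A_2B_2\cdots$) with the ``block'' maximally entangled state $\tfrac{1}{\sqrt{2^n}}\sum_{x}\ket{x}_{A_1\cdots A_n}\ket{x}_{B_1\cdots B_n}$, and checking that the permutation swapping into block form commutes appropriately with $P\otimes I$ and $I\otimes P^T$ (it does, since it acts as the same permutation on the $A$-side and on the $B$-side, and transposition is compatible with simultaneous basis permutation). Alternatively, and perhaps cleaner, I would avoid the reordering entirely by proving the single-copy identity first and then arguing inductively: if $P=P_1\otimes\cdots\otimes P_n$ is a product operator, apply the single-copy ricochet on each of the $n$ EPR pairs separately to get $P\otimes I\,\ket{\Phi}^{\otimes n}=(\bigotimes_i P_i^T)\,\ket{\Phi}^{\otimes n}=P^T\otimes I\ldots$ wait — more precisely each factor gives $(P_i\otimes I)\ket{\Phi}_{A_iB_i}=(I\otimes P_i^T)\ket{\Phi}_{A_iB_i}$, and since $(P_1\otimes\cdots\otimes P_n)^T=P_1^T\otimes\cdots\otimes P_n^T$, multiplying the $n$ identities together yields the claim for product operators; then extend to general $P$ by linearity since product operators span $\M_{2^n}$. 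Either route works; I would present the product-operator induction as the main argument since it makes the ``$P^T$'' appear transparently as $\bigotimes_i P_i^T$.
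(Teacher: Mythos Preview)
Your proposal is correct. Both the matrix-unit computation and the product-operator-plus-linearity argument establish the identity without difficulty.

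The paper takes a different, more compressed route: it observes that $\ket{\Phi^{\otimes n}}$ is the vectorization of $\tfrac{1}{2^{n/2}}I_{2^n}$ and then invokes the standard identity $\mathrm{vec}(ABC)=(C^T\otimes A)\,\mathrm{vec}(B)$, from which the claim follows in one line. Your approach is more elementary in that it requires no familiarity with the vectorization formalism and makes the transpose appear concretely via $E_{xy}^T=E_{yx}$ (or $(\bigotimes_i P_i)^T=\bigotimes_i P_i^T$); the paper's approach is shorter once one knows the $\mathrm{vec}$ identity and sidesteps the tensor-ordering bookkeeping you flagged. Either is perfectly adequate for this fact.
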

The vectorization map $\mathrm{vec}(\cdot)$ is defined as the linear map that transforms a matrix into a vector by stacking its columns on top of each other. Fact \ref{vec-epr} follows from the observation that $\ket{\Phi^{\otimes n}}$ is the vectorization of $\frac{1}{2^{n/2}} I_{2^n}$, and the identity:
\[
\mathrm{vec}(ABC) = (C^T \otimes A) \, \mathrm{vec}(B),
\]
for matrices $A, B, C$ of compatible dimensions.
\begin{fact}\label{approx-epr}
   Let $\epsilon > 0$. For Hermitian matrices $P, Q$ on $(\mathbb{C}^2)^{\otimes n}$, we have
    \begin{align*}
        P\approx_{\epsilon} Q \Leftrightarrow P\otimes I\ket{\Phi^{\x n}}\approx_{\epsilon}Q\otimes I\ket{\Phi^{\x n}}.
    \end{align*}
\end{fact}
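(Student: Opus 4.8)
The plan is to collapse both sides of the asserted equivalence onto a single scalar quantity. Write $M := P - Q$, which is again Hermitian. I claim the exact identity
\[
\bigl\lVert (M \x I)\ket{\Phi^{\x n}} \bigr\rVert^2 \;=\; \overline{\Tr}\br{MM^*} \;=\; \Bigl(\tfrac{1}{\sqrt{2^n}}\,\lVert M\rVert_F\Bigr)^2 .
\]
Granting this, the fact is immediate: by \cref{def-approx-matrix}, $P \approx_\epsilon Q$ says precisely that $\tfrac{1}{\sqrt{2^n}}\lVert P-Q\rVert_F = O(\epsilon)$, while $P\x I\ket{\Phi^{\x n}} \approx_\epsilon Q\x I\ket{\Phi^{\x n}}$ says precisely that $\lVert (P-Q)\x I\ket{\Phi^{\x n}}\rVert = O(\epsilon)$; the identity asserts these two nonnegative quantities are equal, so one is $O(\epsilon)$ iff the other is.

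To establish the identity I would expand the squared Euclidean norm as an inner product and push the operator onto the state:
\[
\bigl\lVert (M \x I)\ket{\Phi^{\x n}} \bigr\rVert^2 = \bra{\Phi^{\x n}} (M^*M \x I) \ket{\Phi^{\x n}} = \Tr\!\br{M^*M \cdot \partrace{B}{\Phi^{\x n}}},
\]
where the last step is \cref{fact:partialtrace} (applicable since $M^*M$ is positive semidefinite, hence Hermitian). Since the reduced state of $\ket{\Phi^{\x n}}$ on either subsystem is the maximally mixed state $I/2^n$, this equals $\tfrac{1}{2^n}\Tr\br{M^*M} = \overline{\Tr}\br{M^*M} = \overline{\Tr}\br{MM^*}$, which is $\bigl(\tfrac{1}{\sqrt{2^n}}\lVert M\rVert_F\bigr)^2$ by the definition of the Frobenius norm. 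Alternatively one can route through \cref{vec-epr}: $(M\x I)\ket{\Phi^{\x n}} = \tfrac{1}{\sqrt{2^n}}\mapvec{M^T}$, whose Euclidean norm is $\tfrac{1}{\sqrt{2^n}}\lVert M^T\rVert_F = \tfrac{1}{\sqrt{2^n}}\lVert M\rVert_F$ because transposition preserves the Frobenius norm.

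I do not expect any real obstacle: the statement is essentially a bookkeeping consequence of the definition of $\approx_\epsilon$ for matrices together with the fact that $\ket{\Phi^{\x n}}$ is maximally entangled (equivalently, the normalized vectorization of the identity). The only points deserving a moment's care are matching the implicit dimension normalization $d = 2^n$ built into $\approx_\epsilon$ on $(\complex^2)^{\x n}$, and --- if one takes the vectorization route --- tracking the transpose and conjugate so as not to conflate $M^T$ with $M$; since both $\lVert\cdot\rVert_F$ and $\overline{\Tr}\br{\cdot\,(\cdot)^*}$ are insensitive to these, it is harmless. I also note that Hermiticity of $P,Q$ is not actually needed for the norm identity (it holds for arbitrary square $M$), but I would retain it to match how the fact is invoked later.
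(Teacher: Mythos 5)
Your proof is correct and follows the same route as the paper: the paper proves this fact by appealing to \cref{fact:partialtrace} together with the observation that the reduced state of $\ket{\Phi^{\x n}}$ is $I/2^n$, which is exactly your computation of $\lVert (M\x I)\ket{\Phi^{\x n}}\rVert^2 = \overline{\Tr}\br{M^*M}$ for $M = P-Q$. You simply spell out the bookkeeping (and add an optional vectorization alternative) that the paper leaves implicit.
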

 Fact \ref{approx-epr} follows from Fact \ref{fact:partialtrace} and the fact that the reduced state of $\ket{\Phi^{\x n}}$ is $I/2^n$.
% \begin{fact}
%     If $\ket{\phi}=(I\otimes V)\ket{\Phi}$ for unitaries $U,V$ on $\br{\complex^2}^{\otimes n}$, then 
%     \begin{enumerate}
%         \item $P\approx_{\epsilon} Q \Leftrightarrow P\otimes I\ket{\phi^{\x n}}\approx_{\epsilon}Q\otimes I\ket{\phi^{\x n}}.$
%         \item 
%     \end{enumerate}

% \end{fact}
\subsection{Pauli Analysis}
Let $m,n\in \posint$, and set $d = m^n$ as the dimension of the underlying Hilbert space. For matrices $P,Q\in\M_d$, we define the normalized Hilbert–Schmidt inner product as
	\begin{align*}	
	\innerproduct{P}{Q}=\frac{1}{d}\Tr~P^{*}Q.
	\end{align*}
It is straightforward to verify that $\br{\innerproduct{\cdot}{\cdot},\H_d}$ forms a real Hilbert space of Hermitian matrices.

For $m\in \posint$, we say that $\set{\B_i:i\in[m^2]_{\geq 0}}$ is a \emph{standard orthonormal basis} in $\H_m$ if it is an orthonormal basis of $\H_m$, and $\B_0$ is the $m\times m$ identity matrix. For example, the Pauli matrices form a standard orthonormal basis of $\H_2$:
\begin{align*}
&\sigma_0=I=\begin{bmatrix}1 & 0 \\0 & 1\end{bmatrix} ~~~\sigma_1=X=\begin{bmatrix}0 & 1 \\1 & 0\end{bmatrix}\\
&\sigma_2=Y=\begin{bmatrix}0 & -i \\i & 0\end{bmatrix}~~~\sigma_3=Z=\begin{bmatrix}1 & 0 \\0 & -1\end{bmatrix}.
\end{align*}

Given a standard orthonormal basis $\B=\set{\B_i}_{i=0}^{m^2-1}$ in $\H_m$, every matrix $P\in\H_{m}^{\x n}=\H_d$ has a {\em Pauli expansion} with respect to the basis $\B$ given by
\[P=\sum_{x\in[m^2]_{\geq 0}^{n}}\widehat{P}\br{x}\B_{x},\] where $\widehat{P}\br{x}\in \mathbb{R}$ are called \emph{Pauli coefficients}, and $\B_{x}=\bigotimes_{i=1}^n\B_{x_i}$. $\left|x\right|=\left|\set{i:x_i\neq 0}\right|$ is called the \emph{degree} of $\B_x$, and the degree of $P$ is
\begin{align*}
    \textrm{deg}\br{P}:=\max_x \set{\left|x\right|:\hat{P}\br{x}\neq 0}.
\end{align*}

\begin{fact}[Parseval's identity]\label{Parseval's identity}
    For any $P\in\H_d$,
    \[\widebar{\Tr}\br{P^2}=\sum_{x\in[m^2]_{\geq 0}^n}\abs{\widehat{P}\br{x}}^2.\]
\end{fact}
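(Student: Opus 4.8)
The plan is to derive the identity from two ingredients: that the tensor-product family $\set{\B_x : x\in[m^2]_{\geq 0}^n}$ is an orthonormal basis of the real Hilbert space $\br{\H_d,\innerproduct{\cdot}{\cdot}}$, and that, by hypothesis, $P\in\H_d=\H_m^{\x n}$ admits the Pauli expansion $P=\sum_x\widehat{P}\br{x}\B_x$ with real coefficients. Given these, the statement is just Parseval's identity in a finite-dimensional real Hilbert space applied to $\widebar{\Tr}\br{P^2}=\innerproduct{P}{P}$, using that $P$ is Hermitian so $P^2=P^{*}P$.

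First I would check orthonormality of the tensor basis. Since $\set{\B_i}_{i=0}^{m^2-1}$ is an orthonormal basis of $\H_m$ under its normalized inner product, $\tfrac{1}{m}\Tr\br{\B_i^{*}\B_j}=\delta_{ij}$. Using that the trace is multiplicative over tensor products, $\Tr\br{\br{A_1\x\cdots\x A_n}\br{B_1\x\cdots\x B_n}}=\prod_{i=1}^n\Tr\br{A_iB_i}$, and that $d=m^n$, I obtain
\[
\innerproduct{\B_x}{\B_y}=\frac{1}{m^n}\Tr\br{\B_x^{*}\B_y}=\prod_{i=1}^n\frac{1}{m}\Tr\br{\B_{x_i}^{*}\B_{y_i}}=\prod_{i=1}^n\delta_{x_iy_i}=\delta_{xy}.
\]
There are $m^{2n}=\dim_{\reals}\H_d$ such matrices, and being orthonormal they are linearly independent, hence $\set{\B_x}$ is indeed an orthonormal basis of $\H_d$.

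It then remains to expand $\innerproduct{P}{P}$. Since $P$ is Hermitian, $P^2=P^{*}P$, and since the Pauli coefficients are real,
\[
\widebar{\Tr}\br{P^2}=\innerproduct{P}{P}=\sum_{x,y\in[m^2]_{\geq 0}^n}\widehat{P}\br{x}\widehat{P}\br{y}\innerproduct{\B_x}{\B_y}=\sum_{x\in[m^2]_{\geq 0}^n}\widehat{P}\br{x}^2=\sum_{x\in[m^2]_{\geq 0}^n}\abs{\widehat{P}\br{x}}^2,
\]
which is the claimed identity. There is no genuine obstacle here; the only point meriting care is that the normalization $1/d$ must factor as $\prod_{i=1}^n 1/m$ so that the single-site relations $\tfrac{1}{m}\Tr\br{\B_i^{*}\B_j}=\delta_{ij}$ compose correctly, and that the hypothesis supplying the Pauli expansion already guarantees both Hermiticity and real coefficients, so no separate argument for those is needed.
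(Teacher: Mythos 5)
Your proof is correct: the paper states this as a standard fact without proof, and your argument (orthonormality of the tensor-product basis $\set{\B_x}$ under the normalized trace inner product, plus the dimension count and the expansion of $\innerproduct{P}{P}$ with real coefficients) is exactly the standard justification it implicitly relies on. The normalization check $1/d=\prod_{i=1}^n 1/m$ and the use of $P^2=P^{*}P$ for Hermitian $P$ are handled correctly, so there is nothing to add.
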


\subsection{Noise Models}

\paragraph{Noise model for CHSH.}

For the CHSH game, we consider a bipartite state $\Psi$ with $\Psi_A=\Psi_B=I/2$, and use its \emph{quantum maximal correlation} to quantify the noise level.

\begin{definition}[Quantum maximal correlation~\cite{Beigi:2013}]\label{def:maximalcorrelation}
	Given quantum systems $A, B$ of dimension $d$ and a bipartite state $\varphi_{AB}$ with $\varphi_A=\varphi_B=\frac{I}{d}$, the quantum maximal correlation of $\Phi_{AB}$ is defined to be
%	\[r=\sup\set{\abs{\Tr\br{\br{P\otimes Q}\varphi_{AB}}}~:\genfrac{}{}{0pt}{}{P, Q\in\H_d,}{\Tr~P=\Tr~Q=0, \widebar{\Tr}\br{P^2}=\widebar{\Tr}\br{Q^2}=1.}}\]
\[
r = \sup \left\{
\left| \Tr\!\left[(P \otimes Q)\varphi_{AB}\right] \right|
:\;
\begin{aligned}
& P, Q \in \H_d, \\
& \Tr P = \Tr Q = 0, \\
& \overline{\Tr}(P^2) = \overline{\Tr}(Q^2) = 1
\end{aligned}
\right\}.
\]
\end{definition}
To describe the noise model more conveniently, we introduce the correlation matrix, which provides a matrix representation of the bipartite correlations.
\begin{definition}[Correlation Matrix~\cite{qin2021nonlocal}]
    For two standard orthonormal bases $\A=\set{\A_i}_{i=0}^{m^2-1}$ and $\B=\set{\B_i}_{i=0}^{m^2-1}$ in $\H_d$, 
we define the correlation matrix $\Corr\br{\varphi,\A,\B}\in\M_{m^2}:$
\begin{align*}
    \Corr\br{\varphi,\A,\B}_{i,j}:=\Tr\br{\varphi \br{\A_i\x\B_j}}
\end{align*}
for $i,j\in [m^2]_{\geq 0}.$
For the $n$-fold tensor product state $\varphi^{\otimes n}$, we define $\Corr(\varphi^{\otimes n}, \A, \B) \in \M_{m^{2n}}$ by
\begin{align*}
    \Corr\br{\varphi^{\x n},\A,\B}_{x,y}=\Tr\br{\varphi^{\x n} \br{\A_x\x\B_y}}
\end{align*} for $x,y\in [m^2]_{\geq 0}^n,$ where $\A_{x}=\bigotimes_{i=1}^n\A_{x_i}$ and $\B_{y}=\bigotimes_{i=1}^n\B_{y_i}$.
\end{definition}

The following facts describe properties of the correlation matrix:

\begin{fact}\cite[Lemma 7.2]{qin2021nonlocal}\label{qylemma7.2}
    $$\Corr\br{\varphi^{\otimes n},\A,\B}=\Corr\br{\varphi,\A,\B}^{\otimes n}.$$
\end{fact}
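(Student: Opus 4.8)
The plan is to prove the identity entrywise and reduce everything to the multiplicativity of the trace over tensor products. Fix multi-indices $x,y\in[m^2]_{\geq 0}^n$. By the definition of the Kronecker product of tuple-indexed matrices, the $(x,y)$-entry of the right-hand side is
\[
\br{\Corr\br{\varphi,\A,\B}^{\otimes n}}_{x,y}=\prod_{i=1}^n\Corr\br{\varphi,\A,\B}_{x_i,y_i}=\prod_{i=1}^n\Tr\br{\varphi\br{\A_{x_i}\otimes\B_{y_i}}},
\]
so it suffices to show that $\Tr\br{\varphi^{\otimes n}\br{\A_x\otimes\B_y}}$ equals this same product.

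The one point requiring care is the bookkeeping of tensor-factor orderings. The state $\varphi^{\otimes n}$ arises from $n$ independent copies of the bipartite system $AB$, naturally living on $\br{\H_A\otimes\H_B}^{\otimes n}$, whereas $\A_x\otimes\B_y$ is written with $\A_x=\bigotimes_{i}\A_{x_i}$ on $\H_A^{\otimes n}$ and $\B_y=\bigotimes_{i}\B_{y_i}$ on $\H_B^{\otimes n}$, i.e. on $\H_A^{\otimes n}\otimes\H_B^{\otimes n}$. I would introduce the canonical permutation-of-tensor-factors unitary $\pi$ relating these two orderings (moving all $A$-systems to the front, all $B$-systems to the back) and observe that with respect to it $\varphi^{\otimes n}=\pi\br{\bigotimes_{i=1}^n\varphi}\pi^{*}$ and $\A_x\otimes\B_y=\pi\br{\bigotimes_{i=1}^n\br{\A_{x_i}\otimes\B_{y_i}}}\pi^{*}$, where on the right the $i$-th factor acts on the $i$-th copy of $\H_A\otimes\H_B$. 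The key is that the \emph{same} $\pi$ conjugates both operators, so it cancels inside the trace.

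Then, using $\pi^{*}\pi=I$ and cyclicity of the trace, the identity $\br{M_1\otimes N_1}\br{M_2\otimes N_2}=\br{M_1M_2}\otimes\br{N_1N_2}$, and multiplicativity $\Tr\br{M\otimes N}=\Tr\br{M}\Tr\br{N}$, I obtain
\[
\Tr\br{\varphi^{\otimes n}\br{\A_x\otimes\B_y}}=\Tr\br{\bigotimes_{i=1}^n\varphi\cdot\bigotimes_{i=1}^n\br{\A_{x_i}\otimes\B_{y_i}}}=\prod_{i=1}^n\Tr\br{\varphi\br{\A_{x_i}\otimes\B_{y_i}}},
\]
which matches the entry computed above. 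Since $x,y$ were arbitrary, the two matrices coincide. The argument is essentially routine; I expect no genuine obstacle beyond making the reordering isomorphism $\pi$ explicit and checking that, after conjugation by $\pi$, the tensor factors of $\varphi^{\otimes n}$ and of $\A_x\otimes\B_y$ are aligned copy by copy.
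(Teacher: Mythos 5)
Your proof is correct. The paper states this as a fact imported from \cite{qin2021nonlocal} without reproving it, and your argument — checking the identity entrywise, reducing to $\Tr\br{\varphi^{\otimes n}\br{\A_x\otimes\B_y}}=\prod_{i=1}^n\Tr\br{\varphi\br{\A_{x_i}\otimes\B_{y_i}}}$ via multiplicativity of the trace over tensor products, with the $A^n{:}B^n$ versus copy-by-copy ordering handled by a single permutation unitary that cancels inside the trace — is exactly the standard proof of that lemma, so there is nothing to add.
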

\begin{fact}\cite[Lemma 7.3]{qin2021nonlocal}\label{qylemma7.3}
    Given a bipartite state $\varphi_{AB}$ with $\varphi_A=\varphi_B=\frac{I}{d}$,
    %we have $\mu_1(\Corr\br{\varphi^{\otimes n},\A,\B})=1,\mu_2(\Corr\br{\varphi^{\otimes n},\A,\B})=r$, where $\mu_i$ denotes the $i-$th largest singular value. Moreover, 
    there exist two standard orthonormal bases $\A$ and $\B$ such that
    \begin{align}
        \Corr\br{\varphi^{\otimes n},\A,\B}_{i,j}=c_i\delta_{i,j},
    \end{align}
    where $c_0=1\geq c_1\geq ...\geq c_{m^{2n}-1}\geq 0$, and $c_1$ equals the maximal correlation of $\varphi_{AB}$.
\end{fact}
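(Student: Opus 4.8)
The plan is to reduce to the single-copy correlation matrix via \cref{qylemma7.2}, diagonalize that matrix by a singular value decomposition constrained to respect the maximally mixed marginals, and then tensor up. The structural point that makes this work is that the condition $\varphi_A=\varphi_B=I/m$ forces the single-copy correlation matrix to be block diagonal, with the $\A_0=\B_0=I$ row and column isolated, so the SVD can be arranged to touch only the traceless block.

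First I would fix any reference pair of standard orthonormal bases $\A^{(0)},\B^{(0)}$ of $\H_m$ and write $M\coloneqq\Corr\br{\varphi,\A^{(0)},\B^{(0)}}\in\M_{m^2}$. Since $\varphi_A=\varphi_B=I/m$ and $\A^{(0)}_0=\B^{(0)}_0=I$, \cref{fact:partialtrace} gives $M_{0,0}=\Tr\varphi=1$ and $M_{0,j}=\tfrac1m\Tr\B^{(0)}_j=0=M_{i,0}$ for all $i,j\neq 0$; hence $M=1\oplus M'$ with $M'\in\M_{m^2-1}$ the block indexed by the traceless basis elements. Any other pair of standard orthonormal bases $\A,\B$ of $\H_m$ differs from $\A^{(0)},\B^{(0)}$ by real orthogonal change-of-basis matrices, and since both keep the $0$-th element equal to $I$---which is orthogonal, in the normalized Hilbert--Schmidt inner product, to every traceless Hermitian---these matrices have the block form $1\oplus U'$ and $1\oplus V'$ with $U',V'$ real orthogonal. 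A short computation then gives $\Corr\br{\varphi,\A,\B}=\br{1\oplus U'}^{T}M\br{1\oplus V'}=1\oplus\br{(U')^{T}M'V'}$.

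Next I would diagonalize the traceless block. Take a singular value decomposition $M'=U_1 D V_1^{T}$ with $U_1,V_1$ real orthogonal and $D=\operatorname{diag}\br{d_1,\dots,d_{m^2-1}}$, $d_1\geq\cdots\geq d_{m^2-1}\geq 0$; choosing $U'=U_1$ and $V'=V_1$ yields standard orthonormal bases $\A,\B$ with $\Corr\br{\varphi,\A,\B}=\operatorname{diag}\br{1,d_1,\dots,d_{m^2-1}}$. By construction $d_1=\opnorm{M'}$, and this equals the quantum maximal correlation $r$ of $\varphi_{AB}$: both are the supremum of $\abs{\Tr\br{\varphi\br{P\otimes Q}}}$ over traceless Hermitian $P,Q$ normalized by $\tfrac1m\Tr\br{P^2}=\tfrac1m\Tr\br{Q^2}=1$ (\cref{def:maximalcorrelation}). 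It remains to use $r\leq 1$, the standard upper bound on quantum maximal correlation \cite{Beigi:2013}; if one wants this self-contained, write $\Tr\br{\varphi\br{P\otimes Q}}=\Tr\br{\Lambda(P)Q}$ with $\Lambda(P)=\Tr_A\br{\varphi\br{P\otimes I}}$, check that $m\Lambda$ is positive, unital, and trace preserving (the latter two using $\varphi_A=\varphi_B=I/m$), hence a contraction in Hilbert--Schmidt norm, and conclude by Cauchy--Schwarz. Thus $1\geq d_1=r\geq d_2\geq\cdots\geq 0$, settling the $n=1$ case.

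For general $n$ I would invoke \cref{qylemma7.2}: with the bases $\A,\B$ just built, $\Corr\br{\varphi\xp n,\A,\B}=\Corr\br{\varphi,\A,\B}\xp n=\operatorname{diag}\br{1,d_1,\dots,d_{m^2-1}}\xp n$, a diagonal matrix whose $x$-th entry for $x\in[m^2]_{\geq 0}^{n}$ is $\prod_{k=1}^{n}e_{x_k}$, where $\br{e_0,\dots,e_{m^2-1}}=\br{1,d_1,\dots,d_{m^2-1}}$. Since every $e_i\leq 1$, the largest such product is $1$ (attained only at $x=0$) and the second largest is $d_1=r$ (one factor equal to $d_1$, the rest $1$); relabeling $[m^2]_{\geq 0}^{n}$ by a bijection onto $\set{1,\dots,m^{2n}}$ that lists the diagonal in nonincreasing order gives exactly $c_1=1\geq c_2=r\geq\cdots\geq c_{m^{2n}}\geq 0$. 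The only ingredient that is not routine linear algebra is the bound $r\leq 1$: if quoted from \cite{Beigi:2013} nothing delicate remains, and if proved via $\Lambda$ the one point needing care is that Hilbert--Schmidt contractivity of $m\Lambda$ follows from its Schatten-$1$ and operator-norm bounds only through Riesz--Thorin interpolation. It is also worth stating carefully why $c_2$ is the maximal correlation of the single copy and not of $\varphi\xp n$---namely, $c_1=1$ forces the runner-up entry of the tensor power to come from a single correlation-$r$ factor.
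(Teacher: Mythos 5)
Your proof is correct, and it follows what is essentially the standard argument behind this fact (which the paper itself does not prove but imports from \cite{qin2021nonlocal}): the maximally mixed marginals isolate the identity row and column of the single-copy correlation matrix, an SVD of the traceless block realized by orthogonal rotations of the standard orthonormal bases diagonalizes it with top singular value equal to the maximal correlation, and \cref{qylemma7.2} tensorizes the conclusion, the entry $1$ at $x=0^n$ guaranteeing that the runner-up diagonal entry is the single-copy $r$. The only place you overcomplicate is the bound $r\leq 1$: rather than building the map $\Lambda$ and invoking Riesz--Thorin, note that
\begin{align*}
\abs{\Tr\br{\br{P\otimes Q}\varphi}}\leq\sqrt{\Tr\br{\br{P^2\otimes I}\varphi}}\sqrt{\Tr\br{\br{I\otimes Q^2}\varphi}}=\sqrt{\widebar{\Tr}\br{P^2}}\sqrt{\widebar{\Tr}\br{Q^2}}=1
\end{align*}
by Cauchy--Schwarz for the (semi-definite) inner product $\br{A,B}\mapsto\Tr\br{A^*B\varphi}$ together with $\varphi_A=\varphi_B=I/m$; also, your parenthetical that the maximal product is attained only at $x=0^n$ is false when $d_1=1$, though this is harmless since then $c_2=1=r$ anyway.
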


Given a bipartite entangled state $\Psi$ with $\Psi_A=\Psi_B=\frac{I}{2}$, by \cref{qylemma7.3} there exist two standard orthonormal bases $\A$ and $\B$ of $\H_2$, and constants $1=c_0\ge c_1\ge c_2\ge c_3\ge 0$, such that 
\[
\Psi
=
\frac14\sum_{i=0}^3 c_i\,\A_i\otimes \B_i.
\]

Here $c_1,c_2$ and $c_3$ quantify the noise in $\Psi$. When $c_1=c_2=c_3=1$, $\Psi$ is equivalent to EPR state, up to a local unitary (\cref{EquivtoEPR}). Otherwise, the coefficients attenuate the non-identity components and therefore capture the effect of noise.

To simplify the analysis, we equivalently push the noise from the state to the observables. Specifically, given $P,Q\in\H_2^{\otimes n}$, if we write
\[
P=\sum_{x\in [4]_{\ge 0}^n}\widehat{P}(x)\,\A_x
\quad\mbox{and}\quad
Q=\sum_{x\in [4]_{\ge 0}^n}\widehat{Q}(x)\,\B_x,
\]
then
\[\Tr\!\left((P\otimes Q)\Psi^{\otimes n}\right)
=
\sum_{x\in [4]_{\ge 0}^n}
\left(\prod_{i=0}^3 c_i^{\,w_i(x)}\right)\widehat{P}(x)\widehat{Q}(x),\]
where 
\[
w_i(x):=\bigl|\{j\in[n]:x_j=i\}\bigr|.
\]

Define the associated noisy observables
\begin{equation}\label{eqn:noisyObservable}
    P' := \sum_{x\in [4]_{\ge 0}^n}
\left(\prod_{i=0}^{3} c_i^{\,w_i(x)}\right)\widehat{P}(x)\,\A_x,
\end{equation}
\begin{equation*}%\label{eqn:noisyObservable}
    Q' := \sum_{x\in [4]_{\ge 0}^n}
\left(\prod_{i=0}^{3} c_i^{\,w_i(x)}\right)\widehat{Q}(x)\,\B_x
\end{equation*}
and by \cref{EquivtoEPR}, we set
\[
\ketbra{\phi}:=\frac14\sum_{i=0}^3 \A_i\otimes \B_i,
\]
we have 
\begin{multline}\label{pre-noise-to-obs}
    \Tr\!\left((P\otimes Q)\Psi^{\otimes n}\right)
=
\bra{\phi^{\otimes n}}P\otimes Q'\ket{\phi^{\otimes n}}
\\=
\bra{\phi^{\otimes n}}P'\otimes Q\ket{\phi^{\otimes n}}.
\end{multline}

The usual depolarizing channel is the special case where all non-identity basis
elements are attenuated by the same factor.

\begin{definition}[Depolarizing channel]
Given $m\in \mathbb{Z}_{>0}$, $\rho\in [0,1]$, and $M\in \H_m$, the
depolarizing channel of dimension $m$ with correlation parameter $\rho$ is
defined by
\[
\Delta_\rho^{(m)}(M)=\rho M+(1-\rho)\Tr(M)\,I_m.
\]
For simplicity, we omit the superscript $(m)$ when the dimension is clear from
context.
\end{definition}

When $m=2$ and $c_1=c_2=c_3=\rho$, the $P'$ defined in \cref{eqn:noisyObservable} is exactly $\Delta_\rho^{(2)}(P).$

\paragraph{Depolarizing noise model for Magic Square.}
For the Magic Square game analysis in \cref{sec:NMS}, we only focus on the depolarizing noise on 4 qubits, i.e., $\Delta_\rho^{(16)}(\Phi^{\otimes 2})$. 
Let
\[
\A=\{I,X,Y,Z\},
\qquad
\B=\{I,X,\overline{Y},Z\},
\]
and define
\[
\A^{\otimes 2}:=\{M\otimes N:M,N\in \A\},
~
\B^{\otimes 2}:=\{M\otimes N:M,N\in \B\}.
\]

Given $P,Q\in\H_4^{\otimes n}$, if we write
\[
P=\sum_{x\in [16]_{\ge 0}^n}\widehat{P}(x)\,\A_x^{\otimes 2},
\qquad
Q=\sum_{x\in [16]_{\ge 0}^n}\widehat{Q}(x)\,\B_x^{\otimes 2},
\]
then
\[
    \Tr\!\left((P\otimes Q)(\Delta_\rho(\Phi^{\otimes 2}))^{\otimes n}\right)
=
\sum_{x\in [16]_{\ge 0}^n}\rho^{|x|}\widehat{P}(x)\widehat{Q}(x).
\]

In this case, we define 
\begin{equation}\label{eqn:noisyObservable_mgs}
    P' := \sum_{x\in [16]_{\ge 0}^n}
\rho^{|x|}\widehat{P}(x)\,\A_x
\end{equation}
and
\begin{equation*}%\label{eqn:noisyObservable}
    Q' := \sum_{x\in [16]_{\ge 0}^n}
\rho^{|x|}\widehat{Q}(x)\,\B_x,
\end{equation*}
then we have
\begin{multline}\label{pre-noise-to-obs-mgs}
    \Tr\!\left((P\otimes Q)(\Delta_\rho(\Phi^{\otimes 2}))^{\otimes n}\right)
=
\Tr\!\left((P\otimes Q')\Phi^{\otimes 2n}\right)
\\=
\Tr\!\left((P'\otimes Q)\Phi^{\otimes 2n}\right).
\end{multline}
% \begin{fact}\label{pre-noise-to-obs-mgs}
% Let
% \[
% \A=\{I,X,Y,Z\},
% \qquad
% \B=\{I,X,\overline{Y},Z\},
% \]
% and define
% \[
% \A^{\otimes 2}:=\{M\otimes N:M,N\in \A\},
% ~
% \B^{\otimes 2}:=\{M\otimes N:M,N\in \B\}.
% \]
% Then for every $x,y\in [16]_{\ge 0}^n$,
% \begin{align}\label{eqn:noisyEPRcorr2}
% \mathrm{Corr}\!\left((\Delta_\rho(\Phi^{\otimes 2}))^{\otimes n},
% \A^{\otimes 2},\B^{\otimes 2}\right)_{x,y}
% =
% \rho^{|x|}\delta_{x,y}.
% \end{align}
% Equivalently, if
% \[
% P=\sum_{x\in [16]_{\ge 0}^n}\widehat{P}(x)\,\A_x^{\otimes 2},
% \qquad
% Q=\sum_{x\in [16]_{\ge 0}^n}\widehat{Q}(x)\,\B_x^{\otimes 2},
% \]
% then
% \begin{multline*}
%     \Tr\!\left((P\otimes Q)(\Delta_\rho(\Phi^{\otimes 2}))^{\otimes n}\right)
% =
% \sum_{x\in [16]_{\ge 0}^n}\rho^{|x|}\widehat{P}(x)\widehat{Q}(x)
% \\=
% \Tr\!\left((P\otimes Q')\Phi^{\otimes 2n}\right)
% =
% \Tr\!\left((P'\otimes Q)\Phi^{\otimes 2n}\right),
% \end{multline*}
% where now $P'$ and $Q'$ are taken in the depolarizing special case as in \cref{General noise model}
% \[
% c_1=\cdots=c_{15}=\rho.
% \]
% \end{fact}

\subsection{Nonlocal Games and Rigidity}

We use the following notations in \cite{JNVWY'20} for nonlocal games. 
\begin{definition}[Nonlocal games]
    A nonlocal game $G$ is specified by a tuple $\br{\calX, \calY, \calA, \calB, \mu, V}$ where
    \begin{itemize}
        \item $\calX$ and $\calY$ are finite sets, called the \emph{question sets},
        \item $\calA$ and $\calB$ are finite sets, called the \emph{answer sets},
        \item $\mu$ is a probability distribution over $\calX \times \calY$, called the \emph{question distribution}, and
        \item $V: \calX \times \calY \times \calA \times \calB \to \set{0,1}$ is a function, called the \emph{decision predicate}.
    \end{itemize}
\end{definition}
\begin{definition}[Quantum strategies]
    A quantum strategy $S$ of a nonlocal game $G = \br{\calX, \calY, \calA, \calB, \mu, V}$ is a tuple 
    $\br{\Phi, A, B}$ where
    \begin{itemize}
        \item a bipartite quantum state $\Phi \in \H_A \x \H_B$ for finite dimensional complex Hilbert
spaces $\H_A$ and $\H_B$,
        \item $A$ is a set $\set{A^x}$ such that for every $x \in \calX$, $A^x = \set{A^x_a \mid a \in \calA}$ is 
        a POVM over $\H_A$, and 
        \item $B$ is a set $\set{B^y}$ such that for every $y \in \calY$, $B^y = \set{B^y_b \mid b \in \calB}$ is 
        a POVM over $\H_B$.
    \end{itemize}
\end{definition}
\begin{definition}[Quantum value]
    The quantum value of a quantum strategy $S = \br{\varphi, A ,B}$ for a nonlocal game
    $G = \br{\calX, \calY, \calA, \calB, \mu, V}$ is defined as
    \begin{align*}
        \val^\ast\br{G,S} = \sum_{x,y,a,b} \mu\br{x,y}V\br{x,y,a,b} \Tr\br{\br{A^x_a \x B^y_b}\varphi}.
    \end{align*}
    For $v \in [0,1]$ we say that the strategy passes or wins $G$ with probability $v$ if $\val^\ast\br{G,S}\geq v$.
    The optimal quantum value of $G$ is defined as
    \begin{align*}
        \val^*\br{G} = \sup_{S} \val^\ast\br{G,S}
    \end{align*}
    where the supremum is taken over all quantum strategies $S$ for $G$.
\end{definition}

One important and well-studied nonlocal game is the CHSH game introduced in \cite{clauser1969proposed}:
\begin{definition}(CHSH game)\label{def-chsh}
    The CHSH game $G$ is defined by the tuple $\br{\calX, \calY, \calA, \calB, \mu, V}$ where
    \begin{itemize}
        \item $\calX=\calY=\set{0,1}$, $\calA =\calB=\set{-1,1}$,
        \item $\mu\br{x,y}=\frac{1}{4},\forall x,y\in\set{0,1}$,
        \item $V\br{x,y,a,b}=1$ if and only if $\frac{1+a}{2}\oplus\frac{1+b}{2}=xy$.
    \end{itemize}
\end{definition}

When Alice and Bob share a quantum state $\varphi$ and receive questions $x$ and $y$, they measure with observables $P_x$ and $Q_y$, respectively. The violation of CHSH game is defined as:
\begin{align*}
    v=\Tr\br{\br{P_0\x Q_0+P_0\x Q_1+P_1\x Q_0-P_1\x Q_1}\varphi}.
\end{align*}
The violation $v$ and winning probability $\omega$ under the same strategy satisfy 
$\omega=\frac{1}{2}+\frac{v}{8}$.
The maximal violation of the CHSH game was proven to be $2\sqrt{2}$ \cite{tsirel1987quantum}. An optimal quantum strategy achieves this sharing EPR state $\Phi$ and using observables $\br{P_0,P_1,Q_0,Q_1}=\br{Z,X,\frac{Z+X}{\sqrt{2}},\frac{Z-X}{\sqrt{2}}}$.

Rigidity is a key property of certain nonlocal games. It certifies that when a strategy achieves a near-optimal violation, the shared state and measurements are close to the ideal strategy (up to local isometries). The rigidity of CHSH game is shown in \cite[Theorem 2]{mckague2012robust}. A standard proof of the rigidity of CHSH first establishes anti-commutation relations between observables, and then constructs isometries via a swap method.

% we have:
% \begin{theorem}\cite[Theorem 2]{mckague2012robust}\label{CHSH-rigidity}
%     \pnote{necessary? Give informal description.}For any quantum strategy with local dimension $d$ sharing a pure state $\ket{\varphi}$ and using observables $P_0,P_1,Q_0,Q_1\in \H_d$ that attains a CHSH violation $2\sqrt{2}-\epsilon$, there exists local isometries $V_A,V_B:\complex^d\to \complex^{2d}$, such that:
%     \begin{itemize}
%         \item $\br{V_A\x V_B}\ket{\varphi}\approx_{\sqrt{\epsilon}}\ket{\Phi}\x \ket{\text{junk}}$,
%         \item  $\br{V_A\x V_B}\br{P_0\x I}\ket{\varphi}\approx_{\sqrt{\epsilon}}\br{Z\otimes I}\ket{\Phi}\x \ket{\text{junk}}$,
%          $\br{V_A\x V_B}\br{P_1\x I}\ket{\varphi}\approx_{\sqrt{\epsilon}}\br{X\otimes I}\ket{\Phi}\x \ket{\text{junk}}$,
%         \item  $\br{V_A\x V_B}\br{I\x Q_0}\ket{\varphi}\approx_{\sqrt{\epsilon}}\br{I\otimes \frac{Z+X}{\sqrt{2}}}\ket{\Phi}\x \ket{\text{junk}}$,
%           $\br{V_A\x V_B}\br{I\x Q_1}\ket{\varphi}\approx_{\sqrt{\epsilon}}\br{I\otimes \frac{Z-X}{\sqrt{2}}}\ket{\Phi}\x \ket{\text{junk}}$,
%     \end{itemize}
%     for some state $\ket{\text{junk}}\in \complex^d\x \complex^d$.
% \end{theorem}

Another well-studied nonlocal game is the Magic Square game \cite{mermin1990simple,peres1990incompatible,aravind2004quantum}.

\begin{definition}[Magic Square Game]\label{def-mgs-game}
    Consider a $3\times 3$ matrix of Boolean variables 
     \begin{center}
    \begin{tabular}{|c|c|c|}
            \hline
        $s_{1,1}$&$s_{1,2}$&$s_{1,3}$\\\hline
        $s_{2,1}$&$s_{2,2}$&$s_{2,3}$\\\hline
        $s_{3,1}$&$s_{3,2}$&$s_{3,3}$\\\hline
    \end{tabular}
    \end{center}
In the Magic Square game, the question sets are $\calX=\{r_1,r_2,r_3,c_1,c_2,c_3\}$ and $\calY=\set{s_{i,j}:i,j\in[3]}$, and the answer sets are $\calA=\{-1,1\}^3$ and $\calB=\{-1,1\}$. The referee uniformly samples $x\in \calX$ and sends $x$ to Alice, where $r_i$ (resp. $c_j$) stands for the $i$th row $\{s_{i,1},s_{i,2},s_{i,3}\}$ (resp. $j$th column $\{s_{1,j},s_{2,j},s_{3,j}\}$). Then the referee samples a variable $y\in x$ uniformly at random and sends $y$ to Bob. 
 
 Alice returns an assignment $a=(a_1,a_2,a_3)\in\calA$ for the three variables in $x$, while Bob returns an assignment $b\in\calB$ for the variable $y$.
 
    Players win the game if and only if both of the following conditions are satisfied:
    \begin{enumerate}
        \item (\textbf{Parity Test}) If $x\in \{r_1,r_2,r_3,c_1,c_2\}$, then $a_1a_2a_3=1$; otherwise (i.e., $x = c_3$), $a_1a_2a_3=-1$.
        \item (\textbf{Consistency Test}) Alice’s and Bob’s assignments for $y$ agree.
    \end{enumerate}
\end{definition}

An optimal quantum strategy that achieves a perfect winning probability of 1 uses two EPR pairs shared between the players, along with the measurement settings as in \cref{tab:mgs-OptimalMeasurementTable}.
The Magic Square game also exhibits rigidity; see \cite{wu2016device} for further discussion.
\begin{table}[ht]
    \centering
    \begin{tabular}{|c|c|c|}
        \hline
        $X\otimes I$ & $I\otimes X$ & $X\otimes X$ \\
        \hline
        $I\otimes Z$ & $Z\otimes I$ & $Z\otimes Z$ \\
        \hline
        $X\otimes Z$ & $Z\otimes X$ & $Y\otimes Y$ \\
        \hline
    \end{tabular}
    \caption{Optimal Measurement Table for the Magic Square Game}
    \label{tab:mgs-OptimalMeasurementTable}
\end{table}

The \(2\)-out-of-\(n\) CHSH game is defined in~\cite[Section 3.2]{chao2018test}. In each round, the referee chooses uniformly among three tests. In the consistency test, the referee selects one index from \(\set{1,\ldots,n}\) and sends the same indexed CHSH question to both players, requiring them to give the same answer. In the CHSH test, the referee selects two indices from \(\set{1,\ldots,n}\) at random and sends both to one player (together with two CHSH questions) and one of them to the other player (together with one CHSH question). The players return answers for each index received, and their answers are tested for the CHSH predicate on the shared index. The third test is defined symmetrically by exchanging the roles of Alice and Bob. To attain the optimal winning probability \(\frac{2+\sqrt{2}}{4}\), the players can share \(n\) EPR pairs and apply the optimal CHSH observables from the standard CHSH game on the \(i\)-th EPR pair whenever they receive index \(i\). 

The 2-out-of-$n$ Magic Square game is defined analogously as in \cite[Section 3.2]{mousavi2022nonlocal}.
\section{CHSH games under noise}
\subsection{CHSH}\label{self-test-depolarize}
This section focuses on the effect of noise on the CHSH game \cref{def-chsh}, where the players share $n$ copies of states $\Psi^{\x n}$. Here the bipartite state $\Psi$ satisfies the following assumption:
\begin{assumption}\label{share-state-assumption}
    The maximal correlation of $\Psi$ is upper bounded by a constant $\rho \in (0,1)$ and $\Psi_A=\Psi_B=I/2$.
\end{assumption}
% Here $\C$ represents the general noise channel defined in \cref{General noise model}
% \mlc{Noise model in \cref{General noise model} is not bipartite}
% \pnote{model of noise here is cleaner}
% \begin{enumerate}
%     \item the maximal correlation of $\Psi$ is upper bounded by a constant $\rho \in (0,1)$. 
%      \item $\Psi_A=\Psi_B=I/2$.
% \end{enumerate}
By \cref{qylemma7.3}, there exists standard orthonormal basis $\set{\A_i:0\leq i\leq 3},\set{\B_i:0\leq i\leq 3}$ such that the correlation matrix of $\Psi$ is $\mathrm{diag}\set{c_0=1,c_1\leq \rho,c_2,c_3}$, where $ c_0\geq c_1\geq c_2\geq c_3\geq 0$, i.e.,
\begin{align}\label{Psi_cor}
\Psi=\sum_{i=0}^3 c_i \A_i\otimes \B_i.
\end{align}
The players' measurements are denoted by $P_0, P_1, Q_0, Q_1$. 
Define
\begin{align}\label{def-chsh-max-trace}
\Trerr=\max \set{
\left|\widebar{\Tr}(P_0)\right|,
\left|\widebar{\Tr}(P_1)\right|,
\left|\widebar{\Tr}(Q_0)\right|,
\left|\widebar{\Tr}(Q_1)\right|
}.
\end{align}
Since the local marginals of \(\Psi^{\otimes n}\) are maximally mixed, the normalized trace of an
observable is exactly the bias of its output distribution. In other words, $\Trerr$ can be estimated statistically by a simple \emph{trace test}. One may repeat the CHSH game under noise many times, and we assume that each repetitions are i.i.d. as follows.
% (supposing that each repetition are i.i.d., as in \cref{repetition-iid}\pnote{why put assumption in appendix? }),
\begin{assumption}\label{repetition-iid}
Alice and Bob use fixed observables and share fresh copies of \(\Psi^{\otimes n}\) in each repetition.
\end{assumption}
To estimate the traces, we record for each player and each question the empirical frequency of the answer \(1\), and reject if
this frequency deviates from \(1/2\) by more than a prescribed threshold. Therefore, concentration bounds imply that
passing this test with high probability forces \(\Trerr\) to be small. We defer the precise relation
between the number of repetitions, the threshold, and the failure probability to \cref{app:chsh-trace-test}.

Therefore, we will simply work under the assumption that the
observables have bounded normalized trace.
We first give an upper bound of the violation.
\begin{theorem}\label{traceless-chsh-vio-upperbound}
    Suppose the players share $\Psi^{\x n}$ satisfying \cref{share-state-assumption} then the violation of the CHSH game under noise is at most $2\sqrt{2}\rho+\frac{\sqrt{2}\Trerr^2}{\rho}$, where $\Trerr$ is defined in \cref{def-chsh-max-trace}. 
    % \pnote{Moreover, this upper bound is tight. Move remark here}
\end{theorem}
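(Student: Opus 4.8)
The plan is to reduce the problem to the ideal-EPR setting using \cref{pre-noise-to-obs} and then run a sum-of-squares argument that generalizes the SoS decomposition for the CHSH inequality, with the depolarizing noise absorbed into modified ``noisy observables''. First I would rewrite the violation
\[
v=\widebar{\Tr}\br{P_0^T\br{Q_0'+Q_1'}}+\widebar{\Tr}\br{P_1^T\br{Q_0'-Q_1'}},
\]
where $Q_i'=\Delta_\rho^{\otimes n}(Q_i)$, using \cref{pre-noise-to-obs} and \cref{vec-epr}. The key observation is that, for a binary observable $Q_i$, the Pauli expansion satisfies $\widebar{\Tr}(Q_i^2)=1$ by Parseval, and the degree-$0$ coefficient is exactly $\widehat{Q_i}(0)=\widebar{\Tr}(Q_i)$, which is bounded by $\Trerr$. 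Hence after applying the noise channel, $\widebar{\Tr}(Q_i'^2)=\sum_x \rho^{2|x|}\widehat{Q_i}(x)^2 \le \widehat{Q_i}(0)^2+\rho^2\sum_{|x|\ge1}\widehat{Q_i}(x)^2 \le \Trerr^2+\rho^2(1-\widehat{Q_i}(0)^2)\le \rho^2+\Trerr^2$. The same bound holds for $P_i'$, but here I only need it on Bob's side (or symmetrically).

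Next I would apply Cauchy–Schwarz to each of the two terms of $v$ exactly as in the proof overview: $\widebar{\Tr}(P_0^T(Q_0'+Q_1')) \le \sqrt{\widebar{\Tr}(P_0^2)}\,\sqrt{\widebar{\Tr}((Q_0'+Q_1')^2)} = \sqrt{\widebar{\Tr}((Q_0'+Q_1')^2)}$ since $\widebar{\Tr}(P_0^2)=1$, and likewise for the second term. Then $v \le \sqrt{\widebar{\Tr}((Q_0'+Q_1')^2)} + \sqrt{\widebar{\Tr}((Q_0'-Q_1')^2)} \le \sqrt{2}\sqrt{\widebar{\Tr}((Q_0'+Q_1')^2)+\widebar{\Tr}((Q_0'-Q_1')^2)} = \sqrt{2}\sqrt{2(\widebar{\Tr}(Q_0'^2)+\widebar{\Tr}(Q_1'^2))} = 2\sqrt{\widebar{\Tr}(Q_0'^2)+\widebar{\Tr}(Q_1'^2)}$, using the parallelogram identity and concavity of the square root. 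Plugging in the bound $\widebar{\Tr}(Q_i'^2)\le\rho^2+\Trerr^2$ gives $v\le 2\sqrt{2(\rho^2+\Trerr^2)} = 2\sqrt{2}\rho\sqrt{1+\Trerr^2/\rho^2} \le 2\sqrt{2}\rho(1+\Trerr^2/(2\rho^2)) = 2\sqrt{2}\rho + \sqrt{2}\Trerr^2/\rho$, using $\sqrt{1+t}\le 1+t/2$. This matches the claimed bound.

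The main technical point to get right is the bookkeeping that turns the noise attenuation $\rho^{|x|}$ into the clean estimate $\widebar{\Tr}(Q_i'^2)\le \rho^2+\Trerr^2$; in particular one must be careful that the degree-$0$ term is the only one not attenuated by at least $\rho^2$, and that $\widehat{Q_i}(0)^2\le\Trerr^2$ while $\sum_{|x|\ge1}\widehat{Q_i}(x)^2 = 1-\widehat{Q_i}(0)^2\le 1$. A minor subtlety is whether to use the bound on $P_i$ or $Q_i$; since Cauchy–Schwarz pairs $P_i^T$ (un-noised, with $\widebar{\Tr}(P_i^2)=1$ exactly) against the noised combination $Q_0'\pm Q_1'$, the asymmetry is harmless and no constraint on $P_i$'s trace is even needed for this direction — only the $Q_i$'s traces enter, but since $\Trerr$ is defined as the max over all four observables the stated bound holds a fortiori. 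I would also remark that the bound is tight at $\Trerr=0$ via the canonical strategy $(Z,X,(Z+X)/\sqrt2,(Z-X)/\sqrt2)$, recovering part 1 of \cref{intro-max-vio-traceless}.
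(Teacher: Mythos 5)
Your proposal is correct, but it proves the bound by a different route than the paper's own proof of this theorem. You reduce to the ideal-EPR picture via \cref{pre-noise-to-obs}, establish the key attenuation estimate $\widebar{\Tr}\br{Q_i'^2}\leq \rho^2+\Trerr^2$ (isolating the unattenuated degree-$0$ coefficient, bounded by $\Trerr^2$), and then finish with Cauchy--Schwarz in the normalized Hilbert--Schmidt inner product, the parallelogram identity $\widebar{\Tr}\br{Q_0'+Q_1'}^2+\widebar{\Tr}\br{Q_0'-Q_1'}^2=2\widebar{\Tr}\br{Q_0'^2}+2\widebar{\Tr}\br{Q_1'^2}$, and $\sqrt{1+t}\leq 1+t/2$, arriving at exactly $2\sqrt{2}\rho+\sqrt{2}\Trerr^2/\rho$. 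The paper instead writes the (noise-shifted) CHSH operator $C$ with an explicit sum-of-squares decomposition, discards the negative square terms, and bounds the remaining terms using the same estimate on $\widebar{\Tr}\br{Q_j'^2}$ together with $P_i^2\preceq I$; your argument is essentially the Cauchy--Schwarz computation sketched in the paper's proof overview for the traceless case, extended to track the $\Trerr$ correction. The trade-off: your route is more elementary and self-contained for the winning-probability statement, while the paper's SoS decomposition is reused verbatim in the self-testing analysis (\cref{thm-close-to-deg1}, \cref{cor-almost-anti-commute}), where the near-vanishing of the square terms supplies operator-level relations that a scalar Cauchy--Schwarz bound does not. Two harmless inaccuracies to note: you assume $\widebar{\Tr}\br{P_i^2}=\widebar{\Tr}\br{Q_i^2}=1$ (projective binary observables), but the theorem allows POVM-derived observables for which only $\norm{P_i},\norm{Q_i}\leq 1$ and hence $\widebar{\Tr}\br{P_i^2},\widebar{\Tr}\br{Q_i^2}\leq 1$ hold --- all of your inequalities still go the right way with $\leq 1$, so nothing breaks; and, as you yourself observe, only Bob's trace bound enters your computation, which is consistent with (and stronger than) the stated hypothesis since $\Trerr$ is a maximum over all four observables.
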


\begin{proof}
Define \begin{align*}
    C=&P_0\otimes \br{Q_0'+Q_1'}+P_1\otimes\br{Q_0'-Q_1'}\\
    =&-\frac{\rho}{\sqrt{2}}\br{ P_0\otimes I-I\otimes\frac{Q_0'+Q_1'}{\sqrt{2}\rho}}^2\\
    &-\frac{\rho}{\sqrt{2}}\br{ P_1\otimes I-I\otimes \frac{Q_0'-Q_1'}{\sqrt{2}\rho}}^2\\
    &+I\otimes\frac{\br{Q_0'}^2+\br{Q_1'}^2}{\sqrt{2}\rho}+\frac{\rho}{\sqrt{2}}P_0^2\otimes I+\frac{\rho}{\sqrt{2}}P_1^2\otimes I.
\end{align*}  
For each $j\in\set{0,1}$, the trace bound implies
\begin{multline}\label{almost-deg-one-1}
   \bra{\phi^{\otimes n}}I\otimes Q_j'^2 \ket{\phi^{\otimes n}}= \widebar{\Tr}\br{Q_j'^2}\\\leq \hat{Q}_j\br{0^n}^2+\sum_{x\neq 0^n}\rho^{2|x|}\hat{Q}_j\br{x}^2 \leq \Trerr^2+\rho^2.
\end{multline}
Also $P_0^2\preceq I,P_1^2\preceq I$. By \cref{pre-noise-to-obs}, the violation is $\bra{\phi^{\otimes n}}C\ket{\phi^{\otimes n}}$, upper bounded by $2\sqrt{2}\rho+\frac{\sqrt{2}\Trerr^2}{\rho}$.
\end{proof}

\begin{remark}
    In \cref{traceless-chsh-vio-upperbound}, if $c_1=c_2=\rho$ and $\Trerr=0$, then we can get a tight upper bound $2\sqrt{2}\rho$ for the violation, and the maximal violation is achieved if $P_0=\A_1,P_1=\A_2,Q_0=\br{\B_1+\B_2}/\sqrt{2},Q_1=\br{\B_1-\B_2}/\sqrt{2}$, for $\set{\A_i},\set{\B_j}$ defined in \cref{Psi_cor}.
\end{remark}

The following theorem demonstrates that a near-optimal violation, combined with trace-bounded observables implies strong structural properties, which can be used to certify observables and properties of $\Psi$.

\begin{theorem}\label{chsh-selftest-main-thm}
    Suppose the players share $\Psi^{\x n}$ for an arbitrary $n$, where $\Psi$ satisfies \cref{share-state-assumption}. The players' measurements attain a violation $2\sqrt{2}\rho-\epsilon_v> 2$ in the CHSH game under noise.
    Define parameters
    \begin{align*}
        \gamma =\frac{1}{1-\rho}> 1,\qquad\epsilon=\frac{\sqrt{2}}{\rho}\epsilon_v+\frac{2}{\rho^2}\Trerr^2 .
    \end{align*}
    Then
    \begin{itemize}
        \item $\rho\geq c_1\geq c_2\geq \rho-O(\sqrt{\epsilon}\gamma)$,
    \end{itemize} 
    and there exists a register index $k\in[n]$, satisfying:
    \begin{itemize}
        \item (Register-concentration) For $D\in\set{P,Q},j\in\set{0,1}$, there exist 2-dimensional traceless observables $\tilde{D}_j^{\br{k}}$, such that
    \begin{align}
        D_j\approx_{\sqrt{\epsilon}\gamma} \tilde{D}_j^{\br{k}} \otimes I^{[n]\backslash\set{k}}.
    \end{align}
    \item (Simple Unitaries) There exists unitaries $U_A,U_B$ acting non-trivially only on the $k$-th register, such that
\begin{align*}
    &U_AP_0U_A^*\approx_{\sqrt{\epsilon}\gamma}  Z^{\br{k}}\otimes I^{[n]\backslash\set{k}}, \\
    &U_AP_1U_A^*\approx_{\sqrt{\epsilon}\gamma}  X^{\br{k}}\otimes I^{[n]\backslash\set{k}},\\
     &U_BQ_0U_B^*\approx_{\sqrt{\epsilon}\gamma}  Z^{\br{k}}\otimes I^{[n]\backslash\set{k}}, \\
     &U_BQ_1U_B^*\approx_{\sqrt{\epsilon}\gamma}  X^{\br{k}}\otimes I^{[n]\backslash\set{k}}.
\end{align*}
    \end{itemize}
    % Here $\epsilon=O\br{\epsilon_v+\delta^2}=O\br{\epsilon_v+\Trerr^2},\gamma =\frac{1}{1-\rho}\geq 1,\delta=\br{\frac{2\ln \br{2/p}}{t}}^{1/2}$, and $\Trerr$ is defined in \cref{def-chsh-max-trace}.
\end{theorem}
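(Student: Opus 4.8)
The plan is to combine the trace-test bound (\cref{chsh-trace-test}), the noise-to-observable transfer (\cref{pre-noise-to-obs}), and the SoS-type decomposition sketched in the proof overview, and then to push the resulting "degree-one concentration" of the observables through a register-localization argument. First I would invoke \cref{chsh-trace-test} to get $\Trerr \le 3\delta$, so that by Parseval $\hat Q_j(0^n)^2 \le 9\delta^2$ for each $j$ and likewise for $P$. Next, working on the ideal state $\Phi^{\otimes n}$ via \cref{pre-noise-to-obs}, I would write out the operator
\[
C_1 = 2\sqrt{2}\rho\, I\otimes I - P_0\otimes(Q_0'+Q_1') + P_1\otimes(Q_0'-Q_1')
\]
and use the displayed SoS identity
\[
C_1 = \frac{\rho}{\sqrt 2}\Bigl(\bigl(P_0\otimes I - I\otimes \tfrac{Q_0'+Q_1'}{\sqrt 2\rho}\bigr)^2 + \bigl(P_1\otimes I - I\otimes \tfrac{Q_0'-Q_1'}{\sqrt 2\rho}\bigr)^2\Bigr) + \sqrt 2\rho\, I\otimes I - I\otimes \tfrac{(Q_0')^2+(Q_1')^2}{\sqrt 2\rho}.
\]
Since $\bra{\Phi^{\otimes n}} C_1 \ket{\Phi^{\otimes n}} = \epsilon_v$ and, by \cref{almost-deg-one-1}-type estimates, $\widebar{\Tr}((Q_j')^2) \le \rho^2 + \Trerr^2 \le \rho^2 + 9\delta^2$, the last two terms of $C_1$ contribute at most $O(\delta^2/\rho)$; hence each squared term has expectation $O(\epsilon_v + \delta^2) = O(\epsilon)$, giving via \cref{approx-epr} that $P_0 \approx_{\sqrt\epsilon} (Q_0+Q_1)^T/(\sqrt 2) \cdot$(with the $Q_j'\approx\rho Q_j$ correction) and similarly for $P_1$. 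The key quantitative consequence is that $\widebar{\Tr}((Q_j')^2)$ being within $O(\delta^2)$ of $\rho^2$ forces $\sum_{|x|\ge 2}(\rho^{2|x|} - \rho^2)\hat Q_j(x)^2$ to be small; since $\rho^{2}-\rho^{2|x|}\ge \rho^2(1-\rho^2)$ for $|x|\ge 2$, the weight of $Q_j$ on degree $\ge 2$ terms is $O(\epsilon/(\rho^2(1-\rho^2)))$, which is where the factor $\gamma = 1/(1-\rho)$ enters. So $Q_j$ is $(\sqrt\epsilon\gamma)$-close to its degree-$\le 1$ truncation, call it $Q_j^{\le 1} = \sum_i \hat Q_j(e_i\cdot)\B_{\cdot}$ acting as a single-qubit operator on register $i$ summed over $i$.

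The next step is the register-localization argument: I would use that $Q_j$ is a genuine binary observable ($Q_j^2 = I$) together with its proximity to a degree-$\le 1$ operator, and the anticommutation $\{Q_0,Q_1\}\approx 0$ inherited from the CHSH relations (derived by the standard computation $\{P_0,P_1\}\approx 0 \Rightarrow \{Q_0,Q_1\}\approx 0$ via the approximate equalities above), to conclude that there is a single register $k$ carrying essentially all the degree-one weight of both $Q_0$ and $Q_1$. Concretely, write $Q_j^{\le 1} = \sum_{i=1}^n \B_{0}^{[n]\setminus i}\otimes q_j^{(i)}$ with $q_j^{(i)} \in \operatorname{span}\{X,Y,Z\}$ on register $i$; from $\widebar{\Tr}(Q_j^2)=1$ we get $\sum_i \|q_j^{(i)}\|_F^2/2 \approx 1$, and from $\{Q_0^{\le1},Q_1^{\le1}\}$ having small normalized trace-norm together with $Q_j^{\le1}$ being close to an involution one deduces that the vectors $(\|q_0^{(i)}\|)_i$ and $(\|q_1^{(i)}\|)_i$ concentrate on a common index $k$ (the "one register" must host an anticommuting pair, impossible to split across tensor factors while keeping the square close to $I$). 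This is essentially the same combinatorial step as in the Magic Square / Pauli-basis literature; I would formalize it by showing that if the weight were spread over two or more registers, $\|Q_j^2 - I\|$ or $\|\{Q_0,Q_1\}\|$ would be bounded below by a constant, contradicting near-optimality. Having isolated register $k$, I set $\tilde Q_j^{(k)} = $ the normalized single-qubit part, which is traceless (no $\B_0$ component) and $(\sqrt\epsilon\gamma)$-close to a genuine $\pm1$ observable, hence can be rounded to one; and I pull the same conclusion back to $P_j$ through $P_j \approx (Q_j')^T$-type relations.

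Finally, to get the unitaries, I would apply the single-qubit rigidity of a pair of almost-anticommuting traceless binary observables on $\mathbb{C}^2$: $\tilde Q_0^{(k)}, \tilde Q_1^{(k)}$ are Hermitian, squared-$\approx I$, traceless, anticommuting-$\approx 0$, so after a $2\times 2$ unitary $U_B$ (acting only on register $k$) they map to $Z$ and $X$ up to $O(\sqrt\epsilon\gamma)$ error — this is a short explicit $\mathrm{SU}(2)$ computation (diagonalize $\tilde Q_0^{(k)}$, use anticommutation to put $\tilde Q_1^{(k)}$ in the off-diagonal form, absorb a phase). The same for $U_A$ with $\tilde P_0^{(k)},\tilde P_1^{(k)}$. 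The main obstacle I anticipate is the register-localization step: transferring "$Q_j$ is close to a degree-$\le 1$ operator" plus "$Q_j$ is exactly an involution" plus "$\{Q_0,Q_1\}$ small" into "a single register $k$ works for both $j$ and both players simultaneously" requires care, because a priori the degree-one mass of $Q_0$ and of $Q_1$ could sit on different registers; ruling this out uses that an anticommuting pair cannot live on disjoint tensor factors (their anticommutator would then be $2Q_0Q_1\neq 0$ with large norm), and one must quantify how the approximation errors propagate when one replaces $Q_j$ by its truncation and then by the rounded involution. Tracking the constants so that the final bound is clean $\sqrt\epsilon\gamma$ (rather than, say, $\sqrt\epsilon\,\gamma^2$) is the delicate bookkeeping part.
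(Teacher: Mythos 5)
Your plan follows essentially the same route as the paper's proof: the trace test gives $\Trerr\le 3\delta$ (\cref{chsh-trace-test}), the SoS decomposition of the noise-shifted CHSH operator forces the observables to concentrate on degree-one Pauli terms with the $\gamma=1/(1-\rho)$ loss, squaring plus approximate anticommutation localizes everything on a single register, the relations $P_i\approx (Q_0^T+(-1)^iQ_1^T)/\sqrt2$ put Alice on the same register as Bob, and a $2\times2$ unitary extracts $Z,X$ (the paper's \cref{anti-commute-to-ZX}).

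There is, however, one genuine gap: you assume throughout that the observables are binary, $Q_j^2=I$ (and implicitly $P_i^2=I$ — your displayed SoS identity for $C_1$ is only valid under that assumption; the paper's version carries the additional positive terms $\tfrac{\rho}{\sqrt2}\br{I\otimes I-P_i^2\otimes I}$). In this setting the measurements are arbitrary POVMs, and because the shared state is fixed one cannot invoke Naimark dilation — this is precisely the ``measurement projectivity'' issue the paper emphasizes, so near-binarity must be \emph{derived}, and your own argument needs it: the register-localization step requires the degree-one Fourier weight of $Q_j$ to be close to $1$, i.e. $\overline{\Tr}\br{Q_j^2}\approx 1$, whereas the bound you extract from $C_1$ only yields $\overline{\Tr}\br{(Q_j')^2}\gtrsim\rho^2$ and hence $\overline{\Tr}\br{Q_j^2}\gtrsim\rho^2$, which a priori allows ``shrunken'' observables spread over many registers. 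The paper closes this by also analyzing the symmetric decomposition $C_2$, in which the noise is moved onto Alice's observables: positivity of its square terms gives $\bra{\Phi^{\otimes n}}P_i^2\otimes I\ket{\Phi^{\otimes n}}\approx_\epsilon 1$ and $\bra{\Phi^{\otimes n}}I\otimes Q_j^2\ket{\Phi^{\otimes n}}\approx_\epsilon 1$ (\cref{almost-deg-one-3}), whence $Q_j^2\approx_{\sqrt\epsilon}I$ by \cref{closetobservable}; this same fact is what legitimizes your step $\set{P_0,P_1}\approx 0\Rightarrow\set{Q_0,Q_1}\approx 0$, which uses $I\otimes Q_j^2\ket{\Phi^{\otimes n}}\approx\ket{\Phi^{\otimes n}}$. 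With that ingredient added (and the scaling $Q_j'\approx_{\sqrt\epsilon}\rho Q_j$ made precise via \cref{general-obs-scaling}, as in \cref{thm-close-to-deg1}), the remainder of your outline matches the paper's argument.
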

The first conclusion of \cref{chsh-selftest-main-thm} $c_1\approx c_2$ indicates the effect of noise on the shared state. 
The second conclusion implies that the observables are concentrated on a single register. In contrast, for the noiseless case, an optimal measurement may measure on multiple registers. That is, adding noise forces the players to use only one pair of states to play the game. The third conclusion shows that to extract Pauli $Z$ and $X$, the unitaries act only on a single qubit, as opposed to the whole system in noiseless cases.
Effectively, our rigidity result in \cref{chsh-selftest-main-thm} certifies both the state and the measurements used by the players as in the noiseless rigidity result.

The final rigidity error is $\sqrt{\epsilon}\gamma=O\br{\sqrt{\epsilon_v+\Trerr^2}/\br{\rho\br{1-\rho}}}$.
In both the noiseless case and noisy case, when the winning violation is $\epsilon$-close to optimal, the rigidity robustness parameter scales as $O(\sqrt{\epsilon})$ with a constant depolarizing noise level. 
However for noiseless case, the last two conclusions (register-concentration and simple unitaries) as listed in \cref{chsh-selftest-main-thm} are not satisfied as $\gamma\to\infty$. For example, the measurement using observables $\br{Z,X,\frac{Z+X}{\sqrt{2}},\frac{Z-X}{\sqrt{2}}}^{\x 3}$ acting on three EPR pairs is also optimal, but violates the last two conclusions.

To prove \cref{chsh-selftest-main-thm}, we will first prove these two propositions:
\begin{prop}[Observable Scaling]\label{thm-close-to-deg1}
     Under the assumptions in \cref{chsh-selftest-main-thm}, if Alice and Bob's measurements attain a violation $2\sqrt{2}\rho-\epsilon_v>2$ in CHSH game under noise, then for $D\in \set{P,Q},i\in\set{0,1}$,
    \begin{align}
        D_i'\approx_{\sqrt{\epsilon}} \rho D_i,
    \end{align}
    where $\epsilon=\frac{\sqrt{2}}{\rho}\epsilon_v+\frac{2}{\rho^2}\Trerr^2$ and $\Trerr$ is defined in \cref{def-chsh-max-trace}.
\end{prop}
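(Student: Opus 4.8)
The plan is to reduce the statement to a Fourier‑concentration estimate on the binary observables, and then use near‑optimality of the violation to establish that estimate. The key point is that $\Delta_\rho^{\otimes n}$ and the rescaling $M\mapsto\rho M$ act \emph{identically} on every degree‑$1$ Pauli term, so the only obstruction to $D_i'\approx\rho D_i$ comes from the degree‑$0$ part (controlled by $\Trerr$) and the degree‑$(\ge 2)$ part. Concretely, writing $D_i=\sum_x\widehat{D_i}(x)\sigma_x$, using $D_i^2=I$, and applying Parseval's identity (\cref{Parseval's identity}),
\begin{align*}
\overline{\Tr}\bigl((D_i'-\rho D_i)^2\bigr)
&=(1-\rho)^2\,\widehat{D_i}(0^n)^2+\sum_{\abs{x}\ge 2}(\rho^{\abs{x}}-\rho)^2\,\widehat{D_i}(x)^2\\
&\le \Trerr^2+\rho^2\sum_{\abs{x}\ge 2}\widehat{D_i}(x)^2,
\end{align*}
where I used $\bigl|\rho^{\abs{x}}-\rho\bigr|\le\rho$ for $\abs{x}\ge2$ and $\widehat{D_i}(0^n)^2=\overline{\Tr}(D_i)^2\le\Trerr^2$. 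Hence it suffices to show that the degree‑$(\ge 2)$ weight $c_i:=\sum_{\abs{x}\ge2}\widehat{D_i}(x)^2$ satisfies $c_i=O(\epsilon)$.

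To bound $c_i$ I extract a \emph{refined} upper bound on the violation. Dealing with $Q_0,Q_1$ first: by \cref{pre-noise-to-obs} the violation equals $\bra{\Phi^{\otimes n}}C\ket{\Phi^{\otimes n}}$ with $C=P_0\otimes(Q_0'+Q_1')+P_1\otimes(Q_0'-Q_1')$, and the SoS decomposition used in the proof of \cref{traceless-chsh-vio-upperbound} — equivalently, two applications of Cauchy--Schwarz with $P_0^2=P_1^2=I$ — gives
\[
2\sqrt{2}\rho-\epsilon_v \;=\; \text{violation} \;\le\; 2\sqrt{\overline{\Tr}(Q_0'^2)+\overline{\Tr}(Q_1'^2)},
\]
hence $\overline{\Tr}(Q_0'^2)+\overline{\Tr}(Q_1'^2)\ge 2\rho^2-\sqrt{2}\rho\epsilon_v$. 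On the other hand, splitting the Pauli expansion by degree and using $\rho^{2\abs{x}}\le\rho^4$ for $\abs{x}\ge2$ together with $\sum_x\widehat{Q_i}(x)^2=1$ yields $\overline{\Tr}(Q_i'^2)\le \rho^2+(1-\rho^2)\widehat{Q_i}(0^n)^2-\rho^2(1-\rho^2)c_i$. Summing over $i\in\{0,1\}$, comparing with the lower bound, and using $\widehat{Q_i}(0^n)^2\le\Trerr^2$ gives $\rho^2(1-\rho^2)(c_0+c_1)\le 2(1-\rho^2)\Trerr^2+\sqrt{2}\rho\epsilon_v$, i.e.\ $c_0+c_1\le \frac{2}{\rho^2}\Trerr^2+\frac{\sqrt{2}}{\rho(1-\rho^2)}\epsilon_v$. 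Feeding $c_i\le c_0+c_1$ into the first estimate gives $\overline{\Tr}\bigl((Q_i'-\rho Q_i)^2\bigr)\le 3\Trerr^2+\frac{\sqrt{2}\rho}{1-\rho^2}\epsilon_v$, which for constant $\rho$ is $O\bigl(\tfrac{\sqrt2}{\rho}\epsilon_v+\tfrac{2}{\rho^2}\Trerr^2\bigr)=O(\epsilon)$, so $Q_i'\approx_{\sqrt\epsilon}\rho Q_i$. The statement for $P_0,P_1$ follows by the same argument, either by invoking the symmetry of the CHSH game under exchanging the two players, or directly by writing the violation as $\overline{\Tr}\bigl((P_0')^T(Q_0+Q_1)\bigr)+\overline{\Tr}\bigl((P_1')^T(Q_0-Q_1)\bigr)$, applying Cauchy--Schwarz with $Q_0^2=Q_1^2=I$, and repeating the degree‑distribution step with $P_i$ in place of $Q_i$.

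The only step needing a little care is the \emph{two‑sided} estimate $\overline{\Tr}(Q_i'^2)=\rho^2\pm O(\epsilon)$: the crude bound behind \cref{traceless-chsh-vio-upperbound} only gives $\overline{\Tr}(Q_i'^2)\le\rho^2+O(\Trerr^2)$, and one genuinely needs the matching lower bound, which is exactly where the Cauchy--Schwarz (equivalently SoS) slack is spent. Everything after that is routine bookkeeping with Parseval's identity and the explicit action of the depolarizing channel on Pauli coefficients; the hidden constants depend on $\rho$, which is consistent with the paper's convention of treating $\rho$ as fixed. I do not foresee any essential obstacle.
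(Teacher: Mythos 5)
Your route is genuinely different from the paper's: you never form the dual operator $C_2$ (noise moved to Alice's side) and never invoke \cref{general-obs-scaling}/\cref{linear-programming}; instead you split the Fourier weight by degree and control the degree-$\ge 2$ mass $c_i$ directly from a Cauchy--Schwarz upper bound on the violation. The bookkeeping itself is sound, with one repair needed: you repeatedly invoke $D_i^2=I$ (and $P_0^2=P_1^2=I$, $Q_0^2=Q_1^2=I$), but in this paper the measurements are general POVMs (the shared state is fixed, so Naimark dilation is unavailable) and the observables only satisfy $\norm{D_i}\le 1$; projectivity is a conclusion the paper has to work for, not a hypothesis. Fortunately, every place you use it only needs the inequality direction: Cauchy--Schwarz needs $\overline{\Tr}(P_i^2)\le 1$, and your bound $\overline{\Tr}(Q_i'^2)\le\rho^2+(1-\rho^2)\widehat{Q_i}(0^n)^2-\rho^2(1-\rho^2)c_i$ only needs $\sum_x\widehat{Q_i}(x)^2\le 1$. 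So that issue is cosmetic.

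The substantive shortfall is quantitative. Your detour through $c_i$ costs a factor $\rho^2(1-\rho^2)$, so you end with $\overline{\Tr}\br{(Q_i'-\rho Q_i)^2}\le 3\Trerr^2+\frac{\sqrt{2}\rho}{1-\rho^2}\epsilon_v$, i.e.\ $D_i'\approx_{\sqrt{\gamma\epsilon}}\rho D_i$ with $\gamma=\frac{1}{1-\rho}$, not the stated $\approx_{\sqrt{\epsilon}}$ (the paper's $O(\cdot)$ carries an absolute constant, and the paper deliberately keeps $\gamma$ explicit wherever it enters: \cref{thm-close-to-deg1} and \cref{cor-almost-anti-commute} are $\gamma$-free, the $\gamma$ appears only at the truncation step in \cref{chsh-selftest-main-thm}, and this is what allows the $\gamma$-free anti-commutation bounds quoted in \cref{thm-multiple-self-test}). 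The loss is intrinsic to your approach: $c_i$ cannot be controlled uniformly as $\rho\to 1$ (at $\rho=1$ high-degree tensor-product strategies are optimal), whereas the actual target $\overline{\Tr}\br{(Q_i'-\rho Q_i)^2}=\sum_x(\rho^{|x|}-\rho)^2\widehat{Q_i}(x)^2$ weights high degrees by $(\rho^{|x|}-\rho)^2$ and remains small there. The paper avoids bounding $c_i$ at this stage: it derives the two-sided estimate $\overline{\Tr}(Q_i'^2)\approx_{\epsilon}\rho^2$ (your Cauchy--Schwarz slack plays the role of its SoS slack here, so this part is parallel), additionally gets $\overline{\Tr}(Q_i^2)\approx_{\epsilon}1$ from the player-swapped decomposition $C_2$, and then \cref{general-obs-scaling} (a convexity/LP argument) converts these directly into $\overline{\Tr}\br{(Q_i'-\rho Q_i)^2}=O(\epsilon)$ with a $\rho$-uniform constant. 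So either weaken your conclusion to carry the $\sqrt{\gamma}$ factor (enough for fixed constant $\rho$, but not the proposition as stated or as used downstream), or add the missing lower bound on $\overline{\Tr}(Q_i^2)$ and replace the $c_i$ step by the paper's convexity argument.
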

\begin{proof}
 Since the violation is greater than two, $\Psi$ is not a separable state. By \cref{EquivtoEPR}, $\frac{1}{4}\sum_{i=0}^3\A_i\otimes  \B_i$ is a pure state, we set
    \begin{align*}
        \ketbra{\phi}=\frac{1}{4}\sum_{i=0}^3\A_i\otimes  \B_i.
    \end{align*}
    Since the partial traces of $\ketbra{\phi^{\x n}}$ on both sides are $I/2^n$, for any Hermitian matrices $P,Q\in \H_{2^n}$,
    \begin{align*}
        P\approx_{\epsilon} Q \Leftrightarrow P\otimes I\ket{\phi^{\x n}}\approx_{\epsilon}Q\otimes I\ket{\phi^{\x n}}.
    \end{align*}
By \cref{pre-noise-to-obs}, we have $\bra{\phi^{\otimes n}}C_1\ket{\phi^{\otimes n}}=\bra{\phi^{\otimes n}}C_2\ket{\phi^{\otimes n}}=\epsilon_v$, where
\begin{align*}
    C_1=&2\sqrt{2}\rho I\x I -\br{P_0\otimes \br{Q_0'+Q_1'}+P_1\otimes\br{Q_0'-Q_1'}}\\
    =&\frac{\rho}{\sqrt{2}}\br{ P_0\otimes I-I\otimes\frac{Q_0'+Q_1'}{\sqrt{2}\rho}}^2\\
    &+\frac{\rho}{\sqrt{2}}\br{ P_1\otimes I-I\otimes \frac{Q_0'-Q_1'}{\sqrt{2}\rho}}^2\\
    &+\sqrt{2}\rho I\otimes I-I\otimes\frac{\br{Q_0'}^2+\br{Q_1'}^2}{\sqrt{2}\rho}\\
    &+\frac{\rho}{\sqrt{2}}\br{I\otimes I-P_0^2\otimes I}+\frac{\rho}{\sqrt{2}}\br{I\otimes I-P_1^2\otimes I},\\
    C_2=&2\sqrt{2}\rho I\x I -\br{P_0'\otimes \br{Q_0+Q_1}+P_1'\otimes\br{Q_0-Q_1}}\\
    =&\frac{\rho}{\sqrt{2}}\br{ I\otimes Q_0-\frac{P_0'+P_1'}{\sqrt{2}\rho}\otimes I}^2\\
    &+\frac{\rho}{\sqrt{2}}\br{ I\otimes Q_1-\frac{P_0'-P_1'}{\sqrt{2}\rho}\otimes I}^2\\
    &+\sqrt{2}\rho I\otimes I-\frac{\br{P_0'}^2+\br{P_1'}^2}{\sqrt{2}\rho}\otimes I\\
    &+\frac{\rho}{\sqrt{2}}\br{I\otimes I-I\otimes Q_0^2}+\frac{\rho}{\sqrt{2}}\br{I\otimes I-I\otimes Q_1^2}.
\end{align*}
% For $j\in\set{0,1}$, the trace bound implies
% \begin{align}\label{almost-deg-one-1}
%    \bra{\Phi^{\otimes n}}I\otimes Q_j'^2 \ket{\Phi^{\otimes n}}= \widebar{\Tr}\br{Q_j'^2}=\hat{Q}_j\br{0^n}^2+\sum_{x\neq 0^n}\rho^{2|x|}\hat{Q}_j\br{x}^2 \leq \epsilon_{t}^2+\rho^2.
% \end{align}
Since the square terms in $C_1$ and $I-P_i$ are positive semidefinite, their inner-products with $\ketbra{\phi}^{\otimes n}$ are non-negative. Therefore,
\begin{align}\label{almost-deg-one-0}
    \bra{\phi^{\otimes n}} \br{\sqrt{2}\rho I\otimes I-I\otimes\frac{\br{Q_0'}^2+\br{Q_1'}^2}{\sqrt{2}\rho}}\ket{\phi^{\otimes n}}\leq \epsilon_v.
\end{align}
Combining \cref{almost-deg-one-1} and \cref{almost-deg-one-0}, we have for $j\in\set{0,1}$,
\begin{align}\label{chsh-self-test-eq-1}
    \bra{\phi^{\otimes n}}I\otimes Q_j'^2 \ket{\phi^{\otimes n}} \geq \rho^2-\sqrt{2}\rho\epsilon_v-\Trerr^2.
\end{align}
Combining \cref{almost-deg-one-1} and \cref{chsh-self-test-eq-1}, we have
\begin{align}\label{almost-deg-one-2}
    \bra{\phi^{\otimes n}}I\otimes Q_j'^2 \ket{\phi^{\otimes n}}\approx_{\rho^2\epsilon} \rho^2
\end{align}
By \cref{almost-deg-one-1}, $$\bra{\phi^{\otimes n}}\br{\sqrt{2}\rho I\otimes I-I\otimes\frac{\br{Q_0'}^2+\br{Q_1'}^2}{\sqrt{2}\rho}}\ket{\phi^{\otimes n}}\geq -\frac{\sqrt{2}\Trerr^2}{\rho}.$$ Then, we have for $i\in\set{0,1},$
\begin{align*}
    \bra{\phi^{\otimes n}}\br{I\otimes I-P_i^2\otimes I}\ket{\phi^{\otimes n}} \leq \frac{\sqrt{2}}{\rho}\epsilon_v+\frac{2\Trerr^2}{\rho^2}=\epsilon,
\end{align*}
which gives
\begin{align*}
    \bra{\phi^{\otimes n}}P_i^2\otimes I\ket{\phi^{\otimes n}}\approx_\epsilon 1.
\end{align*} 
A symmetric analysis of $C_2$ gives
\begin{align}\label{almost-deg-one-3}
    \bra{\phi^{\otimes n}}I\otimes Q_j^2\ket{\phi^{\otimes n}}\approx_\epsilon 1.
\end{align}
From \cref{almost-deg-one-2}, \cref{almost-deg-one-3} and \cref{general-obs-scaling} (where we set $m=2,r=\rho $), we have 
\begin{align*}
    Q_j'\approx_{\sqrt{\epsilon}} \rho Q_j.
\end{align*}
Similarly, if we analyze $C_2$ we also have $P_i'\approx_{\sqrt{\epsilon}} \rho P_i$ for $i\in\set{0,1}$.
\end{proof}
Based on $C_1,C_2$ in \cref{thm-close-to-deg1}, we can also derive the anti-commutation relations of observables. 
\begin{prop}[Anti-commutation Relation]\label{cor-almost-anti-commute}

    Under the assumptions in \cref{chsh-selftest-main-thm}, if Alice and Bob's measurements attain a violation of $2\sqrt{2}\rho-\epsilon_v>2$ in the CHSH game under noise, for $D\in \set{P,Q}$
    \begin{align}
        D_0D_1\approx_{\sqrt{\epsilon}} -D_1D_0
    \end{align} where $\epsilon=\frac{\sqrt{2}}{\rho}\epsilon_v+\frac{2}{\rho^2}\Trerr^2$, and $\Trerr$ is defined in \cref{def-chsh-max-trace}.
\end{prop}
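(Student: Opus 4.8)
The plan is to extract the anti-commutation relation from the sum-of-squares operators $C_1$ and $C_2$ already introduced in the proof of \cref{thm-close-to-deg1}, together with the scaling relation $D_j' \approx_{\sqrt\epsilon}\rho D_j$ it establishes. The starting point is the same inequality used there: since all square terms appearing in $C_1$ are positive semidefinite and $I - P_i^2 \succeq 0$, the identity $\bra{\Phi^{\otimes n}} C_1 \ket{\Phi^{\otimes n}} = \epsilon_v$ forces the expectation of each individual square term on $\ket{\Phi^{\otimes n}}$ to be $O(\epsilon_v + \Trerr^2/\rho) = O(\rho\epsilon)$. In particular
\[
\bra{\Phi^{\otimes n}} \Bigl( P_0 \otimes I - I \otimes \tfrac{Q_0' + Q_1'}{\sqrt 2 \rho} \Bigr)^2 \ket{\Phi^{\otimes n}} = O(\epsilon),
\qquad
\bra{\Phi^{\otimes n}} \Bigl( P_1 \otimes I - I \otimes \tfrac{Q_0' - Q_1'}{\sqrt 2 \rho} \Bigr)^2 \ket{\Phi^{\otimes n}} = O(\epsilon),
\]
which by \cref{approx-epr} (identifying $\|(M\otimes I)\ket{\Phi^{\otimes n}}\|^2 = \widebar\Tr(M M^*)$ for Hermitian $M$) says
\[
P_0 \otimes I \ket{\Phi^{\otimes n}} \approx_{\sqrt\epsilon} I \otimes \tfrac{Q_0' + Q_1'}{\sqrt 2 \rho}\ket{\Phi^{\otimes n}},
\qquad
P_1 \otimes I \ket{\Phi^{\otimes n}} \approx_{\sqrt\epsilon} I \otimes \tfrac{Q_0' - Q_1'}{\sqrt 2 \rho}\ket{\Phi^{\otimes n}}.
\]
Using \cref{thm-close-to-deg1} to replace $Q_j'$ by $\rho Q_j$ (incurring only an additional $O(\sqrt\epsilon)$ error via \cref{approx-epr} and \cref{vec-epr}), and then transporting the operator from the $B$-side to the $A$-side with \cref{vec-epr}, this becomes
\[
P_0 \approx_{\sqrt\epsilon} \tfrac{Q_0^T + Q_1^T}{\sqrt 2},
\qquad
P_1 \approx_{\sqrt\epsilon} \tfrac{Q_0^T - Q_1^T}{\sqrt 2}.
\]

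From here the anti-commutation of $P_0$ and $P_1$ is a direct computation in the $\approx_{\sqrt\epsilon}$ metric, exploiting that $\approx_{\sqrt\epsilon}$ is preserved under multiplication by bounded-norm operators ($\|P_i\|,\|Q_j\|\le 1$) and that $Q_0^2 \approx_{\sqrt\epsilon} I$, $Q_1^2 \approx_{\sqrt\epsilon} I$ hold by \cref{almost-deg-one-3} together with the same argument applied to the $I-Q_j^2$ terms in $C_2$. Indeed,
\[
P_0 P_1 + P_1 P_0
\approx_{\sqrt\epsilon}
\tfrac12\bigl( (Q_0^T + Q_1^T)(Q_0^T - Q_1^T) + (Q_0^T - Q_1^T)(Q_0^T + Q_1^T)\bigr)
= (Q_0^T)^2 - (Q_1^T)^2
\approx_{\sqrt\epsilon} I - I = 0,
\]
so $P_0 P_1 \approx_{\sqrt\epsilon} -P_1 P_0$. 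The analogous statement for $Q_0, Q_1$ follows by running the entire argument with $C_2$ in place of $C_1$ (or, symmetrically, by using $Q_j \approx_{\sqrt\epsilon} \tfrac{1}{\sqrt2}(P_0^T \pm P_1^T)$ and $P_i^2 \approx_{\sqrt\epsilon} I$, which was also established in the proof of \cref{thm-close-to-deg1}). Care is needed only to track that each step of the algebraic manipulation is a $\approx_{\sqrt\epsilon}$-step with an absolute constant that does not blow up, which holds because every intermediate factor has spectral norm $O(1)$ uniformly in $n$.

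The main obstacle, such as it is, is bookkeeping rather than conceptual: one must make sure that the passage from the quadratic-form bound $\bra{\Phi^{\otimes n}}(\cdot)^2\ket{\Phi^{\otimes n}} = O(\epsilon)$ to the operator approximation $P_i \approx_{\sqrt\epsilon} \tfrac{1}{\sqrt2}(Q_0^T \pm Q_1^T)$ is legitimate — this is exactly the content of \cref{approx-epr}, since for a Hermitian matrix $M$ the quantity $\bra{\Phi^{\otimes n}}(M\otimes I)^2\ket{\Phi^{\otimes n}}$ equals $\widebar\Tr(M^2) = \tfrac{1}{d}\|M\|_F^2$ — and that substituting $Q_j' \to \rho Q_j$ inside these expressions only costs $O(\sqrt\epsilon)$, which is immediate from \cref{thm-close-to-deg1} and the triangle inequality for $\approx_{\sqrt\epsilon}$. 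Everything else is the standard CHSH rigidity computation, now carried out at the level of operators rather than state vectors.
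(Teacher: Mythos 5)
Your proposal is correct and follows essentially the same route as the paper: it extracts the smallness of the square terms in $C_1$ (and $C_2$), uses the scaling $Q_j'\approx_{\sqrt{\epsilon}}\rho Q_j$ from \cref{thm-close-to-deg1} to obtain $P_i\approx_{\sqrt{\epsilon}}\tfrac{1}{\sqrt{2}}\br{Q_0^T+\br{-1}^iQ_1^T}$, and then derives anti-commutation from $Q_j^2\approx_{\sqrt{\epsilon}} I$ (and symmetrically for the $Q$'s). The only difference is cosmetic — you work with operator approximations via \cref{approx_multiply} where the paper manipulates $\ket{\Phi^{\otimes n}}$ and invokes \cref{vec-epr}/\cref{approx-epr} — and the error bookkeeping matches.
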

\begin{proof}
    For $C_1,C_2$ defined in \cref{thm-close-to-deg1}, we have
    \begin{align*}
        \bra{\phi^{\otimes n}}\br{ P_i\otimes I-I\otimes\frac{Q_0'+\br{-1}^iQ_1'}{\sqrt{2}\rho}}^2\ket{\phi^{\otimes n}}\approx_{\epsilon} 0,
    \end{align*}
    which is equivalent to
    \begin{align*}
        P_i\otimes I\ket{\phi^{\otimes n}}\approx_{\sqrt{\epsilon}} I\otimes\frac{Q_0'+\br{-1}^iQ_1'}{\sqrt{2}\rho}\ket{\phi^{\otimes n}}.
    \end{align*}
    By \cref{thm-close-to-deg1}, $Q_i'\approx_{\sqrt{\epsilon}}\rho Q_i$, and by \cref{EquivtoEPR}, there exists unitary $V$, such that $\ket{\phi}=(I \otimes V^T)\ket{\Phi}$. 
    Combining \cref{approx-epr}, we have
\begin{align}\label{player-strategy-relation}
    P_i&\approx_{\sqrt{\epsilon}} V^{\otimes n}\br{\frac{Q_0+\br{-1}^iQ_1}{\sqrt{2}}} \br{V^*}^{\otimes n}.
    \end{align}
    Therefore
    \begin{align*}       P_0P_1\approx_{\sqrt{\epsilon}}V^{\otimes n} \frac{Q_0^2-Q_1^2-Q_0Q_1+Q_1Q_0}{2} \br{V^*}^{\x n},\\
    P_1P_0\approx_{\sqrt{\epsilon}}V^{\otimes n}\frac{Q_0^2-Q_1^2+Q_0Q_1-Q_1Q_0}{2}\br{V^*}^{\x n}.
    \end{align*}
    By \cref{closetobservable}, $ Q_0^2\approx_{\sqrt{\epsilon}}  Q_1^2\approx_{\sqrt{\epsilon}} I$. 
    Thus
    \begin{align*}
        P_0P_1\approx_{\sqrt{\epsilon}}V^{\otimes n}\frac{-Q_0Q_1+Q_1Q_0}{2}\br{V^*}^{\x n}\approx_{\sqrt{\epsilon}} -P_1P_0.
    \end{align*}
    A symmetric analysis gives that $Q_0Q_1\approx_{\sqrt{\epsilon}}- Q_1Q_0$.
\end{proof}

\medskip
\noindent
 %We state several auxiliary lemmas and defer their proofs to \cref{appendix-lem-supporting-chsh-self-test}.

We now state several auxiliary lemmas that together establish the structural conclusions of \cref{chsh-selftest-main-thm}. 
\cref{lem:single-observable-localization} shows that each observable is essentially supported on a single register. 
\cref{lem:same-register} ensures that for each player, the two observables act on the same register. \cref{lem:register-matching} further aligns Alice’s and Bob’s registers. 
%Finally, \cref{lem:source-coefficient-rigidity} extracts the corresponding constraints on the source coefficients. 
The proofs are deferred to \cref{appendix-lem-supporting-chsh-self-test}.
\begin{lemma}\label{lem:single-observable-localization}
Assume the hypotheses of \cref{chsh-selftest-main-thm}. Then for every \(D\in\{P,Q\}\) and \(j\in\{0,1\}\), there exist an index
\[
k_{D,j}\in[n]
\]
and a traceless binary observable
\[
\widetilde{D}_j^{(k_{D,j})}\in \H_2
\]
such that
\[
D_j \approx_{\sqrt{\epsilon}\gamma}
\widetilde{D}_j^{(k_{D,j})}\otimes I^{[n]\setminus\{k_{D,j}\}}.
\]
\end{lemma}

\begin{lemma}\label{lem:same-register}
Let \(D\in\{P,Q\}\). Assume
\[
D_0 \approx_{\sqrt{\epsilon}\gamma}
\widetilde{D}_0^{(k_{D,0})}\otimes I^{[n]\setminus\{k_{D,0}\}},
\\
D_1 \approx_{\sqrt{\epsilon}\gamma}
\widetilde{D}_1^{(k_{D,1})}\otimes I^{[n]\setminus\{k_{D,1}\}},
\]
where \(\widetilde{D}_0^{(k_{D,0})},\widetilde{D}_1^{(k_{D,1})}\in\H_2\) are traceless binary observables, and suppose moreover that
\[
\{D_0,D_1\}\approx_{\sqrt{\epsilon}}0.
\]
Then
\[
k_{D,0}=k_{D,1}.
\]
\end{lemma}

\begin{lemma}\label{lem:register-matching}
Assume
\[
P_i \approx_{\sqrt{\epsilon}\gamma}
\widetilde{P}_i^{(k_1)}\otimes I^{[n]\setminus\{k_1\}},
\qquad
Q_j \approx_{\sqrt{\epsilon}\gamma}
\widetilde{Q}_j^{(k_2)}\otimes I^{[n]\setminus\{k_2\}},
\]
for \(i,j\in\{0,1\}\), where all local observables are traceless and binary. If there exists a two-dimensional unitary \(V\) such that
\[
P_i \approx_{\sqrt{\epsilon}\gamma}
V^{\otimes n}
\left(
\frac{Q_0+(-1)^iQ_1}{\sqrt{2}}
\right)
(V^*)^{\otimes n},
\qquad i\in\{0,1\},
\]
then
\[
k_1=k_2.
\]
\end{lemma}

It remains to extract information about the source. 
Once the observables are localized on a common register, we analyze their expansions in the basis that diagonalizes the correlation matrix. 
The near-optimality conditions then constrain the corresponding coefficients, which leads to the following lemma. 

\begin{lemma}[Source coefficient rigidity]\label{lem:source-coefficient-rigidity}
Assume the hypotheses of \cref{chsh-selftest-main-thm}. Then
\[
\rho \ge c_1 \ge c_2 \ge \rho - O(\sqrt{\epsilon}\gamma).
\]
\end{lemma}

The proof is deferred to \cref{appendix-lem-supporting-chsh-self-test}. Now we are ready to prove \cref{chsh-selftest-main-thm}.
\begin{proof}[Proof of \cref{chsh-selftest-main-thm}]

By \cref{thm-close-to-deg1,lem:single-observable-localization}, each of
$P_0,\ P_1,\ Q_0,\ Q_1$
is \(O(\sqrt{\epsilon}\gamma)\)-close to a traceless binary observable acting on a single register. Thus there exist indices
\[
k_{P,0},\ k_{P,1},\ k_{Q,0},\ k_{Q,1}\in[n]
\]
and traceless binary observables
\[
\widetilde{P}_0^{(k_{P,0})},\ \widetilde{P}_1^{(k_{P,1})},\
\widetilde{Q}_0^{(k_{Q,0})},\ \widetilde{Q}_1^{(k_{Q,1})}\in\H_2
\]
such that
\begin{align*}
P_0 &\approx_{\sqrt{\epsilon}\gamma}
\widetilde{P}_0^{(k_{P,0})}\otimes I^{[n]\setminus\{k_{P,0}\}},\\
P_1 &\approx_{\sqrt{\epsilon}\gamma}
\widetilde{P}_1^{(k_{P,1})}\otimes I^{[n]\setminus\{k_{P,1}\}},\\
Q_0 &\approx_{\sqrt{\epsilon}\gamma}
\widetilde{Q}_0^{(k_{Q,0})}\otimes I^{[n]\setminus\{k_{Q,0}\}},\\
Q_1 &\approx_{\sqrt{\epsilon}\gamma}
\widetilde{Q}_1^{(k_{Q,1})}\otimes I^{[n]\setminus\{k_{Q,1}\}}.
\end{align*}

By \cref{cor-almost-anti-commute,lem:same-register}, we obtain
\[
k_{P,0}=k_{P,1}=:k_1,
\qquad
k_{Q,0}=k_{Q,1}=:k_2.
\]

Moreover, by the argument leading to \cref{cor-almost-anti-commute}, there exists a two-dimensional unitary \(V\) such that
\[
P_i \approx_{\sqrt{\epsilon}\gamma}
V^{\otimes n}
\left(
\frac{Q_0+(-1)^iQ_1}{\sqrt{2}}
\right)
(V^*)^{\otimes n},
\qquad i\in\{0,1\}.
\]
Hence \cref{lem:register-matching} implies
\[
k_1=k_2=:k.
\]

Therefore, for each \(D\in\{P,Q\}\), the localized observables on the \(k\)-th register satisfy
\[
\widetilde{D}_0^{(k)}\widetilde{D}_1^{(k)}
+
\widetilde{D}_1^{(k)}\widetilde{D}_0^{(k)}
\approx_{\sqrt{\epsilon}\gamma}0,
\qquad
\bigl(\widetilde{D}_j^{(k)}\bigr)^2
\approx_{\sqrt{\epsilon}\gamma}I.
\]
By \cref{anti-commute-to-ZX}, there exist two-dimensional unitaries
\[
\widetilde{U}_A,\widetilde{U}_B:\H_2\to\H_2
\]
such that
\begin{align*}
\widetilde{U}_A\widetilde{P}_0^{(k)}\widetilde{U}_A^*
&\approx_{\sqrt{\epsilon}\gamma} Z,
&
\widetilde{U}_A\widetilde{P}_1^{(k)}\widetilde{U}_A^*
&\approx_{\sqrt{\epsilon}\gamma} X,\\
\widetilde{U}_B\widetilde{Q}_0^{(k)}\widetilde{U}_B^*
&\approx_{\sqrt{\epsilon}\gamma} Z,
&
\widetilde{U}_B\widetilde{Q}_1^{(k)}\widetilde{U}_B^*
&\approx_{\sqrt{\epsilon}\gamma} X.
\end{align*}
Define
\[
U_A:=\widetilde{U}_A^{(k)}\otimes I^{[n]\setminus\{k\}},
\qquad
U_B:=\widetilde{U}_B^{(k)}\otimes I^{[n]\setminus\{k\}}.
\]
Then \(U_A\) and \(U_B\) satisfy the measurement conclusion of the theorem.

Finally, \cref{lem:source-coefficient-rigidity} gives
\[
\rho \ge c_1 \ge c_2 \ge \rho - O(\sqrt{\epsilon}\gamma).
\]
This completes the proof.

\end{proof}

\subsection{Implication on Correlation Set}

% We now specialize the general noise model of
% to the case where Alice and Bob share depolarized EPR pairs. Namely, suppose the shared
% state is \(\Delta_\rho(\Phi)^{\otimes n}\), where \(\Delta_\rho\) is the qubit depolarizing
% channel with parameter \(\rho\), which is exactly the special case corresponding to
% \[
% c_1=c_2=c_3=\rho.
% \]

Our analysis of the maximal CHSH value extends to other Bell directions as well, and hence
gives a description of the projected correlation set \(\mathcal{Q}_\rho\) in certain
two-dimensional planes. Here we consider the \((S,\tilde S)\)-plane, where \(S\) is the
standard CHSH direction and \(\tilde S\) is obtained by exchanging Bob's two inputs.
For traceless observables \(P_0,P_1,Q_0,Q_1\), define
\begin{align*}
S
&=
P_0' \otimes (Q_0+Q_1)
+
P_1' \otimes (Q_0-Q_1),\\
\tilde S
&=
P_0' \otimes (Q_0+Q_1)
+
P_1' \otimes (Q_1-Q_0).
\end{align*}

If we write the correlators as
% \[
% E_{xy}:=\Tr\!\bigl((P_x\otimes Q_y)\Delta_\rho(\Phi)^{\otimes n}\bigr),
% \qquad x,y\in\{0,1\},
% \]
\[
 E_{xy}:=\Tr\bigl((P_x\otimes Q_y)\Psi^{\otimes n}\bigr),
 \qquad x,y\in\{0,1\},
 \]
define
\[
s:=E_{00}+E_{01}+E_{10}-E_{11},
\qquad
\tilde s:=E_{00}+E_{01}-E_{10}+E_{11}.
\]
By \cref{pre-noise-to-obs}, this is equivalent to
\[
s=\bra{\phi^{\otimes n}}S\ket{\phi^{\otimes n}},
\qquad
\tilde s=\bra{\phi^{\otimes n}}\tilde S\ket{\phi^{\otimes n}},
\]
where $\ket{\phi}$ is equivalent to $\ket{\Phi}$ up to local unitaries (\cref{EquivtoEPR}). 
Thus, in the correlator space with coordinates \((E_{00},E_{01},E_{10},E_{11})\),
the two Bell directions \(S\) and \(\tilde S\) correspond to the coefficient vectors
\[
(1,1,1,-1),
\qquad
(1,1,-1,1),
\]
whose Euclidean inner product is \(0\). Hence \(S\) and \(\tilde S\) are orthogonal directions in the correlator space.

Now set
\[
a:=\bra{\phi^{\otimes n}}P_0'\otimes(Q_0+Q_1)\ket{\phi^{\otimes n}},\]
\[
b:=\bra{\phi^{\otimes n}}P_1'\otimes(Q_0-Q_1)\ket{\phi^{\otimes n}}.
\]
Then
\[
s=a+b,
\qquad
\tilde s=a-b,
\]
and therefore
\[
s^2+\tilde s^2=2(a^2+b^2).
\]
Since \(P_0\) and \(P_1\) are traceless, their degree-zero Pauli coefficients vanish. Hence, by \cref{almost-deg-one-1}
\begin{align*}
\widebar{\Tr}\bigl((P_x')^2\bigr)
\le \rho^2,
\qquad x\in\{0,1\}.
\end{align*}
For any Hermitian operators \(A,B\), suppose that $\ket{\phi^{\otimes n}}=(U\otimes V)\ket{\Phi^{\otimes n}}$, we have
\begin{multline*}
    \Bigl|\bra{\phi^{\otimes n}}A\otimes B\ket{\phi^{\otimes n}}\Bigr|
=\Bigl|\bra{\Phi^{\otimes n}}UAU^{\dagger}\otimes VBV^\dagger\ket{\Phi^{\otimes n}}\Bigr|\\=
\bigl|\widebar{\Tr}(\bar{U}A^TU^T VBV^\dagger)\bigr|
\le
\sqrt{\widebar{\Tr}(A^2)\,\widebar{\Tr}(B^2)}.
\end{multline*}

Applying this to \(a\) and \(b\), we obtain
\begin{align*}
a^2
&\le
\widebar{\Tr}\bigl((P_0')^2\bigr)\,
\widebar{\Tr}\bigl((Q_0+Q_1)^2\bigr)
\le
\rho^2\,\widebar{\Tr}\bigl((Q_0+Q_1)^2\bigr),\\
b^2
&\le
\widebar{\Tr}\bigl((P_1')^2\bigr)\,
\widebar{\Tr}\bigl((Q_0-Q_1)^2\bigr)
\le
\rho^2\,\widebar{\Tr}\bigl((Q_0-Q_1)^2\bigr).
\end{align*}
Therefore,
\begin{align*}
s^2+\tilde s^2
&=2(a^2+b^2)\\
&\le
2\rho^2\Bigl(
\widebar{\Tr}\bigl((Q_0+Q_1)^2\bigr)
+
\widebar{\Tr}\bigl((Q_0-Q_1)^2\bigr)
\Bigr)\\
&=
4\rho^2\,\widebar{\Tr}(Q_0^2+Q_1^2)\\
&\le
8\rho^2,
\end{align*}
where in the last step we used \(Q_0^2,Q_1^2\preceq I\).

Therefore, for every point \((s,\tilde s)\) in the projection of \(\mathcal{Q}_\rho\) onto the \((S,\tilde S)\)-plane, we have
\[
s^2+\tilde s^2\le 8\rho^2.
\]
Equivalently, this projection is contained in the disk of radius \(2\sqrt{2}\rho\).

% --------------------

% If we write the correlators as
% \[
% E_{xy}:=\Tr\!\bigl((P_x\otimes Q_y)\Delta_\rho(\Phi)^{\otimes n}\bigr),
% \qquad x,y\in\{0,1\},
% \]
% then
% \[
% \langle S\rangle = E_{00}+E_{01}+E_{10}-E_{11},
% \qquad
% \langle \tilde S\rangle = E_{00}+E_{01}-E_{10}+E_{11}.
% \]
% Thus, in the correlator space with coordinates \((E_{00},E_{01},E_{10},E_{11})\),
% the two Bell directions \(S\) and \(\tilde S\) correspond to the coefficient vectors
% \[
% (1,1,1,-1),
% \qquad
% (1,1,-1,1),
% \]
% whose Euclidean inner product is 0.
% Hence \(S\) and \(\tilde S\) are orthogonal directions in the correlator space.

% Under the tracelessness assumption, \cref{almost-deg-one-1} implies
% \[
% (P_0')^2 \preceq \rho^2 I,
% \qquad
% (P_1')^2 \preceq \rho^2 I.
% \]
% Using
% \[
% (Q_0+Q_1)^2+(Q_0-Q_1)^2 = 2(Q_0^2+Q_1^2)
% \]
% and the bounds \(Q_0^2,Q_1^2\preceq I\), we obtain
% \begin{align*}
% S^2+\tilde S^2
% &\preceq
% 2\rho^2 I\otimes\bigl((Q_0+Q_1)^2+(Q_0-Q_1)^2\bigr)\\
% &=
% 4\rho^2 I\otimes(Q_0^2+Q_1^2)\\
% &\preceq
% 8\rho^2 I.
% \end{align*}
% Therefore, for every point \((s,\tilde s)\) in the projection of \(\mathcal{Q}_\rho\)
% onto the \((S,\tilde S)\)-plane, we have
% \[
% s^2+\tilde s^2\le 8\rho^2.
% \]
% Equivalently, this projection is contained in the disk of radius \(2\sqrt{2}\rho\).

We now show that the boundary circle is attainable when Alice and Bob share depolarized EPR pairs $\Delta_\rho(\Phi)^{\otimes n}$. Fix one register and let the observables
act nontrivially only on that register. Take
\begin{align*}
P_0 &= Z,
&
P_1 &= X,\\
Q_0(\theta) &= \cos\theta\, Z+\sin\theta\, X,
&
Q_1(\theta) &= \cos\theta\, Z-\sin\theta\, X,
\end{align*}
where $\theta\in[0,2\pi).$
These are traceless observables, and for the depolarized EPR state one obtains
\begin{align*}
s(\theta)
&:=
\Tr\bigl(S\,\Delta_\rho(\Phi)^{\otimes n}\bigr)
=
2\rho(\cos\theta+\sin\theta),\\
\tilde s(\theta)
&:=
\Tr\bigl(\tilde S\,\Delta_\rho(\Phi)^{\otimes n}\bigr)
=
2\rho(\cos\theta-\sin\theta).
\end{align*}
Consequently,
\[
s(\theta)^2+\tilde s(\theta)^2=8\rho^2
\qquad
\text{for all }\theta\in[0,2\pi).
\]
Hence every point on the boundary circle of radius \(2\sqrt{2}\rho\) is achieved by a
valid measurement. Since the projected correlation set is convex, it follows that the
projection of \(\mathcal{Q}_\rho\) onto the \((S,\tilde S)\)-plane is exactly the disk
of radius \(2\sqrt{2}\rho\), as illustrated in \cref{fig correlation set}.

\subsection{2-out-of-\texorpdfstring{$n$}{n} CHSH Game}
\label{Noisy 2-out-of-$n$ CHSH Game}

In this section we study the 2-out-of-$n$ CHSH game where Alice and Bob share $n'$ noisy states: $\Psi^{\otimes n'}$, where the maximal correlation of $\Psi$ is at most $\rho$ and the marginals of $\Psi$ are $I/2$, the same as \cref{self-test-depolarize} . Our goal is to prove a rigidity result on $n\leq n'$ pairs of anti-commuting observables on different registers. 
Our protocol resembles that of \cite{chao2018test}, except that we omit the consistency test. In \cite[Theorem 2.1]{chao2018test}, the consistency test is used to certify that the shared state is stabilized by pairs of anti-commuting observables, and therefore is isometric to EPR pairs. In our setting, however, the source is already assumed to prepare copies of shared states of a fixed form, so the consistency test is not needed. The referee repeats \cref{multiple-single-test} for enough times to guarantee that the players' observables have bounded trace.

\begin{framed}
\begin{protocol}[2-out-of-$n$ CHSH Game]\label{multiple-single-test}
With probability $\frac{1}{2}$ each, the verifier performs:
\begin{enumerate}       
\item The verifier randomly chooses indices $i, j \in [n]$, $i \neq j$ and questions of CHSH $x, y, z \in \set{0, 1}$. The verifier sends Alice $\set{\br{i, x}}$ and Bob $\set{\br{i, y},\br{j, z}}$. Alice returns $a \in \set{-1, 1}$. Bob returns $b,c\in\set{-1,1}$, ordered corresponding to $\br{i, y}$ , $\br{j, z}$, respectively. The verifier accepts if and only if $\frac{1+a}{2}\oplus \frac{1+b}{2} = xy$.
 
\item The same as the above sub-protocol, but with the roles of Alice and Bob exchanged.
\end{enumerate}
\end{protocol}
\end{framed}

We formalize the players' measurements:
\begin{enumerate}
    \item Alice uses observable $P_{i,0}$, when receiving question $\set{\br{i,0}}$ and observable $P_{i,1}$ when receiving question $\set{\br{i,1}}$. Similarly Bob uses observables $Q_{j,0}$ or $Q_{j,1}$ when receiving question $\set{\br{j,0}}$ or $\set{\br{j,1}}$.
    \item Upon receiving $\set{\br{i,y},\br{j,z}}$, Bob measures with POVM $\set{E_{a,b}^{\br{i,y},\br{j,z}}:a,b\in\set{-1, 1}}$ where $a,b$ are answers for $\br{i,y}$ and $\br{j,z}$, respectively. Define Bob's observable on index $i$ by marginalizing over index $j$:
    \begin{align*}
        &R^{iy|\br{iy,jz}}\\
        =~& \sum_{a\in \set{-1, 1}} a\br{E_{a,1}^{\br{i,y},\br{j,z}}+E_{a,-1}^{\br{i,y},\br{j,z}}}\\
        =~&E_{1,1}^{\br{i,y},\br{j,z}}+E_{1,-1}^{\br{i,y},\br{j,z}}-E_{-1,1}^{\br{i,y},\br{j,z}}-E_{-1,-1}^{\br{i,y},\br{j,z}}.
    \end{align*} 
    
    Similarly, define Bob's observable on index $j$ by marginalizing over index $i$:
    \begin{align*}
        &R^{jz|\br{iy,jz}}\\
        =~& \sum_{b\in \set{-1, 1}} b\br{E_{1,b}^{\br{i,y},\br{j,z}}+E_{-1,b}^{\br{i,y},\br{j,z}}}\\
        =~&E_{1,1}^{\br{i,y},\br{j,z}}+E_{-1,1}^{\br{i,y},\br{j,z}}-E_{1,-1}^{\br{i,y},\br{j,z}}-E_{-1,-1}^{\br{i,y},\br{j,z}}.
    \end{align*}    
    Alice's observables $T^{iy|\br{iy,jz}},T^{jz|\br{iy,jz}}$ are defined analogously.
\end{enumerate}

Define the trace bound

\begin{align}\label{2-out-of-n-def-max-trace}
\Trerr
&= \max_{\substack{i,j \in [n],\, i \neq j \\ x,y,z \in \{0,1\}}}
\Bigl\{
    \left|\overline{\Tr} P_{i,x}\right|,
    \left|\overline{\Tr} Q_{i,x}\right|, \notag\\
&\qquad
    \left|\overline{\Tr} R^{iy \mid (iy,jz)}\right|,
    \left|\overline{\Tr} T^{iy \mid (iy,jz)}\right|
\Bigr\}.
\end{align}

The same as in \cref{app:chsh-trace-test}, high probability passing the trace test implies bounded traces.

Although the following proof is similar to that in \cite{chao2018test}, we provide full details here for completeness.

\begin{theorem}\label{thm-multiple-self-test}
    Suppose the players share $n'\geq n$ copies of 
    $\Psi$, where $\rho\in\br{0,1}$ and that the players can pass \cref{multiple-single-test} with probability at least $\frac{1}{2}+\frac{\sqrt{2}}{4}\rho-\epsilon_v> 0.75$.

    Define parameters:
    \begin{align*}
        \gamma=\frac{1}{1-\rho}>1,\qquad \epsilon=\epsilon_v/\rho+\Trerr^2/\rho^2=o(1/n^2),
    \end{align*}
    then the following hold:
    \begin{enumerate}
        \item  For each $i\in[n]$, there exists $s_i\in[n']$ and a 2-dimensional traceless observable $\tilde{P}_{i,x}^{\br{s_i}}$ acting on the $s_i$-th register, such that
    \begin{align}\label{multiple-one-qubit}
        \forall x\in\set{0,1},P_{i,x}&\approx_{n\sqrt{\epsilon}\gamma} \tilde{P}_{i,x}^{\br{s_i}} \otimes I^{[n']\backslash\set{s_i}},
    \end{align} 
where $\gamma=\frac{1}{1-\rho}$. Moreover, $s_i\neq s_j$ if $i\neq j$.
        \item For each $i\in [n]$,
        there exists a two-dimensional unitary $\tilde{U}_i$, such that $\tilde{U}_i\tilde{P}_{i,0}\tilde{U}_i^*\approx_{n\sqrt{\epsilon}\gamma} Z,\tilde{U}_i\tilde{P}_{i,1}\tilde{U}_i^*\approx_{n\sqrt{\epsilon}\gamma} X$.
          
    \end{enumerate}

\end{theorem}
The second result directly implies that we can extract $n$ pairs of Pauli observables from $\br{P_{i,0},P_{i,1}}$ for $i\in [n]$, where for each extraction we only need a 2-dimensional unitary acting on the $s_i$-th qubit.
For constant noise levels, the robustness parameter is $O(n\sqrt{\epsilon})$, scaling linear on $n$. And for noiseless cases $\rho \to 1$ and $\gamma\to \infty$, the conclusions in \cref{thm-multiple-self-test} do not necessarily hold. 
\begin{proof}[Proof of \cref{thm-multiple-self-test}]

    By \cref{traceless-chsh-vio-upperbound}, the winning rate for a single CHSH game is at most $\frac{1}{2}+\frac{\sqrt{2}}{4}\rho+O\br{\Trerr^2}$. Therefore, conditioning on a specific index pair $\br{i,j}$, the winning rate is at least $\frac{1}{2}+\frac{\sqrt{2}}{4}\rho- O\br{n^2\br{\Trerr^2+\epsilon_v}}$, which is larger than $0.75$ for $\epsilon=o(1/n^2)$ (or equivalently attaining violation $2\sqrt{2}\rho-O\br{n^2\epsilon}>2$).
    
    Suppose that Alice receives question $\set{\br{i,x}}$ and Bob receives questions $\set{\br{i,y},\br{j,z}}$, then the measurement $\br{P_{i,x}, R^{iy|\br{iy,jz}}}$ attains violation $2\sqrt{2}\rho-O\br{n^2\epsilon}$ in the single CHSH game under noise.
    By \cref{cor-almost-anti-commute}, $\set{ P_{i,0},P_{i,1}}\approx_{n\sqrt{\epsilon}}0.$ Similarly, $\set{ Q_{i,0},Q_{i,1}}\approx_{n\sqrt{\epsilon}}0.$

    \cref{thm-close-to-deg1} implies that
    \begin{align*}
         R^{iy|\br{iy,jz}}\approx_{n\sqrt{\epsilon}} \frac{1}{\sqrt{2}}\br{P_{i,0}+\br{-1}^yP_{i,1}}^T.
     \end{align*} 
     Moreover, if Alice receives question $\set{\br{j,x}}$, Bob receives questions $\set{\br{i,y},\br{j,z}}$, we have
     \begin{align*}
          R^{jz|\br{iy,jz}}\approx_{n\sqrt{\epsilon}} \frac{1}{\sqrt{2}}\br{P_{j,0}+\br{-1}^zP_{j,1}}^T.
    \end{align*}
    Since $\set{ P_{i,0},P_{i,1}},\set{ P_{j,0},P_{j,1}}\approx_{n\sqrt{\epsilon}}0$, and by \cref{thm-close-to-deg1} and \cref{closetobservable}, for $x\in\set{0,1}$, $\br{P_{i,x}}^2,\br{P_{j,x}}^2\approx_{n\sqrt{\epsilon}} I$, we have $\br{ R^{iy|\br{iy,jz}}}^2\approx_{n\sqrt{\epsilon}} I,\br{ R^{jz|\br{iy,jz}}}^2\approx_{n\sqrt{\epsilon}} I.$ By \cref{sqrt-approx-project}, $R^{iy|\br{iy,jz}}$ and $R^{jz|\br{iy,jz}}$ are $n\sqrt{\epsilon}$-close to binary observables. Following the similar proof in \cref{mag-commute-same-row} (where the approximation parameter changes from $\sqrt{\epsilon}$ to $n\sqrt{\epsilon}$) we have for different POVMs $E_{a,b}^{\br{i,y},\br{j,z}}$ and $E_{a',b'}^{\br{i,y},\br{j,z}}$,
    \begin{align*}
        E_{a,b}^{\br{i,y},\br{j,z}}E_{a',b'}^{\br{i,y},\br{j,z}}\approx_{n\sqrt{\epsilon}} 0.        
    \end{align*} Therefore 
    \begin{align*}
    \left[R^{iy|\br{iy,jz}}, R^{jz|\br{iy,jz}}\right]\approx_{n\sqrt{\epsilon}}0,
    \end{align*} and consequently, $\forall y,z\in\set{0,1} $,
    \begin{align*}
        \left[P_{i,0}+\br{-1}^yP_{i,1},P_{j,0}+\br{-1}^zP_{j,1}\right]\approx_{n\sqrt{\epsilon}}0.
    \end{align*}
    This implies $\forall u,v\in\set{0,1}, \left[P_{iu},P_{jv}\right]\approx_{n\sqrt{\epsilon}}0.$
    
    Following the proof in \cref{thm-close-to-deg1}, we obtain for $x\in\set{0,1},P_{i,x}'\approx_{n\sqrt{\epsilon}}\rho P_{i,x}$. Therefore \cref{multiple-one-qubit} can be similarly proved as in \cref{chsh-selftest-main-thm} by truncating Pauli terms with degree $\neq 1$. 
    Suppose that there exists $i\neq j,s_i=s_j,$ we then have
    \begin{align*}
        \left[\tilde{P}_{i,0}^{\br{s_i}},\tilde{P}_{j,0}^{\br{s_j}}\right]&\approx_{n\sqrt{\epsilon}\gamma} 0, &\left[\tilde{P}_{i,1}^{\br{s_i}},\tilde{P}_{j,0}^{\br{s_j}}\right]\approx_{n\sqrt{\epsilon}\gamma} 0,\\
        \set{\tilde{P}_{i,0}^{\br{s_i}},\tilde{P}_{i,1}^{\br{s_i}}}&\approx_{n\sqrt{\epsilon}\gamma} 0,
    \end{align*}
    which is a contradiction by \cref{commute-anticommute}. That proves the first item. By \cref{anti-commute-to-ZX} we can prove the second item.
\end{proof}

\section{Magic Square Games Under Noise}
\subsection{Magic Square Game}
\label{sec:NMS}
In this section we study the effect of noise on the Magic Square game\cref{def-mgs-game}, where the players share $n$ copies of the state 
\[\Delta_\rho^{(16)}\br{\Phi^{\otimes 2}}=\rho \Phi^{\otimes 2}+(1-\rho)\frac{I}{16},\]
for a constant correlation parameter $\rho\in\br{0,1}$. Then we prove the rigidity result of this game.

We now formalize the players' measurements. Suppose the players share $n$ copies of $\Delta_\rho\br{\Phi^{\otimes 2}}$. For $i,j\in[3]$, let $Q_{i,j}$ denote Bob's observable corresponding to the variable $s_{i,j}$. When Alice receives a question $x\in\{r_1,r_2,r_3,c_1,c_2,c_3\}$, she performs a POVM $\set{E_{a_1,a_2,a_3}^x:a_1,a_2,a_3\in\{-1,1\}}$, where $E_{a_1,a_2,a_3}^x$ corresponds to the outcome $a=(a_1,a_2,a_3).$ 

Fix a variable $s_{i,j}$. Define Alice's observable for $s_{i,j}$ in the row and column settings as
\[
    P_{i,j}^\mathrm{row}=\sum_{a\in\{-1,1\}^3}a_jE_{a_1,a_2,a_3}^{r_i},\quad
    P_{i,j}^\mathrm{column}=\sum_{a\in\{-1,1\}^3}a_iE_{a_1,a_2,a_3}^{c_j}.
\]
% Here $ P_{i,j}^\mathrm{row},P_{i,j}^\mathrm{column},Q_{i,j}\in\br{\complex^{4}}^{\x n}$, and we say each subspace $\complex^{4}$ is a 4-dimension register.

%When Alice receives row index $1\leq i\leq 3$, she uses the POVM to assign the value to the three cells. For example, when she receives the first row, she uses POVM $\set{E_{ijk}^{\br{1}}:i,j,k\in\set{-1, 1}}$ to measure, with her observable on cell 1 defined as:
% \begin{align*}
%     P_{1}^{\mathrm{row}} = \sum_{i,j,k \in\set{-1, 1}} i E_{i,j,k}^{\br{1}}.
% \end{align*}
% \begin{remark}
%     Here we omit the POVMs $E_{ijk}^{\br{1}}$ where $ijk\neq 1$ since Alice can always make sure that her answers satisfy the row or column tests.
% \end{remark}
% Similarly $P_{i}^{\mathrm{row}}$ denotes the observable on any cell $i$ in the queried row. When she receives index 4 (first column), she uses POVM $\set{E_{1,1,1}^{\br{4}}:i,j,k\in\set{-1, 1}}$, and her observable on cell 1 is:
% \begin{align*}
%     P_{1}^{\mathrm{column}} = \sum_{i,j,k \in\set{-1, 1}} i E_{i,j,k}^{\br{4}}.
% \end{align*} Analogously, $P_{i}^{\mathrm{column}}$ is defined for any cell $i$ in the queried column. 

The trace error $\Trerr$ is defined as:
\begin{equation}\label{eqn:traceerror}
    \Trerr=\max_{i,j\in [3]}\set{\widebar{\Tr}\br{P_{i,j}^{\mathrm{row}}},\widebar{\Tr}\br{P_{i,j}^{\mathrm{column}}},\widebar{\Tr}\br{Q_{i,j}}}.
\end{equation}
The methoc to estimate the trace error is the same as in the previous section: The referee repeats independent instances of the game sufficiently many times to ensure that all observables have bounded trace as in \cref{app:chsh-trace-test}, where we make the same i.i.d. repetition assumption as in \cref{repetition-iid}, that is, the players follow fixed measurements and share $n$ new copies of $\Delta_{\rho}^{\br{16}}\br{\Phi^{\x 2}}$ in each repetition.

% \begin{framed}
% \begin{protocol}\label{mgs-trace-protocol}
% Set parameters $t$ (the minimum number of repetitions per question) and $\delta$ (bias threshold) and $p$ (the minimum pass probability) satisfying $\delta=\br{\frac{2\ln \br{2/p}}{t}}^{1/2}$.
%     \begin{enumerate}
%         \item \textbf{Magic Square Game Repetitions}: The referee repeats the Magic Square game under noise enough times such that:
%         \begin{enumerate}
%             \item Each question $x\in\calX$ is sent to Alice at least $t$ times.
%             \item Each question $y\in\calY$ is sent to Bob at least $t$ times. 
%         \end{enumerate}
       
%         \item \textbf{Trace Test}: Similar to \cref{chsh-trace-protocol}, the referee rejects if there exists a player and a variable $s_{i,j}$, the empirical frequency from Magic Square Game Repetitions of assigning $1$ to $s_{i,j}$ deviates from $1/2$ by more than $\delta$.
%     \end{enumerate}
% \end{protocol}
% \end{framed}

% Analogous to \cref{chsh-trace-test}, we have
% \begin{prop}\label{prop:smalltrace}
%      If the players pass \cref{mgs-trace-protocol} with probability at least $p$, we have $\Trerr\leq 3\delta$ where $\delta=\br{\frac{2\ln \br{2/p}}{t}}^{1/2}$.
% \end{prop}

We then give a tight upper bound of the winning probability:
\begin{theorem}\label{mgs-upper-bound-winning-rate}
    Suppose the players share $n$ copies of $\Delta_\rho\br{\Phi^{\otimes 2}}$ in the Magic Square game under noise. If the players use observables with trace error at most $\Trerr$, then the probability of passing the consistency test, and therefore also the overall game, is at most
\[
\frac{1 + \rho}{2} + \frac{\Trerr^2}{4\rho}.
\]
\end{theorem}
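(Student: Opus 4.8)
The plan is to write the probability of passing the consistency test as a sum of bipartite observable correlations, push the depolarizing noise onto Bob's observables via \cref{pre-noise-to-obs}, bound each correlation with the sum-of-squares identity highlighted in the proof overview, and control the leftover term with \cref{Parseval's identity}.

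First I would unpack the question distribution of \cref{def-mgs-game}. The referee picks a row or column $x$ uniformly among the six possibilities and then a variable $y$ uniformly among the three variables in $x$, so each of the $18$ valid question--variable pairs occurs with probability $1/18$, and each variable $s_{i,j}$ is tested in exactly two of them, namely $(r_i,s_{i,j})$ and $(c_j,s_{i,j})$. Alice's marginal $\pm1$ outcome on $s_{i,j}$ is governed by $P_{i,j}^{\mathrm{row}}$ if she received $r_i$ and by $P_{i,j}^{\mathrm{column}}$ if she received $c_j$; writing $P_{i,j}^{x}$ for whichever applies, it is the difference of two complementary POVM elements, so $-I\preceq P_{i,j}^{x}\preceq I$ and hence $(P_{i,j}^{x})^{2}\preceq I$; likewise $-I\preceq Q_{i,j}\preceq I$. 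Expressing the relevant two-outcome POVMs as $(I\pm P)/2$, the probability that Alice and Bob agree on $s_{i,j}$ for the pair $(x,s_{i,j})$ equals $\tfrac12+\tfrac12\Tr\big((P_{i,j}^{x}\x Q_{i,j})\varphi^{\otimes n}\big)$ with $\varphi=\Delta_\rho\br{\Phi^{\otimes 2}}$, so writing $\omega_{\mathrm{cons}}$ for the consistency-test success probability,
\[
\omega_{\mathrm{cons}}=\frac12+\frac1{36}\sum_{(x,\,s_{i,j})}\Tr\big((P_{i,j}^{x}\x Q_{i,j})\varphi^{\otimes n}\big),
\]
the sum being over the $18$ valid pairs.

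Next I would bound a single term. Fix a pair, set $P=P_{i,j}^{x}$ and $Q=Q_{i,j}$. By the $\A^{\otimes2}/\B^{\otimes2}$ form of \cref{pre-noise-to-obs}, $\Tr\big((P\x Q)\varphi^{\otimes n}\big)=\bra{\Phi^{\otimes2n}}(P\x Q')\ket{\Phi^{\otimes2n}}$ with $Q'=\Delta_\rho^{\otimes n}(Q)=\sum_x\rho^{|x|}\widehat{Q}(x)\B^{\otimes2}_x$. Since $P\x I$ and $I\x Q'$ commute, expanding the square gives the identity
\[
P\x Q'=-\frac{\rho}{2}\Big(P\x I-I\x\tfrac{Q'}{\rho}\Big)^{2}+\frac{\rho}{2}\,P^{2}\x I+\frac{1}{2\rho}\,I\x Q'^{2}.
\]
Evaluating against $\ket{\Phi^{\otimes2n}}$: the squared term contributes $\le0$; by \cref{fact:partialtrace} and the fact that the reduced state of $\ket{\Phi^{\otimes2n}}$ is maximally mixed, $\bra{\Phi^{\otimes2n}}(A\x I)\ket{\Phi^{\otimes2n}}=\overline{\Tr}(A)$; and $P^{2}\preceq I$. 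Hence
\[
\bra{\Phi^{\otimes2n}}(P\x Q')\ket{\Phi^{\otimes2n}}\le\frac{\rho}{2}\,\overline{\Tr}(P^{2})+\frac{1}{2\rho}\,\overline{\Tr}(Q'^{2})\le\frac{\rho}{2}+\frac{1}{2\rho}\,\overline{\Tr}(Q'^{2}).
\]
By \cref{Parseval's identity} applied to $Q'$, together with $\widehat{Q}(0^n)=\overline{\Tr}(Q_{i,j})$, the definition of $\Trerr$, and $\overline{\Tr}(Q^{2})\le1$,
\begin{align*}
\overline{\Tr}(Q'^{2})&=\sum_x\rho^{2|x|}\widehat{Q}(x)^{2}\le\widehat{Q}(0^n)^{2}+\rho^{2}\sum_{x\neq0^n}\widehat{Q}(x)^{2}\\
&\le\Trerr^{2}+\rho^{2}\,\overline{\Tr}(Q^{2})\le\Trerr^{2}+\rho^{2}.
\end{align*}
Thus each of the $18$ terms is at most $\tfrac{\rho}{2}+\tfrac{1}{2\rho}(\Trerr^{2}+\rho^{2})=\rho+\tfrac{\Trerr^{2}}{2\rho}$, so
\[
\omega_{\mathrm{cons}}\le\frac12+\frac{18}{36}\Big(\rho+\frac{\Trerr^{2}}{2\rho}\Big)=\frac{1+\rho}{2}+\frac{\Trerr^{2}}{4\rho}.
\]
Since winning the Magic Square game requires passing the consistency test, the overall winning probability is bounded by the same quantity, which is \cref{mgs-upper-bound-winning-rate}.

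I do not anticipate a genuine obstacle; the points that need care are the combinatorial bookkeeping — that the $18$ valid pairs are equiprobable and every variable is tested exactly twice, so the prefactor comes out to $\tfrac{18}{36}=\tfrac12$ — and the observation that Alice's row and column observables are coarse-grainings of a three-outcome POVM, hence only known to satisfy $(P_{i,j}^{x})^{2}\preceq I$ rather than $(P_{i,j}^{x})^{2}=I$. This is precisely why one keeps $\overline{\Tr}(P^{2})\le1$ as an inequality and why the sum-of-squares identity (rather than an appeal to Naimark's theorem or to projectivity of Alice's measurement) is the natural tool for this particular bound.
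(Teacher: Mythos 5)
Your proposal is correct and follows essentially the same route as the paper: push the depolarizing noise onto Bob's observable via \cref{pre-noise-to-obs}, apply the same SoS identity for $P\otimes Q'$, bound $\overline{\Tr}(Q'^2)\leq \Trerr^2+\rho^2$ via Parseval and the trace bound, and use $P^2\preceq I$. The only difference is that you make the averaging over the $18$ question--variable pairs explicit, whereas the paper fixes a single variable and notes the per-variable agreement bound directly; both yield the same conclusion.
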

\begin{proof}
     Fix a variable $s_{i,j}$. The probability that Alice and Bob's assignments agree on $s_{i,j}$ is 
    \begin{align*}
        &\frac{1}{2}+\frac{1}{2}\Tr\br{P_{i,j}^x\x Q_{i,j} \br{\Delta_{\rho}\br{\Phi^{\x 2}} }^{\x n}}\\
        =~&\frac{1}{2}+\frac{1}{2}\Tr\br{P_{i,j}^{\mathrm{row}}\x Q_{i,j}' \br{\Phi^{\x 2n}}},
    \end{align*} where the equality follows from \cref{pre-noise-to-obs-mgs}.
    Still using  SoS-like decomposition, we define
    \begin{align*}
         C=~&P_{i,j}^{\mathrm{row}}\otimes Q_{i,j}'=-\frac{\rho}{2}\br{ P_{i,j}^{\mathrm{row}}\otimes I-I\otimes \frac{Q_{i,j}'}{\rho}}^2\\
         &+\frac{\rho}{2}\br{P_{i,j}^{\mathrm{row}}}^2\otimes I+\frac{1}{2\rho} I\otimes \br{Q_{i,j}'}^2.
    \end{align*}
    By a similar argument to that for \cref{almost-deg-one-1}, we have
\begin{align}
    \bra{\Phi^{\x 2n}}I\otimes \br{Q_{i,j}'}^2\ket{\Phi^{\x 2n}}\leq  \rho^2+\Trerr^2.
\end{align}
Combined with $\br{P_{i,j}^{\mathrm{row}}}^2\preceq I$,
we have
\begin{align*}
    \Tr\br{C\Phi^{\x 2n}}\leq \frac{\rho}{2}+\frac{\rho^2+\Trerr^2}{2\rho}=\rho+\frac{\Trerr^2}{2\rho}.
\end{align*}
 Thus the probability of passing the consistency test is at most $\frac{1+\rho}{2}+\frac{\Trerr^2}{4\rho}$. This is also an upper bound of the overall winning probability of the game.
 Moreover, this upper bound is obtainable by applying the measurements in \cref{tab:mgs-OptimalMeasurementTable} to
 $\Delta_{\rho}^{\br{16}}\br{\Phi^{\x 2}}$.
\end{proof}
By \cref{mgs-upper-bound-winning-rate}, if the players restrict themselves to traceless observables, the maximum winning probability is $\frac{1+\rho}{2}$. 
%\hfc{It seems we only need the verifier to check two things at the same time.}
We show that if the players win the Magic Square game under noise with trace-bounded observables and near-optimal success probability, then their observables, up to local unitaries, behave like those in the optimal strategy that uses one copy of $\Phi^{\otimes 2}$ in the noiseless setting.

\begin{theorem}\label{mgs-self-testing-main-thm}
Suppose the players share $n$ copies of the state $\Delta_\rho(\Phi^{\otimes 2})$, where $\rho \in (0,1)$ and that the players win with probability at least $\frac{1+\rho}{2} - \winerr$.
Define the parameters:
\[
\gamma := \frac{1}{1 - \rho}, \qquad
\epsilon := \winerr/\rho + \Trerr^2/\rho^2.
\]

Then there exists $\ell \in [n]$ such that the following statements hold:
\begin{enumerate}
    \item For all $i,j \in [3]$, 
    \[
    P_{i,j}^{\mathrm{row}} \approx_{\sqrt{\epsilon}} P_{i,j}^{\mathrm{column}}\approx_{\sqrt{\epsilon}}Q_{i,j}^T.
    \]
    We then omit superscripts of $P_{i,j}$ in the following items.

    \item For $D \in \{P, Q\}$ and $i,j \in [3]$, there exist traceless binary observables $\tilde{D}_{i,j}^{(\ell)} \in \H_4$ acting on the $\ell$-th register such that
    \[
    D_{i,j} \approx_{\gamma \sqrt{\epsilon}} \tilde{D}_{i,j}^{(\ell)} \otimes I_4^{[n] \setminus \{\ell\}}.
    \]

    \item There exists a 4-dimensional unitary $U$ acting on the $\ell$-th register, such that for $i,j\in[3]$,
    \begin{align*}
 UP_{i,j}U^* &\approx_{\gamma \sqrt{\epsilon}}   \br{P_{i,j}^{\star}}^{\br{\ell}} \otimes I_{4}^{[n]\backslash\set{\ell}},\\
  \bar{U}Q_{i,j}U^T &\approx_{\gamma \sqrt{\epsilon}}   \br{Q_{i,j}^{\star}}^{\br{\ell}} \otimes I_{4}^{[n]\backslash\set{\ell}},
    % UP_{1,1}U^* &\approx_{\gamma \sqrt{\epsilon}}   (Z \otimes I)^{\br{\ell}} \otimes I_{4}^{[n]\backslash\set{\ell}} , &
    % UP_{2,2}U^* &\approx_{\gamma \sqrt{\epsilon}} (X \otimes I)^{\br{\ell}} \otimes I_{4}^{[n]\backslash\set{\ell}} , \\
    % \bar{U}Q_{1,1}U^T &\approx_{\gamma \sqrt{\epsilon}}  (Z \otimes I)^{\br{\ell}} \otimes I_{4}^{[n]\backslash\set{\ell}} , &
    % \bar{U}Q_{2,2}U^T &\approx_{\gamma \sqrt{\epsilon}}  (X \otimes I)^{\br{\ell}} \otimes I_{4}^{[n]\backslash\set{\ell}}.
    \end{align*}
    where $P_{i,j}^{\star}$ is the $\br{i,j}$-th entry of \cref{tab:mgs-OptimalMeasurementTable} and $Q_{i,j}^{\star}=\br{P_{i,j}^{\star}}^T$.
\end{enumerate}
\end{theorem}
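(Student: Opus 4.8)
The plan is to follow the blueprint of the noisy CHSH analysis in \cref{self-test-depolarize}: move the depolarizing noise onto the observables via \cref{pre-noise-to-obs}, extract rigidity from a sum-of-squares decomposition of the consistency observable, localize everything to a single $4$-dimensional register, and then invoke the (finite-dimensional) rigidity of the Magic Square relations to rotate that register to \cref{tab:mgs-OptimalMeasurementTable}.

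First I would show that near-optimal overall value forces every per-variable consistency to be near-optimal: since winning implies passing the consistency test, \cref{mgs-upper-bound-winning-rate} bounds each per-variable (row- or column-mode) agreement probability by $\frac{1+\rho}{2}+O(\Trerr^2)$, and since the game value averages these and is $\ge\frac{1+\rho}{2}-\winerr$, each is within $O(\epsilon)$ of $\frac{1+\rho}{2}$ (using $\Trerr\le 3\delta$ from \cref{prop:smalltrace}). Feeding this into the identity $P_{i,j}^{\mathrm{row}}\otimes Q_{i,j}'=-\frac{\rho}{2}\br{P_{i,j}^{\mathrm{row}}\otimes I-I\otimes\frac{Q_{i,j}'}{\rho}}^2+\frac{\rho}{2}\br{P_{i,j}^{\mathrm{row}}}^2\otimes I+\frac{1}{2\rho}I\otimes\br{Q_{i,j}'}^2$ from the proof of \cref{mgs-upper-bound-winning-rate}, together with Parseval, yields that the square term has $O(\epsilon)$ expectation against $\ket{\Phi^{\otimes 2n}}$, that $\widebar{\Tr}\br{Q_{i,j}'}^2\approx_\epsilon\rho^2$ and $\widebar{\Tr}\br{P_{i,j}^{\mathrm{row}}}^2\approx_\epsilon 1$, hence $Q_{i,j}'\approx_{\sqrt{\epsilon}}\rho Q_{i,j}$ by the observable-scaling lemma \cref{general-obs-scaling} (with $m=4$, $r=c=\rho$), and therefore, using \cref{vec-epr,approx-epr}, $P_{i,j}^{\mathrm{row}}\approx_{\sqrt{\epsilon}}Q_{i,j}^T$ and symmetrically $P_{i,j}^{\mathrm{column}}\approx_{\sqrt{\epsilon}}Q_{i,j}^T$. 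This is Conclusion~1, and I then write $P_{i,j}$ for the common (approximately binary) observable.

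Next I would establish the Magic Square relations among the $P_{i,j}$ and $Q_{i,j}$. After the standard reduction to Alice's measurements being supported on their valid-parity outcomes (one checks this relabeling costs only $O(\sqrt{\epsilon})$ in the trace bound, since two of the three per-row observables are unchanged and the third inherits a bound from Bob's binary observable through Conclusion~1), each row/column POVM $\set{E_a}$ has four outcomes with $\sum_a\chi_a E_a=\chi_i I$ exactly, and the three row observables read $P_{i,1}=\br{E_0+E_1}-\br{E_2+E_3}$, $P_{i,2}=\br{E_0+E_2}-\br{E_1+E_3}$, $P_{i,3}=\pm\br{\br{E_0+E_3}-\br{E_1+E_2}}$. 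The relations $P_{i,j}^2\approx_{\sqrt{\epsilon}}I$ then force $E_0+E_1$, $E_0+E_2$, $E_0+E_3$ to be approximately projective; summing these, using $E_1+E_2+E_3=I-E_0$ and $\br{\sum_a E_a}^2=I$, gives $\sum_{a\ne b}E_aE_b\approx 0$, and since $\Tr(E_aE_b)\ge 0$ each term is individually small, so $E_aE_b\approx 0$ for $a\ne b$ and $E_a^2\approx E_a$. I expect this Naimark-free projectivity argument to be the main obstacle. It then follows that $[P_{i,j},P_{i,j'}]\approx_{\sqrt{\epsilon}}0$ within each row and column and that $P_{i,1}P_{i,2}P_{i,3}\approx_{\sqrt{\epsilon}}\chi_i I$ (resp.\ $\chi'_j I$ for columns); transposing and using Conclusion~1 gives the analogous relations for the binary observables $Q_{i,j}$.

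Finally, I would carry out register concentration exactly as in \cref{chsh-selftest-main-thm}: from $Q_{i,j}'\approx_{\sqrt{\epsilon}}\rho Q_{i,j}$, Parseval and $\widehat{Q}_{i,j}\br{0^n}^2=O(\Trerr^2)$, the degree-$0$ and degree-$\ge 2$ Pauli mass of $Q_{i,j}$ (in the $16$-element per-register basis of \cref{[pre-noisy-epr-corr]}) is $O(\epsilon\gamma^2)$, so $Q_{i,j}$ is $O(\sqrt{\epsilon}\gamma)$-close to $\sum_k a_k O^{(k)}\otimes I$ with traceless $4\times4$ binary $O^{(k)}$; squaring and using $Q_{i,j}^2\approx I$ pins a single $a_k\approx1$ with the rest $\approx0$, giving $Q_{i,j}\approx_{\gamma\sqrt{\epsilon}}\tilde Q_{i,j}^{(k_{i,j})}\otimes I$ and (via $P_{i,j}\approx Q_{i,j}^T$) $P_{i,j}\approx_{\gamma\sqrt{\epsilon}}\tilde P_{i,j}^{(k_{i,j})}\otimes I$. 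To merge the registers into a common $\ell$, I would use that $\widebar{\Tr}(P_{i,1}P_{i,2}P_{i,3})\approx\chi_i=\pm1$, whereas if $k_{i,1},k_{i,2},k_{i,3}$ were not all equal this product would factor through a traceless single-register operator and have normalized trace $\approx 0$ — a contradiction; hence all observables in each row (and each column, and thus all nine, since rows and columns share a variable) lie on one register $\ell$, the same for $Q$ via transposition. Restricting to register $\ell$, the traceless $4\times4$ binary $\tilde P_{i,j}^{(\ell)}$ satisfy the Magic Square relations approximately, so the finite-dimensional rigidity of those relations (the standard Magic Square self-test, cf.\ \cite[Theorem 7.10]{JNVWY'20} and \cite{wu2016device}) furnishes a $4$-dimensional unitary $U$ with $U\tilde P_{i,j}^{(\ell)}U^*\approx_{\gamma\sqrt{\epsilon}}P_{i,j}^\star$; extending by $U^{(\ell)}\otimes I$ gives Conclusion~3 for $P$, Conclusion~2 is the concentration statement just proved, and transposing together with $Q_{i,j}\approx P_{i,j}^T$ yields $\bar U\,Q_{i,j}\,U^T\approx\br{UP_{i,j}U^*}^T\approx\br{P_{i,j}^\star}^T=Q_{i,j}^\star$, which is Conclusion~3 for $Q$.
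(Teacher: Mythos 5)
Your proposal follows the paper's architecture quite closely (noise moved onto observables, SoS analysis of the per-variable consistency observable, the scaling lemma \cref{general-obs-scaling}, Naimark-free projectivity of Alice's POVMs, the commutation/anti-commutation/parity-product relations, degree-one truncation and single-register concentration, and a final $4$-dimensional rotation), but the one place where you substitute your own argument for the paper's key technical ingredient — the projectivity step — does not deliver the claimed error, and you yourself flagged it as the main obstacle. From approximate idempotency of the pair-sums you deduce $\sum_{a\neq b}E_aE_b\approx_{\sqrt{\epsilon}}0$ and then argue that since $\Tr(E_aE_b)\ge 0$ each term is individually small, hence $E_aE_b\approx 0$. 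But passing from $\overline{\Tr}(E_aE_b)=O(\sqrt{\epsilon})$ to a bound on the normalized Frobenius norm of $E_aE_b$ costs a square root: the best this gives is $\frac{1}{d}\|E_aE_b\|_F^2=\overline{\Tr}(E_bE_a^2E_b)\le\overline{\Tr}(E_aE_b)=O(\sqrt{\epsilon})$, i.e.\ $E_aE_b\approx_{\epsilon^{1/4}}0$, and trace control alone cannot do better (think of two nearly orthogonal rank-one elements: $\Tr(E_aE_b)=|\braket{u}{v}|^2$ while $\|E_aE_b\|_F=|\braket{u}{v}|$). Consequently your versions of \cref{mag-commute-same-row}, \cref{mgs-obs-constraint} and \cref{claim-mgs-anti-commute} only come with error $O(\epsilon^{1/4})$, and items 2--3 of the theorem would be established with robustness $O(\epsilon^{1/4})$ rather than the stated $O(\gamma\sqrt{\epsilon})$. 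This is precisely the loss that the paper's \cref{lem:approx0} is built to avoid: exploiting that $E_0+E_1$ and $E_0+E_2$ are each close to projectors while $\sum_a E_a\approx I$, a block-matrix argument yields $E_aE_b\approx_{\sqrt{\epsilon}}0$ with no degradation. To prove the theorem as stated you need that lemma (or an equivalent $\epsilon$-preserving argument), not the trace-positivity shortcut.

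The other deviations are benign or cosmetic. Relabeling Alice onto valid-parity outcomes, instead of the paper's route through \cref{prop:faultyboundederror}, is workable provided you tie the relabeled third observable back to the original one via its consistency with Bob's trace-bounded $Q_{i,3}$ (which your parenthetical gestures at, and which uses that relabeling can only increase the winning probability); your trace argument $\overline{\Tr}(P_{i,1}P_{i,2}P_{i,3})\approx\pm 1$ for forcing all nine observables onto one register is a clean alternative to the paper's contradiction argument. However, the final appeal to the ``standard Magic Square self-test'' of \cite{JNVWY'20,wu2016device} as a black box is not in the form you need: those statements produce local isometries and junk states, not a $4$-dimensional unitary acting on a fixed family of traceless (approximately) binary observables. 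The paper instead proves the required finite-dimensional statement directly from the derived relations (\cref{mgs-local-unitary} applied to $\tilde{P}_{1,1},\tilde{P}_{2,2},\tilde{P}_{1,2}$, then the parity products and anti-commutations pin down the remaining entries), and your argument should do the same.
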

%For constant noise, the robustness parameter $O\br{\sqrt{\epsilon}}$ matches that in Magic Square game sharing pure states \cite{wu2016device}. Moreover for noiseless case, $\gamma\to\infty$ indicating that the self-testing results do not hold. For example, considering the canonical optimal strategy sharing two EPR pairs and using $\set{P_{i,j}^{\star}},\set{Q_{i,j}^\star}$  in \cref{def-mgs-game}, it is can be verified that when sharing six EPR pairs, $\set{\br{P_{i,j}^{\star}}^{\x 3}},\set{\br{Q_{i,j}^{\star}}^{\x 3}}$ is also the optimal strategy but violates the second statement of \cref{mgs-self-testing-main-thm}.

% In both the noiseless \cite[Theorem 7.10]{JNVWY'20} and noisy cases with constant depolarizing level, when the winning violation is $\epsilon$-close to optimal, then the self-testing robustness parameter scales as $\sqrt{\epsilon}$. 
% However for noiseless case, the extra requirements (register-concentration and small isometries) as listed in \cref{mgs-self-testing-main-thm} are not satisfied. For example, considering the canonical optimal strategy sharing two EPR pairs and using $\set{P_{i,j}^{\star}},\set{Q_{i,j}^\star}$  in \cref{def-mgs-game}, it can be verified that the strategies $\set{\br{P_{i,j}^{\star}}^{\x 3}},\set{\br{Q_{i,j}^{\star}}^{\x 3}}$ on six EPR pairs are also optimal but violates the second statement of \cref{mgs-self-testing-main-thm}.

In both the noiseless setting~\cite[Theorem 7.10]{JNVWY'20} and the noisy case with a constant correlation parameter, the robustness parameter scales as $O(\sqrt{\epsilon})$ when the players' winning probability is within $\epsilon$ of optimal.

In the noiseless case, however, the structural conclusions stated in item 2 and item 3 of \cref{mgs-self-testing-main-thm} may fail to hold. For example,  one can verify that the measurement using six EPR pairs and observables $\{(P_{i,j}^\star)^{\otimes 3}\}$ and $\{(Q_{i,j}^\star)^{\otimes 3}\}$ also achieves the optimal winning probability, but does not satisfy item 2 or item 3 of \cref{mgs-self-testing-main-thm}.

Prior to proving \cref{mgs-self-testing-main-thm}, we establish several preparatory lemmas. These results not only facilitate the proof of \cref{mgs-self-testing-main-thm} but also reveal key properties of Alice's and Bob's observables. Proofs are deferred to \cref{app:mgs}.

%We begin with a structural property of Alice's measurements in the parity test:

% \begin{lemma}\label{prop:faultyboundederror}
% Suppose the players pass the parity test in the Magic Square game under noise (\cref{def-mgs-game}) with probability at least $1-\epsilon$. Then 
% for each $x\in\{r_1,r_2,r_3,c_1,c_2\}$,
% \begin{equation}\label{eqn:correctPOVM1}
%    \sum_{a_1,a_2,a_3:a_1a_2a_3=1}E_{a_1,a_2,a_3}^{x}\approx_{\sqrt{\epsilon} }I,
% \end{equation}
% and for $x = c_3$,
% \begin{equation}\label{eqn:correctPOVM2}
% \sum_{a_1,a_2,a_3:a_1a_2a_3=-1}E_{a_1,a_2,a_3}^{x}\approx_{\sqrt{\epsilon} }I.\end{equation}
% \end{lemma}

We first show that success of the consistency test implies alignment between Alice’s and Bob’s observables:
\begin{lemma}\label{mgs-consist-row-col}
Suppose the players pass the consistency test in the Magic Square game under noise (\cref{def-mgs-game}) with probability at least $\frac{1+\rho}{2}-\winerr$. Let $\epsilon=\winerr/\rho+\Trerr^2/\rho^2$, where $\Trerr$ is defined in \cref{eqn:traceerror}. Then for all $i,j\in[3]$, the following statements hold:
\begin{enumerate}
    \item $P_{i,j}^{\mathrm{row}}\approx_{\sqrt{\epsilon}} P_{i,j}^{\mathrm{column}}\approx_{\sqrt{\epsilon}}Q_{i,j}^T.$ We then omit superscripts of $P_{i,j}$.
    \item $P_{i,j}^2\approx_{\sqrt{\epsilon}}I$.
\end{enumerate}
\end{lemma}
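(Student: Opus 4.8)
The plan is to re-use the sum-of-squares identity already appearing in the proof of \cref{mgs-upper-bound-winning-rate}, but this time to read off approximate equalities from the slack between the achieved value and the optimum. Throughout I work with the quantities produced by \cref{pre-noise-to-obs}: for a variable $s_{i,j}$ and a question $x\in\set{r_i,c_j}$, the probability that Alice and Bob agree on $s_{i,j}$ equals $\tfrac12+\tfrac12\,t_{i,j}^{x}$, where $t_{i,j}^{x}:=\Tr\br{\br{P_{i,j}^{x}\x Q_{i,j}'}\Phi^{\x 2n}}$. Since the referee draws the question pair uniformly over the $18$ pairs $\set{(r_i,s_{i,j}),(c_j,s_{i,j}):i,j\in[3]}$, the overall success probability of the consistency test is $\tfrac12+\tfrac1{36}\sum_{i,j}\br{t_{i,j}^{\mathrm{row}}+t_{i,j}^{\mathrm{column}}}$.

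First I would localize the hypothesis. The proof of \cref{mgs-upper-bound-winning-rate} already gives $t_{i,j}^{x}\le \rho+\tfrac{\Trerr^2}{2\rho}$ for every pair; together with the assumed lower bound $\sum_{i,j}\br{t_{i,j}^{\mathrm{row}}+t_{i,j}^{\mathrm{column}}}\ge 18\rho-36\,\winerr$ and the fact that there are only $18$ summands, each individual term satisfies $t_{i,j}^{x}\ge \rho-O\br{\winerr+\Trerr^2}=\rho-O(\epsilon)$, treating $\rho$ as a fixed constant. It therefore suffices to analyze one fixed near-optimal term $t_{i,j}^{x}$.

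For that term I would substitute the identity from \cref{mgs-upper-bound-winning-rate},
\[
P_{i,j}^{x}\x Q_{i,j}'=-\frac{\rho}{2}\br{P_{i,j}^{x}\x I-I\x\frac{Q_{i,j}'}{\rho}}^{2}+\frac{\rho}{2}\br{P_{i,j}^{x}}^{2}\x I+\frac1{2\rho}\,I\x \br{Q_{i,j}'}^{2},
\]
and use the two a priori bounds $\widebar{\Tr}\br{\br{P_{i,j}^{x}}^2}\le 1$ (since $P_{i,j}^{x}$ is an observable built from a POVM, so $-I\preceq P_{i,j}^{x}\preceq I$) and $\widebar{\Tr}\br{\br{Q_{i,j}'}^2}\le \rho^2+\Trerr^2$ (Parseval's identity applied to the binary observable $Q_{i,j}$, exactly as in \cref{almost-deg-one-1}). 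Taking the expectation in $\Phi^{\x 2n}$, the three non-negative quantities $\tfrac{\rho}{2}\Tr\br{\br{P_{i,j}^{x}\x I-I\x\tfrac{Q_{i,j}'}{\rho}}^{2}\Phi^{\x 2n}}$, $\tfrac{\rho}{2}\br{1-\widebar{\Tr}\br{\br{P_{i,j}^{x}}^2}}$ and $\tfrac1{2\rho}\br{\rho^2+\Trerr^2-\widebar{\Tr}\br{\br{Q_{i,j}'}^2}}$ sum to $\rho+\tfrac{\Trerr^2}{2\rho}-t_{i,j}^{x}$, which is $O(\epsilon)$ because $t_{i,j}^{x}\ge\rho-O(\epsilon)$ and $\Trerr^2\le\epsilon$; hence each is $O(\epsilon)$. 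This yields: (a) $\widebar{\Tr}\br{\br{P_{i,j}^{x}}^2}\approx_{\epsilon}1$, and since $0\preceq\br{P_{i,j}^{x}}^2\preceq I$ and $\widebar{\Tr}\br{\br{I-N}^2}\le\widebar{\Tr}\br{I-N}$ for $0\preceq N\preceq I$, this forces $\br{P_{i,j}^{x}}^2\approx_{\sqrt\epsilon}I$, which is item 2; (b) $\norm{\br{P_{i,j}^{x}\x I-I\x\tfrac{Q_{i,j}'}{\rho}}\ket{\Phi^{\x 2n}}}=O(\sqrt\epsilon)$; and (c) $\widebar{\Tr}\br{\br{Q_{i,j}'}^2}\approx_{\epsilon}\rho^2$.

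Finally I would turn (b) into an operator statement. From (c) and $\widebar{\Tr}\br{Q_{i,j}^2}=1$, the observable-scaling lemma \cref{general-obs-scaling} (applied with $m=4$ and $r=c=\rho$, just as in the proof of \cref{thm-close-to-deg1}) gives $Q_{i,j}'\approx_{\sqrt\epsilon}\rho\,Q_{i,j}$; hence, using \cref{approx-epr} and \cref{vec-epr}, $I\x\tfrac{Q_{i,j}'}{\rho}\ket{\Phi^{\x 2n}}\approx_{\sqrt\epsilon}I\x Q_{i,j}\ket{\Phi^{\x 2n}}=Q_{i,j}^{T}\x I\ket{\Phi^{\x 2n}}$. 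Combining with (b) and \cref{approx-epr} once more, $P_{i,j}^{x}\approx_{\sqrt\epsilon}Q_{i,j}^{T}$; taking $x=r_i$ and then $x=c_j$ gives $P_{i,j}^{\mathrm{row}}\approx_{\sqrt\epsilon}Q_{i,j}^{T}\approx_{\sqrt\epsilon}P_{i,j}^{\mathrm{column}}$, which is item 1. The only genuinely delicate points I anticipate are the bookkeeping in the localization step — setting up the $18$-fold uniform average so that \emph{both} $P_{i,j}^{\mathrm{row}}$ and $P_{i,j}^{\mathrm{column}}$, not merely their average, inherit a near-optimal value — and invoking \cref{general-obs-scaling} in the correct (dimension-$4$) register; the SoS manipulation itself is identical to the one already performed in \cref{mgs-upper-bound-winning-rate}.
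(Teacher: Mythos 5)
Your argument follows the paper's own route almost verbatim: the same SoS identity from \cref{mgs-upper-bound-winning-rate}, the same Parseval bound on $\overline{\Tr}\br{Q_{i,j}'}^2$, the same use of \cref{general-obs-scaling} with $m=4$, $r=c=\rho$, and the same passage through \cref{vec-epr} and \cref{approx-epr} to convert vector statements on $\ket{\Phi^{\x 2n}}$ into operator statements. Your explicit localization over the $18$ question pairs is fine (and in fact more careful than the paper, which writes the per-pair slack directly). However, there is one unjustified step: when you invoke \cref{general-obs-scaling} you use ``$\overline{\Tr}\br{Q_{i,j}^2}=1$'', treating $Q_{i,j}$ as a binary observable. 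In this setting Bob's measurement is a general two-outcome POVM, so $Q_{i,j}$ only satisfies $\norm{Q_{i,j}}\leq 1$; projectivity cannot be assumed, because with a fixed shared state Naimark dilation is unavailable (the paper stresses exactly this point, and its own proof \emph{derives} $\overline{\Tr}\,Q_{i,j}^2\approx_\epsilon 1$ and $Q_{i,j}^2\approx_{\sqrt{\epsilon}}I$ rather than assuming them). Since \cref{general-obs-scaling} genuinely needs $\overline{\Tr}\,Q_{i,j}^2\approx_\epsilon 1$ as a hypothesis, your proof of item 1 has a gap as written.

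The gap is easily closed, in either of two ways. The paper's way: use \cref{pre-noise-to-obs} to shift the noise onto Alice's side, $\Tr\br{\br{P_{i,j}^{x}\x Q_{i,j}'}\Phi^{\x 2n}}=\Tr\br{\br{\br{P_{i,j}^{x}}'\x Q_{i,j}}\Phi^{\x 2n}}$, and run the symmetric SoS decomposition; its cross terms give $\overline{\Tr}\,Q_{i,j}^2\approx_\epsilon 1$ (and, via the \cref{closetobservable} argument, $Q_{i,j}^2\approx_{\sqrt{\epsilon}}I$, which later lemmas also use). Alternatively, you can extract it from your own item (c) in one line: by Parseval, $\overline{\Tr}\br{Q_{i,j}'}^2\leq \Trerr^2+\rho^2\,\overline{\Tr}\br{Q_{i,j}^2}$, so (c) gives $\rho^2-O(\epsilon)\leq \Trerr^2+\rho^2\,\overline{\Tr}\br{Q_{i,j}^2}$, and together with $\overline{\Tr}\br{Q_{i,j}^2}\leq 1$ this yields $\overline{\Tr}\br{Q_{i,j}^2}\approx_\epsilon 1$ for constant $\rho$. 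With that line added, your invocation of \cref{general-obs-scaling} is legitimate and the rest of the argument (including the derivation of item 2 from $\overline{\Tr}\br{\br{P_{i,j}^{x}}^2}\approx_\epsilon 1$) matches the paper.
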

The next lemma states that observables in the same row or column approximately commute.
\begin{lemma}\label{mag-commute-same-row}
    Under the same assumptions as in \cref{mgs-consist-row-col}, 
    for any $i,i',j,j'\in[3]$ such that $i\ne i'$ and $j\ne j'$, we have 
    \[\left[P_{i,j},P_{i,j'}\right]\approx_{\sqrt{\epsilon}}0\quad\text{and}\quad \left[P_{i,j},P_{i',j}\right]\approx_{\sqrt{\epsilon}}0.\]
\end{lemma}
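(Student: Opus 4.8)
The plan is to deduce the approximate commutativity from the fact that Alice's POVM for a row (or column) question is, on the relevant two coordinates, \emph{approximately projective}, and to establish this projectivity directly from the consistency‑test analysis — this is precisely where we bypass the appeal to Naimark's theorem that would be available in the noiseless, pure‑state setting. \textbf{Step 1 (coarse‑graining).} Fix $i$ and $j\neq j'$; the claim for $[P_{i,j},P_{i,j'}]$ concerns two variables of the common row $r_i$, on which Alice applies the POVM $\{E^{r_i}_a\}_{a\in\{-1,1\}^3}$. Let $j''$ be the remaining column index and define a four‑outcome POVM $\{E_0,E_1,E_2,E_3\}$ by summing $E^{r_i}_a$ over $a_{j''}$ and relabelling the pair $(a_j,a_{j'})\in\{-1,1\}^2$, chosen so that
\[
P_{i,j}^{\mathrm{row}}=E_0-E_1+E_2-E_3,\qquad P_{i,j'}^{\mathrm{row}}=E_0-E_1-E_2+E_3 .
\]
Then $E_k\succeq 0$, $\sum_k E_k=I$, and in particular $0\preceq E_k\preceq I$. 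The identical coarse‑graining of the column question $c_j$ handles $[P_{i,j},P_{i',j}]$, so it suffices to treat the row case.

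\textbf{Step 2 (approximate projections).} Put $\Pi=E_0+E_2$ and $\Pi'=E_0+E_3$, so that $P_{i,j}^{\mathrm{row}}=2\Pi-I$ and $P_{i,j'}^{\mathrm{row}}=2\Pi'-I$, whence $I-(P_{i,j}^{\mathrm{row}})^2=4\,\Pi(I-\Pi)$ and $I-(P_{i,j'}^{\mathrm{row}})^2=4\,\Pi'(I-\Pi')$, both positive semidefinite. The SoS decomposition behind \cref{mgs-consist-row-col} (equivalently, the one in the proof of \cref{mgs-upper-bound-winning-rate}) supplies more than the Frobenius bound $(P_{i,j}^{\mathrm{row}})^2\approx_{\sqrt{\epsilon}}I$: dropping the non‑negative square term from below and using $\widebar{\Tr}\bigl((Q_{i,j}')^2\bigr)\le\rho^2+\Trerr^2$ yields the \emph{trace} estimate $\widebar{\Tr}\bigl(I-(P_{i,j}^{\mathrm{row}})^2\bigr)=O(\epsilon)$ (here one also uses that a near‑optimal consistency pass probability forces each per‑variable agreement to be within $O(\epsilon)$ of optimal, as in the proof of \cref{mgs-consist-row-col}), and likewise for $P_{i,j'}^{\mathrm{row}}$. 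Hence $\widebar{\Tr}\bigl(\Pi(I-\Pi)\bigr)=O(\epsilon)$ and $\widebar{\Tr}\bigl(\Pi'(I-\Pi')\bigr)=O(\epsilon)$.

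\textbf{Step 3 (products of POVM elements are small, and conclusion).} Expanding $\Pi(I-\Pi)=(E_0+E_2)(E_1+E_3)$ and $\Pi'(I-\Pi')=(E_0+E_3)(E_1+E_2)$ and noting that each summand has non‑negative normalized trace ($\widebar{\Tr}(E_aE_b)=\widebar{\Tr}(E_a^{1/2}E_bE_a^{1/2})\ge 0$), we obtain $\widebar{\Tr}(E_aE_b)=O(\epsilon)$ for every pair $a\neq b$. Since $0\preceq E_a,E_b\preceq I$, a short computation gives
\[
\|E_aE_b\|_F^2=\Tr\bigl(E_a^2E_b^2\bigr)\le\Tr\bigl(E_aE_b\bigr),
\]
so $E_aE_b\approx_{\sqrt{\epsilon}}0$ and therefore $[E_a,E_b]=E_aE_b-(E_aE_b)^*\approx_{\sqrt{\epsilon}}0$. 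Expanding $[P_{i,j}^{\mathrm{row}},P_{i,j'}^{\mathrm{row}}]$ as a fixed finite linear combination of the $[E_a,E_b]$ gives $[P_{i,j}^{\mathrm{row}},P_{i,j'}^{\mathrm{row}}]\approx_{\sqrt{\epsilon}}0$, and replacing $P_{i,j}^{\mathrm{row}},P_{i,j'}^{\mathrm{row}}$ by $P_{i,j},P_{i,j'}$ costs only $O(\sqrt{\epsilon})$ by part (1) of \cref{mgs-consist-row-col}, the bound $\|P\|\le 1$, and multiplicativity of $\approx$ under multiplication by norm‑$\le 1$ operators. The column case is identical with $r_i$ replaced by $c_j$.

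\textbf{Main obstacle.} The delicate point is getting the projectivity of the POVM at the correct order. Working only from the Frobenius statement of \cref{mgs-consist-row-col} would give $\widebar{\Tr}(E_aE_b)=O(\sqrt{\epsilon})$ and hence only $[P_{i,j},P_{i,j'}]\approx_{\epsilon^{1/4}}0$; recovering the exponent $\sqrt{\epsilon}$ hinges on two things — (i) the SoS furnishing an $O(\epsilon)$ bound on the \emph{trace} $\widebar{\Tr}(I-(P_{i,j}^{\mathrm{row}})^2)$ rather than merely its Frobenius norm, and (ii) the elementary inequality $\|E_aE_b\|_F^2\le\Tr(E_aE_b)$ for sub‑unit‑norm positive operators, which upgrades a trace bound to a Frobenius bound with no loss. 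This two‑step route is exactly what lets us dispense with Naimark's theorem, which is unavailable here because the shared state is fixed.
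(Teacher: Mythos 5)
Your proof is correct, but it follows a genuinely different route from the paper's. The paper first invokes the parity test (via \cref{prop:faultyboundederror}) to discard the wrong-parity POVM elements, approximates each observable by an exact projector difference using \cref{sqrt-approx-project}, and then applies the block-matrix Lemma~\ref{lem:approx0} to show that distinct correct-parity elements approximately annihilate, from which the commutator bound follows by expanding the product. You instead coarse-grain the row (or column) POVM over the third variable, observe that $I-(P_{i,j}^{\mathrm{row}})^2=4\,\Pi(I-\Pi)$ is an exact sum of nonnegative-trace cross terms $E_aE_b$, pull the \emph{trace-level} bound $\widebar{\Tr}\bigl(I-(P_{i,j}^{\mathrm{row}})^2\bigr)=O(\epsilon)$ out of the SoS/consistency analysis (this is exactly \cref{eqn:trPij2} inside the proof of \cref{mgs-consist-row-col}, so invoking it under identical hypotheses is legitimate, and you correctly flag the per-variable averaging step), and then upgrade trace to Frobenius via $\|E_aE_b\|_F^2=\Tr(E_a^2E_b^2)\le\Tr(E_aE_b)$ for $0\preceq E_a,E_b\preceq I$, which is valid. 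What your route buys: it is considerably more elementary (no Lemma~\ref{lem:approx0}, no \cref{sqrt-approx-project}) and it never touches the parity test, thereby sidestepping the fact that the hypothesis of \cref{prop:faultyboundederror} (parity passed with probability $1-O(\epsilon)$) is not literally among "the same assumptions as \cref{mgs-consist-row-col}"; you also correctly diagnose that working only from the Frobenius statement $P_{i,j}^2\approx_{\sqrt\epsilon}I$ would degrade the exponent to $\epsilon^{1/4}$. What the paper's route buys: its byproducts—near-annihilation of the individual correct-parity POVM elements and Lemma~\ref{lem:approx0} itself—are reused later (in \cref{mgs-obs-constraint} and in the 2-out-of-$n$ analysis), so the heavier machinery is amortized, whereas your argument delivers exactly the commutator statement (its coarse-grained annihilation would need minor adaptation to serve those later uses).
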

We then recover the parity constraints of the Magic Square game in terms of the observables:
\begin{lemma}\label{mgs-obs-constraint}
    Under the same assumptions as in \cref{mgs-consist-row-col}, the following observable constraints hold:
    \begin{align*}
        &P_{i,1}P_{i,2}P_{i,3} \approx_{\sqrt{\epsilon}} I \quad \text{for all } i\in[3], \\
        &P_{1,j}P_{2,j}P_{3,j} \approx_{\sqrt{\epsilon}} I \quad \text{for } j=1,2, \\
        &P_{1,3}P_{2,3}P_{3,3} \approx_{\sqrt{\epsilon}} -I.
    \end{align*}
\end{lemma}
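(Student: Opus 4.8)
The plan is to reduce all three displayed identities to a single computation performed \emph{inside} the POVM that Alice applies to one fixed row- or column-question, using that the three observables appearing in each identity are marginals of that one POVM, together with the approximate projectivity of Alice's measurements. Recall that the Magic Square game requires Alice's assignment on question $x$ to have parity $\chi(x)$, equal to $1$ for $x\in\{r_1,r_2,r_3,c_1,c_2\}$ and $-1$ for $x=c_3$; we use that the POVM $\{E^x_{a}\}$ may be taken to be supported on the parity-$\chi(x)$ answers, since answers with the wrong parity contribute nothing to the winning probability.

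First I would collect the ingredients already available. From \cref{mgs-consist-row-col} we have $P_{i,j}^{\mathrm{row}}\approx_{\sqrt{\epsilon}}P_{i,j}^{\mathrm{column}}$ for all $i,j$ and $P_{i,j}^2\approx_{\sqrt{\epsilon}}I$, so by \cref{approx_multiply} (and $\|P_{i,j}\|\le1$) it suffices to prove $P^{\mathrm{row}}_{i,1}P^{\mathrm{row}}_{i,2}P^{\mathrm{row}}_{i,3}\approx_{\sqrt{\epsilon}}I$ for each row $i$, $P^{\mathrm{column}}_{1,j}P^{\mathrm{column}}_{2,j}P^{\mathrm{column}}_{3,j}\approx_{\sqrt{\epsilon}}I$ for $j=1,2$, and $P^{\mathrm{column}}_{1,3}P^{\mathrm{column}}_{2,3}P^{\mathrm{column}}_{3,3}\approx_{\sqrt{\epsilon}}-I$; replacing each factor by the common observable $P_{i,j}$ then costs only $O(\sqrt{\epsilon})$. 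I would also use the approximate projectivity of Alice's POVMs established along the way in the proof of \cref{mag-commute-same-row}: for every question $x$ and all outcomes $a\ne a'$, $E^x_{a}E^x_{a'}\approx_{\sqrt{\epsilon}}0$ and $(E^x_{a})^2\approx_{\sqrt{\epsilon}}E^x_{a}$. (This is exactly the place where the absence of Naimark's theorem in the noisy setting forces the hands-on argument, and it is the technical heart of the section.)

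The core step is then a direct expansion. Fix the row $r_i$ and write $E_{a}:=E^{r_i}_{a_1,a_2,a_3}$; by the parity rule Alice's answer satisfies $a_1a_2a_3=1$, so $\{E_{a}\}$ is supported on $G=\{a\in\{-1,1\}^3:a_1a_2a_3=1\}$. Expanding the definitions,
\[
P^{\mathrm{row}}_{i,1}P^{\mathrm{row}}_{i,2}P^{\mathrm{row}}_{i,3}=\sum_{a,a',a''\in G}a_1a'_2a''_3\,E_{a}E_{a'}E_{a''}.
\]
Every term with $a\ne a'$ is $\approx_{\sqrt{\epsilon}}0$ because $E_{a}E_{a'}\approx_{\sqrt{\epsilon}}0$ and $\|E_{a''}\|\le1$; every remaining term with $a'\ne a''$ is $\approx_{\sqrt{\epsilon}}0$ because $(E_{a})^2\approx_{\sqrt{\epsilon}}E_{a}$ and hence $E_{a}E_{a''}\approx_{\sqrt{\epsilon}}0$; there are only constantly many terms, so by \cref{approx_multiply} the sum collapses to $\sum_{a\in G}a_1a_2a_3\,E_{a}^3\approx_{\sqrt{\epsilon}}\sum_{a\in G}a_1a_2a_3\,E_{a}=\sum_{a\in G}E_{a}=I$, using $E_{a}^3\approx_{\sqrt{\epsilon}}E_{a}$ and $a_1a_2a_3=1$ on $G$. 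The column identities are proved identically starting from $\{E^{c_j}_{a}\}$, whose support lies in $\{a:a_1a_2a_3=1\}$ for $j=1,2$ and in $\{a:a_1a_2a_3=-1\}$ for $j=3$, which is precisely why the product equals $+I$ in the first case and $-I$ in the second; combining with the reduction above yields the lemma.

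The main obstacle is not internal to this lemma but is the ingredient it rests on: establishing the approximate projectivity of Alice's POVMs without Naimark's theorem, which must be carried out earlier via the four-outcome POVM decomposition sketched in the proof overview (writing two binary observables as complementary $\pm1$-combinations of a common POVM and using $P^2\approx I$ to force the four elements to be near-orthogonal idempotents). Everything after that is routine bookkeeping: all approximations live in the normalized Frobenius norm and are pushed through constant-length products of operators of norm $\le1$, so each application of \cref{approx_multiply} costs $O(\sqrt{\epsilon})$ and the constantly many applications keep the final error at $O(\sqrt{\epsilon})$. One should also confirm that the parity-support assumption is consistent with how the game and the admissible strategies are set up, which it is, since parity-violating answers contribute nothing to the winning probability and may be taken to be absent.
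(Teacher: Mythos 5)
Your proposal follows essentially the same route as the paper's proof: expand each triple product over Alice's POVM for the single row or column question, kill all cross terms using the near-orthogonality of the elements (from \cref{lem:approx0}) together with their near-idempotence, and read off the sign $\pm 1$ of the surviving diagonal terms from the parity constraint, so the product collapses to $\approx_{\sqrt{\epsilon}} \pm\sum_a E_a \approx_{\sqrt{\epsilon}} \pm I$. The only adjustment is that you cannot take "without loss of generality" that the POVM is supported on correct-parity answers, since the lemma concerns the given strategy's actual observables; instead invoke \cref{prop:faultyboundederror}, which shows the wrong-parity elements are $O\br{\sqrt{\epsilon}}$-small so that restricting to the correct-parity support and using $\sum_{a}E_a\approx_{\sqrt{\epsilon}} I$ costs only the error your bookkeeping already allows.
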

Then, we show that observables located in different rows and different columns are approximately anti-commute. 
\begin{lemma}\label{claim-mgs-anti-commute}
Under the same assumptions as in \cref{mgs-consist-row-col}, for all $i \neq i'$ and $j \neq j'$, we have
\[
\set{P_{i,j}, P_{i',j'}} \approx_{\sqrt{\epsilon}} 0, \qquad \set{Q_{i,j}, Q_{i',j'}} \approx_{\sqrt{\epsilon}} 0.
\]
\end{lemma}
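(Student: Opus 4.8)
The plan is to follow the classical argument that anti-commutation in the Magic Square game follows from the parity constraints together with the within-row/within-column commutation relations, but carried out with the approximate-equality calculus $\approx_{\sqrt{\epsilon}}$ instead of exact identities. Concretely, fix indices $i\ne i'$ and $j\ne j'$, so that $P_{i,j}$ and $P_{i',j'}$ sit in different rows and different columns. Let $j''$ be the third column index (so $\{j,j',j''\}=[3]$) and $i''$ the third row index. By \cref{mgs-obs-constraint} we have $P_{i,1}P_{i,2}P_{i,3}\approx_{\sqrt{\epsilon}}\pm I$ and $P_{1,j'}P_{2,j'}P_{3,j'}\approx_{\sqrt{\epsilon}}\pm I$ (the sign depending on whether the row/column is the special third column), and by \cref{mgs-consist-row-col} each $P_{a,b}^2\approx_{\sqrt{\epsilon}}I$, so each observable is (approximately) its own inverse. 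Hence, by multiplying the row-$i$ relation through by the other two factors, $P_{i,j}\approx_{\sqrt{\epsilon}}(\pm1)P_{i,j'}P_{i,j''}$ (using \cref{approx_multiply}-type stability of products under $\approx$, plus $P_{i,j'}^2\approx I$ to cancel, noting the two within-row factors commute up to $\sqrt{\epsilon}$ by \cref{mag-commute-same-row}). Similarly $P_{i',j'}\approx_{\sqrt{\epsilon}}(\pm1)P_{i,j'}P_{i'',j'}$ from the column-$j'$ relation.

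Next I would substitute these two expressions into $P_{i,j}P_{i',j'}+P_{i',j'}P_{i,j}$ and expand, tracking errors through the product-stability lemma (multiplication of $O(1)$-norm operators preserves $\approx_{\sqrt{\epsilon}}$ up to constant factors, since all the $P$'s are binary observables with norm $1$). After substitution, $\{P_{i,j},P_{i',j'}\}\approx_{\sqrt{\epsilon}}\pm\{\,P_{i,j'}P_{i,j''},\,P_{i,j'}P_{i'',j'}\,\}$. Now the key cancellations: $P_{i,j''}$ and $P_{i'',j'}$ lie in the same entry-grid but — crucially — $P_{i,j'}$ commutes with $P_{i,j''}$ (same row $i$, \cref{mag-commute-same-row}) and with $P_{i'',j'}$ (same column $j'$, \cref{mag-commute-same-row}), all up to $\sqrt{\epsilon}$. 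Pushing the two copies of $P_{i,j'}$ together using these commutations and $P_{i,j'}^2\approx_{\sqrt{\epsilon}}I$, the anticommutator collapses to $\pm(P_{i,j''}P_{i'',j'}+P_{i'',j'}P_{i,j''})$, i.e. to $\pm\{P_{i,j''},P_{i'',j'}\}$ — again two observables in different rows and different columns. This looks circular, so instead I would use the third cell of the grid directly: the remaining entry $s_{i'',j''}$ appears in both row $i''$ and column $j''$, and the two parity relations $P_{i'',j}P_{i'',j'}P_{i'',j''}\approx\pm I$ and $P_{i,j''}P_{i',j''}P_{i'',j''}\approx\pm I$ let me express $P_{i,j''}P_{i'',j'}$ in a form where a common factor $P_{i'',j''}$ is flanked by observables that commute with it, forcing the anticommutator to vanish. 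The cleanest route is to reproduce the exact-identity derivation of $\{P_{1,1},P_{2,2}\}=0$ from the nine Pauli-table relations (rows multiply to $+I$, third column to $-I$, everything in a line commutes) — this is a short finite computation — and then observe that every exact identity used is available here up to additive error $O(\sqrt{\epsilon})$, with only $O(1)$-many multiplications, so the conclusion holds up to $O(\sqrt{\epsilon})$. The statement for $Q_{i,j}$ follows identically, using that $Q_{i,j}\approx_{\sqrt{\epsilon}}P_{i,j}^T$ from \cref{mgs-consist-row-col} (so $\{Q_{i,j},Q_{i',j'}\}\approx_{\sqrt{\epsilon}}\{P_{i,j},P_{i',j'}\}^T\approx_{\sqrt{\epsilon}}0$), or alternatively by running the same argument on Bob's side.

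I expect the main obstacle to be bookkeeping the error accumulation rather than any conceptual difficulty: each application of \cref{mag-commute-same-row} or of $P_{a,b}^2\approx_{\sqrt{\epsilon}}I$ introduces an additive $O(\sqrt{\epsilon})$ term, and each such term must be multiplied on the left or right by further binary observables (harmless, norm $1$) and then summed; one must make sure the number of such steps is an absolute constant independent of $n,\rho$, so that the final bound is genuinely $O(\sqrt{\epsilon})$ (with the $\gamma=1/(1-\rho)$ factor only entering later, in \cref{mgs-self-testing-main-thm}, not here). A secondary subtlety is that \cref{mag-commute-same-row} is only stated for pairs in the same row or same column with distinct other-coordinate; one must check that every commutation invoked in the finite Pauli-table computation is of exactly that type, which it is, since in the Mermin table any two entries in a common line commute. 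Writing out the finite computation once, carefully, and then invoking the product-stability lemma uniformly is the way to keep the proof short and rigorous.
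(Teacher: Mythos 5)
Your final plan is correct and coincides with the paper's proof: after discarding the circular first substitution, you arrive at exactly the paper's argument, which writes $P_{1,1}P_{2,2}\approx_{\sqrt{\epsilon}} P_{1,2}P_{1,3}P_{2,3}P_{2,1}\approx_{\sqrt{\epsilon}} -P_{1,2}P_{3,3}P_{2,1}$ (rows $1$ and $2$, then column $3$ with its $-1$ parity) and $P_{2,2}P_{1,1}\approx_{\sqrt{\epsilon}} P_{1,2}P_{3,2}P_{3,1}P_{2,1}\approx_{\sqrt{\epsilon}} P_{1,2}P_{3,3}P_{2,1}$ (columns $2$ and $1$, then row $3$), i.e.\ the finite Mermin-table computation carried out with \cref{mgs-obs-constraint}, \cref{mag-commute-same-row}, $P_{i,j}^2\approx_{\sqrt{\epsilon}} I$, and product stability for norm-$1$ observables in $O(1)$ steps. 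The $Q$ case is handled in the paper just as you suggest, via $Q_{i,j}\approx_{\sqrt{\epsilon}} P_{i,j}^T$ from \cref{mgs-consist-row-col}.
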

We also need the following lemmas to prove that all players measurements act on the same single register.
\begin{lemma}\label{lem:ms-single-observable-localization}
Under the assumptions of \cref{mgs-consist-row-col}, for every \(i,j\in[3]\), there exist indices
\[
\ell_{i,j},m_{i,j}\in[n]
\]
and traceless binary observables
\[
\widetilde{P}_{i,j}^{(\ell_{i,j})},\ \widetilde{Q}_{i,j}^{(m_{i,j})}\in \H_4
\]
such that
\begin{align*}
P_{i,j}
&\approx_{\gamma\sqrt{\epsilon}}
\widetilde{P}_{i,j}^{(\ell_{i,j})}\otimes I^{[n]\setminus\{\ell_{i,j}\}}_{4},\\
Q_{i,j}
&\approx_{\gamma\sqrt{\epsilon}}
\widetilde{Q}_{i,j}^{(m_{i,j})}\otimes I^{[n]\setminus\{m_{i,j}\}}_{4}.
\end{align*}
\end{lemma}

\begin{lemma}\label{lem:ms-common-register}
Under the assumptions of \cref{mgs-consist-row-col}, there exists a single register index
\[
\ell\in[n]
\]
such that for every \(i,j\in[3]\), there exist traceless binary observables
\[
\widetilde{P}_{i,j}^{(\ell)},\ \widetilde{Q}_{i,j}^{(\ell)}\in\H_4
\]
satisfying
\begin{align*}
P_{i,j}
&\approx_{\gamma\sqrt{\epsilon}}
\widetilde{P}_{i,j}^{(\ell)}\otimes I^{[n]\setminus\{\ell\}}_{4},\\
Q_{i,j}
&\approx_{\gamma\sqrt{\epsilon}}
\widetilde{Q}_{i,j}^{(\ell)}\otimes I^{[n]\setminus\{\ell\}}_{4}.
\end{align*}
\end{lemma}

\begin{lemma}\label{lem:ms-local-canonical-form}
Assume the hypotheses of \cref{mgs-consist-row-col} and let \(\ell\in[n]\) be the common register given by
\cref{lem:ms-common-register}. Then there exists a \(4\)-dimensional unitary
\[
\widetilde{U}:\H_4\to\H_4
\]
such that for all \(i,j\in[3]\),
\begin{align*}
\widetilde{U}\widetilde{P}_{i,j}^{(\ell)}\widetilde{U}^*
&\approx_{\gamma\sqrt{\epsilon}} P_{i,j}^{\star},\\
\widetilde{U}\,\overline{\widetilde{Q}_{i,j}^{(\ell)}}\,\widetilde{U}^{T}
&\approx_{\gamma\sqrt{\epsilon}} Q_{i,j}^{\star},
\end{align*}
where \(P_{i,j}^{\star}\) is the \((i,j)\)-th entry of Table~I and
\[
Q_{i,j}^{\star}:=\left(P_{i,j}^{\star}\right)^T.
\]
\end{lemma}

We are now ready to prove \cref{mgs-self-testing-main-thm}.

\begin{proof}[Proof of \cref{mgs-self-testing-main-thm}]
Item \(1\) is exactly \cref{mgs-consist-row-col}. Hence we may omit the superscripts
\(\mathrm{row}\) and \(\mathrm{column}\) and simply write \(P_{i,j}\).

By \cref{lem:ms-single-observable-localization,lem:ms-common-register}, there exists a register
\[
\ell\in[n]
\]
such that for every \(i,j\in[3]\), there exist traceless binary observables
\[
\widetilde{P}_{i,j}^{(\ell)},\ \widetilde{Q}_{i,j}^{(\ell)}\in\H_4
\]
satisfying
\begin{align*}
P_{i,j}
&\approx_{\gamma\sqrt{\epsilon}}
\widetilde{P}_{i,j}^{(\ell)}\otimes I^{[n]\setminus\{\ell\}}_{4},\\
Q_{i,j}
&\approx_{\gamma\sqrt{\epsilon}}
\widetilde{Q}_{i,j}^{(\ell)}\otimes I^{[n]\setminus\{\ell\}}_{4}.
\end{align*}
This proves item \(2\).

By \cref{lem:ms-local-canonical-form}, there exists a \(4\)-dimensional unitary
\[
\widetilde{U}:\H_4\to\H_4
\]
such that for all \(i,j\in[3]\),
\begin{align*}
\widetilde{U}\widetilde{P}_{i,j}^{(\ell)}\widetilde{U}^*
&\approx_{\gamma\sqrt{\epsilon}} P_{i,j}^{\star},\\
\widetilde{U}\,\overline{\widetilde{Q}_{i,j}^{(\ell)}}\,\widetilde{U}^{T}
&\approx_{\gamma\sqrt{\epsilon}} Q_{i,j}^{\star}.
\end{align*}
Define the global unitary
\[
U:=\widetilde{U}^{(\ell)}\otimes I^{[n]\setminus\{\ell\}}_{4}.
\]
Combining this with item \(2\), we obtain
\begin{align*}
UP_{i,j}U^*
&\approx_{\gamma\sqrt{\epsilon}}
\left(P_{i,j}^{\star}\right)^{(\ell)}\otimes I^{[n]\setminus\{\ell\}}_{4},\\
U\overline{Q_{i,j}}\,U^{T}
&\approx_{\gamma\sqrt{\epsilon}}
\left(Q_{i,j}^{\star}\right)^{(\ell)}\otimes I^{[n]\setminus\{\ell\}}_{4},
\qquad \forall i,j\in[3].
\end{align*}
This proves item \(3\), and hence completes the proof.
\end{proof}

\subsection{2-out-of-\texorpdfstring{$n$}{n} Magic Square Game}

In this section, we study the 2-out-of-$n$  Magics Square game, where Alice and Bob share
\begin{align*}
    \bigl(\Delta_{\rho}^{(16)}(\Phi^{\otimes 2})\bigr)^{\otimes n'}
\end{align*}
for some $n' \ge n$ and a fixed constant $\rho \in (0,1)$. Our goal is to extract $n$ copies of the canonical
Magic Square observables, each acting on a distinct $4$-dimensional register.

\begin{framed}
    \begin{protocol}[2-out-of-$n$ Magic Square Game]\label{2-of-n-mgs-single}
With probability $1/2$, the verifier performs the following.
\begin{enumerate}
    \item Randomly choose two distinct indices $i,j \in [n]$, and two Magic Square questions
    $x,y \in \{r_1,r_2,r_3,c_1,c_2,c_3\}$.
    The verifier sends Alice the pair $\{(i,x),(j,y)\}$.
    Then the verifier samples a cell $z \in x$ uniformly at random and sends Bob the question $\{(i,z)\}$.
    Alice returns two assignments in $\{-1,1\}^3$, one for $(i,x)$ and one for $(j,y)$.
    Bob returns one bit in $\{-1,1\}$ for the cell $(i,z)$.
    The verifier accepts if and only if:
    \begin{enumerate}
        \item Alice's two assignments satisfy the Magic Square parity constraints for $x$ and $y$; and
        \item Alice's assignment on the cell $z$ in question $(i,x)$ is consistent with Bob's answer.
    \end{enumerate}

    \item This test is identical to item 1, except that the roles of Alice and Bob are exchanged.
\end{enumerate}
\end{protocol}
\end{framed}

We now formalize the players' strategies. For each $i\in [n]$ and each cell $z\in [3]\times [3]$, let
$Q_{i,z}$ denote the single-cell observable used by the player who receives the question $\{(i,z)\}$.

Next, consider a player who receives the pair of indexed Magic Square questions $\{(i,x),(j,y)\}$ with
$i\neq j$. For each cell $z_1 \in x$, define the marginal observable on the $i$-th index by
\begin{align*}
    P_{i,z_1 \mid (ix,jy)}
    :=
    \sum_{a,b}
    a(z_1)\, E^{(ix,jy)}_{a,b},
\end{align*}
where $a \in \{-1,1\}^3$ is the assignment returned on question $(i,x)$, $b \in \{-1,1\}^3$ is the assignment
returned on question $(j,y)$, and $E^{(ix,jy)}_{a,b}$ is the POVM element corresponding to the pair of
assignments $(a,b)$. Similarly, for each cell $z_2 \in y$, define
\begin{align*}
    P_{j,z_2 \mid (ix,jy)}
    :=
    \sum_{a,b}
    b(z_2)\, E^{(ix,jy)}_{a,b}.
\end{align*}
When the roles of Alice and Bob are exchanged, we define the analogous marginal observables in the same
way.

Define the trace error parameter
% \begin{align}
%     \epsilon_{\mathrm{tr}}
%     :=
%     \max_{\substack{i,j\in [n],\, i\neq j\\ x,y\in \{r_1,r_2,r_3,c_1,c_2,c_3\}\\ z_1\in x,\, z_2\in y}}
%     \Bigl\{
%         \bigl|\widebar{\Tr}(Q_{i,z_1})\bigr|,
%         \bigl|\widebar{\Tr}(P_{i,z_1\mid (ix,jy)})\bigr|,
%         \bigl|\widebar{\Tr}(P_{j,z_2\mid (ix,jy)})\bigr|
%     \Bigr\}.
%     \label{eq:2n-ms-trace-bound}
% \end{align}

\begin{align}
\epsilon_{\mathrm{tr}}
&:=
\max_{\substack{
i,j \in [n],\, i \neq j \\
x,y \in \{r_1,r_2,r_3,c_1,c_2,c_3\}
}}
\;\;
\max_{\substack{
z_1 \in x \\
z_2 \in y
}}
\Bigl\{
    \bigl|\overline{\Tr}(Q_{i,z_1})\bigr|,
    \bigl|\overline{\Tr}(P_{i,z_1 \mid (ix,jy)})\bigr|, \notag\\
&\qquad\qquad
    \bigl|\overline{\Tr}(P_{j,z_2 \mid (ix,jy)})\bigr|
\Bigr\}.
\label{eq:2n-ms-trace-bound}
\end{align}
Similarly as in \cref{app:chsh-trace-test}, passing the trace test with high probability implies bounded trace.

% \begin{prop}\label{bounded-trace-2-of-n-mgs}
% If the players pass the noisy $2$-out-of-$n$ Magic Square test with probability at least $p$, then
% \begin{align*}
%     \epsilon_{\mathrm{tr}} \le 3\delta,
% \end{align*}
% where $\delta = \left( 2\ln(2/p)/t \right)^{1/2}$.
% \end{prop}

We now state the rigidity theorem for the $2$-out-of-$n$ Magic Square game under noise.

\begin{theorem}\label{2-of-n-thm-rigidity}
Suppose the players share $n' \ge n$ copies of $\Delta_{\rho}^{(16)}(\Phi^{\otimes 2})$, where $\rho \in (0,1)$ and that in \cref{2-of-n-mgs-single}, the players win with probability at least
    \begin{align*}
        \frac{1+\rho}{2} - \epsilon_{\mathrm{win}} .
    \end{align*}
Define
\begin{align*}
    \gamma := \frac{1}{1-\rho},
    \qquad
    \epsilon := \epsilon_{\mathrm{win}}/\rho + \Trerr^2/\rho^2 =o(1/n^2).
\end{align*}
Then, for every $i\in [n]$, there exists a register index $s_i \in [n']$ such that the following statements hold.
\begin{enumerate}
    \item The extracted registers are pairwise distinct: if $i\neq j$, then
    \begin{align}
        s_i \neq s_j .
        \label{eq:2n-ms-distinct}
    \end{align}
    \item For every cell $z\in [3]\times [3]$, there exist traceless binary observables
    $\widetilde{Q}_{i,z}\in \H_4$ acting on the $s_i$-th register such that
    \begin{align}
        Q_{i,z}
        \approx_{\,n\gamma\sqrt{\epsilon}}
        \widetilde{Q}_{i,z}^{(s_i)} \otimes I_4^{[n']\setminus\{s_i\}} .
        \label{eq:2n-ms-single-reg}
    \end{align}

    \item There exists a $4$-dimensional unitary $U_i$ such that for all $z\in [3]\times [3]$,
    \begin{align}
        \overline{U_i}\, Q_{i,z}\, U_i^T
        \approx_{\,n\gamma\sqrt{\epsilon}}
        \bigl(Q_z^\star\bigr)^{(s_i)} \otimes I_4^{[n']\setminus\{s_i\}} .
        \label{eq:2n-ms-canonical}
    \end{align}
\end{enumerate}
\end{theorem}

The proof is parallel in spirit to that of \cref{thm-multiple-self-test}. For each fixed index, conditioning on a particular
pair of indices yields an induced noisy Magic Square measurement, to which \cref{mgs-self-testing-main-thm} applies. To separate
different registers, we prove that observables associated with different indices approximately commute.

\begin{lemma}
For all distinct indices $i,j \in [n]$ and all cells $z_1,z_2 \in [3]\times [3]$,
\begin{align}
    [Q_{i,z_1},Q_{j,z_2}]
    \approx_{\,n\sqrt{\epsilon}}
    0 .
    \label{eq:2n-ms-cross-comm}
\end{align}
\end{lemma}

\begin{proof}
Fix distinct indices $i,j\in [n]$, questions $x,y\in \{r_1,r_2,r_3,c_1,c_2,c_3\}$, and cells
$z_1\in x$, $z_2\in y$. Consider the measurement performed by a player receiving the pair
$\{(i,x),(j,y)\}$. This measurement is a POVM
\begin{align*}
    \bigl\{ E^{(ix,jy)}_{a,b} : a,b \in \{-1,1\}^3 \bigr\}.
\end{align*}
Define the two marginal observables
\begin{align*}
    P_{i,z_1\mid (ix,jy)} &= \sum_{a,b} a(z_1)\, E^{(ix,jy)}_{a,b}, \\
    P_{j,z_2\mid (ix,jy)} &= \sum_{a,b} b(z_2)\, E^{(ix,jy)}_{a,b}.
\end{align*}
Conditioning on the ordered pair $(i,j)$ and the fact that the tested index is $i$, the induced measurement on
index $i$ is a noisy Magic Square measurement whose winning probability is still
\begin{align*}
    \frac{1+\rho}{2} - O(n^2\epsilon) .
\end{align*}
Therefore, by \cref{mgs-self-testing-main-thm} applied to this induced measurement, we have
\begin{align}
    P_{i,z_1\mid (ix,jy)}
    \approx_{\,n\sqrt{\epsilon}}
    Q_{i,z_1}^T ,
    \label{eq:2n-ms-align-i}
\end{align}
and similarly,
\begin{align}
    P_{j,z_2\mid (ix,jy)}
    \approx_{\,n\sqrt{\epsilon}}
    Q_{j,z_2}^T .
    \label{eq:2n-ms-align-j}
\end{align}
It therefore suffices to prove that
\begin{align}
    [P_{i,z_1\mid (ix,jy)}, P_{j,z_2\mid (ix,jy)}]
    \approx_{\,n\sqrt{\epsilon}}
    0 .
    \label{eq:2n-ms-marg-comm}
\end{align}

We prove \eqref{eq:2n-ms-marg-comm} exactly as in the proof of \cref{mag-commute-same-row}. Since the induced single-index
measurement is $n\sqrt{\epsilon}$-close to an ideal noisy Magic Square measurement, \cref{mgs-self-testing-main-thm} implies that both
$P_{i,z_1\mid (ix,jy)}$ and $P_{j,z_2\mid (ix,jy)}$ are $n\sqrt{\epsilon}$-close to binary observables. Hence, by
\cref{sqrt-approx-project}, there exist projectors $\Pi_0,\Pi_1,\Pi_2,\Pi_3$ such that
\begin{align*}
    P_{i,z_1\mid (ix,jy)} &\approx_{\,n\sqrt{\epsilon}} \Pi_0-\Pi_1, \\
    P_{j,z_2\mid (ix,jy)} &\approx_{\,n\sqrt{\epsilon}} \Pi_2-\Pi_3,
\end{align*}
with
\begin{align*}
    \Pi_0+\Pi_1 = \Pi_2+\Pi_3 = I .
\end{align*}

Now partition the $36$ POVM outcomes according to the signs assigned to the two cells $z_1$ and $z_2$:
\begin{align*}
    S_{++} &:= \{(a,b): a(z_1)=1,\ b(z_2)=1\}, \\
    S_{+-} &:= \{(a,b): a(z_1)=1,\ b(z_2)=-1\}, \\
    S_{-+} &:= \{(a,b): a(z_1)=-1,\ b(z_2)=1\}, \\
    S_{--} &:= \{(a,b): a(z_1)=-1,\ b(z_2)=-1\}.
\end{align*}
Define
\begin{align*}
    A &:= \sum_{(a,b)\in S_{++}} E^{(ix,jy)}_{a,b}, &
    B &:= \sum_{(a,b)\in S_{+-}} E^{(ix,jy)}_{a,b}, \\
    C &:= \sum_{(a,b)\in S_{-+}} E^{(ix,jy)}_{a,b}, &
    D &:= \sum_{(a,b)\in S_{--}} E^{(ix,jy)}_{a,b}.
\end{align*}
Then
\begin{align*}
    A+B &\approx_{\,n\sqrt{\epsilon}} \Pi_0, &
    C+D &\approx_{\,n\sqrt{\epsilon}} \Pi_1, \\
    A+C &\approx_{\,n\sqrt{\epsilon}} \Pi_2, &
    B+D &\approx_{\,n\sqrt{\epsilon}} \Pi_3, \\
    A+B+C+D &= I .
\end{align*}
Applying \cref{lem:approx0}, we obtain
\begin{align*}
    AB \approx_{\,n\sqrt{\epsilon}} 0,
    \qquad
    AC \approx_{\,n\sqrt{\epsilon}} 0,
    \qquad
    AD \approx_{\,n\sqrt{\epsilon}} 0,
\end{align*}
and similarly for all other pairwise products among $A,B,C,D$. Consequently,
\begin{align*}
    [P_{i,z_1\mid (ix,jy)}, P_{j,z_2\mid (ix,jy)}]
    \approx_{\,n\sqrt{\epsilon}}
    0.
\end{align*}
Combining this with \eqref{eq:2n-ms-align-i} and \eqref{eq:2n-ms-align-j}, and using \cref{approx_multiply}, gives
\eqref{eq:2n-ms-cross-comm}.
\end{proof}

\begin{proof}[Proof of \cref{2-of-n-thm-rigidity}]
% By the trace test proposition, we have
% \begin{align*}
%     \epsilon_{\mathrm{tr}} \le 3\delta .
% \end{align*}
Fix an index $i\in [n]$. Conditioning on any ordered pair $(i,j)$ with $j\neq i$, and on the event that the
tested index is $i$, we obtain an induced noisy Magic Square measurement on index $i$. Since the overall
winning probability of the $2$-out-of-$n$ game is at least
\begin{align*}
    \frac{1+\rho}{2} - \epsilon_{\mathrm{win}},
\end{align*}
the induced measurement wins the Magic Square game under noise with probability at least
\begin{align*}
    \frac{1+\rho}{2} - O(n^2\epsilon) .
\end{align*}
Applying \cref{mgs-self-testing-main-thm} to this induced measurement, we obtain a register index $s_i \in [n']$, traceless binary
observables $\widetilde{Q}_{i,z}\in \H_4$, and a $4$-dimensional unitary $U_i$ such that
\begin{align*}
    Q_{i,z}
    &\approx_{\,n\gamma\sqrt{\epsilon}}
    \widetilde{Q}_{i,z}^{(s_i)} \otimes I_4^{[n']\setminus\{s_i\}},
    \\
    \overline{U_i}\, Q_{i,z}\, U_i^T
    &\approx_{\,n\gamma\sqrt{\epsilon}}
    \bigl(Q_z^\star\bigr)^{(s_i)} \otimes I_4^{[n']\setminus\{s_i\}}
\end{align*}
for every cell $z\in [3]\times [3]$. This proves \eqref{eq:2n-ms-single-reg} and
\eqref{eq:2n-ms-canonical}.

It remains to prove that the registers \(s_i\) are distinct. Suppose, toward a contradiction, that for some \(i\neq j\),
\[
s_i=s_j=:s.
\]
By \eqref{eq:2n-ms-cross-comm}, for all cells \(z_1,z_2\),
\begin{align*}
    [Q_{i,z_1},Q_{j,z_2}]
    \approx_{\,n\sqrt{\epsilon}}
    0 .
\end{align*}
Since \(s_i=s_j=s\), for every \(z_1,z_2\),
\begin{align*}
    Q_{i,z_1}
    &\approx_{\,n\gamma\sqrt{\epsilon}}
    \widetilde Q_{i,z_1}^{(s)}\otimes I,\\
    Q_{j,z_2}
    &\approx_{\,n\gamma\sqrt{\epsilon}}
    \widetilde Q_{j,z_2}^{(s)}\otimes I.
\end{align*}
Hence
\begin{align*}
    [\widetilde Q_{i,z_1}^{(s)},\widetilde Q_{j,z_2}^{(s)}]
    \approx_{\,n\gamma\sqrt{\epsilon}}
    0
    \qquad
    \forall\, z_1,z_2\in [3]\times [3].
\end{align*}

Fix \(z_1\in [3]\times [3]\), and define
\[
A:=U_j\,\widetilde Q_{i,z_1}^{(s)}\,U_j^* .
\]
Then by \eqref{eq:2n-ms-canonical}, for every \(z_2\in [3]\times [3]\),
\[
U_j\,\widetilde Q_{j,z_2}^{(s)}\,U_j^*
\approx_{\,n\gamma\sqrt{\epsilon}}
Q_{z_2}^\star .
\]
Therefore,
\begin{align*}
    [A,Q_{z_2}^\star]
    &\approx_{\,n\gamma\sqrt{\epsilon}}
    \Bigl[
        U_j\,\widetilde Q_{i,z_1}^{(s)}\,U_j^*,
        U_j\,\widetilde Q_{j,z_2}^{(s)}\,U_j^*
    \Bigr] \\
    &=
    U_j
    [\widetilde Q_{i,z_1}^{(s)},\widetilde Q_{j,z_2}^{(s)}]
    U_j^*
    \approx_{\,n\gamma\sqrt{\epsilon}}
    0 .
\end{align*}
In particular, the observable \(A\) approximately commutes with all canonical Magic Square observables on the same register.

We now recall why this is impossible. Let \(A_0\in \mathcal{H}_4\) be a traceless Hermitian matrix commuting with all canonical Magic Square observables.
Since the canonical table contains
\begin{align*}
    X\otimes I,
    \qquad
    Z\otimes I,
    \qquad
    I\otimes X,
    \qquad
    I\otimes Z,
\end{align*}
the conditions
\begin{align*}
    [A_0,X\otimes I]=[A_0,Z\otimes I]=0
\end{align*}
imply that
\begin{align*}
    A_0 = I\otimes B
\end{align*}
for some \(B\in \mathcal{H}_2\). Then
\begin{align*}
    [A_0,I\otimes X]=[A_0,I\otimes Z]=0
\end{align*}
imply that \(B\) commutes with both \(X\) and \(Z\), hence \(B\) is a scalar multiple of \(I\). Since \(A_0\) is traceless,
that scalar must be zero, so
\begin{align*}
    A_0=0 .
\end{align*}
Therefore, there is no nonzero traceless observable on a single \(4\)-dimensional register that commutes with
all canonical Magic Square observables.

Applying the same reasoning to observable \(A\), the approximate commutation relations
\begin{align*}
[A,X\otimes I]\approx_{\,n\gamma\sqrt{\epsilon}}0,
\qquad
[A,Z\otimes I]\approx_{\,n\gamma\sqrt{\epsilon}}0,
\\
[A,I\otimes X]\approx_{\,n\gamma\sqrt{\epsilon}}0,
\qquad
[A,I\otimes Z]\approx_{\,n\gamma\sqrt{\epsilon}}0
\end{align*}
imply that
\[
A\approx_{\,n\gamma\sqrt{\epsilon}}0.
\]
On the other hand, \(A\) is unitarily equivalent to \(\widetilde Q_{i,z_1}^{(s)}\), hence \(A\) is also a traceless binary observable and therefore nonzero. This contradiction shows that \(s_i\neq s_j\) whenever \(i\neq j\). Thus \eqref{eq:2n-ms-distinct} holds.
\end{proof}

\section{Discussion and Future Directions}
Our characterization of the maximal winning probabilities in nonlocal games under noise makes a first step toward understanding the noise robustness of nonlocal games. Furthermore, our rigidity results provide a precise description of the players' binary observables, making them particularly useful for a verifier in an $\MIP^*$ proof system, just like previous rigidity theorems in the noiseless setting. 
Hence they are relevant for analyzing quantum proof systems in the presence of noise. Finally, our work highlights fundamental differences between the rigidity theorems in the noisy versus noiseless settings, although in both cases the players can have quantum states of arbitrary dimensions.
Hence, our results reveal new structural and technical challenges unique to the high-noise regime.

% The established maximal winning probabilities of the noisy nonlocal games make the first step towards characterizing the noise robustness of nonlocal games.
% They also establish nonlocal games as device-independent tools to test noise levels of a shared system.
% Since our self-testing results give a precise characterization of the players' binary observables, we expect them to be useful when studying quantum proof systems under noise.
% We conclude crucial distinctions between self-testing in the noisy and noiseless settings.
\begin{description}
    \item[Register Concentration.] In the noiseless setting, local isometries are necessary to extract the ideal observables onto a specific register. In contrast, in the presence of noise, we can \emph{directly} prove that the players' observables act non-trivially on only one register without extra operation.
    \item[Simple Unitary.] As a result of the first item, to extract a Pauli pair $\br{Z, X}$, we only need to apply a unitary on one qubit. However, in the noiseless case, we use a swap isometry, which introduces an extra qubit and operates on the whole system. This difference partly stems from assuming a trusted but noisy source, where the form the source is assumed to produce i.i.d. copies, each one with bounded maximal correlation. That is a feature different from the noiseless setting, but this difference is still not obvious since the provers can share arbitrarily many copies of noisy states, i.e., the dimension can be arbitrarily large.
    \item[Measurement Projectivity.] Noiseless rigidity results can focus on projective measurements because Naimark's theorem implies that for any quantum strategy with POVMs, there is another quantum strategy
    with projective measurements that achieves the same winning probability by introducing ancilla qubits. However, in our trusted-but-noisy setting, the source can only prepare states with one fixed form, and we cannot apply Naimark's theorem directly. Thus, we use new insights and techniques to establish the projectivity of the measurements.
    \item[Noise Tolerance.] Previous work \cite{kaniewski2016analytic} shows that for the CHSH game, if the violation exceeds $\frac{16+14\sqrt{2}}{17}\approx 2.11$, a non-trivial fidelity between the shared state and the EPR state can be inferred. In comparison, in the CHSH game under noise, if we compare to all classical strategies, our rigidity results for observables hold for any correlation parameter $\rho \in\br{1/\sqrt{2},1}$.
    
    % Moreover, any classical strategy that passes the trace test can succeed in the CHSH game with probability at most $\frac{1}{2}$. Since quantum strategies can succeed with probability $\frac{1}{2} + \frac{\sqrt{2}\rho}{4}$, the rigidity results hold for all $\rho \in (0,1)$ when compared against such classical strategies.

\end{description}
    % Our result has connection with quantum complexity theory.
    % The rigidity property is a central tool in question reduction for the compression of $\mathrm{MIP}^*$ protocols, which is a crucial component in the proof of the seminal result $\mathrm{MIP}^*=\mathrm{RE}$ \cite{JNVWY'20,ji2020quantum}. Extending rigidity to noisy settings is an essential step towards understanding the computational power of noisy $\mathrm{MIP}^*$ protocols. We further hope that future work will realize such reductions via the noisy 2-out-of-$n$ game, as the 2-out-of-$n$ magic square game has been used in gapless compression of $\mathrm{MIP}^*$ protocols \cite{mousavi2022nonlocal}.

Our result also raises some natural but intriguing questions. We list some of them below.

\begin{enumerate}
    \item What are the maximal violations for CHSH and Magic Square game under noise without the traceless condition? In our paper we use a trace test to ensure that the observables have bounded trace. 
    It will complete the picture if the maximum winning probability can be derived and 
    similar rigidity results can be proved without the traceless condition.
    \item What are the maximum winning probabilities when players share other types of noisy quantum states?
    In our paper we primarily focus on the depolarizing noise and extend our analysis to more general types of noise. However, we still need the condition that the marginal state is maximally mixed. 
    Moreover, for the Magic Square game under noise, we analyze the winning probability when a depolarizing channel is applied to a 4-qubit system. Determining the optimal winning probability under qubit depolarizing noise and more general noise remains an open problem.
    % \item What are the maximal violations if players share other noisy quantum states? Can we we also establish self-testing results? This paper has studied sharing multiple copies of $\psi$, where $\psi$ has maximally mixed marginal state on both sides and the 2-nd and 3-rd largest singular values of its correlation matrix are identical. This does not include all well-studied noise, such as single-qubit depolarizing noise and general Pauli noise. The maximal violation for CHSH game sharing one copy of arbitrary $\psi$ was derived in \cite{horodecki1995violating}, but the case for sharing $\psi^{\x n}$ is unknown, even under traceless condition. Moreover, for the Magic Square game under noise, we analyze the winning probability when a depolarizing channel applying to a 4-qubit system. The optimal winning probability under qubit depolarizing noise is still open.
    % \item Can we develop a communication-efficient self-testing protocol that is robust to constant noise on both the state and the observables? The 2-out-of-$n$ CHSH protocol is communication efficient, as the questions are of length $O(\log n)$ and the answers are at most 2 bits, but it is not robust to constant noise on the observables. Can we borrow ideas from the linearity test of \cite{natarajan2017quantum} or the Pauli basis test of \cite{JNVWY'20}
    % to design a simultaneous self-test of $n$ anti-commuting Pauli operators that is also robust to constant noise on the observables? 
    % Such a test might be useful when studying quantum proof systems in the noisy setting.    
\end{enumerate}

\section*{Acknowledgment}
M.Q. is supported by the National Research Foundation, Singapore through the National Quantum Office, hosted in A*STAR, under its Centre for Quantum Technologies Funding Initiative (S24Q2d0009). H.X. was supported by Beijing Nova Program (Grant No. 20220484128 and 20240484652).
P.Y. was supported by National Natural Science Foundation of China (Grant No. 62332009, 12347104), Innovation Program for Quantum Science and Technology (Grant No. 2021ZD0302901), NSFC/RGC Joint Research Scheme (Grant no. 12461160276), Fundamental and Interdisciplinary Disciplines Breakthrough Plan of the Ministry of Education of China (No. JYB2025XDXM118), Natural Science Foundation of Jiangsu Province (No. BK20243060).

\appendix

\section{Deferred Proofs of Lemmas}\label{Deferred0-Proofs-of-Lemmas}
\begin{fact}\label{approx_multiply}
    For $\epsilon>0$ and Hermitian matrices $A,B,C,D\in\H_d$, with $\max\set{\|A\|,\|B\|,\|C\|,\|D\|}=O\br{1}$, if $A\approx_\epsilon C$ and $B\approx_\epsilon D$, then $AB\approx_{\epsilon} CD$.
\end{fact}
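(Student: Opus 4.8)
The plan is to reduce the statement to the triangle inequality together with the two standard submultiplicativity bounds relating the Frobenius norm and the operator norm. First I would record the inequalities $\|XY\|_F\le\|X\|\,\|Y\|_F$ and $\|XY\|_F\le\|X\|_F\,\|Y\|$, valid for all $X,Y\in\M_d$. Both are immediate: $\|XY\|_F^2=\Tr\br{Y^*X^*XY}$, and from $X^*X\preceq\|X\|^2 I$ one gets $\Tr\br{Y^*X^*XY}\le\|X\|^2\Tr\br{Y^*Y}=\|X\|^2\|Y\|_F^2$, giving the first bound; the second follows from the first applied to $\br{XY}^*=Y^*X^*$ since the Frobenius and operator norms are invariant under taking adjoints.

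Next I would split the difference as
\[
AB-CD=A\br{B-D}+\br{A-C}D,
\]
so that by the triangle inequality and the two bounds above,
\[
\|AB-CD\|_F\le\|A\|\,\|B-D\|_F+\|A-C\|_F\,\|D\|.
\]
Dividing through by $\sqrt{d}$ and invoking the hypotheses $\tfrac{1}{\sqrt{d}}\|A-C\|_F=O\br{\epsilon}$ and $\tfrac{1}{\sqrt{d}}\|B-D\|_F=O\br{\epsilon}$, together with $\|A\|=O\br{1}$ and $\|D\|=O\br{1}$, yields $\tfrac{1}{\sqrt{d}}\|AB-CD\|_F=O\br{\epsilon}$, which is exactly the assertion $AB\approx_{\epsilon}CD$ in the sense of \cref{def-approx-matrix}.

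There is essentially no obstacle here; the argument is a two-line computation. The only point worth flagging is bookkeeping of constants: the implicit constant in the final $O\br{\epsilon}$ absorbs the absolute constants bounding $\|A\|$ and $\|D\|$, which is harmless since these are fixed. I would also note in passing that Hermiticity of $A,B,C,D$ plays no role in the proof — it is imposed only because that is the setting in which the fact is invoked elsewhere in the paper — and that by induction the same estimate extends to products of any fixed number of factors, which is how it gets used (e.g. when squaring a truncated Pauli expansion).
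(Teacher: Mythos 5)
Your proof is correct and follows essentially the same route as the paper: the same splitting $AB-CD=A\br{B-D}+\br{A-C}D$, the triangle inequality, and the mixed operator/Frobenius submultiplicativity bounds, then dividing by $\sqrt{d}$. The only difference is that you also justify the submultiplicativity inequalities and add remarks (irrelevance of Hermiticity, extension to longer products) that the paper leaves implicit.
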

\begin{proof}
    Note that 
   \[
    \begin{aligned}
        \frac{1}{\sqrt{d}}\|AB-CD\|_F
        &\leq \frac{1}{\sqrt{d}}\br{\|A\br{B-D}\|_F+\|\br{A-C}D\|_F} \\
        &\leq \frac{1}{\sqrt{d}}\br{\|A\|\|B-D\|_F+\|D\|\|A-C\|_F} \\
        &=O\br{\epsilon}, 
    \end{aligned} 
     \] 
     thus $AB\approx_{\epsilon} CD$.
\end{proof}

\begin{lemma}\label{closetobservable}
    For a Hermitian matrix $Q\in\H_d$ with $\opnorm{Q}\leq1$, if $\widebar{\Tr}\br{Q^2}\approx_\epsilon 1$, then $Q^2\approx_{\sqrt{\epsilon}} I$.
\end{lemma}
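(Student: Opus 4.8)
The plan is to reduce everything to a one-dimensional spectral estimate. Diagonalize $Q=\sum_i \lambda_i\ketbra{v_i}$; the hypothesis $\opnorm{Q}\le 1$ means $|\lambda_i|\le 1$ for all $i$, equivalently $Q^2\preceq I$, so $I-Q^2$ is positive semidefinite with every eigenvalue $1-\lambda_i^2$ lying in $[0,1]$. The only nontrivial ingredient is the elementary pointwise bound $t^2\le t$ for $t\in[0,1]$, which I will apply with $t=1-\lambda_i^2$. Note the positivity $Q^2\preceq I$ is exactly what makes $1-\lambda_i^2\ge 0$, so that this step — and the square root taken at the very end — is legitimate; this is the place where $\opnorm{Q}\le1$ is genuinely used.

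Concretely, I would first translate the scalar hypothesis $\widebar{\Tr}\br{Q^2}\approx_\epsilon 1$: combined with $Q^2\preceq I$ it gives a one-sided bound $0\le 1-\widebar{\Tr}\br{Q^2}=\widebar{\Tr}\br{I-Q^2}\le C\epsilon$ for an absolute constant $C$ and all sufficiently small $\epsilon$. Then, using $(1-\lambda_i^2)^2\le 1-\lambda_i^2$,
\[
\widebar{\Tr}\br{\br{I-Q^2}^2}=\frac1d\sum_i\br{1-\lambda_i^2}^2\le \frac1d\sum_i\br{1-\lambda_i^2}=\widebar{\Tr}\br{I-Q^2}\le C\epsilon .
\]
Since $I-Q^2$ is Hermitian, the left-hand side equals $\widebar{\Tr}\br{\br{Q^2-I}\br{Q^2-I}^*}$, so by the matrix form of $\approx$ in \cref{def-approx-matrix} we conclude
\[
\frac{1}{\sqrt d}\opnorm{Q^2-I}_F=\sqrt{\widebar{\Tr}\br{\br{I-Q^2}^2}}\le\sqrt{C\epsilon}=O\br{\sqrt\epsilon},
\]
that is, $Q^2\approx_{\sqrt\epsilon} I$. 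One could equally phrase the middle step operator-theoretically as $M^2=M^{1/2}(I-(I-M))M^{1/2}\preceq M$ for $0\preceq M\preceq I$ with $M=I-Q^2$, but the eigenvalue computation is the cleanest.

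There is essentially no obstacle here: the statement is a purely spectral fact about a single bounded Hermitian operator, using neither purity of any shared state nor any Pauli structure, and the proof is the short inequality chain above. The only points requiring a moment's care are checking the direction of the bound extracted from $\widebar{\Tr}\br{Q^2}\approx_\epsilon 1$ (we need the lower bound $\widebar{\Tr}\br{Q^2}\ge 1-C\epsilon$, which holds automatically once $Q^2\preceq I$ is in hand) and confirming that the loss in the robustness parameter is genuinely $\sqrt\epsilon$ rather than $\epsilon$, which is exactly what the inequality $(1-\lambda_i^2)^2\le 1-\lambda_i^2$ — as opposed to an equality — forces.
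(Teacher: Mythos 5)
Your proof is correct and is essentially the paper's argument: both expand $\widebar{\Tr}\br{\br{Q^2-I}^2}$ and use $\opnorm{Q}\leq 1$ to dominate the fourth moment (the paper bounds $\widebar{\Tr}\br{Q^4}\leq 1$ by trace manipulation, you bound $\lambda_i^4\leq\lambda_i^2$ spectrally, which is the same fact). No gap; the spectral phrasing is just a cosmetic variant.
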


\begin{proof}
    Since
    \begin{align*}
    \widebar{\Tr}\br{Q^2}&=1-O\br{\epsilon},\\
        \widebar{\Tr}\br{\br{Q^2-I}^2}&=\widebar{\Tr} \br{Q^4}+1-2\widebar{\Tr}\br{Q^2}\\
        &=\widebar{\Tr} \br{Q^4}-1+2O\br{\epsilon}=O\br{\epsilon}
    \end{align*}
    using the fact that $\widebar{\Tr} \br{Q^4} \leq 1$ and $\widebar{\Tr}\br{\br{Q^2-I}^2}\geq 0$, we conclude the result.
\end{proof}

\begin{lemma}\label{general-obs-scaling}
Let \(n\in \posint\), \(m\ge 2\), and \(d=m^n\). Let
\[
1=c_0 \ge c_1 \ge \cdots \ge c_{m^2-1}\ge 0,
\qquad r:=c_1,
\]
and let \(\{\mathcal{B}_i\}_{i=0}^{m^2-1}\) be a standard orthonormal basis of \(\H_m\).
For a Hermitian matrix \(Q\in \H_d\), write its decomposition under \(\mathcal{B}\) as
\[
Q=\sum_{x\in [m^2]_{\geq 0}^n}\hat{Q}(x)\,\mathcal{B}_x,
\qquad
\mathcal{B}_x:=\bigotimes_{t=1}^n \mathcal{B}_{x_t}.
\]
Define
\[
Q'
:=
\sum_{x\in [m^2]_{\geq 0}^n}
\left(\prod_{j=0}^{m^2-1} c_j^{\,w_j(x)}\right)
\hat{Q}(x)\,\mathcal{B}_x,
\]
where, for each \(j\in [m^2]_{\geq 0}\),
\[
w_j(x):=\bigl|\{\,i\in [n]:x_i=j\,\}\bigr|.
\]
If
\[
\hat{Q}(0^n)^2\approx_{\epsilon_1} 0,
\qquad
\widebar{\Tr}(Q^2)\approx_{\epsilon_2} 1,
\qquad
\widebar{\Tr}\bigl((Q')^2\bigr)\approx_{\epsilon_3} r^2,
\]
then
\[
Q' \approx_{\sqrt{\epsilon_1+\epsilon_2+\epsilon_3}} rQ.
\]
\end{lemma}

\begin{proof}
Let
\[
Q_0:=\hat{Q}(0^n)\,\mathcal{B}_{0^n},
\qquad
\widetilde{Q}:=Q-Q_0,
\qquad
\widetilde{Q}':=Q'-Q_0.
\]
Since \(c_0=1\), the coefficient of \(\mathcal{B}_{0^n}\) in \(Q'\) is also \(\hat{Q}(0^n)\). Hence
\[
\widebar{\Tr}(Q_0^2)=\hat{Q}(0^n)^2=O(\epsilon_1),
\]
and therefore
\[
Q\approx_{\sqrt{\epsilon_1}} \widetilde{Q},
\qquad
Q'\approx_{\sqrt{\epsilon_1}} \widetilde{Q}'.
\]
Thus it suffices to prove
\[
\widetilde{Q}'\approx_{\sqrt{\epsilon_2+\epsilon_3}} r\widetilde{Q}.
\]
From now on, we assume that \(Q\) is traceless, i.e. \(\hat{Q}(0^n)=0\).

For each \(x\neq 0^n\), define
\[
\lambda_x:=\prod_{j=1}^{m^2-1} c_j^{\,w_j(x)}.
\]
Then
\[
Q'=\sum_{x\neq 0^n}\lambda_x \hat{Q}(x)\,\mathcal{B}_x.
\]
Since \(x\neq 0^n\), at least one coordinate of \(x\) is nonzero, so
\[
0\le \lambda_x\le c_1=r.
\]

By Parseval's identity,
\[
t_1:=\sum_{x\neq 0^n}\hat{Q}(x)^2
=
\widebar{\Tr}(Q^2)
\approx_{\epsilon_2} 1,
\]
and
\[
t_2:=\sum_{x\neq 0^n}\lambda_x^2\hat{Q}(x)^2
=
\widebar{\Tr}\bigl((Q')^2\bigr)
\approx_{\epsilon_3} r^2.
\]

Now compute
\begin{align*}
&\widebar{\Tr}\bigl((Q'-rQ)^2\bigr)\\
=&
\sum_{x\neq 0^n}(\lambda_x-r)^2\hat{Q}(x)^2\\
=&
\sum_{x\neq 0^n}\lambda_x^2\hat{Q}(x)^2
+
r^2\sum_{x\neq 0^n}\hat{Q}(x)^2
-
2r\sum_{x\neq 0^n}\lambda_x\hat{Q}(x)^2\\
=&
t_2+r^2t_1-2r\sum_{x\neq 0^n}\lambda_x\hat{Q}(x)^2.
\end{align*}

Since \(0\le \lambda_x\le r\), we have \(\lambda_x^2\le r\lambda_x\). Multiplying by
\(\hat{Q}(x)^2\) and summing over \(x\neq 0^n\), we obtain
\[
t_2
=
\sum_{x\neq 0^n}\lambda_x^2\hat{Q}(x)^2
\le
r\sum_{x\neq 0^n}\lambda_x\hat{Q}(x)^2.
\]
Therefore,
\[
\widebar{\Tr}\bigl((Q'-rQ)^2\bigr)
\le
r^2 t_1 - t_2
=
O(\epsilon_2+\epsilon_3).
\]
Combining this with the degree-zero truncation error \(O(\epsilon_1)\), we conclude that
\[
\widebar{\Tr}\bigl((Q'-rQ)^2\bigr)
=
O(\epsilon_1+\epsilon_2+\epsilon_3).
\]
Equivalently,
\[
Q' \approx_{\sqrt{\epsilon_1+\epsilon_2+\epsilon_3}} rQ.
\]
\end{proof}

\begin{fact}\label{anti-commute-to-ZX}
    For traceless Hermitian matrices $A,B\in\H_2$, if $A^2= I,B^2=I,\set{A,B}\approx_\epsilon 0$, then there exists a unitary $U$ such that $UAU^*\approx_\epsilon Z,UBU^*\approx_\epsilon X$.
\end{fact}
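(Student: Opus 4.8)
The plan is to bring $A$ into canonical form \emph{exactly}, and then to use a residual rotation about the $Z$-axis — a unitary that fixes $Z$ — to align $B$ with $X$. Since $A\in\H_2$ is traceless with $A^2=I$, its eigenvalues are $+1$ and $-1$, so there is a unitary $V$ with $VAV^*=Z$. Conjugation by a unitary is an isometry for the Frobenius norm, hence also for $\approx_\epsilon$, so $B_1:=VBV^*$ is again traceless with $B_1^2=I$ and $\set{Z,B_1}\approx_\epsilon 0$. It therefore suffices to produce a unitary $W$ with $WZW^*=Z$ and $WB_1W^*\approx_\epsilon X$, and then take $U:=WV$.

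To do this I would expand $B_1$ in the Pauli basis; tracelessness kills the identity term, so $B_1=b_1X+b_2Y+b_3Z$ with $b_1,b_2,b_3\in\reals$. A one-line computation gives $\set{Z,B_1}=2b_3 I$, so $\set{Z,B_1}\approx_\epsilon 0$ forces $\abs{b_3}=O(\epsilon)$, while $B_1^2=I$ gives $b_1^2+b_2^2+b_3^2=1$, hence $b_1^2+b_2^2=1-b_3^2=1-O(\epsilon^2)$. For $\theta\in\reals$ let $W_\theta=\mathrm{diag}(e^{-\mathrm{i}\theta/2},e^{\mathrm{i}\theta/2})$; then $W_\theta Z W_\theta^*=Z$ and $W_\theta(b_1X+b_2Y)W_\theta^*=(b_1\cos\theta-b_2\sin\theta)X+(b_1\sin\theta+b_2\cos\theta)Y$. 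Choosing $\theta$ with $\tan\theta=-b_2/b_1$ (or $\theta=\pm\pi/2$ when $b_1=0$), with the sign fixed so that the $X$-coefficient is nonnegative, makes the $Y$-coefficient vanish and gives $W_\theta B_1 W_\theta^*=\sqrt{b_1^2+b_2^2}\,X+b_3 Z$. Since $\sqrt{b_1^2+b_2^2}=\sqrt{1-b_3^2}\ge 1-b_3^2$, one gets $0\le 1-\sqrt{b_1^2+b_2^2}\le b_3^2=O(\epsilon^2)$, and therefore $\frac{1}{\sqrt2}\Fnorm{W_\theta B_1 W_\theta^*-X}=\sqrt{(\sqrt{b_1^2+b_2^2}-1)^2+b_3^2}=O(\epsilon)$, i.e. $W_\theta B_1 W_\theta^*\approx_\epsilon X$. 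Taking $U:=W_\theta V$ then yields $UAU^*=Z$ (exactly) and $UBU^*\approx_\epsilon X$, as required.

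There is no real obstacle here: everything reduces to a single Pauli-basis computation together with the elementary estimate $\sqrt{1-t}\ge 1-t$ for $t\in[0,1]$. The only point that needs a bit of care is the last step — one must verify that the leftover $Y$-component of $B_1$ can be rotated away by a unitary that \emph{still commutes with} $Z$, so that the already-established identity $UAU^*=Z$ is not disturbed, and one must track the signs of $\cos\theta,\sin\theta$ so that the surviving $X$-coefficient comes out as $+\sqrt{b_1^2+b_2^2}$ rather than its negative.
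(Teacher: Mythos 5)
Your proposal is correct and follows essentially the same route as the paper: first conjugate $A$ exactly to $Z$, then use a rotation about the $Z$-axis (the paper's $U_2=\cos\frac{\theta}{2}I-\mathrm{i}\sin\frac{\theta}{2}Z$ is your $W_\theta$ up to phase) to kill the $Y$-component of the conjugated $B$, leaving $\sqrt{1-b_3^2}\,X+b_3Z\approx_\epsilon X$ with $\abs{b_3}=O(\epsilon)$ coming from the approximate anti-commutation. The only cosmetic difference is that you compute $\set{Z,B_1}=2b_3I$ directly rather than invoking the paper's auxiliary fact on Bloch-vector inner products.
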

\begin{proof}
    Set $A=\sum_{i=1}^3 a_i \sigma_i,B=\sum_{i=1}^3b_i\sigma_i$. By \cref{commute-anticommute}, we have $\set{A,B}=0$, which gives
    \begin{align*}
        \left|\sum_{i=1}^3 a_ib_i\right|\approx_\epsilon 0. 
    \end{align*}
    It is easy to see that there exists unitary $U_1$, such that $U_1AU_1^*= Z$. Set $C=U_1BU_1^*=\sum_{i=1}^3c_i\sigma_i$. We have $\set{C,Z}\approx_{\epsilon} 0$. By \cref{commute-anticommute}, $|c_3|\approx_{\epsilon}0$. For $\theta=\arccos\frac{c_1}{\sqrt{c_1^2+c_2^2}}$, define
    \begin{align*}
        U_2=\cos{\frac{\theta}{2}}I-i\sin{\frac{\theta}{2}}Z.
    \end{align*}
    We have $U_2CU_2^*=\sqrt{1-c_3^2}X+c_3Z\approx_{\epsilon} X$ and $U_2 Z U_2^*=Z$. Thus $U=U_2U_1$ satisfies the requirements.
\end{proof}

% Also let
% \begin{align*}
%     R_0''=Q_0''-\widebar{\Tr}\br{Q_0} I.
% \end{align*}
%     By \cref{Parseval's identity} we have $R_0\approx_{\sqrt{\epsilon}}Q_0,R_0''\approx_{\sqrt{\epsilon}}Q_0''$, and also
%     \begin{align*}
%         \widebar{\Tr}\br{R_0^2}\approx_{\epsilon}\widebar{\Tr}\br{Q_0^2}\approx_{\epsilon}1,\\
%         \widebar{\Tr}\br{R_0''^2}\approx_{\epsilon}\widebar{\Tr}\br{Q_0''^2}\approx_{\epsilon}r^2.
%     \end{align*}
%     Expressing with Pauli coefficients we have
%     \begin{align*}
%         \sum_{y\in [4]_{\geq 0}^n}\hat{R}_0\br{y}^2&=t_1\approx_{\epsilon}1,\\
%         \sum_{y\in [4]_{\geq 0}^n}\hat{R}_0\br{y}^2r^{2w_1\br{y}}c^{2w_2\br{y}}&=t_2\approx_{\epsilon}r^2.
%     \end{align*}
    % We have
    % \begin{align*}
    %     \widebar{\Tr}\br{R_0''-rR_0}^2&=\sum_y\hat{R}_0\br{y}^2\br{r^{w_1\br{y}}c^{w_2\br{y}}-r}^2\\
    %     &=\sum_y\hat{R}_0\br{y}^2\br{r^{2w_1\br{y}}c^{2w_2\br{y}}+r^2-2r^{w_1\br{y}+1}c^{w_2\br{y}}}\\
    %     &\approx_{\epsilon}2r^2-2r\sum_y\hat{R}_0\br{y}^2r^{w_1\br{y}}c^{w_2\br{y}}.
    % \end{align*}
    % Using \cref{linear-programming} where the variables are $\set{\hat{R}_0\br{y}^2}$ and $a_{\max}=r,a_{\min}=c^n$, we have
    % \begin{align*}
    % \sum_y\hat{R}_0\br{y}^2r^{w_1\br{y}}c^{w_2\br{y}}\geq \frac{t_2+rc^nt_1}{r+c^n}\approx_{\epsilon}\frac{r^2+rc^n}{r+c^n}=r,
    % \end{align*}
    % thus $ \widebar{\Tr}\br{R_0''-rR_0}^2=O\br{\epsilon},R_0''\approx_{\sqrt{\epsilon}}rR_0$, which also gives $Q_0''\approx_{\sqrt{\epsilon}}rQ_0$. Similarly $D_i''\approx rD_i$ for $D\in\set{P,Q},i\in\set{0,1}$.

\begin{fact}\label{sqrt-approx-project}
    For a Hermitian matrix $A\in\H_d$, if $A^2\approx_\epsilon I$, then there exist two projectors $\Pi_0,\Pi_1$ with $\Pi_0+\Pi_1=I$, such that 
    \begin{align*}
        A\approx_\epsilon \Pi_0-\Pi_1.
    \end{align*}
\end{fact}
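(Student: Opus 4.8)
The plan is to use the spectral decomposition of $A$ and build the two projectors by thresholding its eigenvalues at zero. Write $A=\sum_{i=1}^d\lambda_i\ketbra{v_i}$ for an orthonormal eigenbasis $\set{\ket{v_i}}$ with real eigenvalues $\lambda_i$, and set
\[
\Pi_0:=\sum_{i:\lambda_i\geq 0}\ketbra{v_i},\qquad \Pi_1:=\sum_{i:\lambda_i<0}\ketbra{v_i}.
\]
These are orthogonal projectors with $\Pi_0+\Pi_1=I$, and $\Pi_0-\Pi_1=\sum_{i}\operatorname{sign}(\lambda_i)\ketbra{v_i}$, where we use the convention $\operatorname{sign}(0)=1$. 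Intuitively, $\Pi_0-\Pi_1$ is just the closest $\pm 1$-valued observable to $A$ in Frobenius norm, so if $A$ is close to squaring to $I$ it should already be close to this sign matrix.

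Next I would unpack both the hypothesis and the desired conclusion using the definition of $\approx_\epsilon$ in \cref{def-approx-matrix}. Since $A$ is Hermitian, $A^2-I$ is Hermitian, so the hypothesis $A^2\approx_\epsilon I$ is equivalent to $\widebar{\Tr}\br{(A^2-I)^2}=O\br{\epsilon^2}$, and the goal $A\approx_\epsilon\Pi_0-\Pi_1$ is equivalent to $\widebar{\Tr}\br{(A-\Pi_0+\Pi_1)^2}=O\br{\epsilon^2}$. Because the $\ket{v_i}$ are orthonormal, both quantities diagonalize in this basis:
\[
\widebar{\Tr}\br{(A-\Pi_0+\Pi_1)^2}=\frac{1}{d}\sum_{i=1}^d\br{\lambda_i-\operatorname{sign}(\lambda_i)}^2=\frac{1}{d}\sum_{i=1}^d\br{\abs{\lambda_i}-1}^2,
\]
\[
\widebar{\Tr}\br{(A^2-I)^2}=\frac{1}{d}\sum_{i=1}^d\br{\lambda_i^2-1}^2=\frac{1}{d}\sum_{i=1}^d\br{\abs{\lambda_i}-1}^2\br{\abs{\lambda_i}+1}^2.
\]

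Finally I would observe the termwise domination: for every $i$ we have $\br{\abs{\lambda_i}+1}^2\geq 1$, hence
\[
\widebar{\Tr}\br{(A-\Pi_0+\Pi_1)^2}\leq \widebar{\Tr}\br{(A^2-I)^2}=O\br{\epsilon^2},
\]
which is exactly $A\approx_\epsilon\Pi_0-\Pi_1$. There is essentially no obstacle here; the only minor points to be careful about are the boundary case $\lambda_i=0$ (consistently assigned to $\Pi_0$), and the fact that no a priori bound on $\opnorm{A}$ is needed, since a large eigenvalue is automatically penalized by the factor $\br{\abs{\lambda_i}+1}^2$ in the hypothesis.
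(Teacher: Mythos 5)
Your proposal is correct and follows essentially the same route as the paper's proof: spectral decomposition, projectors obtained by splitting eigenvalues by sign, and the termwise bound $\br{\abs{\lambda_i}-1}^2\leq\br{\lambda_i^2-1}^2$ coming from $\br{\abs{\lambda_i}+1}^2\geq 1$. The only cosmetic difference is that you phrase the comparison via $\abs{\lambda_i}$ and a sign convention at $\lambda_i=0$, while the paper writes the two sums over $S_0$ and $S_1$ separately; the argument is the same.
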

\begin{proof}
    Let $A=\sum_i\lambda_i\ketbra{\lambda_i}$ be the spectral decomposition. From $A^2\approx_\epsilon I$, we have
    \begin{align*}
        \frac{1}{d}\sum_i \br{\lambda_i^2-1}^2 = O\br{\epsilon^2}.
    \end{align*}
     Define $S_0=\set{i:\lambda_i\geq 0}, S_1=\set{i:\lambda_i<0}$. For $j\in \set{0,1},$ define $\Pi_j=\sum_{i\in S_j}\ketbra{\lambda_i}$, then
     \begin{align*}
         &\frac{1}{d}\|A-\Pi_0+\Pi_1\|_F^2\\
         =~&\frac{1}{d}\br{\sum_{i\in S_0}\br{\lambda_i-1}^2+\sum_{i\in S_1}\br{\lambda_i+1}^2}\\
         \leq~&\frac{1}{d}\br{\sum_{i\in S_0}\br{\lambda_i-1}^2\br{\lambda_i+1}^2+\sum_{i\in S_1}\br{\lambda_i+1}^2\br{\lambda_i-1}^2}
         \\
         =~&\frac{1}{d}\sum_i \br{\lambda_i^2-1}^2 =O\br{\epsilon^2}.
     \end{align*}
     Thus $A\approx_\epsilon \Pi_0-\Pi_1.$
\end{proof}
\begin{fact}\label{commute-anticommute}
    Let $A,B$ be two $2$-dimensional non-zero, traceless Hermitian matrices expressed as $A=\sum_{i=1}^3 a_i\sigma_i,B=\sum_{i=1}^3 b_i\sigma_i,$ then
    \begin{align*}
        \set{A,B}=0&\Leftrightarrow \sum_{i=1}^3a_ib_i=0,\\
        \left[A,B\right]=0 &\Leftrightarrow \exists k\in\reals,\forall i\in\left[3\right], a_i=kb_i.
    \end{align*}
\end{fact}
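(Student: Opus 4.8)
The plan is to reduce everything to the standard multiplication rule for Pauli matrices, namely $\sigma_i\sigma_j=\delta_{ij}I+\i\sum_{k=1}^{3}\epsilon_{ijk}\sigma_k$ for $i,j\in[3]$, where $\epsilon_{ijk}$ denotes the Levi-Civita symbol. Writing $a=(a_1,a_2,a_3)$ and $b=(b_1,b_2,b_3)$ for the coefficient vectors of $A$ and $B$, a direct expansion gives
\[
AB=\sum_{i,j=1}^{3}a_ib_j\,\sigma_i\sigma_j=\br{\sum_{i=1}^{3}a_ib_i}I+\i\sum_{k=1}^{3}\br{\sum_{i,j=1}^{3}\epsilon_{ijk}a_ib_j}\sigma_k=(a\cdot b)\,I+\i\sum_{k=1}^{3}(a\times b)_k\,\sigma_k,
\]
so that the whole statement becomes a matter of reading off the symmetric and antisymmetric parts.

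Next I would use $a\cdot b=b\cdot a$ and $a\times b=-\,b\times a$ to obtain $\set{A,B}=AB+BA=2(a\cdot b)\,I$ and $\left[A,B\right]=AB-BA=2\i\sum_{k}(a\times b)_k\sigma_k$. For the first equivalence, $\set{A,B}=0$ forces $(a\cdot b)I=0$, hence $\sum_{i=1}^{3}a_ib_i=0$; the converse is immediate. For the second, since $\set{\sigma_1,\sigma_2,\sigma_3}$ is linearly independent, $\left[A,B\right]=0$ is equivalent to $a\times b=0$. Because $B$ is non-zero, its coefficient vector $b$ is non-zero, so $a\times b=0$ is equivalent to $a$ being a scalar multiple of $b$, i.e. the existence of $k\in\reals$ with $a_i=kb_i$ for all $i\in[3]$; conversely $a_i=kb_i$ for all $i$ yields $a\times b=k\,(b\times b)=0$.

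The computation is entirely elementary, so there is no real obstacle; the only points deserving a line of care are identifying the Levi-Civita double sum with the cross product $a\times b$, and invoking linear independence of the Pauli matrices (together with $B\neq 0$, hence $b\neq 0$) to pass from the vanishing of the commutator to genuine proportionality of $a$ and $b$ rather than mere parallelism.
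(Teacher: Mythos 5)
Your proof is correct and takes essentially the same route as the paper: a direct expansion of $AB$ and $BA$ via the Pauli multiplication rule, reading the anticommutator off the symmetric part $\sum_i a_ib_i$ and the commutator off the antisymmetric (cross-product) part, which the paper writes as $\sum_{1\leq i<j\leq 3}(a_ib_j-a_jb_i)\sigma_i\sigma_j$ instead of via the Levi-Civita symbol. If anything, you are slightly more explicit than the paper about the converse directions and about invoking $b\neq 0$ to pass from $a\times b=0$ to $a_i=kb_i$.
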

\begin{proof}
    By direct calculation,
    \begin{align*}
        AB=\sum_{i=1}^3a_ib_i I+\sum_{1\leq i<j\leq 3}\br{a_ib_j-a_jb_i}\sigma_i\sigma_j,\\
    BA=\sum_{i=1}^3a_ib_i I-\sum_{1\leq i<j\leq 3}\br{a_ib_j-a_jb_i}\sigma_i\sigma_j. \end{align*}
    Let $\vec{a}=\br{a_1,a_2,a_3}^T,\vec{b}=\br{b_1,b_2,b_3}^T$.
    If $\set{A,B}=0$, comparing the items gives $\langle \vec{a},\vec{b}\rangle=\sum_{i=1}^3a_ib_i=0$. If $\left[A,B\right]=0$, we have $\forall 1\leq i<j\leq  3,a_ib_j=a_jb_i$, which gives $\vec{a}\times\vec{b}=\vec{0}$, and $\exists k\in\reals,\forall i\in\left[3\right], a_i=kb_i$.
\end{proof}
\begin{fact}\label{ONB-anitcommute}
    For any standard orthonormal basis $\set{\A_i:i\in\set{0,1,2,3}}$ of $\H_2$, we have $\set{\A_i,\A_j}=0$ for $1\leq i<j\leq 3$.
\end{fact}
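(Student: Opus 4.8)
The plan is to reduce the claim to \cref{commute-anticommute} by passing to coordinates in the Pauli basis. First I would use that $\A_0=I$ together with orthonormality under $\innerproduct{\cdot}{\cdot}$: for $i\in\set{1,2,3}$ we have $\Tr\A_i=2\innerproduct{\A_0}{\A_i}=0$, so $\A_1,\A_2,\A_3$ are traceless Hermitian $2\times 2$ matrices. Each such matrix admits a real expansion $\A_i=\sum_{k=1}^3 a_k^{(i)}\sigma_k$, where $\sigma_1,\sigma_2,\sigma_3$ denote the Pauli matrices $X,Y,Z$ (which span the traceless Hermitian $2\times 2$ matrices over $\reals$).

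The next step is to compute the Gram matrix of $\A_1,\A_2,\A_3$ in these coordinates. Using $\Tr(\sigma_k\sigma_l)=2\delta_{k,l}$ one gets $\innerproduct{\A_i}{\A_j}=\frac12\Tr(\A_i\A_j)=\sum_{k=1}^3 a_k^{(i)}a_k^{(j)}$, i.e.\ the Hilbert--Schmidt inner product of $\A_i$ and $\A_j$ equals the Euclidean inner product of the coefficient vectors $\vec a^{(i)}=(a_1^{(i)},a_2^{(i)},a_3^{(i)})^T$. Since $\set{\A_i:i\in\set{0,1,2,3}}$ is orthonormal, the vectors $\vec a^{(1)},\vec a^{(2)},\vec a^{(3)}$ form an orthonormal basis of $\reals^3$; in particular $\sum_{k=1}^3 a_k^{(i)}a_k^{(j)}=0$ whenever $i\ne j$.

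Finally I would invoke \cref{commute-anticommute} for the pair $\A_i,\A_j$ (both nonzero and traceless): the condition $\set{\A_i,\A_j}=0$ is equivalent to $\sum_{k=1}^3 a_k^{(i)}a_k^{(j)}=0$, which we have just established for $1\le i<j\le 3$. This completes the proof. There is essentially no obstacle here; the only point requiring a moment's care is keeping the normalization conventions straight between $\innerproduct{\cdot}{\cdot}$, $\widebar{\Tr}$, and the unnormalized trace $\Tr$, and otherwise the statement is an immediate corollary of \cref{commute-anticommute}.
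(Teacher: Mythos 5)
Your proof is correct and follows essentially the same route as the paper: expand $\A_1,\A_2,\A_3$ in the Pauli basis and use orthonormality to get orthogonal coefficient vectors, then conclude via \cref{commute-anticommute}. The only cosmetic difference is that the paper cites an external fact (Fact 2.8 of \cite{qin2021nonlocal}) for the orthogonal change-of-basis matrix and computes the anticommutator directly, whereas you derive the coefficient-vector orthogonality from $\A_0=I$ and the normalized Hilbert--Schmidt inner product yourself, which is equally valid.
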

\begin{proof}
    
    By \cite[Fact 2.8]{qin2021nonlocal}, there exists a $4\times 4$ orthogonal matrix $O=\br{o_{ij}}$, such that $\A_i=\sum_{j=0}^3 o_{ij}\sigma_j$. Therefore for $1\leq i<j\leq 3,
        \set{\A_i,\A_j}=2\sum_{ijk}o_{ik}o_{jk}I=0.$
\end{proof}

\begin{fact}\label{ONBtoPauli}
    For any standard orthonormal basis $\set{\A_i:i\in\set{0,1,2,3}}$ of $\H_2$, there exists a 2-dimensional unitary $U$, such that for $i\in\set{0,1,3}$, $ U \A_i U^* =\sigma_i$; and $U \A_2 U^* =\pm Y$.    
\end{fact}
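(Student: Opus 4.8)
The plan is to build the $2\times 2$ unitary $U$ by two successive conjugations, using two structural facts about the generators $\A_1,\A_2,\A_3$: by \cref{ONB-anitcommute} they pairwise anticommute, and each of them is a traceless Hermitian involution. The involution property is immediate, since each $\A_i$ with $i\geq 1$ is a $2\times 2$ Hermitian matrix with $\Tr\A_i=0$ (orthogonality to $\A_0=I$) and $\Tr(\A_i^2)=2$ (orthonormality), so its eigenvalues are $+1$ and $-1$ and hence $\A_i^2=I$.

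First I would diagonalize $\A_3$: pick a unitary $V_1$ with $V_1\A_3V_1^*=Z=\sigma_3$, ordering the eigenbasis so that the $+1$ eigenvector comes first. Since $\A_1$ anticommutes with $\A_3$, the matrix $M:=V_1\A_1V_1^*$ anticommutes with $Z$; being also Hermitian with $M^2=I$, it must be antidiagonal with a unit-modulus off-diagonal entry, that is $M=\begin{pmatrix}0 & e^{i\theta}\\ e^{-i\theta} & 0\end{pmatrix}$ for some real $\theta$. Conjugating by the diagonal unitary $V_2:=\mathrm{diag}(e^{-i\theta/2},e^{i\theta/2})$, which commutes with $Z$, sends $M$ to $X=\sigma_1$ while leaving $Z$ fixed. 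Setting $U:=V_2V_1$ then yields $U\A_1U^*=\sigma_1$, $U\A_3U^*=\sigma_3$, and trivially $U\A_0U^*=I=\sigma_0$. It remains to handle $\A_2$: the matrix $N:=U\A_2U^*$ is again a traceless Hermitian involution, and it anticommutes with both $X$ and $Z$; anticommuting with $Z$ forces $N$ to be antidiagonal with a unit-modulus off-diagonal entry, and anticommuting with $X$ forces that entry to be purely imaginary, so $N=\pm Y=\pm\sigma_2$, which is exactly the claimed conclusion.

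I expect the one genuinely delicate point to be conceptual rather than computational, and it is precisely what accounts for the sign $\pm$: conjugation $A\mapsto UAU^*$ by a $2\times 2$ unitary realizes exactly the orientation-preserving rotations of the Bloch sphere, so if the orthonormal frame $(\A_1,\A_2,\A_3)$ is \emph{left-handed} --- equivalently, the $O(3)$ block of its orthogonal Pauli-coefficient matrix (Fact 2.8 of \cite{qin2021nonlocal}) has determinant $-1$ --- then no unitary can carry it onto the right-handed frame $(X,Y,Z)$, and a residual reflection is unavoidably absorbed into the sign of the final generator. In the construction above this is handled automatically: $N$ comes out as $+Y$ in the right-handed case and as $-Y$ in the left-handed case, while every other step is a routine $2\times 2$ manipulation, so no further obstacle arises.
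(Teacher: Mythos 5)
Your proof is correct and follows essentially the same route as the paper: use the pairwise anticommutation from \cref{ONB-anitcommute}, conjugate so that $\A_1,\A_3$ become $X,Z$, and then conclude $U\A_2U^*=\pm Y$ from its anticommutation with both. The only difference is that you spell out explicitly (via diagonalizing $\A_3$ and a phase rotation) the step the paper delegates to a cited standard lemma, and you verify the final $\pm Y$ claim by direct matrix form rather than via the Bloch-coefficient criterion of \cref{commute-anticommute}; both fillings are fine.
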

\begin{proof}
    By \cref{ONB-anitcommute}, $\set{\A_1,\A_3}=0$. Standard matrix techniques (\cite[Lemma 8.2]{cleve2020qic}) ensure the existence of a 2-dimensional unitary $U$, such that $U\A_1U^*=X, U\A_3U^*=Z$. Since $U\A_2U^*$ anti-commutes with both $U\A_1U^*$ and $U\A_3U^*$, by \cref{commute-anticommute} $U\A_2U^*$ is either $Y$ or $-Y$.
\end{proof}
\begin{fact}\label{pre-not-separable}
    Let $\varphi=\frac{1}{4}\br{I\x I+a X\x X+b Y\x Y+c Z\x Z}$, where $a,b,c\in[0,1]$. If $\varphi$ is a valid quantum state, then $\varphi$ is separable.
\end{fact}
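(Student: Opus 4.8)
The plan is to exhibit an explicit separable decomposition of $\varphi$, the only input being the single scalar inequality that the positivity of $\varphi$ imposes. First I would test $\varphi \succeq 0$ against the singlet vector $\ket{\psi^-} = \br{\ket{01} - \ket{10}}/\sqrt{2}$. Since $\ket{\psi^-}$ is a simultaneous $(-1)$-eigenvector of $X\otimes X$, $Y\otimes Y$, and $Z\otimes Z$, this gives
\[
0 \le \bra{\psi^-}\varphi\ket{\psi^-} = \tfrac14\br{1 - a - b - c},
\]
i.e. $a + b + c \le 1$. (Equivalently, one may diagonalize $\varphi$ in the full Bell basis: its four eigenvalues are $\tfrac14\br{1+a-b+c}$, $\tfrac14\br{1-a+b+c}$, $\tfrac14\br{1+a+b-c}$, $\tfrac14\br{1-a-b-c}$, and under $a,b,c\in[0,1]$ the first three are automatically nonnegative, so $a+b+c\le 1$ is the only active constraint.)

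Next I would record that each "edge" Bell-diagonal state $\tfrac14\br{I\otimes I + \sigma\otimes\sigma}$, for $\sigma\in\set{X,Y,Z}$, is separable: writing $\ket{v_\sigma^{\pm}}$ for the two eigenvectors of $\sigma$, a one-line computation (the single-party Pauli terms cancel) gives
\[
\tfrac12\br{\ketbra{v_\sigma^{+}}\otimes\ketbra{v_\sigma^{+}} + \ketbra{v_\sigma^{-}}\otimes\ketbra{v_\sigma^{-}}} = \tfrac14\br{I\otimes I + \sigma\otimes\sigma}.
\]
The maximally mixed state $\tfrac14\, I\otimes I$ is of course separable as well. Assembling these with the weights dictated by $a,b,c$, I would write
\[
\varphi = a\cdot\tfrac14\br{I\otimes I + X\otimes X} + b\cdot\tfrac14\br{I\otimes I + Y\otimes Y} + c\cdot\tfrac14\br{I\otimes I + Z\otimes Z} + \br{1 - a - b - c}\cdot\tfrac14\, I\otimes I,
\]
which is verified by matching Pauli coefficients. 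Since $a,b,c \ge 0$ and $a+b+c\le 1$, the four weights are nonnegative and sum to one, so $\varphi$ is a convex combination of separable states, hence separable.

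There is no genuine obstacle here; the only mildly non-obvious point is recognizing in advance that the lone inequality $a+b+c\le 1$ extracted from the singlet is precisely what makes the displayed convex decomposition have nonnegative weights, after which everything is a short verification. As an alternative to the explicit decomposition one could invoke the Horodecki criterion for two qubits: because $Y^T = -Y$, the partial transpose $\varphi^{T_B}$ has the same form with $b \mapsto -b$, and checking that its four Bell-basis eigenvalues $\tfrac14\br{1+a+b+c}$, $\tfrac14\br{1-a-b+c}$, $\tfrac14\br{1+a-b-c}$, $\tfrac14\br{1-a+b-c}$ are nonnegative is immediate from $a,b,c\in[0,1]$ and $a+b+c\le 1$.
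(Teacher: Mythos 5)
Your proof is correct, and it takes a genuinely different route from the paper's. Both arguments begin the same way, extracting the single constraint $a+b+c\le 1$ from positivity (your singlet expectation value is exactly the paper's smallest Bell eigenvalue $\tfrac14(1-a-b-c)$, and your list of Bell eigenvalues matches the paper's). From there the paper checks that the partial transpose $\tfrac14\br{I\x I+aX\x X-bY\x Y+cZ\x Z}$ is also positive under $a+b+c\le 1$ and then invokes the nontrivial fact that PPT implies separability for two qubits — i.e.\ it outsources the real content to the Peres--Horodecki theorem, which you only mention as your alternative route. Your main argument instead produces an explicit separable decomposition: each edge state $\tfrac14\br{I\x I+\sigma\x\sigma}=\tfrac12\br{\ketbra{v_\sigma^+}\x\ketbra{v_\sigma^+}+\ketbra{v_\sigma^-}\x\ketbra{v_\sigma^-}}$ is manifestly separable, and $\varphi$ is the convex combination of these three states and $\tfrac14 I\x I$ with weights $a,b,c,1-a-b-c$, which are nonnegative precisely because of the extracted inequality. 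The identity $\ketbra{v_\sigma^\pm}=\tfrac{I\pm\sigma}{2}$ makes the cross terms cancel as you say, and the coefficient matching is immediate, so the decomposition is verified. What your approach buys is self-containedness — no appeal to the $2\times 2$ PPT-separability theorem — at the cost of a slightly longer construction; the paper's proof is shorter modulo the cited theorem. Either argument suffices for the role this fact plays in \cref{EquivtoEPR}.
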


\begin{proof}
    By direct calculation, the eigenvalues of $\varphi$ are $$\frac{1}{4}\{1-a-b-c,1+a+b-c,1+a-b+c,1-a+b+c\}.$$ If $\varphi$ is a valid quantum state, then $a+b+c\leq1.$ Consider the partial transpose of $\varphi$:
    $$\varphi^{\text{PT}}=\frac{1}{4}\br{I\x I+a X\x X-b Y\x Y+c Z\x Z}.$$
    The eigenvalues of $\varphi^{\text{PT}}$ are $$\frac{1}{4}\{1-a+b-c,1+a-b-c,1+a+b+c,1-a-b+c\}.$$
    It is easy to verify that, if $a+b+c\leq1$, then $\varphi^{\text{PT}}$ is positive semidefinite. Thus $\varphi$ is a PPT state. By the well-known fact that any PPT state with local dimension 2 is separable, $\varphi$ must be separable.
\end{proof}

\begin{fact}\label{EquivtoEPR}
For any two standard orthonormal bases $\set{\A_i:i\in\set{0,1,2,3}},\set{\B_i:i\in\set{0,1,2,3}}$, 
let $\varphi=\frac{1}{4}\br{I\x I+a \A_1\x \B_1+b \A_2\x \B_2+c \A_3\x \B_3}$, where $a,b,c\in[0,1]$. If $\varphi$ is a valid quantum state, and is not separable, then the matrix $\phi =\frac{1}{4}\sum_{i=0}^3\A_i\x\B_i$ is equivalent to EPR state, up to local unitaries. 
%the density matrix of a pure state and $\Tr_A\br{\phi}=\Tr_B\br{\phi}=\frac{I}{2}.$
\end{fact}
\begin{proof}
    By \cref{ONBtoPauli}, there exist unitaries $U,V$, such that $U\A_iU^*=\sigma_i,V\B_iV^*= \sigma_i$ for $i\in \set{0,1,3}$ and $U\A_2U^*=\pm Y,V\B_2V^*= \pm Y$. If $\varphi$ is not separable, then $\br{U\otimes V}\varphi\br{U^*\otimes V^*}$ is also not separable. By \cref{pre-not-separable}, we must have $\br{U\otimes V}\A_2\x \B_2\br{U^*\otimes V^*}=-Y\x Y$, thus
        \begin{multline*}
        \br{U\otimes V}\phi\br{U^*\otimes V^*} =\frac{1}{4}\br{I\otimes I+X\otimes X-Y\otimes Y+Z\otimes Z}\\=\ketbra{\Phi}.
    \end{multline*}
    %Since $\phi$ can be locally transformed into an EPR state, consequently $\phi$ is pure and $\Tr_A\br{\phi}=\Tr_B\br{\phi}=\frac{I}{2}$.

\end{proof}

\section{Trace test for the CHSH game under noise}
\label{app:chsh-trace-test}

For completeness, we record a simple statistical test that enforces the trace bound used in
\cref{self-test-depolarize}.

% \begin{assumption}\label{repetition-iid}
% Alice and Bob use fixed observables and share fresh copies of \(\Psi^{\otimes n}\) in each repetition.
% \end{assumption}

\begin{fact}[Hoeffding's inequality]\label{fact:Hoeffding}
Let \(\randX_1,\dots,\randX_t\) be independent random variables such that
\[
a_i\le \randX_i \le b_i.
\]
Let \(\randS=\sum_{i=1}^t \randX_i\). Then
\[
\Pr\!\left[\left|\randS-\bigE[\randS]\right|\ge s\right]
\le
2\exp\!\left(-\frac{2s^2}{\sum_{i=1}^t (b_i-a_i)^2}\right).
\]
\end{fact}

\begin{prop}[Trace test for the CHSH game under noise]\label{prop:chsh-trace-test-appendix}
Assume \cref{repetition-iid}. Fix parameters \(t\in \mathbb{Z}_{>0}\), \(\delta>0\), and
\(p\in(0,1)\) satisfying
\[
\delta=\left(\frac{2\ln(2/p)}{t}\right)^{1/2}.
\]
Suppose that, for each player \(D\in\{A,B\}\) and each question \(x\in\{0,1\}\), we repeat the
CHSH game under noise until the pair \((D,x)\) appears at least \(t\) times, and we reject whenever the empirical
frequency \(\widehat{p}_D(1|x)\) deviates from \(1/2\) by more than \(\delta\). If the players pass this
test with probability at least \(p\), then
\[
\Trerr\le 3\delta.
\]
\end{prop}

\begin{proof}
Since the local marginals of \(\Psi^{\otimes n}\) are maximally mixed, if a player measures an
observable \(D_x\), then the probability of answering \(1\) is
\[
\frac{1}{2}+\frac{\widebar{\Tr}(D_x)}{2}.
\]
Without loss of generality, assume
\[
\Trerr=\widebar{\Tr}(P_0).
\]
Suppose toward a contradiction that
\[
\Trerr>3\delta.
\]
Let \(\randX_1,\dots,\randX_t\in\{-1,1\}\) be Alice's first \(t\) answers to question \(0\). Then
\[
\bigE[\randX_i]=\Trerr.
\]
Passing the trace test for this question implies
\[
\left|\frac{1}{t}\sum_{i=1}^t \randX_i\right|\le 2\delta.
\]
Hence
\begin{align*}
\Pr[\text{pass the trace test}]
&\le
\Pr\!\left[\left|\sum_{i=1}^t \randX_i\right|\le 2\delta t\right]\\
&\le
\Pr\!\left[\left|\sum_{i=1}^t \randX_i-t\bigE[\randX_1]\right|
\ge (\Trerr-2\delta)t\right].
\end{align*}
Applying \cref{fact:Hoeffding} gives
\begin{multline*}
    \Pr[\text{pass the trace test}]
\le
2\exp\!\left(-\frac{(\Trerr-2\delta)^2 t}{2}\right)
\\<
2\exp\!\left(-\frac{\delta^2 t}{2}\right)
=
p,
\end{multline*}
a contradiction. Therefore
\[
\Trerr\le 3\delta.
\]
\end{proof}

\section{Proof of Lemmas Supporting \cref{chsh-selftest-main-thm}}\label{appendix-lem-supporting-chsh-self-test}

In this subsection we prove the auxiliary lemmas used in the shortened proof of \cref{chsh-selftest-main-thm}.

\begin{proof}[Proof of \cref{lem:single-observable-localization}]
It suffices to prove the claim for \(Q_0\); the other three cases are identical.

By \cref{thm-close-to-deg1},
\[
Q_0' \approx_{\sqrt{\epsilon}} \rho Q_0.
\]
Since \(\widebar{\Tr}(Q_0)^2=O(\epsilon)\), the degree-zero coefficient of \(Q_0\) is \(O(\epsilon)\). Hence
\begin{align*}
O(\epsilon)
&=
\widebar{\Tr}\!\left((Q_0'-\rho Q_0)^2\right)\\
&\approx_{\epsilon}
\sum_{|x|\ge 2}
\bigl(\rho-\rho^{|x|}\bigr)^2 \widehat{Q}_0(x)^2\\
&\ge
(\rho-\rho^2)^2
\sum_{|x|\ge 2}\widehat{Q}_0(x)^2.
\end{align*}
Therefore,
\[
\sum_{|x|\ge 2}\widehat{Q}_0(x)^2
=
O(\epsilon\gamma^2).
\]

Let \(\overline{Q}_0\) be the degree-one truncation of \(Q_0\). Then
\[
Q_0 \approx_{\sqrt{\epsilon}\gamma} \overline{Q}_0,
\]
and \(\overline{Q}_0\) has the form
\[
\overline{Q}_0
=
\sum_{r=1}^n a_r O^{(r)}\otimes I^{[n]\setminus\{r\}},
\]
where \(a_r\ge 0\), each \(O^{(r)}\in\H_2\) is traceless, and
\[
\bigl(O^{(r)}\bigr)^2=I.
\]

Since \(Q_0^2\approx_{\sqrt{\epsilon}}I\), we also have
\[
\overline{Q}_0^{\,2}\approx_{\sqrt{\epsilon}\gamma} I.
\]
Expanding \(\overline{Q}_0^{\,2}\) and applying Parseval's identity gives
\[
\sum_{r=1}^n a_r^2 \approx_{\sqrt{\epsilon}\gamma} 1,
\qquad
\sum_{r\neq s} a_r^2 a_s^2 = O(\epsilon\gamma^2).
\]
Hence
\[
\sum_{r=1}^n a_r^2(1-a_r^2)=O(\epsilon\gamma^2).
\]
Since every term on the left is nonnegative, there exists an index \(k_{Q,0}\in[n]\) such that
\[
a_{k_{Q,0}}^2 = 1-O(\epsilon\gamma^2),
\qquad
\sum_{r\neq k_{Q,0}} a_r^2 = O(\epsilon\gamma^2).
\]
Set
\[
\widetilde{Q}_0^{(k_{Q,0})}:=O^{(k_{Q,0})}.
\]
Then
\[
Q_0 \approx_{\sqrt{\epsilon}\gamma}
\widetilde{Q}_0^{(k_{Q,0})}\otimes I^{[n]\setminus\{k_{Q,0}\}}.
\]

The proofs for \(P_0,P_1,Q_1\) are identical.
\end{proof}

\begin{proof}[Proof of \cref{lem:same-register}]
Assume first that \(D=P\); the argument for \(D=Q\) is identical.

Suppose for contradiction that \(k_{P,0}\neq k_{P,1}\). Since \(P_0^2\approx_{\sqrt{\epsilon}}I\) and \(P_1^2\approx_{\sqrt{\epsilon}}I\), the single-register approximations imply
\[
\bigl(\widetilde{P}_0^{(k_{P,0})}\bigr)^2
\approx_{\sqrt{\epsilon}\gamma} I,
\qquad
\bigl(\widetilde{P}_1^{(k_{P,1})}\bigr)^2
\approx_{\sqrt{\epsilon}\gamma} I.
\]
Therefore, by \cref{approx_multiply},
\begin{align*}
\{P_0,P_1\}
&\approx_{\sqrt{\epsilon}\gamma}
\Bigl\{
\widetilde{P}_0^{(k_{P,0})}\otimes I^{[n]\setminus\{k_{P,0}\}},
\widetilde{P}_1^{(k_{P,1})}\otimes I^{[n]\setminus\{k_{P,1}\}}
\Bigr\}\\
&=
2\,
\widetilde{P}_0^{(k_{P,0})}
\otimes
\widetilde{P}_1^{(k_{P,1})}
\otimes
I^{[n]\setminus\{k_{P,0},k_{P,1}\}}.
\end{align*}
The right-hand side has normalized Frobenius norm bounded below by a positive constant, because each factor squares to \(I\) up to \(O(\sqrt{\epsilon}\gamma)\) error. This contradicts
\[
\{P_0,P_1\}\approx_{\sqrt{\epsilon}}0.
\]
Hence
\[
k_{P,0}=k_{P,1}.
\]
The same argument gives
\[
k_{Q,0}=k_{Q,1}.
\]
\end{proof}

\begin{proof}[Proof of \cref{lem:register-matching}]
Suppose for contradiction that \(k_1\neq k_2\). Then for each \(i\in\{0,1\}\), the left-hand side
\[
\widetilde{P}_i^{(k_1)}\otimes I^{[n]\setminus\{k_1\}}
\]
is a degree-one operator on the \(k_1\)-th register, whereas the right-hand side
\[
V^{\otimes n}
\left(
\frac{\widetilde{Q}_0^{(k_2)}+(-1)^i\widetilde{Q}_1^{(k_2)}}{\sqrt{2}}
\right)
(V^*)^{\otimes n}
\otimes I^{[n]\setminus\{k_2\}}
\]
is a degree-one operator on the \(k_2\)-th register. These two operators are orthogonal with respect to the normalized Hilbert--Schmidt inner product.

Moreover, each side has normalized Frobenius norm
\[
1+O(\sqrt{\epsilon}\gamma),
\]
since
\[
\bigl(\widetilde{P}_i^{(k_1)}\bigr)^2\approx_{\sqrt{\epsilon}\gamma}I
\]
and
\[
\left(
\frac{\widetilde{Q}_0^{(k_2)}+(-1)^i\widetilde{Q}_1^{(k_2)}}{\sqrt{2}}
\right)^2
\approx_{\sqrt{\epsilon}\gamma}I.
\]
Hence the two sides cannot be \(O(\sqrt{\epsilon}\gamma)\)-close, a contradiction. Therefore
\[
k_1=k_2.
\]
\end{proof}

\begin{proof}[Proof of \cref{lem:source-coefficient-rigidity}]
Let \(\{\mathcal{B}_i\}_{i=0}^3\) be the standard orthonormal basis fixed in the diagonalization of the correlation matrix, and write
\begin{equation}
\widetilde{Q}_0^{(k)}=\sum_{r=1}^3 \alpha_r \mathcal{B}_r,
\qquad
\widetilde{Q}_1^{(k)}=\sum_{r=1}^3 \beta_r \mathcal{B}_r,
\label{eq:appendix-thm37-basis-expansion}
\end{equation}
where
\[
\sum_{r=1}^3 \alpha_r^2=1,
\qquad
\sum_{r=1}^3 \beta_r^2=1.
\]
By the definition of the noisy observables and the single-register localization,
\begin{align}
Q_0'
&\approx_{\sqrt{\epsilon}\gamma}
\left(\sum_{r=1}^3 \alpha_r c_r \mathcal{B}_r^{(k)}\right)
\otimes I^{[n]\setminus\{k\}}, \label{eq:appendix-thm37-Q0prime-local}\\
Q_1'
&\approx_{\sqrt{\epsilon}\gamma}
\left(\sum_{r=1}^3 \beta_r c_r \mathcal{B}_r^{(k)}\right)
\otimes I^{[n]\setminus\{k\}}. \label{eq:appendix-thm37-Q1prime-local}
\end{align}
On the other hand, by \cref{almost-deg-one-2} and the fact that \(Q_0^2,Q_1^2\approx_{\sqrt{\epsilon}}I\), we have
\[
\widebar{\Tr}\!\left((Q_0')^2\right)\approx_{\epsilon}\rho^2,
\qquad
\widebar{\Tr}\!\left((Q_1')^2\right)\approx_{\epsilon}\rho^2.
\]
Combining this with \cref{eq:appendix-thm37-Q0prime-local,eq:appendix-thm37-Q1prime-local}, we obtain
\begin{equation}
\sum_{r=1}^3 \alpha_r^2 c_r^2
\approx_{\epsilon\gamma^2}
\rho^2,
\qquad
\sum_{r=1}^3 \beta_r^2 c_r^2
\approx_{\epsilon\gamma^2}
\rho^2.
\label{eq:appendix-thm37-weighted-rho}
\end{equation}

We first prove that \(c_1\) is close to \(\rho\). Since
\[
c_r\le c_1\le \rho
\qquad
\text{for all }r\in\{1,2,3\},
\]
and \(\sum_{r=1}^3\alpha_r^2=1\), \cref{eq:appendix-thm37-weighted-rho} implies
\[
\rho^2-O(\epsilon\gamma^2)
\le
\sum_{r=1}^3 \alpha_r^2 c_r^2
\le
c_1^2
\le
\rho^2.
\]
Hence
\[
0\le \rho^2-c_1^2=O(\epsilon\gamma^2).
\]
Since \(c_1\le \rho\), we get
\[
0\le \rho-c_1
=
\frac{\rho^2-c_1^2}{\rho+c_1}
=
O(\epsilon\gamma^2).
\]
In particular,
\[
c_1\approx_{\sqrt{\epsilon}\gamma}\rho.
\]

Substituting this back into \cref{eq:appendix-thm37-weighted-rho}, we also obtain
\begin{equation}
\sum_{r=1}^3 \alpha_r^2 c_r^2
\approx_{\epsilon\gamma^2}
c_1^2,
\qquad
\sum_{r=1}^3 \beta_r^2 c_r^2
\approx_{\epsilon\gamma^2}
c_1^2.
\label{eq:appendix-thm37-weighted-c}
\end{equation}

Next, since
\[
Q_0Q_1+Q_1Q_0\approx_{\sqrt{\epsilon}\gamma}0,
\]
the single-register reduction and \cref{commute-anticommute} imply
\[
\left|\sum_{r=1}^3 \alpha_r\beta_r\right|
\le \eta,
\qquad
\eta:=O(\sqrt{\epsilon}\gamma).
\]
Set
\[
\Delta:=c_1^2-c_2^2.
\]
From \cref{eq:appendix-thm37-weighted-c} and the ordering \(c_1\ge c_2\ge c_3\ge 0\), we have
\begin{align*}
c_1^2-\sum_{r=1}^3 \alpha_r^2 c_r^2
&=
\alpha_2^2(c_1^2-c_2^2)+\alpha_3^2(c_1^2-c_3^2)\\
&\ge
(\alpha_2^2+\alpha_3^2)(c_1^2-c_2^2)
=
(1-\alpha_1^2)\Delta.
\end{align*}
Hence
\[
1-\alpha_1^2\le \frac{C\eta^2}{\Delta}
\]
for some absolute constant \(C>0\). The same argument gives
\[
1-\beta_1^2\le \frac{C\eta^2}{\Delta}.
\]
Therefore,
\begin{align*}
\eta
&\ge
\left|\sum_{r=1}^3 \alpha_r\beta_r\right|\\
&\ge
|\alpha_1\beta_1|-|\alpha_2\beta_2+\alpha_3\beta_3|\\
&\ge
|\alpha_1\beta_1|
-
\sqrt{(\alpha_2^2+\alpha_3^2)(\beta_2^2+\beta_3^2)}\\
&\ge
\sqrt{1-\frac{C\eta^2}{\Delta}}\,
\sqrt{1-\frac{C\eta^2}{\Delta}}
-
\frac{C\eta^2}{\Delta}.
\end{align*}
For sufficiently small \(\epsilon\), we have \(\eta<1/2\), and the previous inequality implies
\[
\eta \ge 1-\frac{C'\eta^2}{\Delta}
\]
for another absolute constant \(C'>0\). Rearranging,
\[
\Delta\le \frac{C'\eta^2}{1-\eta}.
\]
Since
\[
(c_1-c_2)^2\le c_1^2-c_2^2=\Delta,
\]
we conclude that
\[
c_1-c_2
=
O\!\left(\frac{\eta}{\sqrt{1-\eta}}\right)
=
O(\sqrt{\epsilon}\gamma).
\]
Consequently,
\[
\rho-c_2
=
(\rho-c_1)+(c_1-c_2)
=
O(\sqrt{\epsilon}\gamma),
\]
and therefore
\[
\rho \ge c_1 \ge c_2 \ge \rho - O(\sqrt{\epsilon}\gamma).
\]
\end{proof}

\section{Proof of lemmas supporting \cref{mgs-self-testing-main-thm}}\label{app:mgs}
In this section, we provide detailed proofs of the preparatory lemmas used in the proof of \cref{mgs-self-testing-main-thm}. For the reader's convenience, we restate each lemma before its proof.

\begin{replemma}{mgs-consist-row-col}
Suppose the players pass the consistency test in the Magic Square game under noise (\cref{def-mgs-game}) with probability at least $\frac{1+\rho}{2}-\winerr$. Let $\epsilon=\winerr/\rho+\Trerr^2/\rho^2$, where $\Trerr$ is defined in \cref{eqn:traceerror}. Then for all $i,j\in[3]$, the following statements hold:
\begin{enumerate}
    \item $P_{i,j}^{\mathrm{row}}\approx_{\sqrt{\epsilon}} P_{i,j}^{\mathrm{column}}\approx_{\sqrt{\epsilon}}Q_{i,j}^T.$ We then omit superscripts of $P_{i,j}$.
    \item $P_{i,j}^2\approx_{\sqrt{\epsilon}}I$.
\end{enumerate}
\end{replemma}

To prove \cref{mgs-consist-row-col}, we need the following lemma, which shows that if the players pass the parity test with high probability, then Alice's POVM elements corresponding to incorrect parity can be neglected at the cost of a bounded error.
\begin{lemma}\label{prop:faultyboundederror}
%\[E_{a_1,a_2,a_3}^{x}\approx_{\sqrt{\epsilon} }0\text{ for }x\in\{r_1,r_2,r_3,c_1,c_2\},a_1a_2a_3=-1\quad\text{and}\quad E_{a_1,a_2,a_3}^{c_3}\approx_{\sqrt{\epsilon} }0\text{ for }a_1a_2a_3=1.\]
Suppose the players pass the parity test in the Magic Square game under noise (\cref{def-mgs-game}) with probability at least $1-\epsilon$. Then 
for each $x\in\{r_1,r_2,r_3,c_1,c_2\}$,
\begin{equation}\label{eqn:correctPOVM1}
   \sum_{a_1,a_2,a_3:a_1a_2a_3=1}E_{a_1,a_2,a_3}^{x}\approx_{\sqrt{\epsilon} }I,
\end{equation}
and for $x = c_3$,
\begin{equation}\label{eqn:correctPOVM2}
\sum_{a_1,a_2,a_3:a_1a_2a_3=-1}E_{a_1,a_2,a_3}^{x}\approx_{\sqrt{\epsilon} }I.\end{equation}
\end{lemma}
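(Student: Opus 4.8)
\textbf{Proof proposal for \cref{prop:faultyboundederror}.}
The plan is to show that the faulty POVM elements (those corresponding to incorrect parity) have small total ``size'' in the normalized-trace sense, and then convert this into the $\approx_{\sqrt{\epsilon}}$ statement for the complementary sum. First I would use the hypothesis that the players pass the parity test with probability at least $1-\epsilon$. Conditioned on Alice receiving question $x$, the probability of failing the parity test is exactly $\Tr\bigl((F^x \otimes I)\varphi\bigr)$, where $F^x := \sum_{a_1a_2a_3 \neq \chi(x)} E^x_{a_1,a_2,a_3}$ is the sum of faulty POVM elements, $\chi(x)=1$ for $x\in\{r_1,r_2,r_3,c_1,c_2\}$ and $\chi(x)=-1$ for $x=c_3$, and $\varphi = \Delta_\rho(\Phi^{\otimes 2})^{\otimes n}$ is the shared state. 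Since each $x$ is sent with probability $1/6$, summing over $x$ and using that the total failure probability is at most $\epsilon$ gives $\sum_x \Tr\bigl((F^x \otimes I)\varphi\bigr) \le 6\epsilon$, hence $\Tr\bigl((F^x \otimes I)\varphi\bigr) \le 6\epsilon$ for each individual $x$.

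Next I would convert this trace inequality against $\varphi$ into a normalized-trace (Frobenius) bound on $F^x$ itself. Here I use that $\varphi = \Delta_\rho(\Phi^{\otimes 2})^{\otimes n}$ has a known lower bound on its smallest eigenvalue: the depolarized state $\Delta_\rho(\Phi^{\otimes 2}) = \rho\Phi^{\otimes 2} + (1-\rho)I/16 \succeq \frac{1-\rho}{16}I$, so $\varphi \succeq \left(\frac{1-\rho}{16}\right)^n I$ — but this is exponentially small, which is too weak. Instead, the right move is to use \cref{fact:partialtrace}: $\Tr\bigl((F^x\otimes I)\varphi\bigr) = \Tr(F^x \varphi_A)$ where $\varphi_A = I/4^n$ is the maximally mixed marginal. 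Thus $\widebar{\Tr}(F^x) = \Tr(F^x \varphi_A) \le 6\epsilon$. Since $F^x \succeq 0$, writing $F^x = \sqrt{F^x}\sqrt{F^x}$ and using $\|F^x\|\le 1$ (it is a sub-sum of POVM elements), we get $\widebar{\Tr}\bigl((F^x)^2\bigr) \le \|F^x\| \cdot \widebar{\Tr}(F^x) \le 6\epsilon$, i.e.\ $F^x \approx_{\sqrt{\epsilon}} 0$ in the matrix sense of \cref{def-approx-matrix}. Then the correct-parity sum is $I - F^x$, giving $\sum_{a_1a_2a_3 = \chi(x)} E^x_{a_1,a_2,a_3} = I - F^x \approx_{\sqrt{\epsilon}} I$, which is precisely \cref{eqn:correctPOVM1} and \cref{eqn:correctPOVM2}.

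The main technical point to get right — and the step I would be most careful about — is the passage from ``small expectation against the state $\varphi$'' to ``small in normalized Frobenius norm.'' The clean resolution is the observation that the marginal $\varphi_A$ is exactly maximally mixed (this is guaranteed by the structure of depolarized EPR pairs, cf.\ \cref{fact:partialtrace} and the discussion around \cref{approx-epr}), so the expectation against $\varphi$ of any operator acting only on Alice's side equals its normalized trace up to the constant factor coming from $\varphi_A = I/4^n$. Combined with positivity of $F^x$ and the operator-norm bound $\|F^x\| \le 1$, this immediately yields $\widebar{\Tr}((F^x)^2) = O(\epsilon)$. One should double-check the constant bookkeeping (the factor $6$ from the six questions, and whether one wants the per-$x$ bound or a uniform one), but none of this affects the $O(\sqrt{\epsilon})$ order of the final estimate. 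No isometry or Pauli-analysis machinery is needed for this lemma; it is a direct consequence of the maximally-mixed marginal together with positivity.
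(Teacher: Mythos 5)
Your proposal is correct and follows essentially the same route as the paper's proof: use the maximally mixed marginal of the depolarized EPR pairs to turn the small parity-failure probability into a small normalized trace of the faulty POVM part, then combine positivity with the bound $\preceq I$ to get $\widebar{\Tr}\br{\cdot^2}=O(\epsilon)$, and finish via POVM completeness. The only (immaterial) difference is that you bound the summed faulty operator $F^x$ at once while the paper bounds each faulty element individually and sums.
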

\begin{proof}
    %Without loss of generality, we consider Alice's POVMs on the first row. 
    Since the probability that Alice answers incorrectly in the parity test is at most $\epsilon$, we have for any $x\in\{r_1,r_2,r_3,c_1,c_2\}$ and any $(a_1, a_2, a_3)$ such that $a_1a_2a_3=-1$,
    \begin{align*}
        \Tr\br{\br{E_{a_1,a_2,a_3}^{x} \otimes I }\br{\Delta_{\rho}\br{\Phi^{\x 2}} }^{\x n} }=\widebar{\Tr}~E_{a_1,a_2,a_3}^{x}=O\br{\epsilon}.
    \end{align*}
    Since  $E_{a_1,a_2,a_3}^{x}\preceq I$, we have 
    \[\widebar{\Tr}\br{E_{a_1,a_2,a_3}^{x}}^2\leq \widebar{\Tr}\br{E_{a_1,a_2,a_3}^{x}}= O\br{\epsilon},\] and hence $E_{a_1,a_2,a_3}^{x}\approx_{\sqrt{\epsilon}} 0$. %And a similar proof holds for $E_{a_1,a_2,a_3}^{r_3}$ when $a_1a_2a_3=1$.
    
    Since $\set{E_{a_1,a_2,a_3}^x}$ is a POVM, we have
    \[\sum_{a_1,a_2,a_3}E_{a_1,a_2,a_3}^{x}=I.\]
    Therefore \cref{eqn:correctPOVM1} holds. Using a similar argument, we can also prove \cref{eqn:correctPOVM2}.
\end{proof}

We are now ready to prove \cref{mgs-consist-row-col}.
\begin{proof}[Proof of \cref{mgs-consist-row-col}]
    Suppose Alice receives a row question containing the variable $s_{i,j}$ and Bob receives the corresponding question $s_{i,j}$. The probability that they pass the consistency test is
\begin{align*}
    &\frac{1}{2}+\frac{1}{2}\Tr\br{P_{i,j}^{\mathrm{row}}\x Q_{i,j} \br{\Delta_{\rho}\br{\Phi^{\x 2}} }^{\x n}}\\
    =~&\frac{1}{2}+\frac{1}{2}\Tr\br{P_{i,j}^{\mathrm{row}}\x Q_{i,j}' \br{\Phi^{\x 2n}}}\\
    =~&\frac{1}{2}+\frac{1}{2}\Tr\br{\br{P_{i,j}^{\mathrm{row}}}'\x Q_{i,j} \br{\Phi^{\x 2n}}},
\end{align*}
where the equalities follow from \cref{pre-noise-to-obs-mgs}.

Since the players pass the consistency test with probability $\frac{1+\rho}{2}-\winerr$, then 
\begin{multline}\label{eqn:Xsmall}
   \bra{\Phi^{\x 2n}}\rho I- P_{i,j}^{\mathrm{row}}\x Q_{i,j}'\ket{\Phi^{\x 2n}}\\=\bra{\Phi^{\x 2n}}\rho I- \br{P_{i,j}^{\mathrm{row}}}'\x Q_{i,j}\ket{\Phi^{\x 2n}}=2\winerr. 
\end{multline}

Define the following operators:
\begin{align*}
    X_1&:=\br{ P_{i,j}^{\mathrm{row}}\otimes I-I\otimes \frac{Q_{i,j}'}{\rho}}^2,\quad X_2:=I\otimes I-\br{P_{i,j}^{\mathrm{row}}}^2\otimes I,\\ X_3&:=I\otimes I-\frac{1}{\rho^2}I\otimes \br{Q_{i,j}'}^2.
\end{align*}

A direct calculation yields:
\begin{equation*}%
    \rho I- P_{i,j}^{\mathrm{row}}\x Q_{i,j}'=\frac{\rho}{2}\br{X_1+X_2+X_3}.
\end{equation*}

Thus
\begin{equation}\label{eqn:Xsum}
    \bra{\Phi^{\x 2n}}X_1+X_2+X_3\ket{\Phi^{\x 2n}}=\frac{4}{\rho}\winerr. 
\end{equation}

We now analyze each of the $X_k$ terms individually, showing that $\bra{\Phi^{\x 2n}} X_k\ket{\Phi^{\x 2n}}\approx_\epsilon 0$ for all $k\in[3]$. We first focus on $\bra{\Phi^{\x 2n}} X_3\ket{\Phi^{\x 2n}}.$

Since $X_1,X_2\succeq0$, by \cref{eqn:Xsum} we have
\begin{equation}\label{eqn:X3upper}
    \bra{\Phi^{\x 2n}} X_3\ket{\Phi^{\x 2n}}\leq\frac{4}{\rho}\winerr.
\end{equation}
%Define the following operators:
% \begin{align*}
%     C_1=\rho I-P_{i,j}^{\mathrm{row}}\otimes Q_{i,j}'=\frac{\rho}{2}\br{ P_{i,j}^{\mathrm{row}}\otimes I-I\otimes \frac{Q_{i,j}'}{\rho}}^2+\frac{\rho}{2}\br{I-\br{P_{i,j}^{\mathrm{row}}}^2\otimes I}+\frac{\rho}{2}\br{I-\frac{1}{\rho^2}I\otimes \br{Q_{i,j}'}^2},\\
%     C_2=\rho I-\br{P_{i,j}^{\mathrm{row}}}'\otimes Q_{i,j}=\frac{\rho}{2}\br{  I\otimes Q_{i,j}-\frac{\br{P_{i,j}^{\mathrm{row}}}'}{\rho}\otimes I}^2+\frac{\rho}{2}\br{I- I\otimes Q_{i,j}^2}+\frac{\rho}{2}\br{I-\frac{1}{\rho^2}\br{P_{i,j}^{\mathrm{row}}}'^2\otimes I}.
% \end{align*}
% If the players pass the consistency test with probability $\frac{1+\rho}{2}-O\br{\epsilon}$, then 
% \begin{equation}\label{eqn:c1c2}
%    \bra{\Phi^{\x 2n}} C_1\ket{\Phi^{\x 2n}}=\bra{\Phi^{\x 2n}} C_2\ket{\Phi^{\x 2n}}=O\br{\epsilon}. 
% \end{equation}
On the other hand, by a similar argument as in \cref{almost-deg-one-1}, we know
\begin{equation*}%\label{eqn:qij-tr-err}
   \bra{\Phi^{\x 2n}}I\otimes \br{Q_{i,j}'}^2\ket{\Phi^{\x 2n}}\leq  \rho^2+\Trerr^2. 
\end{equation*}
This implies:
    \[\bra{\Phi^{\x 2n}} X_3\ket{\Phi^{\x 2n}}=1-\frac{1}{\rho^2}\bra{\Phi^{\x 2n}}I\otimes \br{Q_{i,j}'}^2\ket{\Phi^{\x 2n}}\geq-\frac{\Trerr^2}{\rho^2}.\]
Combining upper and lower bounds, and recall that $\epsilon=\winerr/\rho+\Trerr^2/\rho^2$, we have
    \begin{equation}\label{eqn:X3bound}\bra{\Phi^{\x 2n}} X_3\ket{\Phi^{\x 2n}}\approx_\epsilon0.\end{equation}

    Since $X_1,X_2\succeq0$, by \cref{eqn:Xsum}, we also have
    \begin{align}
        \bra{\Phi^{\x 2n}} X_1\ket{\Phi^{\x 2n}}&\approx_\epsilon0,\label{eqn:X1bound}\\
        \bra{\Phi^{\x 2n}} X_2\ket{\Phi^{\x 2n}}&\approx_\epsilon0.\label{eqn:X2bound}
    \end{align}
    From \cref{eqn:X2bound}, we obtain
    \begin{equation}\label{eqn:trPij2}
        \overline{\Tr}\br{P_{i,j}^{\mathrm{row}}}^2\approx_\epsilon1.
    \end{equation}
   Then by \cref{closetobservable}, we conclude
    \begin{equation}\label{mgs-alice-obs}
        \br{P_{i,j}^{\mathrm{row}}}^2\approx_{\sqrt{\epsilon}} I.
    \end{equation}

%On the other hand, by \cref{eqn:Xsmall} and a similar argument, we can finally get
Using a symmetric argument applied to Bob's observable (via \cref{eqn:Xsmall}), we similarly obtain:
\begin{align}
    \overline{\Tr}~Q_{i,j}^2 &\approx_\epsilon 1,\label{mgs_obs_Q}\\
    Q_{i,j}^2&\approx_{\sqrt{\epsilon}} I.\label{mgs_obs_Q_1}
\end{align}
By \cref{eqn:X3bound}, we get
\begin{equation}\label{mgs_noisy_Q}
    \overline{\Tr}~\br{Q_{i,j}'}^2\approx_{\epsilon} \rho^2.
\end{equation}
% By \cref{eqn:c1c2} and the fact that $P_{i,j}^{\mathrm{row}}\preceq I$, we have
% \[
%     \bra{\Phi^{\x 2n}}\br{I-\frac{1}{\rho^2}I\otimes \br{Q_{i,j}'}^2}\ket{\Phi^{\x 2n}}\leq \frac{2}{\rho} \bra{\Phi^{\x 2n}}C_1\ket{\Phi^{\x 2n}}= O\br{\epsilon}.
% \]
% Combined with \cref{eqn:qij-tr-err}, we have
% \begin{equation}\label{mgs_noisy_Q}
%     \bra{\Phi^{\x 2n}}I\otimes \br{Q_{i,j}'}^2\ket{\Phi^{\x 2n}}\approx_{\epsilon} \rho^2.
% \end{equation}
% By \cref{eqn:c1c2}, we also have
% \[
%     \bra{\Phi^{\x 2n}}\br{ P_{i,j}^{\mathrm{row}}\otimes I-I\otimes \frac{Q_{i,j}'}{\rho}}^2\ket{\Phi^{\x 2n}} \approx_\epsilon 0\]and\[
%     \bra{\Phi^{\x 2n}}\br{P_{i,j}^{\mathrm{row}}}^2\otimes I\ket{\Phi^{\x 2n}} \approx_\epsilon 1.
% \]
% The above two approximations give
% \begin{align}
%     P_{i,j}^{\mathrm{row}}\otimes I\ket{\Phi^{\x 2n}}&\approx_{\sqrt{\epsilon}}I\otimes \frac{Q_{i,j}'}{\rho}\ket{\Phi^{\x 2n}}\label{mgs-P-Q-1}\\
%     \br{P_{i,j}^{\mathrm{row}}}^2&\approx_{\sqrt{\epsilon}} I\label{mgs-alice-obs},
% \end{align}
% where \cref{mgs-alice-obs} is by similar proof in \cref{closetobservable}.
% Symmetrical analysis of $C_2$ yields
% \begin{align}
%     \bra{\Phi^{\x 2n}}I \otimes Q_{i,j}^2\ket{\Phi^{\x 2n}} &\approx_\epsilon 1\label{mgs_obs_Q}\\
%     Q_{i,j}^2&\approx_{\sqrt{\epsilon}} I\label{mgs_obs_Q_1}
% \end{align}
Combining \cref{mgs_noisy_Q}, \cref{mgs_obs_Q} and \cref{general-obs-scaling} (with parameters $m=4,r=\rho$), we get
\begin{align}\label{mgs-bob-noise}
    I\otimes Q_{i,j}' \ket{\Phi^{\x 2n}}\approx_{\sqrt{\epsilon}}\rho I\otimes Q_{i,j}\ket{\Phi^{\x 2n}}.
\end{align}
On the other hand, from \cref{eqn:X1bound}, we have
\[I\otimes Q_{i,j}'\ket{\Phi^{\x 2n}}\approx_{\sqrt{\epsilon}}\rho P_{i,j}^{\mathrm{row}}\otimes  I\ket{\Phi^{\x 2n}}.\]
Applying \cref{vec-epr} and \cref{approx-epr} to the above two approximations, we conclude
\begin{equation*}%\label{mgs-P-Q-2}
    P_{i,j}^{\mathrm{row}}\approx_{\sqrt{\epsilon}}Q_{i,j}^T.
\end{equation*}
% Combining with \cref{mgs-P-Q-1}, we have
% \begin{align}\label{mgs-alice-bob-consistent}
%      P_{i,j}^{\mathrm{row}}\otimes I\ket{\Phi^{\x 2n}}&\approx_{\sqrt{\epsilon}}I\otimes Q_{i,j}\ket{\Phi^{\x 2n}},\\
%      P_{i,j}^{\mathrm{row}}&\approx_{\sqrt{\epsilon}}Q_{i,j}^T\label{mgs-P-Q-2},
% \end{align}
%where \cref{mgs-P-Q-2} is by \cref{vec-epr}.

A symmetric analysis of the case when Alice receives a column including $s_{i,j}$ yields the same bound:
    \begin{align*}
        P_{i,j}^{\mathrm{column}}&\approx_{\sqrt{\epsilon}}Q_{i,j}^T.
    \end{align*} 
    Hence we have $P_{i,j}^{\mathrm{row}}\approx_{\sqrt{\epsilon}} P_{i,j}^{\mathrm{column}}$, proving item 1 of the lemma.
Finally, combining this with \cref{mgs-alice-obs}, we get item 2.
\end{proof}

\begin{replemma}{mag-commute-same-row}
    %For two different variables $i,j$ of the same row or column, we have $\left[P_i,P_j\right]\approx_{\sqrt{\epsilon}}0$. 
    Under the same assumptions as in \cref{mgs-consist-row-col}, 
    for any $i,i',j,j'\in[3]$ such that $i\ne i'$ and $j\ne j'$, we have 
    \[\left[P_{i,j},P_{i,j'}\right]\approx_{\sqrt{\epsilon}}0\quad\text{and}\quad \left[P_{i,j},P_{i',j}\right]\approx_{\sqrt{\epsilon}}0.\]
\end{replemma}

We rely on the following technical lemma to prove \cref{mag-commute-same-row}:
\begin{lemma}\label{lem:approx0}
    Given four positive semidefinite matrices $A,B,C,D\in\H_d$ with 
  \(\|A\|,\|B\|,\|C\|,\|D\|\leq1\), and two projectors $P$ and $Q$. If \[
    A+B \approx_\epsilon P,\quad
    A+C \approx_\epsilon Q,\quad
    A+B+C+D \approx_\epsilon I,
  \]
  then
  \[
    AB\approx_\epsilon0.
  \]
\end{lemma}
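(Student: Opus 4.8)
The plan is to turn each approximate hypothesis into a ``sum‑of‑squares''‑type matrix identity and then combine these identities. First I would record the four complementary pair‑sums that are approximately projectors: $A+B\approx_\epsilon P$, $A+C\approx_\epsilon Q$, $B+D=(A+B+C+D)-(A+C)\approx_\epsilon I-Q$, and $C+D=(A+B+C+D)-(A+B)\approx_\epsilon I-P$. Each of these matrices has operator norm $O(1)$, so by \cref{approx_multiply} and $P^2=P$ we get $(A+B)^2\approx_\epsilon P^2=P\approx_\epsilon A+B$; expanding the square yields the matrix relation $AB+BA\approx_\epsilon (A-A^2)+(B-B^2)$, and similarly $AC+CA\approx_\epsilon (A-A^2)+(C-C^2)$, $BD+DB\approx_\epsilon (B-B^2)+(D-D^2)$, $CD+DC\approx_\epsilon (C-C^2)+(D-D^2)$. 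Squaring $A+B+C+D\approx_\epsilon I$ gives one further relation, $\sum_{i<j}(X_iX_j+X_jX_i)\approx_\epsilon\sum_i(X_i-X_i^2)$ where $(X_1,\dots,X_4)=(A,B,C,D)$.

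The crux is to substitute the four pair‑relations into this last identity. On the left the four ``matched'' pairs $\{A,B\},\{A,C\},\{B,D\},\{C,D\}$ are replaced, leaving only the two ``unmatched'' pairs $\{A,D\}$ and $\{B,C\}$; after collecting terms everything simplifies to
\[
(A-A^2)+(B-B^2)+(C-C^2)+(D-D^2)+(AD+DA)+(BC+CB)\approx_\epsilon 0 .
\]
Now take the normalized trace. Every summand on the left is a trace of a product of positive semidefinite matrices — using $0\preceq A\preceq I\Rightarrow A^2\preceq A\Rightarrow A-A^2\succeq0$, and $\Tr(AD)\ge0$ since $A,D\succeq0$ — hence nonnegative, so each one is individually $O(\epsilon)$. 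In particular $\frac1d\Tr(A-A^2)=O(\epsilon)$ and $\frac1d\Tr(B-B^2)=O(\epsilon)$. Plugging this back into the $\{A,B\}$‑relation and tracing gives $\frac1d\Tr(AB)=\frac12\cdot\frac1d\Tr(AB+BA)\approx_\epsilon\frac12\bigl(\frac1d\Tr(A-A^2)+\frac1d\Tr(B-B^2)\bigr)=O(\epsilon)$. Finally, since $A^2\preceq A$ and $B^2\preceq B$, one has $\frac1d\|AB\|_F^2=\frac1d\Tr(A^2B^2)\le\frac1d\Tr(A^2B)\le\frac1d\Tr(AB)$, which yields $AB\approx_\epsilon 0$.

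The part I expect to be the real content is the middle step: noticing that all of $A+B,A+C,B+D,C+D$ are (near‑)projectors and that the four corresponding SoS identities recombine with the global one so cleanly, and then extracting that every ``defect'' term is $O(\epsilon)$ purely from positivity. The place where I would be most careful is the very last conversion from the trace bound to the Frobenius bound demanded by the conclusion: the naive inequality $\|AB\|_F^2\le\Tr(AB)$ loses a square root, so to land exactly at the stated linear dependence I would instead keep the matrix identities rather than their traces — controlling $AB+BA\approx_\epsilon (A-A^2)+(B-B^2)$ and separately the commutator $[A,B]=[A,A+B]\approx_\epsilon[A,P]$ via the positive‑semidefinite block decomposition of $A$ with respect to $P$ — and then combine $\|AB\|_F^2=\tfrac14\bigl(\|AB+BA\|_F^2+\|[A,B]\|_F^2\bigr)$.
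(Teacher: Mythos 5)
Your trace/SoS argument is internally correct and is a genuinely different, more elementary route than the paper's proof (which is a block-matrix analysis relative to $P$); but it only establishes $\overline{\Tr}(AB)=O(\epsilon)$ and hence $AB\approx_{\sqrt\epsilon}0$, not the stated $AB\approx_\epsilon0$, and the repair you sketch does not close the gap. For the anticommutator half you would need $\tfrac{1}{\sqrt d}\Fnorm{(A-A^2)+(B-B^2)}=O(\epsilon)$, whereas your identity only controls its normalized \emph{trace}; since for a PSD matrix one only has $\Fnorm{X}^2\le\opnorm{X}\Tr X$, this again yields $O(\sqrt\epsilon)$ in Frobenius norm. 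For the commutator half, $[A,B]\approx_\epsilon[A,P]$ is fine, but $\Fnorm{[A,P]}=\sqrt2\,\Fnorm{A_{12}}$ in the block decomposition with respect to $P$, and the ingredients you invoke ($A\succeq0$, $\opnorm{A}\le1$, and $A_{22}\approx_\epsilon0$ from the first hypothesis) only give $\Fnorm{A_{12}}\le\sqrt{\Tr A_{22}}=O(\sqrt{\epsilon d})$, i.e.\ $O(\sqrt\epsilon)$ after normalization. This is sharp for those ingredients: $A=\tfrac12\left(\begin{smallmatrix}1&\sqrt\epsilon\\ \sqrt\epsilon&\epsilon\end{smallmatrix}\right)$ and $B=\tfrac12\left(\begin{smallmatrix}1&-\sqrt\epsilon\\ -\sqrt\epsilon&\epsilon\end{smallmatrix}\right)$ are PSD with $A+B\approx_\epsilon P=\mathrm{diag}(1,0)$, yet $\Fnorm{[A,P]}\asymp\sqrt\epsilon$. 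So any linear-in-$\epsilon$ bound on $A_{12}$ (equivalently on $[A,P]$, or on $A-A^2$) must bring in the projector $Q$ and the remaining two hypotheses in an essential way; that interplay is precisely the content of the paper's proof, which never bounds $A_{12}$ itself but only \emph{products} of blocks, e.g.\ $A_{12}A_{12}^*=\sqrt{A_{11}}K_AA_{22}K_A^*\sqrt{A_{11}}$ and $A_{11}B_{11}$ via $Q_{11}-Q_{11}^2=Q_{12}Q_{12}^*$ and $Q_{12}\approx_\epsilon A_{12}+C_{12}$, so that every estimate stays linear in $\epsilon$. Establishing the quantities your fix needs at linear order is essentially equivalent to the lemma itself, so as written the fix is circular rather than a shortcut.

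The exponent is not cosmetic here: the lemma is invoked in \cref{mag-commute-same-row} (and again in the proof of \cref{thm-multiple-self-test}) with hypotheses already at scale $\sqrt\epsilon$, so a square-root-lossy version would propagate to $\epsilon^{1/4}$ and degrade the paper's advertised $\sqrt\epsilon$ robustness. On the positive side, your global identity $(A-A^2)+(B-B^2)+(C-C^2)+(D-D^2)+(AD+DA)+(BC+CB)\approx_\epsilon0$, obtained by combining the four near-projector pair relations with the square of the global relation, is correct and gives a clean one-line proof of the weaker statement $\overline{\Tr}(AB)=O(\epsilon)$ (hence $AB\approx_{\sqrt\epsilon}0$); but as a proof of the lemma as stated there is a genuine gap.
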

\begin{proof}
    Let $r=\mathrm{rank}(P)$. Without loss of generality, we write $P$ in block-diagonal form 
    \begin{equation}\label{eqn:defQ}
        P=
\begin{bmatrix}
I_r & 0 \\
0 & 0
\end{bmatrix},
    \end{equation}
and in the same $r\oplus(d-r)$ decomposition we set
\[
A = \begin{bmatrix}
A_{11} & A_{12} \\
A_{12}^* & A_{22}
\end{bmatrix},\qquad
B = \begin{bmatrix}
B_{11} & B_{12} \\
B_{12}^* & B_{22}
\end{bmatrix},\qquad
C = \begin{bmatrix}
C_{11} & C_{12} \\
C_{12}^* & C_{22}
\end{bmatrix},
\]
\[
D = \begin{bmatrix}
D_{11} & D_{12} \\
D_{12}^* & D_{22}
\end{bmatrix},\qquad
Q = \begin{bmatrix}
Q_{11} & Q_{12} \\
Q_{12}^* & Q_{22}
\end{bmatrix},
\]
where $A_{11},B_{11},C_{11},D_{11},Q_{11}\in\H_r$ and $A_{22},B_{22},C_{22},D_{22},Q_{22}\in\H_{d-r}\,$.
Then
\begin{align*}
 \|AB\|_F =~& \Fnorm{\begin{bmatrix}
A_{11}B_{11}+A_{12}B_{12}^* & A_{11}B_{12}+A_{12}B_{22} \\
A_{12}^*B_{11} +A_{22}B_{12}^*&A_{12}^*B_{12}+A_{22}B_{22}
\end{bmatrix}}\\
\leq~&\Fnorm{A_{11}B_{11}}+\Fnorm{A_{12}B_{12}^*}+\Fnorm{A_{11}B_{12}}+\Fnorm{A_{12}B_{22}}\\
+&\Fnorm{A_{12}^*B_{11}}+\Fnorm{A_{22}B_{12}^*}+\Fnorm{A_{12}^*B_{12}}+\Fnorm{A_{22}B_{22}}.   
\end{align*}
It suffices to prove that all these terms are $\sqrt{d}\cdot O(\epsilon)$. 

We first bound $\Fnorm{A_{12}B_{22}},$ $\Fnorm{A_{22}B_{12}^*}$ and $\Fnorm{A_{22}B_{22}}$. Since $A+B\approx_\epsilon P$, we have
\begin{equation}\label{eqn:B22}
    \frac{1}{\sqrt{d}}\Fnorm{A_{22}}=O(\epsilon)\quad\text{and}\quad \frac{1}{\sqrt{d}}\Fnorm{B_{22}}=O(\epsilon).
\end{equation}
 Then 
% \begin{equation}\label{eqn:A12B22}
%     \frac{1}{\sqrt{d}}\Fnorm{A_{12}B_{22}}\leq\frac{1}{\sqrt{d}}\Fnorm{B_{22}}\opnorm{A_{12}}\leq\frac{1}{\sqrt{d}}\Fnorm{B_{22}}\opnorm{A}=O(\epsilon),
% \end{equation}
% \begin{equation}\label{eqn:A22B12}
%     \frac{1}{\sqrt{d}}\Fnorm{A_{22}B_{12}^*}\leq\frac{1}{\sqrt{d}}\Fnorm{A_{22}}\opnorm{B_{12}}\leq\frac{1}{\sqrt{d}}\Fnorm{A_{22}}\opnorm{B}=O(\epsilon),
% \end{equation}
% \begin{equation}\label{eqn:A22B22}
%     \frac{1}{\sqrt{d}}\Fnorm{A_{22}B_{22}}\leq\frac{1}{\sqrt{d}}\Fnorm{A_{22}}\opnorm{B_{22}}\leq\frac{1}{\sqrt{d}}\Fnorm{A_{22}}\opnorm{B}=O(\epsilon).
% \end{equation}

\begin{align}
    \frac{1}{\sqrt{d}}\Fnorm{A_{12}B_{22}}&\leq\frac{1}{\sqrt{d}}\Fnorm{B_{22}}\opnorm{A_{12}}=O(\epsilon),\label{eqn:A12B22}\\
    \frac{1}{\sqrt{d}}\Fnorm{A_{22}B_{12}^*}&\leq\frac{1}{\sqrt{d}}\Fnorm{A_{22}}\opnorm{B_{12}}=O(\epsilon),\label{eqn:A22B12}\\
    \frac{1}{\sqrt{d}}\Fnorm{A_{22}B_{22}}&\leq\frac{1}{\sqrt{d}}\Fnorm{A_{22}}\opnorm{B_{22}}=O(\epsilon).\label{eqn:A22B22}
\end{align}
    
We then bound $\Fnorm{A_{12}B_{12}^*}$ and $\Fnorm{A_{12}^*B_{12}}$.
Since $A+B\approx_\epsilon P$, we have $\frac{1}{\sqrt{d}}\Fnorm{A_{12}+B_{12}}=O(\epsilon).$ Thus,
\begin{align*}
\frac{1}{\sqrt{d}}\Fnorm{A_{12}B_{12}^*}&\leq\frac{1}{\sqrt{d}}\br{\Fnorm{A_{12}A_{12}^*}+\Fnorm{A_{12}\br{B_{12}^*+A_{12}^*}}}\\&\leq\frac{1}{\sqrt{d}}\br{\Fnorm{A_{12}A_{12}^*}+\opnorm{A_{12}}\Fnorm{B_{12}+A_{12}}}\\
&=\frac{1}{\sqrt{d}}\Fnorm{A_{12}A_{12}^*}+O(\epsilon).
\end{align*}
Similarly, we have
\begin{multline*}
    \frac{1}{\sqrt{d}}\Fnorm{A_{12}^*B_{12}}=\frac{1}{\sqrt{d}}\Fnorm{A_{12}^*A_{12}}+O(\epsilon)\\=\frac{1}{\sqrt{d}}\Fnorm{A_{12}A_{12}^*}+O(\epsilon).
\end{multline*}

Since $A,B,C\succeq0$, by \cite[ Proposition 1.3.2]{Rajendra2007Positive} there exist $r\times (d-r)$ matrices $K_A,K_B$ and $K_C$ with $\opnorm{K_A},\opnorm{K_B},\opnorm{K_C}\leq1$ such that 
\begin{multline*}
    A_{12}=\sqrt{A_{11}}K_A\sqrt{A_{22}},\quad B_{12}=\sqrt{B_{11}}K_B\sqrt{B_{22}},\\ C_{12}=\sqrt{C_{11}}K_C\sqrt{C_{22}}.
\end{multline*}

Then
\begin{multline}\label{eqn:A12A12}
\frac{1}{\sqrt{d}}\Fnorm{A_{12}A_{12}^*}=\frac{1}{\sqrt{d}}\Fnorm{\sqrt{A_{11}}K_AA_{22}K_A^*\sqrt{A_{11}}}\\\leq\frac{1}{\sqrt{d}}\opnorm{A_{11}}\opnorm{K_A}^2\Fnorm{A_{22}}\overset{(*)}{=}O(\epsilon),
\end{multline}
where $(*)$ uses \cref{eqn:B22}. Therefore 
\begin{equation}\label{eqn:A12B12}
    \frac{1}{\sqrt{d}}\Fnorm{A_{12}B_{12}^*}=O(\epsilon)\quad\text{and}\quad\frac{1}{\sqrt{d}}\Fnorm{A_{12}^*B_{12}}=O(\epsilon).
\end{equation}

We then bound $\Fnorm{A_{11}B_{11}}$.
Since $A+B\approx_\epsilon P$ and $A+B+C+D \approx_\epsilon I$, we have $C+D\approx_\epsilon I-P.$ Because $C,D\succeq0$, we have $\frac{1}{\sqrt{d}}\Fnorm{C_{11}}=O(\epsilon)$. Since $A+C\approx_\epsilon Q$, we have 
\begin{equation}\label{eqn:A11Q11}
    \frac{1}{\sqrt{d}}\Fnorm{A_{11}-Q_{11}}=O(\epsilon).
\end{equation} Since $A+B\approx_\epsilon P$, by Eq.~\eqref{eqn:defQ} we have $\frac{1}{\sqrt{d}}\Fnorm{A_{11}+B_{11}-I_r}=O(\epsilon)$. By Eq.~\eqref{eqn:A11Q11}, we have 
\begin{equation}\label{eqn:B11Q11}
    \frac{1}{\sqrt{d}}\Fnorm{B_{11}-(I_r-Q_{11})}=O(\epsilon). 
\end{equation}

Therefore,
\begin{align*}
    &\frac{1}{\sqrt{d}}\Fnorm{A_{11}B_{11}}\\
    \leq~&\frac{1}{\sqrt{d}}\left(\Fnorm{Q_{11}(I_r-Q_{11})}+\Fnorm{(A_{11}-Q_{11})(I_r-Q_{11})}\right.\\
    &+\left.\Fnorm{A_{11}(B_{11}-(I_r-Q_{11}))}\right)\\
    \leq~&\frac{1}{\sqrt{d}}\left(\Fnorm{Q_{11}(I_r-Q_{11})}+\Fnorm{A_{11}-Q_{11}}\opnorm{I_r-Q_{11}}\right.\\
    &+\left.\Fnorm{B_{11}-(I_r-Q_{11})}\opnorm{A_{11}}\right)\\
    =~&\frac{1}{\sqrt{d}}\Fnorm{Q_{11}(I_r-Q_{11})}+O(\epsilon).
\end{align*}

Since $Q^2=Q$, we have $Q_{11}^2+Q_{12}Q_{12}^*=Q_{11}$. Then 
\[\frac{1}{\sqrt{d}}\Fnorm{A_{11}B_{11}}=\frac{1}{\sqrt{d}}\Fnorm{Q_{12}Q_{12}^*}+O(\epsilon).\]
Since $A+C\approx_\epsilon Q$, we have $\frac{1}{\sqrt{d}}\Fnorm{A_{12}+C_{12}-Q_{12}}=O(\epsilon)$. Therefore, 
\[\frac{1}{\sqrt{d}}\Fnorm{A_{11}B_{11}}=\frac{1}{\sqrt{d}}\Fnorm{(A_{12}+C_{12})(A_{12}+C_{12})^*}+O(\epsilon).\]
Then
\begin{multline}\label{eqn:C12C12}
\frac{1}{\sqrt{d}}\Fnorm{C_{12}C_{12}^*}=\frac{1}{\sqrt{d}}\Fnorm{\sqrt{C_{22}}K_C^*C_{11}K_C\sqrt{C_{22}}}\\\leq\frac{1}{\sqrt{d}}\Fnorm{C_{11}}\opnorm{K_C}^2\opnorm{C_{22}}=O(\epsilon).
\end{multline}
And
\begin{multline*}
    \frac{1}{\sqrt{d}}\Fnorm{A_{12}C_{12}^*}=\frac{1}{\sqrt{d}}\sqrt{\Tr~A_{12}C_{12}^*C_{12}A_{12}^*}
    \\\overset{(*)}{\leq}\frac{1}{\sqrt{d}}\sqrt{\Fnorm{A_{12}^*A_{12}}\Fnorm{C_{12}^*C_{12}}}\overset{(**)}{=}O(\epsilon),
\end{multline*}
where $(*)$ follows from Cauchy-Schwartz inequality and $(**)$ uses \cref{eqn:A12A12,eqn:C12C12}. Similarly, we have $\frac{1}{\sqrt{d}}\Fnorm{C_{12}A_{12}^*}=O(\epsilon).$ Thus 
\begin{equation}\label{eqn:A11B11}
    \frac{1}{\sqrt{d}}\Fnorm{A_{11}B_{11}}=O(\epsilon).
\end{equation}

At last we bound $\Fnorm{A_{11}B_{12}}$ and $\Fnorm{A_{12}^*B_{11}}$. 
\begin{align}
    &\frac{1}{\sqrt{d}}\Fnorm{A_{11}B_{12}}\nonumber\\
    =~&\frac{1}{\sqrt{d}}\Fnorm{A_{11}\sqrt{B_{11}}K_B\sqrt{B_{22}}}\nonumber\\
    \overset{(*)}{\leq}~&\frac{1}{\sqrt{d}}\sqrt{\Fnorm{A_{11}\sqrt{B_{11}}\br{A_{11}\sqrt{B_{11}}}^*}\Fnorm{K_B\sqrt{B_{22}}\br{K_B\sqrt{B_{22}}}^*}}\nonumber\\
    =~&\frac{1}{\sqrt{d}}\sqrt{\Fnorm{A_{11}B_{11}A_{11}}\Fnorm{K_BB_{22}K_B^*}}\nonumber\\
    \leq~&\frac{1}{\sqrt{d}}\sqrt{\Fnorm{A_{11}B_{11}}\opnorm{A_{11}}\Fnorm{B_{22}}\opnorm{K_B}^2}\quad \overset{(**)}{=}O(\epsilon),\label{eqn:A11B12}
\end{align}
where $(*)$ follows from Cauchy-Schwartz inequality and $(**)$ follows from \cref{eqn:B22,eqn:A11B11}. Similarly, 
\begin{align}
    \frac{1}{\sqrt{d}}\Fnorm{A_{12}^*B_{11}}=~&\frac{1}{\sqrt{d}}\Fnorm{\sqrt{A_{22}}K_A\sqrt{A_{11}}B_{11}}\nonumber\\
    \leq~&\frac{1}{\sqrt{d}}\sqrt{\Fnorm{K_A^*A_{22}K_A}\Fnorm{B_{11}A_{11}B_{11}}}\nonumber\\
    \leq~&\frac{1}{\sqrt{d}}\sqrt{\Fnorm{A_{22}}\opnorm{K_A}^2\opnorm{B_{11}}\Fnorm{A_{11}B_{11}}}\nonumber\\ =~&O(\epsilon).\label{eqn:A12B11}
\end{align}
Combining \cref{eqn:A22B12,eqn:A12B22,eqn:A22B22,eqn:A12B12,eqn:A11B11,eqn:A11B12,eqn:A12B11}, we finally conclude our result.
\end{proof}

Now we prove \cref{mag-commute-same-row}.
\begin{proof}[Proof of \cref{mag-commute-same-row}]
    Without loss of generality, we analyze $\left[P_{1,1},P_{1,2}\right]$. 
    
    % By \cref{mgs-alice-obs}, $\br{P_1^{\mathrm{row}}}^2\approx_{\sqrt{\epsilon}} I, \br{P_2^{\mathrm{row}}}^2\approx_{\sqrt{\epsilon}} I.$
    By \cref{mgs-consist-row-col} item 2 and \cref{sqrt-approx-project}, $P_{1,1}$ and $P_{1,2}$ are each $\sqrt{\epsilon}$-close to binary observables, that is,
    \begin{align*}
        P_{1,1}\approx_{\sqrt{\epsilon}}E_{1,1,1}^{r_1}-E_{-1,-1,1}^{r_1}+E_{1,-1,-1}^{r_1}-E_{-1,1,-1}^{r_1}\approx_{\sqrt{\epsilon}}\Pi_{u^+}-\Pi_{u^-},\\
        P_{1,2}\approx_{\sqrt{\epsilon}}E_{1,1,1}^{r_1}-E_{-1,-1,1}^{r_1}-E_{1,-1,-1}^{r_1}+E_{-1,1,-1}^{r_1}\approx_{\sqrt{\epsilon}}\Pi_{v^+}-\Pi_{v^-},
    \end{align*}
    where $\Pi_{u^+},\Pi_{u^-},\Pi_{v^+}$ and $\Pi_{v^-}$ are projectors satisfying $\Pi_{u^+}+\Pi_{u^-}=\Pi_{v^+}+\Pi_{v^-}=I$. 
    
    Furthermore, by \cref{prop:faultyboundederror}, the sum of the four involved POVM elements satisfies \[E_{1,1,1}^{r_1}+E_{-1,-1,1}^{r_1}+E_{1,-1,-1}^{r_1}+E_{-1,1,-1}^{r_1}\approx_{\sqrt{\epsilon}}I.\] Thus we have
    \begin{align*}
         E_{1,1,1}^{r_1}+E_{1,-1,-1}^{r_1}\approx_{\sqrt{\epsilon}}\Pi_{u^+},&&
         E_{-1,-1,1}^{r_1}+E_{-1,1,-1}^{r_1}\approx_{\sqrt{\epsilon}}\Pi_{u^-},\\
         E_{1,1,1}^{r_1}+E_{-1,1,-1}^{r_1}\approx_{\sqrt{\epsilon}}\Pi_{v^+},&&
         E_{-1,-1,1}^{r_1}+E_{1,-1,-1}^{r_1}\approx_{\sqrt{\epsilon}}\Pi_{v^-}.
    \end{align*}
%     To derive that distinct POVMs are almost orthogonal, we rely on the key technical lemma:
% \begin{lemma}\label{lem:approx0}
%     Given four positive semidefinite matrices $A,B,C,D\in\H_d$ with 
%   \(\|A\|,\|B\|,\|C\|,\|D\|\leq1\), and two projectors $P$ and $Q$. If \[
%     A+B \approx_\epsilon P,\quad
%     A+C \approx_\epsilon Q,\quad
%     A+B+C+D \approx_\epsilon I,
%   \]
%   then
%   \[
%     AB\approx_\epsilon0.
%   \]
% \end{lemma}
%For \cref{lem:approx0}, actually it is relatively straightforward to show that $AB\approx_{\sqrt{\epsilon}}0$ by demonstrating that $QAQ\approx_{\sqrt{\epsilon}}A$ and $Q^{\perp}BQ^{\perp}\approx_{\sqrt{\epsilon}}B$. However, to align with the robustness parameters established in prior self-testing results for the Magic Square game, we refine the approximation parameter to $\epsilon$ through a more careful analysis. 
%The detailed proof of \cref{lem:approx0} is deferred to \cref{pf-lem:approx0}.    

    By \cref{lem:approx0}, for any two different POVM elements $E_{a_1,a_2,a_3}^{r_1}$ and $E_{a_1',a_2',a_3'}^{r_1}$ where $a_1a_2a_3=a_1'a_2'a_3'=1$, they approximately annihilate each other:
    \begin{align*}
        E_{a_1,a_2,a_3}^{r_1}E_{a_1',a_2',a_3'}^{r_1}\approx_{\sqrt{\epsilon}} 0.
    \end{align*} Therefore we have
    \begin{multline*}
P_{1,1}P_{1,2}\approx_{\sqrt{\epsilon}}\br{E_{1,1,1}^{r_1}}^2+\br{E_{-1,-1,1}^{r_1}}^2\\-\br{E_{1,-1,-1}^{r_1}}^2-\br{E_{-1,1,-1}^{r_1}}^2\approx_{\sqrt{\epsilon}}P_{1,2}P_{1,1}.
    \end{multline*}

    Thus
    \begin{align*}\left[P_{1,1},P_{1,2}\right]\approx_{\sqrt{\epsilon}}0.
    \end{align*}
    The same argument applies symmetrically to any pair of observables $P_{i,j}, P_{i,j'}$ with $j \ne j'$, or $P_{i,j}, P_{i',j}$ with $i \ne i'$, completing the proof.
\end{proof}

\begin{replemma}{mgs-obs-constraint}
    %  For indices $i,j,k$ corresponding to distinct cells that in the same row, in column 1, or in column 2,
    %  \[
    %  P_iP_jP_k\approx_{\sqrt{\epsilon}} I.
    %  \]
    %  For indices $i,j,k$ corresponding to different cells of the third column, we have 
    % \[P_iP_jP_k\approx_{\sqrt{\epsilon}} -I.\]
    Under the same assumptions as in \cref{mgs-consist-row-col}, the following observable constraints hold:
    \begin{align*}
        &P_{i,1}P_{i,2}P_{i,3} \approx_{\sqrt{\epsilon}} I \quad \text{for all } i\in[3], \\
        &P_{1,j}P_{2,j}P_{3,j} \approx_{\sqrt{\epsilon}} I \quad \text{for } j=1,2, \\
        &P_{1,3}P_{2,3}P_{3,3} \approx_{\sqrt{\epsilon}} -I.
    \end{align*}
\end{replemma}
\begin{proof}
    Without loss of generality, consider the first row. Let $\set{E_{a_1,a_2,a_3}^{r_1} : a_1a_2a_3=1}$ be relabeled as $\set{E_0, E_1, E_2, E_3}$ for convenience. 
    Given that
    \begin{align*}
E_0^2\approx_{\sqrt{\epsilon}}E_0\br{I-E_1-E_2-E_3}\approx_{\sqrt{\epsilon}} E_0,
\end{align*}
where the first approximation follows from \cref{prop:faultyboundederror} and the second approximation is by \cref{lem:approx0}. We can prove analogously that $E_1,E_2,E_3$ are each $\sqrt{\epsilon}$-close to a projector. Thus
\begin{align*}
    &P_{1,1}P_{1,2}P_{1,3}\\
    &= \br{E_0-E_1+E_2-E_3}\br{E_0-E_1-E_2+E_3}\br{E_0+E_1-E_2-E_3}\\
    &\approx_{\sqrt{\epsilon}} E_0^3+E_1^3+E_2^3+E_3^3\approx_{\sqrt{\epsilon}} E_0+E_1+E_2+E_3\approx_{\sqrt{\epsilon}}I.
\end{align*}
The remaining constraints follow analogously for the other rows and columns.
\end{proof}

\begin{replemma}{claim-mgs-anti-commute}
Under the same assumptions as in \cref{mgs-consist-row-col}, for all $i \neq i'$ and $j \neq j'$, we have
\[
\set{P_{i,j}, P_{i',j'}} \approx_{\sqrt{\epsilon}} 0, \qquad \set{Q_{i,j}, Q_{i',j'}} \approx_{\sqrt{\epsilon}} 0.
\]
\end{replemma}
\begin{proof}
    %Since Alice's strategy satisfies the requirements in \cref{mgs-alice-obs}, \cref{mgs-consist-row-col}, \cref{mag-commute-same-row} and \cref{mgs-obs-constraint}, it is a standard fact as the self-testing properties of Magic Square game that $\set{P_{1,1},P_{2,2}}\approx_{\sqrt{\epsilon}} 0$.
    %For completeness we give the proof. 
    % \begin{align*}
    %     P_2P_4\approx_{\sqrt{\epsilon}}P_2P_6P_5
    %     \approx_{\sqrt{\epsilon}}P_1P_3P_6P_5
    %     \approx_{\sqrt{\epsilon}}-P_1P_9P_5,
    % \end{align*}
    Without loss of generality, consider $P_{1,1}$ and $P_{2,2}$. By \cref{mgs-obs-constraint} and \cref{mgs-consist-row-col} item 2, we have 
    \begin{align*}
        P_{2,2}\approx_{\sqrt{\epsilon}} P_{2,3}P_{2,1},\quad P_{1,1}
    \approx_{\sqrt{\epsilon}}P_{1,2}P_{1,3},\quad P_{1,3}P_{2,3}
    \approx_{\sqrt{\epsilon}}-P_{3,3},\\
     P_{2,2}\approx_{\sqrt{\epsilon}} P_{1,2}P_{3,1},\quad P_{1,1}
    \approx_{\sqrt{\epsilon}}P_{3,1}P_{2,1},\quad P_{3,2}P_{3,1}
    \approx_{\sqrt{\epsilon}}P_{3,3}.
    \end{align*} Combining these, we get 
    \begin{align*}
    P_{1,1}P_{2,2}\approx_{\sqrt{\epsilon}} P_{1,2}P_{1,3}P_{2,3}P_{2,1}
    \approx_{\sqrt{\epsilon}}-P_{1,2}P_{3,3}P_{2,1}.\\
P_{2,2}P_{1,1}\approx_{\sqrt{\epsilon}}P_{1,2}P_{3,2}P_{3,1}P_{2,1}\approx_{\sqrt{\epsilon}}P_{1,2}P_{3,3}P_{2,1}.
    \end{align*}
    Thus $\set{P_{1,1},P_{2,2}}\approx_{\sqrt{\epsilon}}0$.  
    By \cref{mgs-consist-row-col} item 1 we also have $\set{Q_{1,1},Q_{2,2}}\approx_{\sqrt{\epsilon}} 0$.  
    % (one can refer to the Theorem 10.8 of \cite{vidick2020course}). By \cref{mgs-P-Q-2} we also have $\set{Q_2,Q_4}\approx_{\sqrt{\epsilon}} 0$.
\end{proof}
\begin{replemma}{lem:ms-single-observable-localization}
Under the assumptions of \cref{mgs-consist-row-col}, for every \(i,j\in[3]\), there exist indices
\[
\ell_{i,j},m_{i,j}\in[n]
\]
and traceless binary observables
\[
\widetilde{P}_{i,j}^{(\ell_{i,j})},\ \widetilde{Q}_{i,j}^{(m_{i,j})}\in \H_4
\]
such that
\begin{align*}
P_{i,j}
&\approx_{\gamma\sqrt{\epsilon}}
\widetilde{P}_{i,j}^{(\ell_{i,j})}\otimes I^{[n]\setminus\{\ell_{i,j}\}}_{4},\\
Q_{i,j}
&\approx_{\gamma\sqrt{\epsilon}}
\widetilde{Q}_{i,j}^{(m_{i,j})}\otimes I^{[n]\setminus\{m_{i,j}\}}_{4}.
\end{align*}
\end{replemma}

\begin{proof}[Proof of \cref{lem:ms-single-observable-localization}]
We prove the claim for \(P_{i,j}\); the argument for \(Q_{i,j}\) is identical.

By item \(1\) of \cref{mgs-consist-row-col} and Eq.~\ref{mgs-bob-noise}, we have
\[
P_{i,j}' \approx_{\sqrt{\epsilon}} \rho P_{i,j}.
\]
Moreover, by item \(2\) of \cref{mgs-consist-row-col},
\[
P_{i,j}^2 \approx_{\sqrt{\epsilon}} I.
\]

Write the Pauli expansion of \(P_{i,j}\) under the basis \(\mathcal{A}^{\otimes 2}\) as
\[
P_{i,j}=\sum_{x\in[16]_{\ge 0}^{n}}\widehat{P}_{i,j}(x)\,\mathcal{A}_{x}^{\otimes 2}.
\]
Then
\[
\widehat{P}_{i,j}(0^n)^2=\widebar{\Tr}(P_{i,j})^2=O(\epsilon).
\]
Using \(P_{i,j}' \approx_{\sqrt{\epsilon}} \rho P_{i,j}\), we obtain
\begin{align*}
O(\epsilon)
&=
\widebar{\Tr}\!\left((P_{i,j}'-\rho P_{i,j})^2\right)\\
&=
\sum_{x\in[16]_{\ge 0}^{n}}
(\rho^{|x|}-\rho)^2 \widehat{P}_{i,j}(x)^2\\
&\ge
(\rho-\rho^2)^2
\sum_{|x|\ge 2}\widehat{P}_{i,j}(x)^2.
\end{align*}
Hence
\[
\sum_{|x|\ge 2}\widehat{P}_{i,j}(x)^2=O(\epsilon\gamma^2).
\]

Let \(\overline{P}_{i,j}\) be the degree-one truncation of \(P_{i,j}\). Then
\[
P_{i,j}\approx_{\gamma\sqrt{\epsilon}}\overline{P}_{i,j},
\]
and \(\overline{P}_{i,j}\) has the form
\[
\overline{P}_{i,j}
=
\sum_{r=1}^{n} a_{r}^{\,i,j}\,O_{r}^{\,i,j}\otimes I^{[n]\setminus\{r\}}_{4},
\]
where \(a_{r}^{\,i,j}\ge 0\), each \(O_{r}^{\,i,j}\in\H_4\) is traceless, and
\[
\widebar{\Tr}\!\left((O_{r}^{\,i,j})^2\right)=1.
\]
Since \(P_{i,j}^2\approx_{\sqrt{\epsilon}}I\), we also have
\begin{equation}\label{eqn:PbarclosetoI}
    \overline{P}_{i,j}^{\,2}\approx_{\gamma\sqrt{\epsilon}}I.
\end{equation}

Applying \cref{mgs-one-reg} to \(\overline{P}_{i,j}\), there exists an index
\[
\ell_{i,j}\in[n]
\]
and a traceless $M\in\H_4$ such that
\[
\overline{P}_{i,j}
\approx_{\gamma\sqrt{\epsilon}}
M\otimes I^{[n]\setminus\{\ell_{i,j}\}}_{4}.
\]

By \cref{eqn:PbarclosetoI} and \cref{approx_multiply}, we have
\[M\otimes I^{[n]\setminus\{\ell_{i,j}\}}_{4}\approx_{\gamma\sqrt{\epsilon}}I.\]

By \cref{sqrt-approx-project}, there exists a traceless binary observable
$
\widetilde{P}_{i,j}^{(\ell_{i,j})}\in\H_4
$
such that
\[
P_{i,j}
\approx_{\gamma\sqrt{\epsilon}}
\widetilde{P}_{i,j}^{(\ell_{i,j})}\otimes I^{[n]\setminus\{\ell_{i,j}\}}_{4}.
\]

Applying the same argument to \(Q_{i,j}\) gives an index \(m_{i,j}\in[n]\) and a traceless binary observable
\[
\widetilde{Q}_{i,j}^{(m_{i,j})}\in\H_4
\]
such that
\[
Q_{i,j}
\approx_{\gamma\sqrt{\epsilon}}
\widetilde{Q}_{i,j}^{(m_{i,j})}\otimes I^{[n]\setminus\{m_{i,j}\}}_{4}.
\]
This proves the lemma.
\end{proof}
\begin{replemma}{lem:ms-common-register}
Under the assumptions of \cref{mgs-consist-row-col}, there exists a single register index
\[
\ell\in[n]
\]
such that for every \(i,j\in[3]\), there exist traceless binary observables
\[
\widetilde{P}_{i,j}^{(\ell)},\ \widetilde{Q}_{i,j}^{(\ell)}\in\H_4
\]
satisfying
\begin{align*}
P_{i,j}
&\approx_{\gamma\sqrt{\epsilon}}
\widetilde{P}_{i,j}^{(\ell)}\otimes I^{[n]\setminus\{\ell\}}_{4},\\
Q_{i,j}
&\approx_{\gamma\sqrt{\epsilon}}
\widetilde{Q}_{i,j}^{(\ell)}\otimes I^{[n]\setminus\{\ell\}}_{4}.
\end{align*}
\end{replemma}
\begin{proof}[Proof of \cref{lem:ms-common-register}]
By \cref{lem:ms-single-observable-localization}, for each \(i,j\in[3]\) there exist indices
\[
\ell_{i,j},m_{i,j}\in[n]
\]
and traceless binary observables
\[
\widetilde{P}_{i,j}^{(\ell_{i,j})},\ \widetilde{Q}_{i,j}^{(m_{i,j})}\in\H_4
\]
such that
\begin{align}
P_{i,j}
&\approx_{\gamma\sqrt{\epsilon}}
\widetilde{P}_{i,j}^{(\ell_{i,j})}\otimes I^{[n]\setminus\{\ell_{i,j}\}}_{4}, \label{eq:ms-common-register-P}\\
Q_{i,j}
&\approx_{\gamma\sqrt{\epsilon}}
\widetilde{Q}_{i,j}^{(m_{i,j})}\otimes I^{[n]\setminus\{m_{i,j}\}}_{4}. \label{eq:ms-common-register-Q}
\end{align}

We first show that all \(\ell_{i,j}\) are equal. By \cref{claim-mgs-anti-commute},
\[
\{P_{1,1},P_{2,2}\}\approx_{\sqrt{\epsilon}}0.
\]
If \(\ell_{1,1}\neq \ell_{2,2}\), then by \cref{approx_multiply} and \cref{eq:ms-common-register-P},
\[
\{P_{1,1},P_{2,2}\}
\approx_{\gamma\sqrt{\epsilon}}
2\,\widetilde{P}_{1,1}^{(\ell_{1,1})}\widetilde{P}_{2,2}^{(\ell_{2,2})}
\otimes I^{[n]\setminus\{\ell_{1,1},\ell_{2,2}\}}_{4},
\]
whose normalized Frobenius norm is bounded below by a positive constant since both local factors are binary. This is a contradiction. Hence
\[
\ell_{1,1}=\ell_{2,2}.
\]

Next, suppose \(\ell_{1,1}\neq \ell_{1,2}\). By \cref{mag-commute-same-row,mgs-obs-constraint},
\[
P_{1,3}\approx_{\sqrt{\epsilon}}P_{1,1}P_{1,2}.
\]
Using \cref{eq:ms-common-register-P}, this gives
\[
P_{1,3}
\approx_{\gamma\sqrt{\epsilon}}
\widetilde{P}_{1,1}^{(\ell_{1,1})}\widetilde{P}_{1,2}^{(\ell_{1,2})}
\otimes I^{[n]\setminus\{\ell_{1,1},\ell_{1,2}\}}_{4}.
\]
The right-hand side is a degree-two operator supported on two distinct registers, whereas by
\cref{lem:ms-single-observable-localization}, \(P_{1,3}\) is \(O(\gamma\sqrt{\epsilon})\)-close to a single-register observable. Since a degree-two traceless operator is orthogonal in normalized Hilbert--Schmidt inner product to every single-register traceless operator, this is impossible. Hence
\[
\ell_{1,2}=\ell_{1,1}.
\]
Then the row parity relation again implies
\[
\ell_{1,3}=\ell_{1,1}.
\]

Applying the same argument to the other rows and columns, we conclude that all indices \(\ell_{i,j}\) are equal. Denote the common value by
\[
\ell\in[n].
\]
Thus for every \(i,j\in[3]\),
\[
P_{i,j}\approx_{\gamma\sqrt{\epsilon}}
\widetilde{P}_{i,j}^{(\ell)}\otimes I^{[n]\setminus\{\ell\}}_{4}.
\]

Finally, by item \(1\) of \cref{mgs-consist-row-col},
\[
Q_{i,j}^{T}\approx_{\sqrt{\epsilon}}P_{i,j}.
\]
Combining this with the previous display and \cref{eq:ms-common-register-Q}, we conclude that the same register \(\ell\) also works for Bob. This proves the lemma.
\end{proof}

\begin{replemma}{lem:ms-local-canonical-form}
Assume the hypotheses of \cref{mgs-consist-row-col} and let \(\ell\in[n]\) be the common register given by
\cref{lem:ms-common-register}. Then there exists a \(4\)-dimensional unitary
\[
\widetilde{U}:\H_4\to\H_4
\]
such that for all \(i,j\in[3]\),
\begin{align*}
\widetilde{U}\widetilde{P}_{i,j}^{(\ell)}\widetilde{U}^*
&\approx_{\gamma\sqrt{\epsilon}} P_{i,j}^{\star},\\
\widetilde{U}\,\overline{\widetilde{Q}_{i,j}^{(\ell)}}\,\widetilde{U}^{T}
&\approx_{\gamma\sqrt{\epsilon}} Q_{i,j}^{\star},
\end{align*}
where \(P_{i,j}^{\star}\) is the \((i,j)\)-th entry of Table~I and
\[
Q_{i,j}^{\star}:=\left(P_{i,j}^{\star}\right)^T.
\]
\end{replemma}

\begin{proof}[Proof of \cref{lem:ms-local-canonical-form}]
By \cref{lem:ms-common-register}, all localized observables act on the same register \(\ell\). For simplicity, suppress the superscript \((\ell)\) in this proof.

Set
\[
A:=\widetilde{P}_{2,2},\qquad
B:=\widetilde{P}_{1,1},\qquad
C:=\widetilde{P}_{1,2}.
\]
By \cref{mag-commute-same-row,claim-mgs-anti-commute} and the localization,
$A^2=B^2=C^2=I,\set{A,B}\approx_{\gamma\sqrt{\epsilon}}0,[A,C]\approx_{\gamma\sqrt{\epsilon}}0, [B,C]\approx_{\gamma\sqrt{\epsilon}}0$.
Applying \cref{mgs-local-unitary}, there exists a \(4\)-dimensional unitary
\[
\widetilde{U}_0:\H_4\to\H_4
\]
such that
\begin{align}
\widetilde{U}_0\widetilde{P}_{2,2}\widetilde{U}_0^*
&\approx_{\gamma\sqrt{\epsilon}} Z\otimes I, \label{eq:ms-local-canonical-1}\\
\widetilde{U}_0\widetilde{P}_{1,1}\widetilde{U}_0^*
&\approx_{\gamma\sqrt{\epsilon}} X\otimes I, \label{eq:ms-local-canonical-2}\\
\widetilde{U}_0\widetilde{P}_{1,2}\widetilde{U}_0^*
&\approx_{\gamma\sqrt{\epsilon}} I\otimes X. \label{eq:ms-local-canonical-3}
\end{align}

Define
\[
S_{i,j}:=\widetilde{U}_0\widetilde{P}_{i,j}\widetilde{U}_0^*.
\]
By \cref{mgs-obs-constraint} and \cref{eq:ms-local-canonical-2,eq:ms-local-canonical-3},
\[
S_{1,3}\approx_{\gamma\sqrt{\epsilon}} S_{1,1}S_{1,2}\approx_{\gamma\sqrt{\epsilon}} X\otimes X.
\]
Similarly, by \cref{mgs-obs-constraint} and \cref{eq:ms-local-canonical-1,eq:ms-local-canonical-3},
\[
S_{3,2}\approx_{\gamma\sqrt{\epsilon}} S_{1,2}S_{2,2}\approx_{\gamma\sqrt{\epsilon}} Z\otimes X.
\]

Next, since \(S_{2,1}\) commutes approximately with both \(X\otimes I\) and \(Z\otimes I\), its Pauli expansion contains only terms of the form \(I\otimes \sigma_{\alpha}\). Hence there exists a traceless binary observable
\[
W\in\H_2
\]
such that
\[
S_{2,1}\approx_{\gamma\sqrt{\epsilon}} I\otimes W.
\]
Moreover, by \cref{claim-mgs-anti-commute} and \cref{eq:ms-local-canonical-3},
\[
\{S_{2,1},I\otimes X\}\approx_{\gamma\sqrt{\epsilon}}0.
\]
Therefore
\[
\{W,X\}\approx_{\gamma\sqrt{\epsilon}}0.
\]
Applying \cref{anti-commute-to-ZX}, there exists a \(2\)-dimensional unitary \(V\) such that
\[
VWV^*\approx_{\gamma\sqrt{\epsilon}} Z,
\qquad
VXV^*\approx_{\gamma\sqrt{\epsilon}} X.
\]
Define
\[
\widetilde{U}:=(I\otimes V)\widetilde{U}_0.
\]
Then
\begin{align*}
\widetilde{U}\widetilde{P}_{1,1}\widetilde{U}^*
&\approx_{\gamma\sqrt{\epsilon}} X\otimes I,\\
\widetilde{U}\widetilde{P}_{1,2}\widetilde{U}^*
&\approx_{\gamma\sqrt{\epsilon}} I\otimes X,\\
\widetilde{U}\widetilde{P}_{2,1}\widetilde{U}^*
&\approx_{\gamma\sqrt{\epsilon}} I\otimes Z,\\
\widetilde{U}\widetilde{P}_{2,2}\widetilde{U}^*
&\approx_{\gamma\sqrt{\epsilon}} Z\otimes I.
\end{align*}
Using \cref{mgs-obs-constraint} once more, we obtain
\begin{align*}
\widetilde{U}\widetilde{P}_{1,3}\widetilde{U}^*
&\approx_{\gamma\sqrt{\epsilon}} X\otimes X,\\
\widetilde{U}\widetilde{P}_{2,3}\widetilde{U}^*
&\approx_{\gamma\sqrt{\epsilon}} Z\otimes Z,\\
\widetilde{U}\widetilde{P}_{3,1}\widetilde{U}^*
&\approx_{\gamma\sqrt{\epsilon}} X\otimes Z,\\
\widetilde{U}\widetilde{P}_{3,2}\widetilde{U}^*
&\approx_{\gamma\sqrt{\epsilon}} Z\otimes X,\\
\widetilde{U}\widetilde{P}_{3,3}\widetilde{U}^*
&\approx_{\gamma\sqrt{\epsilon}} Y\otimes Y.
\end{align*}
That is,
\[
\widetilde{U}\widetilde{P}_{i,j}\widetilde{U}^*
\approx_{\gamma\sqrt{\epsilon}} P_{i,j}^{\star}
\qquad\forall i,j\in[3].
\]

Finally, by item \(1\) of \cref{mgs-consist-row-col},
\[
\widetilde{Q}_{i,j}^{\,T}\approx_{\gamma\sqrt{\epsilon}}\widetilde{P}_{i,j}.
\]
Therefore
\[
\widetilde{U}\,\overline{\widetilde{Q}_{i,j}}\,\widetilde{U}^{T}
\approx_{\gamma\sqrt{\epsilon}} Q_{i,j}^{\star}
\qquad\forall i,j\in[3].
\]
This proves the lemma.
\end{proof}

\begin{lemma}\label{mgs-one-reg}
    For a Hermitian matrix $A\in\H_{4}^{\x n}$, if $A^2\approx_{\epsilon} I$ and $A$ contains only degree-one terms:
    \begin{align*}
        A=\sum_{i=1}^n a_i O^{\br{i}}\otimes I^{[n]\backslash \set{i}},
    \end{align*}
    where $a_i\geq 0$, $O^{\br{i}}\in\H_4$ is traceless, and $\widebar{\Tr}\!\left(\br{O^{\br{i}}}^2\right)=1$.
    Then there exists $\ell\in[n]$ such that $A\approx_{\epsilon} O^{\br{\ell}}\otimes I^{[n]\backslash\set{\ell}}$.
\end{lemma}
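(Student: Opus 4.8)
The plan is to carry out the degree‑concentration argument already used for the noisy CHSH game (around \cref{deg-one-truncation}), now over the $4$‑dimensional registers. Since the summands $a_iO^{(i)}$ act on distinct registers they pairwise commute, so
\[
A^2=\sum_{i=1}^n a_i^2\,(O^{(i)})^2\otimes I^{[n]\backslash\set{i}}+\sum_{i\ne j}a_ia_j\,O^{(i)}\otimes O^{(j)}\otimes I^{[n]\backslash\set{i,j}} ,
\]
where the first sum contributes only Pauli terms of degree $\le 1$ and the second only terms of degree $2$. Under the normalized Hilbert–Schmidt inner product these two groups are orthogonal, and within the degree‑$2$ group the terms indexed by $\set{i,j}$ and $\set{i',j'}$ are orthogonal whenever $\set{i,j}\ne\set{i',j'}$, because each $O^{(k)}$ is traceless. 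Writing $s:=\sum_i a_i^2$ and using the normalization $\overline{\Tr}\big((O^{(i)})^2\big)=1$, one gets $\overline{\Tr}(A^2)=s$, and Parseval (\cref{Parseval's identity}) gives both $\overline{\Tr}\big((A^2-I)^2\big)\ge(s-1)^2$ and $\overline{\Tr}\big((A^2-I)^2\big)\ge\sum_{i\ne j}a_i^2a_j^2$. The hypothesis $A^2\approx_\epsilon I$ therefore yields $s=1+O(\epsilon)$ and $\sum_{i\ne j}a_i^2a_j^2=O(\epsilon^2)$.

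Next I would isolate the dominant register. The identity $\sum_{i\ne j}a_i^2a_j^2=\sum_i a_i^2(s-a_i^2)$ exhibits the left side as a sum of nonnegative terms (each $a_i^2\le s$). Let $\ell$ attain $\max_i a_i^2$; then $a_\ell^2\sum_{i\ne\ell}a_i^2\le\sum_{i\ne j}a_i^2a_j^2=O(\epsilon^2)$. For $\epsilon$ small enough relative to $1/\sqrt{n}$ this forces $a_\ell^2\ge 1/4$: otherwise every $a_i^2\le a_\ell^2=O(\epsilon^2)$, so $s=O(n\epsilon^2)$, contradicting $s=1+O(\epsilon)$. Consequently $\sum_{i\ne\ell}a_i^2=O(\epsilon^2)$ and $a_\ell^2=s-O(\epsilon^2)=1+O(\epsilon)$, hence $a_\ell=1+O(\epsilon)$.

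Finally, since $(a_\ell-1)O^{(\ell)}$ and the $a_iO^{(i)}$ with $i\ne\ell$ are traceless and supported on distinct registers, they are mutually orthogonal under $\overline{\Tr}$, so
\[
\overline{\Tr}\Big(\big(A-O^{(\ell)}\otimes I^{[n]\backslash\set{\ell}}\big)^2\Big)=(a_\ell-1)^2\,\overline{\Tr}\big((O^{(\ell)})^2\big)+\sum_{i\ne\ell}a_i^2\,\overline{\Tr}\big((O^{(i)})^2\big)=O(\epsilon^2),
\]
which is precisely the claim $A\approx_\epsilon O^{(\ell)}\otimes I^{[n]\backslash\set{\ell}}$. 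The only step that needs genuine care is guaranteeing that a \emph{single} weight $a_\ell^2$ is close to $1$ rather than the mass being spread over many registers; this is where the implicit smallness of $\epsilon$ relative to $n$ enters, and it is harmless in the application, where $\epsilon$ is a fixed constant determined by the protocol parameters and can be taken as small as desired. Everything else is routine Parseval bookkeeping, paralleling the CHSH computation.
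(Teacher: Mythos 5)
Your overall route is the paper's: expand $A^2$, use orthogonality of the degree-$0$, degree-$1$ and degree-$2$ parts together with Parseval's identity to get $s:=\sum_i a_i^2=1+O(\epsilon)$ and $\sum_{i\ne j}a_i^2a_j^2=O(\epsilon^2)$, locate a dominant register $\ell$, and conclude by one more orthogonal expansion. (Like the paper's own proof, you silently read the normalization as $\overline{\Tr}\bigl((O^{(i)})^2\bigr)=1$ rather than the literal $\|O^{(i)}\|_F=1$; with the literal reading the statement itself would fail, e.g.\ $A=2O$ with $O$ traceless and $O^2=I/4$, so this is the intended interpretation and not a fault of yours.)

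The one genuine problem is the step isolating $\ell$. You argue that if $a_\ell^2=\max_i a_i^2$ were small then $s\le n\,a_\ell^2=O(n\epsilon^2)$, contradicting $s=1+O(\epsilon)$, and you concede this needs $\epsilon$ small relative to $1/\sqrt{n}$, calling the restriction harmless. It is not harmless: in the paper $n$ is the number of shared registers, chosen by the (possibly adversarial) players and unbounded, while $\epsilon$ is fixed by the protocol, and the paper's $\approx_\epsilon$ notation requires absolute constants independent of $n$ and of the dimension. A proof valid only when $\epsilon\le c/\sqrt{n}$ therefore does not deliver the lemma as it is used in \cref{mgs-self-testing-main-thm}. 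The $n$-dependence is purely an artifact of bounding $s$ by $n\max_i a_i^2$; it disappears if you instead use the fourth-moment identity $\sum_{i\ne j}a_i^2a_j^2=s^2-\sum_i a_i^4\ge s^2-a_\ell^2 s$, which gives $a_\ell^2\ge s-O(\epsilon^2)/s=1-O(\epsilon)$ for every $n$. (The paper proceeds slightly differently, from $\sum_j a_j^2(1-a_j^2)\approx_\epsilon 0$ concluding each $a_j^2$ is near $0$ or near $1$; ruling out the ``all weights small'' scenario there is the same one-line fourth-moment observation, with no $n$ entering.) With that replacement, the rest of your argument --- $\sum_{i\ne\ell}a_i^2=O(\epsilon^2)$, $a_\ell=1+O(\epsilon)$, and the final orthogonal expansion of $A-O^{(\ell)}\otimes I^{[n]\backslash\set{\ell}}$ --- is correct and matches the paper's bookkeeping.
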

\begin{proof}
    Since $\widebar{\Tr}\!\left(\br{O^{\br{i}}}^2\right)=1$, we can write $\br{O^{\br{i}}}^2=I+B^{\br{i}}$, where $\Tr B^{\br{i}}=0$. We then have
    \begin{align*}
        A^2=~&\sum_{i} a_i^2 \br{O^{\br{i}}}^2\otimes I^{[n]\backslash\set{i}}\\
        &+\sum_{i,j} a_ia_jO^{\br{i}}\otimes O^{\br{j}}\otimes I^{[n]\backslash\set{i,j}}\\
        =~&\sum_{i} a_i^2 I+\sum_ia_i^2B^{\br{i}}\otimes I^{ [n]\backslash\set{i}}\\
        &+\sum_{i,j} a_ia_jO^{\br{i}}\otimes O^{\br{j}}\otimes I^{[n]\backslash\set{i,j}}
    \end{align*}
    % Let $O^{\br{i}}=\sum_{j,k}b_{j,k}\sigma_{j}\x\sigma_{k}$, where $\sum_{jk}b_{jk}^2=1$ and $b_{00}=0$.
    From $A^2\approx_{\epsilon} I$ and Parseval's identity, we have
    \begin{align*}
        \sum_ia_i^2\approx_{\epsilon}1,\\
        \sum_{i\neq j}a_i^2a_j^2\approx_{\epsilon^2} 0.
    \end{align*}
    % The first line gives $O\br{\epsilon^2}\geq \widebar{\Tr}\br{a_i^2\br{O^{\br{i}}}^2 - I}^2\geq \br{a_i^2-1}^2$, we then have $a_i^2\approx_{\epsilon}1$ and $a_i\approx_{\epsilon} 1$.
        Since $\sum_{j}a_j^2\br{1-a_j^2} \approx_{\epsilon}\sum_{j\neq i}a_j^2a_i^2$, we have $\sum_{j}a_j^2\br{1-a_j^2}\approx_{\epsilon}0$, which implies 
    % \begin{align*}
    %     \sum_{j}a_j^2\br{1-a_j^2}\approx_{\sqrt{\epsilon}\gamma} 0,
    % \end{align*}
    % \begin{align}
    %     \sum_{j\neq k}a_j^2a_k^2\approx_{\epsilon,\gamma^2}\sum_j a_j^2(1-a_j^2)\approx_{\sqrt{\epsilon},\gamma} 0,
    % \end{align}
     $\forall j,$ either $a_j^2\approx_{\epsilon} 0$ or $a_j^2\approx_{\epsilon} 1$. Combining this with $\sum_ia_i^2\approx_{\epsilon}1$, there exists one index $\ell\in[n]$ such that $a_\ell^2\approx_{\epsilon} 1$ 
     (which implies $a_\ell\approx_{\epsilon} 1$)
     and $\sum_{j\neq \ell}a_j^2\approx_{\epsilon^2} 0$. Then the conclusion holds.
\end{proof}

\begin{lemma}\label{mgs-local-unitary}
    Let $A,B,C\in \H_4$ be Hermitian and traceless matrices satisfying $A^2=B^2=C^2=I$. If $\set{A,B}\approx_{\epsilon}0,\left[A,C\right]\approx_{\epsilon}0,\left[B,C\right]\approx_{\epsilon}0$, then there exists a unitary $U$, such that
    \begin{align*}
        UAU^*\approx_{\epsilon} Z\otimes I,\qquad UBU^*\approx_{\epsilon} X\otimes I,\qquad UCU^*\approx_{\epsilon} I\otimes X.
    \end{align*}
\end{lemma}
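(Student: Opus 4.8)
The plan is to build $U$ in successive stages: each stage conjugates by a unitary that \emph{exactly} preserves the operators already placed in canonical form while perturbing the remaining ones by only $O(\epsilon)$ in normalized Frobenius norm. Since the underlying space has fixed dimension $4$, chaining a bounded number of such $\approx_{\epsilon}$ steps (using \cref{approx_multiply} to multiply approximants with bounded norm) keeps the final error $O(\epsilon)$, and the claim follows with the constant absorbed.

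First I would place $A$ and $B$. As $A\in\H_4$ is Hermitian, traceless and $A^{2}=I$, its eigenvalues are $+1,+1,-1,-1$, so there is a unitary $U_0$ with $U_0AU_0^{*}=Z\x I$; replace $B,C$ by $\hat B=U_0BU_0^{*}$ and $\hat C=U_0CU_0^{*}$, which still satisfy $\set{Z\x I,\hat B}\approx_{\epsilon}0$, $[Z\x I,\hat C]\approx_{\epsilon}0$, $[\hat B,\hat C]\approx_{\epsilon}0$. Next I symmetrize $\hat B$ against $A$: the operator $\tilde B:=\tfrac12\br{\hat B-(Z\x I)\hat B(Z\x I)}$ anticommutes with $Z\x I$ exactly, and since conjugation by the reflection $Z\x I$ is an isometry, $\set{Z\x I,\hat B}\approx_{\epsilon}0$ gives $\tilde B\approx_{\epsilon}\hat B$, hence $\tilde B^{2}\approx_{\epsilon}I$ by \cref{approx_multiply}. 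Anticommuting with $Z\x I=\operatorname{diag}(I_2,-I_2)$ forces $\tilde B$ to be block off-diagonal with blocks $M$ and $M^{*}$, and $\tilde B^{2}\approx_{\epsilon}I$ forces $MM^{*}\approx_{\epsilon}I_2$. Writing the polar decomposition $M=W\abs{M}$ with $W$ unitary and using $\Fnorm{\abs{M}-I_2}\le\Fnorm{\abs{M}^{2}-I_2}=\Fnorm{MM^{*}-I_2}$ we get $\abs{M}\approx_{\epsilon}I_2$; the block-diagonal unitary $U_1:=\operatorname{diag}(I_2,W)$ commutes with $Z\x I$ and conjugates $\tilde B$ to the matrix with off-diagonal blocks $W\abs{M}W^{*}\approx_{\epsilon}I_2$, i.e.\ $U_1\tilde BU_1^{*}\approx_{\epsilon}X\x I$. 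Hence $U':=U_1U_0$ satisfies $U'AU'^{*}=Z\x I$ and $U'BU'^{*}\approx_{\epsilon}X\x I$, and $C':=U'CU'^{*}$ still obeys $[Z\x I,C']\approx_{\epsilon}0$ and, because $U'BU'^{*}\approx_{\epsilon}X\x I$, $[\hat B,\hat C]\approx_{\epsilon}0$ and $\opnorm{C'}=1$, also $[X\x I,C']\approx_{\epsilon}0$.

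It then remains to place $C$. Expanding $C'=\sum_{\mu,\nu\in\set{0,1,2,3}}c_{\mu\nu}\,\sigma_\mu\x\sigma_\nu$ in the tensor Pauli basis and using $[Z,\sigma_1]=2i\sigma_2$, $[Z,\sigma_2]=-2i\sigma_1$, $[X,\sigma_2]=2i\sigma_3$, $[X,\sigma_3]=-2i\sigma_2$ together with orthogonality of the $\sigma_\mu\x\sigma_\nu$ (Parseval's identity), the two commutator bounds force $c_{\mu\nu}\approx_{\epsilon}0$ for all $\mu\in\set{1,2,3}$; thus $C'\approx_{\epsilon}I\x D$ for some $D\in\H_2$, tracelessness of $C'$ forces $\Tr D=0$, and $(C')^{2}=I$ forces $D^{2}\approx_{\epsilon}I$. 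A Bloch-sphere rotation $V\in\mathrm{SU}(2)$ then gives $VDV^{*}\approx_{\epsilon}X$, while $U'':=I\x V$ commutes with both $Z\x I$ and $X\x I$. Taking $U:=U''U'$ yields $UAU^{*}=Z\x I$, $UBU^{*}\approx_{\epsilon}X\x I$ and $UCU^{*}\approx_{\epsilon}I\x X$, as required.

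The step I expect to require the most care is the polar-decomposition argument for $B$: one must check that $MM^{*}\approx_{\epsilon}I_2$ propagates to $\abs{M}\approx_{\epsilon}I_2$ and then survives conjugation by $W$, and more generally keep track of the absolute multiplicative constants accumulated over the four conjugations. Everything else — the reflection-symmetrization bound, the Pauli-coefficient truncation, and the $2$-dimensional rotation bringing a near-binary traceless observable to $X$ — consists of routine perturbation estimates of the type already appearing in the proofs of \cref{chsh-selftest-main-thm} and \cref{mag-commute-same-row}.
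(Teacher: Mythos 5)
Your proposal is correct and follows essentially the same route as the paper: bring $A$ exactly to $Z\otimes I$, use the anticommutation to make $B$ block off-diagonal and fix its off-diagonal block with a unitary (your polar decomposition $M=W\abs{M}$ is just the paper's SVD $B_{12}=V\Lambda W^*$ in different packaging, with the same singular-value estimate $(s-1)^2\le(s^2-1)^2$), and then reduce $C$ to a single-qubit traceless near-binary observable rotated to $X$. Your explicit symmetrization of $B$ and the Pauli/Parseval argument showing $C'\approx_\epsilon I\otimes D$ are slightly more detailed than the paper's terse block-structure step, but they are the same argument in substance and the error bookkeeping is sound.
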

\begin{proof}
    Without loss of generality, we can assume $A= \begin{bmatrix}
I_{2} &0 \\
0 & -I_{2}
\end{bmatrix}$. From the condition $\set{A,B}\approx_{\epsilon}0$  we have
\begin{align*}
    B\approx_{\epsilon}\begin{bmatrix}
0 &B_{12} \\
B_{12}^* & 0
\end{bmatrix},
\end{align*}
where $B_{12}\in\M_2$. By $B^2=I$ we have $B_{12}^*B_{12}\approx_{\epsilon} I,B_{12}B_{12}^*\approx_{\epsilon} I$. Let the singular value decomposition of $B_{12}$ be
\begin{align*}
    B_{12}=V\Lambda W^*,\qquad\Lambda=\begin{bmatrix}
\lambda_1 &0 \\
0 & \lambda_2
\end{bmatrix}.
\end{align*}
Since $B_{12}^*B_{12}\approx_{\epsilon} I$, we have $\Lambda^2\approx_{\epsilon} I$, thus
\begin{align*}
    \br{\lambda_1^2-1}^2+\br{\lambda_2^2-1}^2\approx_{\epsilon^2}0.
\end{align*}
Since \begin{multline*}
    \br{\lambda_1-1}^2+\br{\lambda_2-1}^2\\\leq \br{\lambda_1-1}^2\br{\lambda_1+1}^2+\br{\lambda_2-1}^2\br{\lambda_2+1}^2\approx_{\epsilon^2} 0,
\end{multline*}
we have $\Lambda\approx_{\epsilon} I$. Define $U_1=\begin{bmatrix}
V^* &0 \\
0 & W^*
\end{bmatrix}$, then we have
\begin{align*}
    U_1AU_1^*=Z\x I, \qquad U_1BU_1^*\approx_{\epsilon} X \x I.
\end{align*}
Since $\left[U_1CU_1^*,U_1AU_1^*\right]\approx_{\epsilon}0,\left[U_1CU_1^*,U_1BU_1^*\right]\approx_{\epsilon}0$, we have $U_1CU_1^*\approx_{\epsilon}\begin{bmatrix}
C_1 &0 \\
0 & C_1
\end{bmatrix}$ for some $C_1\in\H_2, \Tr\br{C_1}=0,C_1^2\approx_\epsilon I$. By \cref{sqrt-approx-project}, $C_1\approx_{\epsilon} \ketbra{u}-\ketbra{v}$ for orthogonal unit vectors $\ket{u},\ket{v}$.  
Then for unitary $V_1=\ket{+}\bra{u}+\ket{-}\bra{v}$, we have $V_1C_1V_1^*=X$. Let $U_2=\begin{bmatrix}
V_1 &0 \\
0 & V_1
\end{bmatrix}$. By calculation $U=U_2U_1$ satisfies the requirement.
\end{proof}
% \bibliographystyle{alpha}
% \bibliography{references}
\bibliographystyle{apsrev4-2}
\bibliography{references}
\end{document}